\newcommand{\cN}{\mathcal{N}}
\newcommand{\tr}{\mathrm{tr}}
\newcommand{\RNum}[1]{\uppercase\expandafter{\romannumeral #1\relax}}
\newcommand{\vertiii}[1]{{\left\vert\kern-0.25ex\left\vert\kern-0.25ex\left\vert #1 
		\right\vert\kern-0.25ex\right\vert\kern-0.25ex\right\vert}}
\DeclareMathOperator*{\argmin}{arg\,min}
\newtheorem{theorem}{Theorem}[section]
\newtheorem{lemma}{Lemma}[section]
\newtheorem{proposition}{Proposition}[section]
\newtheorem{assumption}{Assumption}[section]
\newtheorem{remark}{Remark}[section]
\newcommand{\ones}{{{\mathbbm{1}}}}
\newcommand{\ind}[1]{\ones_{\{#1\}}}
\newcommand{\bbR}{\mathbb{R}}
\newcommand{\bbP}{\mathbb{P}}
\newcommand{\bbB}{\mathbb{B}}
\newcommand{\bbE}{\mathbb{E}}
\newcommand{\cF}{\mathcal{F}}
\newcommand{\cL}{\mathcal{L}}
\newcommand{\cG}{\mathcal{G}}
\newcommand{\cR}{\mathcal{R}}
\newcommand{\cC}{\mathcal{C}}
\newcommand{\cK}{\mathcal{K}}
\newcommand{\cS}{\mathcal{S}}
\newcommand{\eg}{\emph{e.g.}}
\newcommand{\MN}{\mathrm{MN}}
\newcommand{\ON}{\mathrm{ON}}
\newcommand{\st}{\mathrm{st}}
\providecommand{\keywords}[1]
{ 
	\small	
	\textbf{\textit{Keywords---}} #1
}
\begin{document}
\title{High-dimensional Multi-class Classification with Presence-only Data}
\author{Lili Zheng\footnotemark[1], Garvesh Raskutti\footnotemark[2]} 
\date{\today}
\maketitle

\begin{abstract}
Classification with positive and unlabeled (PU) data frequently arises in bioinformatics, clinical data, and ecological studies, where collecting negative samples can be prohibitively expensive. While prior works on PU data focus on binary classification, in this paper we consider multiple positive labels, a practically important and common setting. We introduce a {\em multinomial-PU} model and an {\em ordinal-PU} model, suited to unordered and ordered labels respectively. We propose proximal gradient descent-based algorithms to minimize the $\ell_{1,2}$-penalized log-likelihood losses, with convergence guarantees to stationary points of the non-convex objective. Despite the challenging non-convexity induced by the presence-only data and multi-class labels, we prove statistical error bounds for the stationary points within a neighborhood around the true parameters under the high-dimensional regime. This is made possible through a careful characterization of the landscape of the log-likelihood loss in the neighborhood. In addition, simulations and two real data experiments demonstrate the empirical benefits of our algorithms compared to the baseline methods.
\end{abstract}
\keywords{Positive and unlabeled data, multi-class classification, non-convex optimization, high-dimensional ordinal regression, high-dimensional multinomial regression}
\footnotetext[1]{Department of Electrical and Computer Engineering, Rice University, Houston, TX, USA}
\footnotetext[2]{Department of Statistics, Department of Computer Science, University of Wisconsin-Madison, Madison, WI, USA}

\section{Introduction}
Positive and unlabeled (PU) data, also referred to as presence-only data, arises in a wide range of applications such as bioinformatics~\citep{elkan2008learning}, ecological modeling of species distribution~\citep{ward2009presence}, text-mining~\citep{liu2003building}, where only a subset of the positive responses are labeled while the rest are unlabeled. For instance, in biological systems engineering~\citep{romero2015dissecting,song2021inferring}, a screening step may output samples of functional protein sequences while all other protein sequences are unlabeled (some being functional and some being nonfunctional); when studying the geographical distribution of species~\citep{ward2009presence}, observed species are certainly present while unobserved ones are not necessarily absent. In these applications, the primary goals are to perform classification and variable selection based on a large number of features/covariates, \eg, learning a model that predicts whether a particular protein sequence would be functional under high temperatures. However, %the nature of the presence-only data brings additional challenges to this task: 
directly treating the unlabeled samples as being negative would lead to significant bias and unsatisfactory prediction performance, and hence specific PU-learning methods are in need for the presence-only data.

Prior works~\citep{elkan2008learning,ward2009presence,song2020pulasso} have been proposed to address this problem, with a specific focus on the binary classification setting (one positive category and one negative category). In particular, \cite{ward2009presence,song2020pulasso} propose expectation-maximization (EM) algorithms and majorization-minimization (MM) algorithms that aim to minimize the penalized non-convex log-likelihoods of the observed positive and unlabeled data, accompanied with both statistical and optimization guarantees under the high-dimensional setting \citep{song2020pulasso}. However many PU-learning problems have multiple classes. For instance, in protein engineering,  scientists are interested in whether certain protein sequences would be functional under several levels of temperatures~\citep{fields2001protein}; in disease diagnosis, doctors can give a score representing the severity/stage of the patient's disease development or the type of the disease~\citep{bertens2016nomogram,sivapriya2015ensemble}; in recommender systems, the users may have multiple ways to interact with a recommendation. Considering multi-class classification can largely \emph{increase the model capacity, make more efficient use of the data at hand, and lead to more informative predictions}. On the other hand, existing statistical methods either assume that all observations reflect the true labels and lead to bias, or they consider only binary responses and provide coarse predictions, %In addition, labels can be expensive and only some samples in positive categories are labeled under these applications, and hence 
calling for new models and algorithms that take into account both the nature of the presence-only data and the multi-class labels. %traditional approaches for fully observed data is not feasible either. 

However, dealing with multi-class classification in the PU-setting is a non-trivial task and is not a simple extension of prior work on binary classification. The multiple categories can be unordered when they represent the species, or they can be ordered when they encode the level of temperature each protein sequence can tolerate, calling for two different modeling approaches each of which present unique technical challenges. In this paper, we address this problem by leveraging ideas from the multinomial and ordinal logistic regressions and extending the prior work~\citep{song2020pulasso} on high-dimensional classification for positive and unlabeled data with binary responses. We propose two models (the multinomial-PU model and the ordinal-PU model) and corresponding algorithms for unordered and ordered categorical responses respectively. As we will see in later sections, the non-convex landscape of the log-likelihood losses of PU-data becomes much more complicated in the multi-class setting. Nevertheless, we provide theoretical guarantees for our approaches under both models based on a careful characterization of the landscape of the log-likelihood losses, and demonstrate their empirical merits by simulations and real data examples. 
\subsection{Problem Formulation}\label{sec:problem-form}
\paragraph{Notation:} For any matrix or tensor $A\in \bbR^{p_1\times \dots\times p_k}$ and $1\leq l\leq k$, let $\|A\|_2=(\sum_{i_1,\dots,i_k}A_{i_1,\dots,i_k}^2)^{\frac{1}{2}}.$ %We will use $A_{:,\dots,:,i_m,:,\dots,:}\in \bbR^{p_1\times \dots\times p_{m-1}\times p_{m+1}\times \dots\times p_k}$ to denote the tensor that fixes the $m$th dimension of $A$ to be $i_m$. 
For any two tensors $A$ and $B$ of the same dimension, let $\langle A, B\rangle$ denote the Euclidean inner product of $A$ and $B$. For any matrix $A\in \bbR^{p_1\times p_2}$, we let $\lambda_{\min}(A)$ denote the smallest eigenvalue of $A$, and let $A_j\in \bbR^{p_2}$ be the $j$th column of $A$.
Let
$\ind{E} = {\footnotesize \begin{cases}1,& \text{if } E \text{ true} \\ 0,& \text{else}\end{cases}}$ be the indicator function.

In this section, we propose our multinomial-PU model and ordinal-PU model which are suited to unordered and ordered categorical responses, respectively. 
For both models, let $x\in\mathbb{R}^{p}$ be the covariate, $y\in\{0,1,\dots,K\}$ be the true categorical response with $K$ positive categories that is not observed. We assume observing only the covariate $x$ and a label $z\in\{0,1,\dots,K\}$, a noisy observation of the true response $y$. If the label $z \neq 0$, then it reflects the true response $y = z$; otherwise, this sample is unlabeled and the unknown response $y$ can take any value from $\{0,1,\dots,K\}$. The assignment of label $z$ is conditionally independent of the covariate $x$ given the response $y$: that is, the true response $y$ is randomly missing in observation with probability only depending on $y$, an assumption commonly seen in the literature on missing data problems~\citep{little2019statistical}.

More specifically, to model the conditional distribution of $z$ given $y$, we consider two common settings: the case-control approach~\citep{lancaster1996case,ward2009presence,song2020pulasso} and the single-training-set scenario~\citep{elkan2008learning}. %which suits many applications. %where the unlabeled data is sampled from the original population.
The {\bf case-control} setting is suited to the case when the unlabelled and positive samples are drawn separately, one from the whole population and the other from the positive population. This setting is commonly seen in biotechnology applications~\citep{romero2015dissecting}. More specifically, under the case-control setting, we assume that the $n_u$ unlabeled samples are drawn from the original population while $n_i$ positive samples in each category $1\leq i\leq K$ are drawn from the population with response $i$, and the total sample size $n=n_u+\sum_{j=1}^K n_j$. We introduce another random variable $s$ as an indicator for whether the sample is selected, that is, we only observe a sample $(x,z)$ if the associated indicator variable $s=1$. Also let $\pi_1,\dots,\pi_k\in (0,1)$ be the probability of seeing a sample with response $j$: $\pi_j = P(y=j)$, from the original population. In this case-control setting, one can compute the conditional distribution of $z$ given $y$ and $s=1$ as follows: when $y=k$ and $s=1$, we have $z\in \{0,k\}$ and
\begin{equation}\label{eq:case-control}
    \bbP(z=k|y=k,s=1,x)=\frac{n_k}{n_k+\pi_kn_u}, \quad \bbP(z=0|y=k,s=1,x)=\frac{\pi_k n_u}{n_k+\pi_kn_u};
\end{equation}
When $y=0$ and $s=1$, $\bbP(z=0|y=0,s=1,x)=1$.
More detailed derivation of this conditional distribution can be found in the Appendix \ref{sec:proof-likelihoods}.

While for the {\bf single-training-set} scenario, we first draw all $n$ samples from the whole population randomly, independent of the true response $y$ or covariate $x$. Then each positive sample is unlabelled with some constant probability:
\begin{equation}\label{eq:single-training}
    \bbP(z=k|y=k,s=1,x)=\pi^{\st}_k,\quad \bbP(z=0|y=k,s=1, x)=1-\pi^{\st}_k,
\end{equation}
where $\pi^{\st}\in (0,1)^K$ and we use the superscript ``$\st$'' to denote the single-training-set scenario.%This setting arises naturally from ....

In the following, we will discuss the detailed formulations for the distribution of the true response $y$ given the covariate $x$, under the multinomial-PU model and the ordinal-PU model considered in this paper. %under the positive unlabeled setting.

\paragraph{Multinomial-PU Model:}%\label{sec:multinom}
This model is suited to the case where the $K$ positive categories are not ordered. Let $\Theta^*\in \bbR^{p\times K}$ and $b^*\in \bbR^K$ be the regression parameter and the offset parameter. We assume 
\begin{equation}\label{eq:MN_model}
    x\sim \mathbb{P}_X, \quad \bbP(y=j|x)=\begin{cases}\frac{e^{x^\top\Theta^{\ast}_{j}+b^*_{j}}}{1+\sum\limits_{k=1}^{K}e^{x^\top\Theta^*_k+b^*_k}},&j>0,\\
    \frac{1}{1+\sum\limits_{k=1}^{K}e^{x^\top\Theta^*_k+b^*_k}},&j=0,
    \end{cases}
\end{equation}
where $\Theta^*_j\in \bbR^p$ is the $j$th column of $\Theta^*$, $b_j^*\in \bbR$ is the $j$th entry of $b^*$. Assuming that $\{(x_i,y_i)\}_{i=1}^n$ are i.i.d samples generated from this model, and each $z_i$ is generated according to either the case-control or the single-training-set scenario, we want to estimate the unknown parameters $\Theta^*, b^*$ based on $\{(x_i,z_i)\}_{i=1}^n$  . 

%-------------------------------------------------------
\paragraph{Ordinal-PU Model:}%\label{sec:ordinal}
When the class labels $0,\,1,\dots, K$ are ordered, it is more appropriate to consider an ordinal modeling approach instead of the multinomial model. This can happen when the class labels are a coarse discretization of some underlying latent variable, such as the maximum temperature level that a protein can stay functional \citep{romero2015dissecting}, or the income level of an American family \citep{mccullagh1980regression}. In this setting, we consider the cumulative logits ordinal regression model \citep{mccullagh1980regression,agresti2010analysis} detailed as follows. Let $\beta^*\in \bbR^p$ be the regression parameter and let $\nu^*\in \bbR^K$ be the offset parameter satisfying $\nu^*_1<\nu^*_2<\dots<\nu^*_K$. We assume 
\begin{equation}\label{eq:ordinal_model}
    x\sim \mathbb{P}_X, \quad \bbP(y<j|x)=\frac{1}{1+e^{-(\nu^*_j-x^\top \beta^*)}}, 1\leq j\leq K.
\end{equation}
%Given i.i.d samples $\{(x_i,z_i)\}_{i=1}^n$ under this model, we want to estimate the unknown parameters $\beta^*, \nu^*$. 
The ordinal constraint on $\nu^*$ can impose difficulties in the estimation procedure, and hence we consider the following reparameterization: let $\theta^*\in \bbR^{p+K}$ which satisfies
\begin{equation*}
\theta_j=\begin{cases}\beta_j^*,&1\leq j\leq p,\\
\nu^*_1,&j=p+1,\\
\nu^*_{j-p}-\nu^*_{j-p-1},&j>p+1.\end{cases}
\end{equation*}
By definition, $\theta^*$ can take values in $\bbR^{p+1}\times (0,\infty)^{K-1}$. 
We assume that $\{(x_i,y_i)\}_{i=1}^n$ are i.i.d. samples generated from this model and $z_i$ follows the conditional distribution specified by \eqref{eq:case-control} or \eqref{eq:single-training}, and our goal is to estimate the unknown parameter $\theta^*\in\bbR^{p+1}\times (0,\infty)^{K-1}$ from $\{(x_i,z_i)\}_{i=1}^n$. 

\paragraph{High-dimensional Setting:} In many real applications with presence-only data, the number of covariates can be large compared to the number of samples, and hence we consider the high-dimensional setting. Sparsity or group sparsity of the regression parameters will be assumed, and $\ell_1$ or $\ell_{1,2}$ penalty will be incorporated in our estimators. More details are provided in Section~\ref{sec:algorithm} and Section~\ref{sec:theory}.
%Let $P_\lambda(\Theta)=\lambda\|\Theta\|_{\omega,2,1}$ be the regularizer, where  In this paper, $P_\lambda(\Theta)\in\{\lambda||\theta||_{\omega,2,1},\lambda||\theta||_1\}$.\\
%Let $\mathcal{F}_n(\theta):=\mathcal{L}_n(\theta)+P_\lambda(\theta)$.\\
%\hfill\\

\subsection{Related Work}
There have been significant prior works on presence-only data analysis ~\citep{ward2009presence,elkan2008learning,liu2003building,du2015convex,song2020pulasso,song2020convex}, while they all consider binary labels rather than multinomial or ordinal labels. In particular, \cite{ward2009presence} and \cite{song2020pulasso} are most closely related to our works; both consider the logistic regression model which corresponds to the special case of our models with $K=1$, assuming the case-control approach. \cite{ward2009presence} presents an expectation-maximization (EM) algorithm for the low-dimensional setting while \cite{song2020pulasso} proposes a maximization-majorization (MM) algorithm with $\ell_{1,2}$ penalties for the high-dimensional setting. 

Considering the high-dimensional setting, our work is closely related to prior works on estimating sparse generalized linear models~\citep{van2008high,fan2010sure,kakade2010learning,friedman2010note,tibshirani1996regression,li2020graph}. Most of these works focus on convex log-likelihood losses. While in our work, we are concerned with non-convex log-likelihood losses and hence we provide statistical guarantees for stationary points for the penalized losses instead of global minimizers. This approach has also been adopted by~\cite{song2020pulasso,loh2017support,loh2013regularized}.
The main ideas of our proof is similar to \cite{song2020pulasso,loh2017support}, while the key challenge lies in establishing restricted strong convexity for the multinomial and ordinal log-likelihood losses with PU data. 
\cite{xu2017generalized} also considers the estimation of GLMs with sparsity or other constraints, where the MM framework is also used to address the non-convexity of the loss function. However, the non-convexity is induced by their proposed distance-to-set penalties, instead of non-convex log-likelihood losses as in our paper. 

Our work is also closely related to the previous literature on high-dimensional multinomial and ordinal logistic regression~\citep{wurm2017regularized,archer2014ordinalgmifs,krishnapuram2005sparse}. These works consider fully observed data, while for positive and unlabeled data, the only way to apply these methods is to treat the unlabeled samples as being negative. This naive approach is adopted as the baseline method in our numerical experiments on positive and unlabeled data sets, 
demonstrating the non-trivial empirical advantage of our algorithms. 

\paragraph{Organization} The rest of the paper is organized as follows. We propose the algorithms for estimating the multinomial PU model and the ordinal PU model in Section \ref{sec:algorithm}. We then discuss the convergence guarantees of both algorithms in Section \ref{sec:theory}, from both the optimization and statistical perspectives. A series of empirical experiments on synthetic data and real data sets are included in Sections \ref{sec:simulation} and \ref{sec:realdata}. We conclude with some discussion in Section \ref{sec:disc}.
\section{Proposed Algorithms}\label{sec:algorithm}
In this section, we present the corresponding estimation algorithms for our multinomial-PU model and ordinal-PU model under the high-dimensional setting. We will focus on the case-control approach for modeling $\bbP(z_i|y_i)$ and defer our algorithms for the single-traning-set case to Appendix~\ref{sec:alg-st} for simplicity. Specifically, we derive the exact forms of the observed log-likelihood losses for both models and propose to apply the proximal gradient descent (PGD) algorithm to minimize the penalized log-likelihood loss. An alternative approach is to use regularized EM algorithms, whose detailed formulation for both models and convergence properties can also be found in Appendix \ref{append:converg}. We will focus on the PGD algorithm in the main paper due to its computational efficiency. 

\subsection{Multinomial-PU Model}\label{sec:multinom}
Before presenting the detailed algorithms for the multinomial-PU model, we first write out the log-likelihood functions for the observed data $\{(x_i,z_i,s_i=1)\}$ and for the full data $\{(x_i,y_i,z_i,s_i=1)\}$ in the following lemma, where $s_i$ is the selection indicator variable introduced in Section \ref{sec:problem-form}.
\begin{lemma}\label{lem:MN-log-likelihoods}
 The log-likelihood function for the observed presence-only data $\{x_i,z_i,s_i=1\}_{i=1}^n$ is
 \begin{equation*}
 \begin{split}
     &\log L^{\MN}(\Theta,b;\{(x_i,z_i,s_i=1)\}_{i=1}^n)\\
     =&\sum_{i=1}^n \left[\sum_{k=1}^K \ind{z_i=k}[f(\Theta^\top x_i+b)]_k-\log\left(1+\sum_{k=1}^Ke^{[f(\Theta^\top x_i+b)]_k}\right)\right],
 \end{split}
 \end{equation*}
 where $f:\bbR^K\rightarrow \bbR^K$ satisfies $(f(u))_k=u_k+\log \frac{n_k}{\pi_kn_u}-\log(1+\sum_{j=1}^Ke^{u_j})$.
 The log-likelihood function for the full data $\{x_i,y_i,z_i,s_i=1\}_{i=1}^n$ is
 \begin{equation*}
 \begin{split}
     &\log L_f^{\MN}(\Theta,b;\{(x_i,y_i,z_i,s_i=1)\}_{i=1}^n)\\
     =&\sum_{i=1}^n
    \bigg[\sum_{k=1}^K \ind{y_i=k}(x_i^\top \Theta_k+b_k)-\log\left(1+\sum_{k=1}^K(1+\frac{n_k}{\pi_kn_u})e^{x_i^\top\Theta_k+b_k}\right)\\
    &+\sum_{k=1}^K\ind{y_i=z_i=k}\log\frac{n_k}{\pi_kn_u}\bigg],
 \end{split}
 \end{equation*}
 where $\Theta_k$ is the $k$th column of $\Theta$.
\end{lemma}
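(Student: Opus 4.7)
My plan is to derive both log-likelihoods by a direct calculation combining the multinomial model \eqref{eq:MN_model} with the case-control conditional distribution \eqref{eq:case-control}. First I obtain the posterior $\bbP(y\mid x, s=1)$, then use it to compute the joint of $(y,z)$ given $(x,s=1)$ (yielding the full-data likelihood), and finally marginalize out $y$ to obtain the observed-data likelihood; the role of the map $f$ is to rewrite the resulting marginal in standard softmax form. To begin, note that under the case-control design an individual with $y=k\geq 1$ can be included either in the unlabeled sample (size $n_u$, drawn from the whole population of size $N$) or in the positive sample for its category (size $n_k$, drawn from the subpopulation $\{y=k\}$ of relative size $\pi_k$). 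A direct counting argument gives $\bbP(s=1\mid y=k)\propto n_u + n_k/\pi_k$ for $k\geq 1$ and $\bbP(s=1\mid y=0)\propto n_u$. Writing $u:=\Theta^\top x+b$ and $c_k:=1+n_k/(\pi_k n_u)$ and applying Bayes' rule to \eqref{eq:MN_model} yields
\[
\bbP(y=k\mid x, s=1) \;=\; \frac{c_k e^{u_k}}{1+\sum_{j=1}^K c_j e^{u_j}}\ \ (k\geq 1), \qquad \bbP(y=0\mid x, s=1) \;=\; \frac{1}{1+\sum_{j=1}^K c_j e^{u_j}}.
\]

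For the full-data likelihood, I multiply the posterior by \eqref{eq:case-control}. Setting $D(x):=1+\sum_{j}c_je^{u_j}$, for $k\geq 1$ this gives $\bbP(y=k,z=k\mid x,s=1)=\tfrac{n_k}{\pi_k n_u}\cdot \tfrac{e^{u_k}}{D(x)}$ and $\bbP(y=k,z=0\mid x,s=1)=\tfrac{e^{u_k}}{D(x)}$, while $\bbP(y=0,z=0\mid x,s=1)=1/D(x)$. Taking logarithms, the common factor $-\log D(x)$ reproduces the second term in the claimed expression; the indicator $\ind{y_i=k}$ selects the exponent $x_i^\top\Theta_k+b_k$; and the extra multiplicative factor $n_k/(\pi_k n_u)$, which is present only when $z_i=k\geq 1$, produces the $\ind{y_i=z_i=k}\log(n_k/(\pi_k n_u))$ term after summing over $i$.

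For the observed likelihood, marginalizing $y$ out of the joint above yields $\bbP(z=k\mid x, s=1)=\tfrac{n_k}{\pi_k n_u}\cdot \tfrac{e^{u_k}}{D(x)}$ for $k\geq 1$ and $\bbP(z=0\mid x, s=1)=\tfrac{1+\sum_j e^{u_j}}{D(x)}$. The definition of $f$ gives $e^{[f(u)]_k}=\tfrac{n_k}{\pi_k n_u}\cdot \tfrac{e^{u_k}}{1+\sum_j e^{u_j}}$, so $1+\sum_k e^{[f(u)]_k}=\tfrac{D(x)}{1+\sum_j e^{u_j}}$. Substituting back, $\bbP(z=k\mid x,s=1)=e^{[f(u)]_k}/(1+\sum_j e^{[f(u)]_j})$ for $k\geq 1$ and $\bbP(z=0\mid x,s=1)=1/(1+\sum_j e^{[f(u)]_j})$, which is the standard multinomial softmax on $f(u)$; summing $\log$ over $i$ reproduces the claimed observed-data log-likelihood. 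The argument is essentially bookkeeping and presents no substantial obstacle; the only steps requiring care are (i) correctly deducing $\bbP(s=1\mid y=k)$ under the case-control design, which is where the factor $1/\pi_k$ enters, and (ii) verifying that the nonlinear reparametrization $f$ absorbs the selection bias into the standard softmax form claimed by the lemma.
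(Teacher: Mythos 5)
Your proposal is correct and follows essentially the same route as the paper's own derivation: both compute the joint distribution of $(y,z)$ given $(x,s=1)$ via Bayes' rule under the case-control selection mechanism, obtaining $\bbP(y=k,z=k\mid x,s=1)=\tfrac{n_k}{\pi_k n_u}e^{u_k}/D(x)$, $\bbP(y=k,z=0\mid x,s=1)=e^{u_k}/D(x)$ with $D(x)=1+\sum_k(1+\tfrac{n_k}{\pi_kn_u})e^{u_k}$, and then marginalize out $y$ and rewrite via $f$ for the observed likelihood. The only (cosmetic) difference is that you pass through the posterior $\bbP(y\mid x,s=1)$ as an intermediate step, whereas the paper computes the joint directly from the ratios of $\bbP(z=j,s=1\mid y=j)$.
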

\noindent The proof of Lemma~\ref{lem:MN-log-likelihoods} is included in Appendix~\ref{sec:proof-likelihoods}. A comparison between the observed PU log-likelihood and the full log-likelihood suggests that the function $f(\cdot)$ reflects the property of presence-only data.
\paragraph{PGD for solving penalized MLE:} One natural idea for estimating the model parameters under the high-dimensional setting is to minimize a penalized log-likelihood loss function that encourages sparsity of the regression parameters. Specifically, let $$\cL_n^{\MN}(\Theta,b)=-\frac{1}{n}\log L^{\MN}(\Theta,b;\{(x_i,z_i,s_i=1)\}_{i=1}^n),$$ then we would like to minimize $$\mathcal{F}_n^{\MN}(\Theta,b)=\mathcal{L}_n^{\MN}(\Theta,b)+P_{\lambda}^{\MN}(\Theta),$$ 
where we set $P_{\lambda}^{\MN}(\Theta)=\lambda\|\Theta\|_{\omega,2,1}$ as a group sparse penalty, which could reflect potential group structures of the covariates that commonly arise in biomedical or genetic applications. More specifically,
\begin{equation}
    \|\Theta\|_{\omega,2,1}=\sum_{j=1}^J \omega_j\|\Theta_{\cG_j}\|_2.
\end{equation}
Here $\cG_1,\dots,\cG_J$ are $J$ non-overlapping groups satisfying $\cup_{j}\cG_j=[p]\times [K]$, $\cG_j=\cR_{j}\times \cC_{j}$ for some $\cR_{j}\subset [p]$, $\cC_{j}\subset [K]$ with sizes $r_j=|\cR_{j}|$, $c_j=|\cC_{j}|$. $\Theta_{\cG_j}$ is a sub-matrix of $\Theta$ with rows indexed by $\cR_{j}$ and columns indexed by $\cC_{j}$. When $J=1$ and $\cG_1=[p]\times [K]$, $P_{\lambda}^{\MN}(\Theta)$ becomes the $\ell_1$ penalty. To minimize $\mathcal{F}_n(\Theta,b)$, we can directly apply the proximal gradient descent~\citep{wright2009sparse} algorithm with an $\ell_{1,2}$ penalty.
More specifically, let $[\Theta^m;b^m]\in \bbR^{(p+1)\times K}$ be concatenated by $\Theta^m$ and $b^m$ in rows, then we can update the parameters at the $m$th iteration as follows: 
\begin{align}\label{eq:pgd-MN}
    &[\Theta^{m+1};b^{m+1}]\notag\\
    =&\argmin_{[\Theta;b]\in \bbR^{(p+1)\times K}}\left\{\eta_mP^{\MN}_{\lambda}(\Theta)+\frac{1}{2}\|[\Theta;b]-([\Theta^m;b^m]-\eta_m\nabla \cL_n^{\MN}(\Theta^m,b^m))\|_2^2\right\},
\end{align}
where $\eta_m$ is the step size. We can choose the initializer $[\Theta^0;b^0]$ such that its corresponding loss function is no greater than any intercept-only model: $\cF_n^{\MN}(\Theta^0,b^0)\leq \min_b\cF_n^{\MN}(0_{p\times K},b)$, which is satisfied by $[0_{p\times K};\argmin_b\cF_n^{\MN}(0_{p\times K},b)]$.
\subsection{Ordinal-PU Model}\label{sec:ordinal}
Similarly from the multinomial-PU model, we first present the log-likelihood functions in the following lemma.
\begin{lemma}\label{lem:ON-log-likelihoods}
 The log-likelihood function for the observed presence-only data $\{x_i,z_i,s_i=1\}_{i=1}^n$ is
 \begin{equation*}
 \begin{split}
     &\log L^{\ON}(\theta;\{(x_i,z_i,s_i=1)\}_{i=1}^n)\\
     =&\sum_{i=1}^n \left[\sum_{k=1}^K \ind{z_i=k}[f(\log r(x_i,\theta))]_k-\log\left(1+\sum_{k=1}^Ke^{[f(\log r(x_i,\theta))]_k}\right)\right],
 \end{split}
 \end{equation*}
 where $f$ is defined in Lemma~\ref{lem:MN-log-likelihoods} and $r:\bbR^p\times \bbR^{p+K}\rightarrow \bbR^{K}$ satisfies
\begin{equation}
\begin{split}
    &r_j(x,\theta)\\
    =&(1+e^{x^\top \theta_{1:p}-\theta_{p+1}})\left[(1+e^{x^\top \theta_{1:p}-\sum_{l=1}^{j+1}\theta_{p+l}})^{-1}-(1+e^{x^\top \theta_{1:p}-\sum_{l=1}^{j}\theta_{p+l}})^{-1}\right],
\end{split}
\end{equation}
for $1\leq j<K$ and $r_K(x,\theta)=(1+e^{x^\top \theta_{1:p}-\theta_{p+1}})\left[1-(1+e^{x^\top \theta_{1:p}-\sum_{l=1}^{j}\theta_{p+l}})^{-1}\right]$.
 The log-likelihood function for the full data $\{x_i,y_i,z_i,s_i=1\}_{i=1}^n$ is
 \begin{equation*}
 \begin{split}
     &\log L_f^{\ON}(\theta;\{(x_i,y_i,z_i,s_i=1)\}_{i=1}^n)\\
     =&\sum_{i=1}^n
    \bigg[\sum_{k=1}^K \ind{y_i=k}\log r_k(x_i,\theta)-\log\left(1+\sum_{k=1}^K(1+\frac{n_k}{\pi_kn_u})r_k(x_i,\theta)\right)\\
    &+\sum_{k=1}^K\ind{y_i=z_i=k}\log\frac{n_k}{\pi_kn_u}\bigg].
 \end{split}
 \end{equation*}
\end{lemma}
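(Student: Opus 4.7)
\textbf{Proof proposal for Lemma \ref{lem:ON-log-likelihoods}.}

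My plan is to reduce the ordinal case to the multinomial case already handled in Lemma \ref{lem:MN-log-likelihoods} by exhibiting a multinomial-style reparameterization of the ordinal conditional law of $y$ given $x$. Concretely, I claim that the cumulative logits model \eqref{eq:ordinal_model}, rewritten in terms of $\theta$, can be expressed as
\begin{equation*}
\bbP(y=k \mid x) = \frac{r_k(x,\theta)}{1+\sum_{j=1}^K r_j(x,\theta)} \quad (1 \leq k \leq K), \qquad \bbP(y=0 \mid x) = \frac{1}{1+\sum_{j=1}^K r_j(x,\theta)}.
\end{equation*}
Once this identity is in place, the whole argument from the multinomial proof in Appendix \ref{sec:proof-likelihoods} applies verbatim with the linear predictor $x_i^\top \Theta_k + b_k$ replaced everywhere by $u_{k}(x_i,\theta) := \log r_k(x_i,\theta)$. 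In particular, the observed and full log-likelihood formulas in Lemma \ref{lem:ON-log-likelihoods} fall out of those in Lemma \ref{lem:MN-log-likelihoods} under this substitution.

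The first step is to verify the display above by direct computation from \eqref{eq:ordinal_model}. Because $\bbP(y<j\mid x)=(1+e^{x^\top\beta^*-\nu^*_j})^{-1}$, the reparameterization gives $\bbP(y=0\mid x)=(1+e^{x^\top\theta_{1:p}-\theta_{p+1}})^{-1}$ and, for $1\leq j<K$,
\begin{equation*}
\bbP(y=j\mid x) \;=\; \bbP(y<j+1\mid x) - \bbP(y<j\mid x) \;=\; \bigl(1+e^{x^\top\theta_{1:p}-\sum_{l=1}^{j+1}\theta_{p+l}}\bigr)^{-1} - \bigl(1+e^{x^\top\theta_{1:p}-\sum_{l=1}^{j}\theta_{p+l}}\bigr)^{-1},
\end{equation*}
and analogously $\bbP(y=K\mid x) = 1 - (1+e^{x^\top\theta_{1:p}-\sum_{l=1}^{K}\theta_{p+l}})^{-1}$. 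Dividing each of these by $\bbP(y=0\mid x)$ yields exactly $r_j(x,\theta)$ as defined in the lemma, which after algebra gives the claimed multinomial representation (the normalizing constant $1+\sum_j r_j$ is then forced by the fact that the probabilities sum to one).

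The second step is to push the representation through the case-control likelihood calculation of Lemma \ref{lem:MN-log-likelihoods}. For the full likelihood, one simply substitutes $e^{x_i^\top\Theta_k+b_k} \mapsto r_k(x_i,\theta)$ in every appearance in the expression for $\log L_f^{\MN}$; the indicator terms $\ind{y_i=k}\log r_k(x_i,\theta)$ and the normalizer $\log(1+\sum_k(1+n_k/(\pi_k n_u))r_k(x_i,\theta))$ appear exactly as in the statement. For the observed likelihood, one marginalizes over $y$ given $(x,z,s=1)$ using \eqref{eq:case-control}; because this marginalization depends on the conditional law of $y$ given $x$ only through the multinomial-type expression above, it produces the function $f$ applied to $u_k(x_i,\theta)=\log r_k(x_i,\theta)$, giving the stated form of $\log L^{\ON}$.

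The only real obstacle is bookkeeping around the boundary index $k=K$ and the reparameterization, since the $r_K$ formula uses $1-(1+e^{\cdots})^{-1}$ rather than a difference of two logistic CDFs; one must check it plugs into the same multinomial skeleton. Once that is handled, no new non-trivial estimation or concentration argument is required — the lemma is essentially a change-of-variables consequence of Lemma \ref{lem:MN-log-likelihoods}, which is why the paper states the two lemmas in completely parallel form.
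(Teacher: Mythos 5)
Your proposal is correct and follows essentially the same route as the paper: the appendix derivation likewise identifies $r_k(x,\theta)$ as the odds ratio $\bbP_\theta(y=k\mid x)/\bbP_\theta(y=0\mid x)$ (computed from the cumulative-logit differences under the reparameterization) and then reruns the multinomial case-control computation verbatim with $e^{x_i^\top\Theta_k+b_k}$ replaced by $r_k(x_i,\theta)$. Your handling of the boundary index $k=K$ matches the paper's as well.
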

\noindent The proof of Lemma~\ref{lem:ON-log-likelihoods} is included in Appendix~\ref{sec:proof-likelihoods}. Compared to the log-likelihoods of the multinomial-PU model in Lemma~\ref{lem:MN-log-likelihoods}, the only difference in Lemma~\ref{lem:ON-log-likelihoods} is that $\Theta^\top x_i+b\in \bbR^K$ is substituted by $\log r(x_i,\theta)$. The reason behind this connection is the following fact: $\Theta_k^\top x_i+b_k$ and $\log r_k(x_i,\theta)$ are the log-ratios between $\bbP(y_i=k|x_i)$ and $\bbP(y_i=0|x_i)$ under the multinomial-PU model and the ordinal-PU model, respectively. 

\paragraph{PGD for solving penalized MLE:} Let $$\mathcal{L}_n^{\ON}(\theta)=-\frac{1}{n}\log L^{\ON}(\theta;\{(x_i,z_i,s_i=1)\}_{i=1}^n).$$ Similarly to the multinomial case, we can estimate $\theta^*$ by applying the proximal gradient descent algorithm on the penalized loss function $\cF^{\ON}_n(\theta)=\cL_n^{\ON}(\theta)+P^{\ON}_{\lambda}(\theta_{1:p})$, where $P_{\lambda}^{\ON}(\theta_{1:p})=\lambda\|\theta_{1:p}\|_{\omega,2,1}=\lambda\sum_{j=1}^J\omega_j\|\theta_{\cG_j}\|_2$, with disjoint groups $\cG_1,\dots,\cG_J\subset[p]$ satisfying $\cup_j\cG_j=[p]$. At each iteration $m$, we update the parameter as follows:
\begin{align}\label{eq:pgd-ON}
    \theta^{m+1}=\argmin_{\theta\in \bbR^{p+K}}\left\{\eta_mP^{\ON}_{\lambda}(\theta_{1:p})+\frac{1}{2}\|\theta-(\theta^m-\eta_m\nabla \cL_n^{\ON}(\theta^m))\|_2^2\right\},
\end{align}
where $\eta_m$ is the step size. We choose the initializer $\theta^0$ with loss function no greater than any intercept-only model: $\cF_n^{\ON}(\theta^0)\leq \min_{\theta_{1:p}=0_{p\times 1}}\cF_n^{\ON}(\theta)$, and a simple choice is $\theta^0=\argmin_{\theta_{1:p}=0_{p\times 1}}\cF_n^{\ON}(\theta)$.
\section{Theoretical Guarantees}\label{sec:theory}
In this section, we provide optimization and statistical guarantees for our algorithms presented in Section~\ref{sec:algorithm}, proposed for the case-control setting. We will briefly describe the theoretical properties for the algorithms under the single-training-set scenario in Appendix~\ref{sec:alg-st}. 
\subsection{Algorithmic Convergence}
We first show that when applying the projected gradient descent algorithm to minimize our penalized log-likelihood losses, it would converges to a stationary point. Proposition~\ref{lem:converg-MN} focuses on the multinomial-PU model, and we also have similar convergence guarantees for the ordinal-PU model, deferred to Appendix \ref{append:converg}. 
\begin{proposition}[Convergence of algorithms for the multinomial-PU model]\label{lem:converg-MN}
 If the parameter iterates $\{\Theta^m,b^m\}_m$ are generated by the PGD update~\eqref{eq:pgd-MN} with proper choices of step sizes, they would satisfy the following:
 \begin{itemize}
    \item[(i)] The sequence $\{(\Theta^m,b^m)\}_m$ has at least one limit point.
     \item[(ii)]There exists $R^{\MN}>0$ such that all limit points of $\{(\Theta^m,b^m)\}_m$ belong to $\Gamma^{\MN}$, the set of first order stationary points of the optimization problem $\min_{\|\Theta\|_2+\|b\|_2\leq R^{\MN}}\mathcal{F}_n^{\MN}(\Theta,b)$.
     \item[(iii)]The sequence of function values $\{\mathcal{F}_n^{\MN}(\Theta^m,b^m)\}_m$ is non-increasing, and $\mathcal{F}_n^{\MN}(\Theta^{m+1},b^{m+1})<\mathcal{F}_n^{\MN}(\Theta^{m},b^{m})$ holds if $(\Theta^m,b^m)\notin \Gamma^{\MN}$. In addition, there exists $(\widetilde{\Theta},\widetilde{b})\in \Gamma^{\MN}$ such that $\{\mathcal{F}_n^{\MN}(\Theta^m,b^m)\}_m$ converges monotonically to $\mathcal{F}_n^{\MN}(\widetilde{\Theta},\widetilde{b})$.
 \end{itemize}
\end{proposition}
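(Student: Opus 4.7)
The plan is to follow the standard recipe for proximal gradient descent on a smooth (but non-convex) loss plus a convex group-sparse regularizer: establish Lipschitz smoothness of $\nabla\cL_n^{\MN}$, derive a sufficient-decrease inequality, use the specific initializer to confine the iterates to a bounded sublevel set, and finally extract stationary limit points from the proximal optimality condition. Concretely, a direct calculation using $f_k(u)=u_k+c_k-\log(1+\sum_j e^{u_j})$ with $c_k=\log(n_k/(\pi_k n_u))$ shows that the Jacobian of $f$ has entries in $[-1,1]$ and the outer log-sum-exp term has Hessian bounded by $\tfrac14 I$; composing these with the linear map $(\Theta,b)\mapsto \Theta^\top x_i+b$ shows that $\cL_n^{\MN}$ is \emph{globally} Lipschitz smooth on $\bbR^{(p+1)\times K}$ with constant $L$ depending only on $\max_i\|x_i\|_2$ and on $\{c_k\}$. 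For any step size $\eta_m\le 1/L$ (or one obtained by standard backtracking), the proximal update~\eqref{eq:pgd-MN} satisfies the familiar descent inequality
\[
\cF_n^{\MN}(\Theta^{m+1},b^{m+1})\le \cF_n^{\MN}(\Theta^m,b^m)-\tfrac{1}{2\eta_m}\big\|[\Theta^{m+1};b^{m+1}]-[\Theta^m;b^m]\big\|_2^2,
\]
with strict inequality whenever $[\Theta^{m+1};b^{m+1}]\neq[\Theta^m;b^m]$, and equality forcing stationarity via the proximal optimality condition.

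\textbf{Boundedness of the iterates (the main obstacle).} The crux of the argument is to rule out iterate drift to infinity so that the ball of radius $R^{\MN}$ in (ii) exists. By monotonicity, $\{(\Theta^m,b^m)\}$ remains in $\cS_0:=\{(\Theta,b):\cF_n^{\MN}(\Theta,b)\le \cF_n^{\MN}(\Theta^0,b^0)\}$, and the initialization rule places this set inside $\{(\Theta,b):\cF_n^{\MN}(\Theta,b)\le \min_b\cL_n^{\MN}(0_{p\times K},b)\}$. Since $f_k(u)\le c_k$ implies $\cL_n^{\MN}$ is bounded below, the coercive penalty $\lambda\|\Theta\|_{\omega,2,1}$ then delivers a uniform bound on $\|\Theta^m\|_2$. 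Combined with global Lipschitzness of $\cL_n^{\MN}$ in $\Theta$, the problem reduces to showing that $\{b:\cL_n^{\MN}(0_{p\times K},b)\le \min_b\cL_n^{\MN}(0_{p\times K},b)+C\}$ is bounded for every constant $C$; I would verify this by analysing $\cL_n^{\MN}(0_{p\times K},tv)$ as $t\to\infty$ along each direction $v$ and noting that either some $f_k(tv)\to -\infty$ for a sample with observed $z_i=k$, so the loss blows up, or the loss converges to a limit strictly above its global minimum, and the initialization threshold cuts off that recession direction. Taking $R^{\MN}$ to be any majorant of this bounded set both produces a limit point via Bolzano--Weierstrass (part (i)) and supplies the constraint radius in (ii).

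\textbf{Stationarity and convergence of function values.} Summing the sufficient-decrease inequality gives $\sum_m \|[\Theta^{m+1};b^{m+1}]-[\Theta^m;b^m]\|_2^2<\infty$, so consecutive iterates get arbitrarily close. Along any convergent subsequence $(\Theta^{m_l},b^{m_l})\to (\widetilde\Theta,\widetilde b)$ extracted from boundedness, the first-order optimality condition for~\eqref{eq:pgd-MN},
\[
0\in \eta_{m_l}\partial P_\lambda^{\MN}(\Theta^{m_l+1})+\big([\Theta^{m_l+1};b^{m_l+1}]-[\Theta^{m_l};b^{m_l}]\big)+\eta_{m_l}\nabla\cL_n^{\MN}(\Theta^{m_l},b^{m_l}),
\]
can be passed to the limit using outer semicontinuity of $\partial P_\lambda^{\MN}$ together with continuity of $\nabla\cL_n^{\MN}$, yielding $0\in\partial\cF_n^{\MN}(\widetilde\Theta,\widetilde b)$ --- the stationarity condition of the restricted problem on the ball of radius $R^{\MN}$, proving (ii). Finally $\{\cF_n^{\MN}(\Theta^m,b^m)\}$ is non-increasing and bounded below, hence converges monotonically; continuity of $\cF_n^{\MN}$ identifies the limit with $\cF_n^{\MN}(\widetilde\Theta,\widetilde b)$ for any limit point, establishing (iii).
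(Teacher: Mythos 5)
Your proposal is correct and follows the same overall strategy as the paper --- sufficient decrease of the proximal step, confinement of the iterates to the initial sublevel set, and extraction of stationary limit points --- but it diverges in two places worth noting. First, for parts (ii) and (iii) the paper does not pass the proximal optimality condition to the limit as you do: after establishing monotone descent via Lemma 10.4 of \cite{beck2017first} and strict decrease off $\Gamma^{\MN}$ via the fixed-point characterization of the prox map, it invokes Zangwill's global convergence theorem \citep{zangwill1969nonlinear} as a black box. Your route --- summing the sufficient-decrease inequality to get $\sum_m\|[\Theta^{m+1};b^{m+1}]-[\Theta^m;b^m]\|_2^2<\infty$ and then using outer semicontinuity of $\partial P_{\lambda}^{\MN}$ --- is more self-contained and reaches the same conclusion, provided you also record that $[\Theta^{m_l+1};b^{m_l+1}]$ converges to the same limit as $[\Theta^{m_l};b^{m_l}]$ and that the step sizes are bounded away from zero. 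Second, and more substantively, the paper handles boundedness by asserting $\lim_{\|\Theta\|_2+\|b\|_2\to\infty}\cF_n^{\MN}(\Theta,b)=+\infty$; this is actually false along the direction $\Theta=0_{p\times K}$, $b=t1_K$, $t\to+\infty$, where $f(\Theta^\top x_i+b)$ converges entrywise so the loss tends to a finite limit, and the penalty cannot rescue coercivity since it does not involve $b$. Your recession-direction analysis of the intercept coordinates is the right fix, although the branch where the loss ``converges to a limit strictly above its global minimum'' still needs to be verified (one must check that the limiting value along each unbounded intercept direction strictly exceeds $\min_b\cF_n^{\MN}(0_{p\times K},b)$, which holds when the intercept-only minimizer is attained at a finite point, i.e.\ under mild conditions on the observed label counts). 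On that one genuinely delicate step your write-up is, if anything, more careful than the paper's.
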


\subsection{Statistical Theory for Stationary Points: Multinomial-PU Model}\label{sec:theory-MN}
In this section, we will provide theoretical guarantees for the stationary points of $\mathcal{F}_n^{\MN}(\Theta,b)=\cL_n^{\MN}(\Theta,b)+P^{\MN}_{\lambda}(\Theta)$.  
For simplicity, we assume zero offset parameter $b=0_{K\times 1}$, and we present the statistical properties for any stationary point $\widehat{\Theta}$ of the penalized loss $\mathcal{F}_n(\Theta,0_{K\times 1})$ within some feasible region around $\Theta^*$: $\Theta^*+\bbB_{1,\infty}(R_0)$, where $\bbB_{1,\infty}(R_0)=\{\Delta:\max_{1\leq i\leq K}\|\Delta_{:,i}\|_1\leq R_0\}$ and conditions on $R_0>0$ will be specified shortly in Assumption~\ref{assump:mult_region_bound}. 

\paragraph{Key quantities for theory under the multinomial-PU model:} Before presenting our theoretical results, we first define some key quantities.
The maximum group size parameter is defined as $m=\max_{1\leq j\leq J}c_jr_j$, where $c_j$, $r_j$ are group size parameters associated with each group $\mathcal{G}_j$ in the $\ell_{1,2}$ penalty. Let $R^*=\max_{1\leq i\leq K}\|\Theta_{:,i}^*\|_1$ be a boundedness parameter of $\Theta^*$. In addition, let $S=\{j:\Theta^*_{\cR_j,\cC_j}\neq 0\}$ and $s=|S|$. 

The following assumptions are needed to derive our theoretical guarantees.
\begin{assumption}[Sub-Gaussian covariates]\label{assump:subgauss_row}
$\{x_i\in \bbR^p\}_{i=1}^n$ are independent mean zero sub-Gaussian vectors with sub-Gaussian parameter $\sigma$ and covariance matrix $\Sigma$, satisfying that $\sigma^2\leq C\min\{\lambda_{\min}(\Sigma),1\}$. Meanwhile, $|x_{ij}|\leq C_x$ for some constant $0<C_x\leq C\lambda_{\min}^{\frac{1}{2}}(\Sigma)$. 
\end{assumption}  

The sub-Gaussian condition for $x_i$ in Assumption~\ref{assump:subgauss_row} is commonly seen in the high-dimensional statistics literature, and is satisfied by standard Gaussian vectors. Each entry of $x_i$ is assumed to be bounded in order to ensure bounds for $x_i^\top \Theta$ for any $\Theta$ in the feasible region, so that the loss function can be concentrated appropriately. 
In addition, we have assumed $\sigma^2\leq C\{\lambda_{\min}(\Sigma),1\}$ in order to show the restricted eigenvalue condition with high probability. 
\begin{assumption}[Rate conditions]\label{assump:scaling}
The group weight $\omega$ in the group sparsity penalty satisfies $c\leq \min_j\omega_j\leq \max_j\omega_j\leq C\sqrt{\frac{n}{\log J+m}}$; The sample sizes of labeled and unlabeled data satisfies $c\leq \min_j\frac{n_j}{\pi_jn_u}\leq C$; $\lambda_{\max}(\Sigma)\geq c$, $K\log J\leq Cn$ and $\log J\geq cm\log \log n$. 
\end{assumption}
The rate conditions in Assumption~\ref{assump:scaling} are standard and comparable to the past literature in high-dimensional statistics and PU-learning~\citep[eg.][]{song2020pulasso,loh2013regularized}.

Another key assumption for our statistical theory is concerned with the feasible region of the optimization problem: $\Theta\in \Theta^* + \mathbb{B}_{1,\infty}(R_0)$. As will be explained more clearly in Section \ref{sec:main_proofs}, our non-convex loss function evaluated at $\Theta$ can be decomposed into a restricted convex term and a non-convex mean-zero term which can be concentrated well with high probability. The restricted convex term has lower bounded curvature within a particular neighborhood of $\Theta^*$, motivating us to impose a constraint on the neighborhood radius $R_0$ as follows.
\begin{assumption}\label{assump:mult_region_bound}
$$
R_0\leq \frac{\min_j\frac{n_j}{\pi_j n_u}e^{-C_xR^*}}{4C_x(1+\max_j\frac{n_j}{\pi_j n_u})^2(1+1.1Ke^{C_xR^*})^{3}}.
$$
\end{assumption}
The constraint on the radius $R_0$ of the feasible region in Assumption \ref{assump:mult_region_bound} depends on the case-control study design: a more even $\frac{n_j}{\pi_j n_u}$ over $1\leq j\leq K$ leads to larger feasible set; it also depends on the magnitude of the true model parameter $\Theta^*$: a smaller $R^*$ also means a larger feasible set. The latter relationship has an intuitive explanation: the original multinomial log-likelihood loss with all labels observed has larger curvature around zero, and hence similarly, the non-convexity of the multinomial-PU loss is also milder around zero. Although Assumption \ref{assump:mult_region_bound} might seem a little stringent, as we will show in Sections \ref{sec:simulation} and \ref{sec:realdata}, the initialization with the best intercept-only model usually works well throughout our synthetic and real data experiments. 
 
Before formally stating our theoretical results, we also define the function
\begin{equation}\label{eq:mult_h_def}
    h^{\MN}(R)=\frac{e^{-C_x(R+R^*)}\min_j\frac{n_j}{\pi_j n_u}}{(1+\max_j\frac{n_j}{\pi_j n_u})^2\left(1+Ke^{C_x(R+R^*)}\right)^3}-4C_xR.
\end{equation}
As we will show in our proofs, this function $h^{\MN}(R)$ determines a lower bound of the log-likelihood's restricted strong convexity, and $h^{\MN}(R_0)$ is guaranteed to be positive as long as Assumption \ref{assump:mult_region_bound} holds (see Lemma \ref{lem:MN_region_results}). 
\begin{theorem}\label{thm:mult_upp}
Suppose Assumptions \ref{assump:subgauss_row} to \ref{assump:mult_region_bound} hold, and $\lambda\geq C\lambda^{\frac{1}{2}}_{\max}(\Sigma)(K+C_xR^*\sqrt{K})\sqrt{\frac{\log J+m}{n}}$
for some constant $C>0$. If $\widehat{\Theta}$ is a stationary point of $\min_{\Theta-\Theta^*\in\bbB_{1,\infty}(R_0)}\cF_n^{\MN}(\Theta,0_{K\times 1})$, then
\begin{equation}
    \|\widehat{\Theta}-\Theta^*\|_2\leq \frac{3\|\omega_S\|_2\lambda}{h^{\MN}(R_0)\lambda_{\min}(\Sigma)},\quad \|\widehat{\Theta}-\Theta^*\|_{\omega,2,1}\leq \frac{12\|\omega_S\|^2_2\lambda}{h^{\MN}(R_0)\lambda_{\min}(\Sigma)},
\end{equation}
with probability at least $1-\exp\{-c\frac{K^2(\log J+m)}{\max_jr_j}\}-\exp\{-c(\log J+m)\}.$ %Here, $h^{\MN}(R_0)>0$ is guaranteed by Assumption \ref{assump:mult_region_bound}.
\end{theorem}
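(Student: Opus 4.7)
\textbf{Proof plan for Theorem \ref{thm:mult_upp}.} My plan is to follow the standard Loh--Wainwright / Song--Raskutti template for stationary-point analysis of non-convex penalized estimators, with the main technical work concentrated in establishing a restricted strong convexity (RSC) inequality with the explicit curvature constant $h^{\MN}(R_0)\lambda_{\min}(\Sigma)$. The argument factors into three independent pieces: (i) a deterministic consequence of stationarity, (ii) a high-probability bound on the gradient $\nabla \cL_n^{\MN}(\Theta^*,0)$ at the truth, and (iii) an RSC inequality for $\cL_n^{\MN}$ on a cone around the sparse directions.

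\textbf{Step 1 (stationarity and the cone).} Writing $\Delta=\widehat{\Theta}-\Theta^*$, the first-order stationary condition for the $\ell_{1,2}$-penalized problem, tested against the feasible point $\Theta^*$, gives
\begin{equation*}
\langle \nabla \cL_n^{\MN}(\widehat{\Theta},0),\Delta\rangle \;\le\; \lambda\|\Theta^*\|_{\omega,2,1}-\lambda\|\widehat{\Theta}\|_{\omega,2,1}.
\end{equation*}
Using the block decomposition over $S$ and $S^c$ and sparsity of $\Theta^*$, the right side is bounded above by $\lambda\|\Delta_S\|_{\omega,2,1}-\lambda\|\Delta_{S^c}\|_{\omega,2,1}$. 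Adding and subtracting $\nabla\cL_n^{\MN}(\Theta^*,0)$ and applying the dual-norm inequality for the group $\ell_{1,2}$ norm, the gradient-at-truth term will be absorbed by $\lambda/2$ of the penalty (this is where Step 2 is invoked). This yields the one-sided basic inequality
\begin{equation*}
\langle \nabla \cL_n^{\MN}(\widehat{\Theta},0)-\nabla\cL_n^{\MN}(\Theta^*,0),\Delta\rangle \;\le\; \tfrac{3\lambda}{2}\|\Delta_S\|_{\omega,2,1}-\tfrac{\lambda}{2}\|\Delta_{S^c}\|_{\omega,2,1},
\end{equation*}
forcing $\Delta$ into the cone $\|\Delta_{S^c}\|_{\omega,2,1}\le 3\|\Delta_S\|_{\omega,2,1}$.

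\textbf{Step 2 (controlling the empirical score at $\Theta^*$).} I will show $\|\nabla\cL_n^{\MN}(\Theta^*,0)\|_{\omega,2,\infty}^{\ast}\le \lambda/2$ with the stated probability. The score decomposes into a sum of independent zero-mean terms of the form $\frac{1}{n}\sum_i x_i g_k(x_i,z_i)$, where $g_k$ is a bounded scalar. Combining Assumption \ref{assump:subgauss_row} (sub-Gaussian, bounded-entry $x_i$) with a Bernstein/Hoeffding bound for each group block and a union bound over the $J$ groups, the magnitude of each group block scales like $\lambda_{\max}^{1/2}(\Sigma)\sqrt{(m+\log J)/n}$ times a factor of $O(K+C_xR^*\sqrt{K})$ coming from the dynamic range of the logistic derivatives. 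The stated choice of $\lambda$ absorbs exactly these factors.

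\textbf{Step 3 (RSC --- the main obstacle).} This is the nontrivial step. I will write the symmetric Bregman-type quantity
\begin{equation*}
\langle \nabla\cL_n^{\MN}(\widehat{\Theta},0)-\nabla\cL_n^{\MN}(\Theta^*,0),\Delta\rangle \;=\; \int_0^1 \Delta^\top \nabla^2\cL_n^{\MN}(\Theta^*+t\Delta,0)\,\Delta\,dt,
\end{equation*}
and decompose $\nabla^2\cL_n^{\MN}$ into a pointwise positive semidefinite piece and a residual non-convex piece. For the former, the per-sample Hessian has the quadratic form $\Delta^\top x_i x_i^\top \Delta$ multiplied by a scalar factor that I will lower bound uniformly over $\Theta^*+\bbB_{1,\infty}(R_0)$ by $\min_j\tfrac{n_j}{\pi_jn_u}e^{-C_x(R_0+R^*)}/\bigl[(1+\max_j\tfrac{n_j}{\pi_j n_u})^2(1+Ke^{C_x(R_0+R^*)})^3\bigr]$, using $|x_i^\top(\Theta^*+t\Delta)_k|\le C_x(R^*+R_0)$ from Assumption \ref{assump:subgauss_row} and the definition of $\bbB_{1,\infty}(R_0)$. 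Averaging and invoking a restricted eigenvalue condition for $\frac{1}{n}\sum_i x_ix_i^\top$ (which holds with high probability for sub-Gaussian $x_i$ on the cone from Step 1; this is standard via Loh--Wainwright or Raskutti--Wainwright--Yu) produces a lower bound of the form $\tfrac{1}{2}\lambda_{\min}(\Sigma)$ times that scalar factor. The non-convex residual piece is handled by a uniform concentration argument: after taking out the PSD part, the remainder is a mean-zero process whose derivative norm along $\Delta$ is Lipschitz in $\Delta$ with constant $O(C_x)$, yielding the additional subtractive $-4C_xR_0$ in $h^{\MN}(R_0)$. Assumption \ref{assump:mult_region_bound} is exactly the condition that makes $h^{\MN}(R_0)>0$, so RSC with curvature $\tfrac{1}{2}h^{\MN}(R_0)\lambda_{\min}(\Sigma)$ survives.

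\textbf{Step 4 (assembling the error bounds).} Combining Step 3's RSC with Step 1's basic inequality,
\begin{equation*}
\tfrac{1}{2}h^{\MN}(R_0)\lambda_{\min}(\Sigma)\|\Delta\|_2^2 \;\le\; \tfrac{3\lambda}{2}\|\Delta_S\|_{\omega,2,1}\;\le\; \tfrac{3\lambda}{2}\|\omega_S\|_2\|\Delta\|_2,
\end{equation*}
which rearranges to the claimed $\ell_2$ bound. For the $\ell_{\omega,2,1}$ bound, I use the cone condition $\|\Delta_{S^c}\|_{\omega,2,1}\le 3\|\Delta_S\|_{\omega,2,1}$ to get $\|\Delta\|_{\omega,2,1}\le 4\|\omega_S\|_2\|\Delta\|_2$ and chain this with the $\ell_2$ bound to obtain the factor $12\|\omega_S\|_2^2$. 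I expect the decisive and hardest step to be the RSC argument of Step 3, since it is the place where the PU-specific non-convexity of the multinomial log-likelihood must be tamed, and where the exact form of $h^{\MN}(R_0)$ and the role of Assumption \ref{assump:mult_region_bound} are forced by the uniform lower bound on the curvature of the restricted-convex part minus the Lipschitz fluctuation of the mean-zero part.
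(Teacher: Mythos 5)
Your overall architecture (stationarity/basic inequality, deviation bound at $\Theta^*$, restricted strong convexity, assembly with the cone condition) matches the paper's proof, and your Step 4 arithmetic correctly reproduces both constants. The gap is in Step 3, and it is exactly the step you flag as decisive. You assert that ``the per-sample Hessian has the quadratic form $\Delta^\top x_i x_i^\top \Delta$ multiplied by a scalar factor'' that you then lower bound uniformly. This is only true when $K=1$. For $K>1$ the per-sample curvature along $\Delta$ is $(\Delta^\top x_i)^\top G_i (\Delta^\top x_i)$ with $\Delta^\top x_i\in\bbR^K$ and $G_i=\nabla f(u_i)^\top\nabla^2 A(f(u_i^t))\nabla f(u_i^t)\in\bbR^{K\times K}$ an \emph{asymmetric matrix} (the two $\nabla f$ factors are evaluated at different points, and it is not even clear that its symmetrization is PSD). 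The paper's central technical work is precisely here: it splits $G_i$ into a symmetric PSD part $\nabla f(u_i^t)^\top\nabla^2 A(f(u_i^t))\nabla f(u_i^t)$ and an error part, lower bounds the minimum eigenvalue of the PSD part by exploiting that both $\nabla^2 A(f(u))$ and $\nabla f(u)=I_K-1_K\nabla A(f(u))^\top$ are diagonal-plus-rank-one and hence invertible in closed form (Lemma \ref{lem:sketch_MN_I_lrbnd}), and bounds the error part by $4C_xR$ — which is where the subtractive term in $h^{\MN}(R_0)$ actually comes from. In your plan the $-4C_xR_0$ is attributed to the mean-zero fluctuation, but it is in fact a deterministic Lipschitz error of $\nabla f$; without the matrix eigenvalue analysis your scalar lower bound is unsupported and the specific form of $h^{\MN}(R_0)$ cannot be derived.

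A second, related omission: the genuinely mean-zero term $\frac1n\sum_i x_i^\top\Delta\,(\nabla f(\Theta^\top x_i)-\nabla f(\Theta^{*\top}x_i))^\top\epsilon_i$ must be concentrated \emph{uniformly} over $\Theta^*+\bbB_{1,\infty}(R_0)$, and for $K>1$ the relevant contraction functions $h_i(\Delta^\top x_i)$ have a vector-valued domain $\bbR^K$, so the standard scalar Ledoux--Talagrand contraction does not apply; the paper needs the vector-contraction inequality of Maurer together with symmetrization and a peeling argument, and this term contributes an additional slack $\tau_2\sqrt{(\log J+m)/n}\,\|\Delta\|_{\omega,2,1}$ to the RSC inequality (with $\tau_2\asymp(K+C_xR^*\sqrt K)\lambda_{\max}^{1/2}(\Sigma)$) that must then be absorbed by $\lambda$. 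This is in fact where the factor $K+C_xR^*\sqrt K$ in the required $\lambda$ comes from — not from the deviation bound at $\Theta^*$ as your Step 2 suggests, which by itself only needs $\lambda\gtrsim\lambda_{\max}^{1/2}(\Sigma)\sqrt{(m+\log J)/n}$. Your assembled inequality in Step 4 has no room for these slack terms, so you would also need the analogue of Lemma \ref{lem:MN_region_results} showing that Assumptions \ref{assump:subgauss_row}--\ref{assump:mult_region_bound} force the quadratic slack to be dominated by the linear one inside $\bbB_{1,\infty}(R_0)$.
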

\begin{remark}
When $\omega_j\asymp C$ for all $1\leq j\leq J$, and $\lambda\asymp \lambda_{\max}^{\frac{1}{2}}(\Sigma)(K+C_xR^*\sqrt{K})\sqrt{\frac{\log J+m}{n}}$, Theorem~\ref{thm:mult_upp} suggests that $\|\widehat{\Theta}-\Theta^*\|_2=O(K\sqrt{\frac{s(\log J+m)}{n}})$ and $\|\widehat{\Theta}-\Theta^*\|_{\omega,2,1}=O(Ks\sqrt{\frac{\log J+m}{n}})$ with high probability. In particular, when the number of positive labels $K\leq C$, this rate is the same as the one derived in~\cite{song2020pulasso}.
\end{remark}
\begin{remark}
The major challenge for proving Theorem~\ref{thm:ordinal_upp} is to show that the loss function $\cL_n^{\MN}(\theta)$ is concentrated around a restricted strongly convex function within $\Theta^*+\bbB_{1,\infty}(R_0)$, see Lemma~\ref{lem:RSC}. Our proof is based on the techniques developed in \cite{song2020pulasso}, while considering multiple positive labels requires more careful analysis, e.g., the proof of Lemma~\ref{lem:mult_rsc_term1}. In addition, due to $K>1$, our proof relies on a vector-contraction inequality for Rademacher complexities~\citep{maurer2016vector} where the contraction functions have a vector-valued domain. More detailed discussion on the non-convexity issues and how we address them are presented in Section \ref{sec:proof_sketch_MN}.
\end{remark}
\subsection{Statistical Theory for Stationary Points: Ordinal-PU Model}\label{sec:theory-ON}
Similar to Section~\ref{sec:theory-MN}, in the following we will show an error bound for $\widehat{\theta}-\theta^*$, where $\widehat{\theta}$ is any stationary point of the loss function $\mathcal{F}_n^{\ON}(\theta)$ within some region around $\theta^*$: $\theta-\theta^*\in S(R_0,r_0)$. Here the set $S(R_0,r_0)$ is defined as follows:
\begin{equation}\label{eq:ordinal_region}
    S(R_0,r_0)=\{\Delta:\|\Delta_{1:p}\|_1,\|\Delta_{(p+1):(p+K)}\|_1\leq R_0,\min_{2\leq j\leq K}\Delta_{p+j}\geq -r_0\},
\end{equation}
where $R_0, r_0>0$ depend on the true parameters and will be specified later in Assumption~\ref{assump:ordinal_region_bound}. Different from the multinomial-PU model, here the constraint set $S(R_0,r_0)$ also impose a lower bound on $\Delta_{(p+2):(p+K)}$ instead of only the norms of $\Delta$. This is due to the nature of the ordinal logistic regression model: we need to ensure $\widehat{\theta}_{p+j}>0$ for $2\leq j\leq K$ so that $\bbP_{\widehat{\theta}}(y=k|x)>0$ for all $k>0$. The constraint set $S(R_0,r_0)$ is larger if $R_0$ or $r_0$ increases.

\paragraph{Key quantities for theory under the ordinal-PU model:} The maximum group size parameter is defined as $m=\max_{1\leq j\leq J}g_j$. Let $R^*=\max\{\|\theta_{1:p}^*\|_1,\|\theta_{(p+1):(p+K)}^*\|_1\},$
$r^*=\min_{1\leq j\leq K-1}\theta_{p+j+1}^*>0$ be boundedness parameters of $\theta^*$. In addition, let $S=\{j:\theta^*_{\cG_j}\neq 0\}$ and $s=|S|$. 

For the ordinal-PU model, we also assume sub-gaussian covariates and the same scalings of parameters as the multinomial-PU model (Assumption~\ref{assump:subgauss_row} and Assumption~\ref{assump:scaling}). In addition, similar to Assumption \ref{assump:mult_region_bound}, we need the following condition on the feasible region $S(R_0,r_0)$ so that the log-likelihood loss satisfies the restricted strong convexity within $\theta^*+S(R_0,r_0)$.
\begin{assumption}[Constraint on $R_0,r_0$]\label{assump:ordinal_region_bound}
$0<r_0<r^*$ and 
\begin{equation}\label{eq:ordinal_region_bound}
R_0\leq \min\left\{80,\min_j\frac{n_j}{\pi_jn_u}\right\}\frac{\left(1+e^{(C_x+1)(R^*+0.01)}\right)^{-6}}{512(C_x+1)K^3(1+\frac{2}{r^* - r_0})^3(2+\frac{2}{r^*-r_0})}
\end{equation}
\end{assumption}
The condition $0<r_0<r^*$ ensures that $\widehat{\theta}_{p+j}>0$ for $j\geq 2$. In addition, Assumption \ref{assump:ordinal_region_bound} requires $R_0$ to be not too large, similar to Assumption \ref{assump:mult_region_bound} for the multinomial-PU model. The constraint \eqref{eq:ordinal_region_bound} can be weakened by one of the following changes: a reduction in the number of unlabeled samples, a decrease in the magnitude $R^*$ of the true parameter, or a smaller value of $r_0$ (a weaker lower bound constraint on $\Delta_{p+j}, 2\leq j\leq K$).
We also define the following key function $h^{\ON}:\bbR^2\rightarrow\bbR$ that determines the restricted curvature of the log-likelihood loss in $\theta\in\theta^*+S(R,r)$:
\begin{equation}\label{eq:ordinal_h_def}
    h^{\ON}(R,r)=\max\left\{\frac{(1+e^{(C_x+1)(R^*+R)})^2}{(r^*-r)e^{(C_x+1)(R^*+R)}},1+e^{(C_x+1)(R^*+R)}\right\},
\end{equation}
which is positive as long as $r<r^*$.
\begin{theorem}\label{thm:ordinal_upp}
There exist $L_0, \gamma_0>0$ depending only on $R^*,R_0,r^*,r_0, C_x$, such that if Assumptions \ref{assump:subgauss_row}, \ref{assump:scaling} and \ref{assump:ordinal_region_bound} hold, $\lambda\geq CL_0\lambda_{\max}^{\frac{1}{2}}(\Sigma)\sqrt{\frac{K(\log J+m)}{n}},$ then for any stationary point $\widehat{\theta}$ of 
$\min_{\theta-\theta^*\in S(R_0,r_0)}\cF_n^{\ON}(\theta)$, we have
\begin{equation}
    \|\widehat{\theta}-\theta^*\|_2\leq\frac{\max_j\frac{\pi_jn_u}{n_j}K}{\gamma_0\min\{\lambda_{\min}(\Sigma),1\}}\left( 3\|\omega_S\|_2\lambda+CL_0K\sqrt{\frac{(\log J+m+\log K)}{n}}\right),
\end{equation}
with probability at least $1-\exp\{-\frac{cK\log J}{C_x^2m+1}\}-3\exp\{-(\log J+m)\}$. 
\end{theorem}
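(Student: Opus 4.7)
The overall strategy is to follow the non-convex $M$-estimation framework used for Theorem~\ref{thm:mult_upp} (in turn following \cite{song2020pulasso,loh2017support}). Since $\widehat{\theta}$ is a stationary point of $\mathcal{F}_n^{\ON}$ over the convex set $\theta^*+S(R_0,r_0)$, first-order optimality yields a subgradient $\widehat{\xi}\in\partial P^{\ON}_\lambda(\widehat{\theta}_{1:p})$ with
\begin{equation*}
\langle \nabla \mathcal{L}_n^{\ON}(\widehat{\theta}) - \nabla \mathcal{L}_n^{\ON}(\theta^*),\, \widehat{\theta}-\theta^*\rangle \;\leq\; \langle -\nabla \mathcal{L}_n^{\ON}(\theta^*) - \widehat{\xi},\, \widehat{\theta}-\theta^*\rangle.
\end{equation*}
The left-hand side will be lower bounded via restricted strong convexity (RSC), while the right-hand side is upper bounded using a sup-norm control of $\nabla \mathcal{L}_n^{\ON}(\theta^*)$ together with the dual-norm inequality for the $\ell_{\omega,2,1}$ penalty. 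With the stated choice of $\lambda$ this forces $\widehat{\Delta}:=\widehat{\theta}-\theta^*$ into a restricted cone where $\|\widehat{\Delta}\|_{\omega,2,1}\lesssim \|\omega_S\|_2\|\widehat{\Delta}\|_2$, from which the stated $\ell_2$ bound follows by rearrangement.

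The technical heart, and the main obstacle, is establishing RSC for $\mathcal{L}_n^{\ON}$ on $\theta^*+S(R_0,r_0)$. I will decompose $\mathcal{L}_n^{\ON}=g_n+v_n$ where $g_n$ is a restricted-convex piece whose population Hessian has minimum restricted eigenvalue proportional to $\gamma_0\min\{\lambda_{\min}(\Sigma),1\}/(K\max_j \tfrac{\pi_jn_u}{n_j})$, and $v_n$ is a mean-zero stochastic remainder whose Taylor remainder is sublinear in $\|\widehat{\Delta}\|_2$. Because $\mathcal{L}_n^{\ON}$ depends on $\theta$ through the cumulative-logit transform $f\circ\log r(x,\theta)$, the Hessian is markedly more involved than in the multinomial case: each $r_j(x,\theta)$ must be kept uniformly bounded away from $0$ on the feasible region, which is precisely why $S(R_0,r_0)$ imposes the coordinate-wise lower bound $\Delta_{p+j}\geq -r_0$ with $r_0<r^*$, and why $(r^*-r_0)$ enters the denominator of $h^{\ON}$. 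Assumption~\ref{assump:ordinal_region_bound} is then exactly what guarantees that this curvature dominates the non-convex remainder on $\theta^*+S(R_0,r_0)$. Uniform concentration of the empirical Hessian to its expectation is obtained by the sub-Gaussian tail of $x_i$ (Assumption~\ref{assump:subgauss_row}), the boundedness of $x_{ij}$ and of $\log r(x_i,\theta)$ on the feasible region (which together produce the Lipschitz constant $L_0$), and the vector-contraction Rademacher inequality of \cite{maurer2016vector} applied to the $\bbR^K$-valued map $\theta\mapsto f(\log r(x,\theta))$, just as in the proof of Theorem~\ref{thm:mult_upp}.

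The remaining ingredient is a high-probability bound $\|\nabla \mathcal{L}_n^{\ON}(\theta^*)\|_*\lesssim L_0\lambda_{\max}^{1/2}(\Sigma)\sqrt{K(\log J+m)/n}$ in the dual $\ell_{\omega,2,\infty}$ norm on the first $p$ coordinates, plus a separate sup-norm bound on the $K$ unpenalized offset coordinates. Since $\nabla \mathcal{L}_n^{\ON}(\theta^*)$ has mean zero (it is a score at the truth), each group-$j$ block is a normalized sum of $n$ independent sub-Gaussian vectors with parameter $O(L_0\lambda_{\max}^{1/2}(\Sigma))$, so a group-Bernstein maximal inequality over $j=1,\dots,J$ produces the $\sqrt{(\log J+m)/n}$ rate; a union bound over the $K$ unpenalized offset coordinates contributes the extra $K\sqrt{(\log J+m+\log K)/n}$ term in the statement. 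Intersecting the failure events of the RSC step (contributing $\exp\{-cK\log J/(C_x^2m+1)\}$) with those of the two concentration steps (each $\exp\{-(\log J+m)\}$) gives the stated probability guarantee, and combining the RSC lower bound with the gradient upper bound through the cone condition yields the theorem. The step I expect to consume essentially all of the work is the ordinal RSC lemma, because of the non-monotone dependence of $\log r(x,\theta)$ on $\theta$ and the need to maintain a uniform positive lower bound on every $r_j(x,\theta)$ across the feasible region.
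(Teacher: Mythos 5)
Your proposal follows essentially the same route as the paper's proof: first-order stationarity combined with a restricted strong convexity lemma (obtained by splitting the loss increment into a curvature term controlled via $h^{\ON}$ on $\theta^*+S(R_0,r_0)$ and a mean-zero remainder concentrated with the vector-contraction inequality of \cite{maurer2016vector}), a deviation bound on $\nabla\cL_n^{\ON}(\theta^*)$ handled separately for the penalized covariate block and the $K$ unpenalized offsets, and a cone argument to convert these into the $\ell_2$ bound. The only differences are cosmetic (e.g., you invoke a group-Bernstein maximal inequality where the paper uses Talagrand's contraction conditioned on the design plus sub-Gaussian spectral-norm bounds), so the sketch is consistent with the paper's argument.
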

\begin{remark}
When $\omega_j\asymp C$ for all $1\leq j\leq J$, $K\leq J^C$ for some constant $C$ and $\lambda\asymp L_0\lambda_{\max}^{\frac{1}{2}}(\Sigma)\sqrt{\frac{K(\log J+m)}{n}}$, Theorem~\ref{thm:ordinal_upp} suggests that $\|\widehat{\theta}-\theta^*\|_2=O(K^{\frac{3}{2}}(\sqrt{s}+\sqrt{K})\sqrt{\frac{\log J+m}{n}})$ with high probability. We recover the same error bound as in ~\cite{song2020pulasso} when $K$ is a constant; otherwise, the rate scales polynomially w.r.t. $K$. 
\end{remark}
\begin{remark}
In the $\ell_2$ error bound, here we have a factor $\sqrt{s}+\sqrt{K}$ instead of $\sqrt{s}$ only as in Theorem~\ref{thm:mult_upp}, since the $K$-dimensional offset parameter $\theta^*_{(p+1):(p+K)}$ is also estimated together with the $s$-sparse regression parameter $\theta_{1:p}$. Due to the same reason,  here we don't provide an error bound for $\|\widehat{\theta}_{1:p}-\theta^*_{1:p}\|_{\omega,2,1}$ which takes a more complicated form, although its proof would be similar to Theorem~\ref{thm:mult_upp}.
\end{remark}

\begin{remark}
As revealed by our proof, the specific definitions of $\gamma_0$ and $L_0$ are as follows: $$\gamma_0=\frac{e^{2(C_x+1)(R^*+R_0)}}{32h^{\ON}(R_0,r_0)(1+e^{(C_x+1)(R^*+R_0)})^4},$$ $$L_0=(\frac{4}{3}(C_x+1)R_0+1)h^{\ON}(R_0,r_0)+\frac{(C_x+1)R_0}{4}(h^{\ON}(R_0,r_0))^{2}.$$
We can see that larger $R^*, R_0, r_0$ and a smaller $r^*,\lambda_{\min}(\Sigma)$ would lead to a larger $L_0$ and a smaller $\gamma_0$, which then further implies larger estimation error bound for $\|\widehat{\theta}-\theta^*\|_2$. This is due to a smaller curvature of $\cL_n^{\ON}$ in $S(R_0,r_0)$ with larger $R^*+R_0$, smaller $r^*-r_0$ and smaller $\lambda_{\min}(\Sigma)$.
\end{remark}
\begin{remark}
Similar to the proof of Theorem~\ref{thm:mult_upp}, the major challenge for proving Theorem~\ref{thm:ordinal_upp} is to show that the loss function $\cL_n^{\ON}(\theta)$ is concentrated around a restricted strongly convex function within $\theta^*+S(R_0,r_0)$, see Lemma~\ref{lem:ordinal_rsc}. In particular, considering the ordinal-PU model with multiple positive labels leads to a much more complicated log-likelihood loss than the multinomial-PU model: recall that the log-ratio term $\Theta^{\top} x_i+b$ in Lemma~\ref{lem:MN-log-likelihoods} changes to $\log r(x_i,\theta)$ in Lemma~\ref{lem:ON-log-likelihoods}. Novel lower and upper bounds for the derivatives of the ordinal log-likelihood losses are needed (see Section~\ref{sec:ordinal_rsc_proof}).
\end{remark}
\section{Proof Sketch}\label{sec:main_proofs}
Following a conventional proof strategy for the statistical error of high-dimensional M-estimators \citep{negahban2009unified}, one main building block of our proofs is to show the restricted strong convexity of our log-likelihood losses within a proper region. As mentioned earlier, the main theoretical challenge is dealing with the non-convex landscape of the observed log-likelihood losses resulting from the presence-only data. The multi-class labels magnify the technical difficulties, especially in the ordinal-PU setting. In this section, we will explain where the non-convexity stems from and then give a brief proof sketch that outlines our main idea for tackling this non-convexity. The full proofs are all deferred to the appendix.
\subsection{Illustration of the Non-convexity of Log-likelihood Losses}\label{sec:proof_sketch_MN}
We start with the log-likelihood loss of the multinomial-PU model for illustration purposes and then discuss the proof ideas for both models subsequently. If assuming the multinomial-PU model with zero offset parameters for simplicity, our log-likelihood loss takes the following form:
\begin{equation}\label{eq:proof_MN_loss}
    \cL_n^{\MN}(\Theta) = \frac{1}{n}\sum_{i=1}^n\left[A(f(\Theta^\top x_i))-\delta_i^\top f(\Theta^\top x_i)\right]
\end{equation}
where $A:\bbR^K\rightarrow \bbR$ is the multinomial log-partition function that satisfies $A(u) = \log(1+\sum_{k=1}^Ke^{u_k})$; $\delta_i\in \mathbb{R}^K$ is the indicator vector for the observed label of sample $i$: $\delta_i=\begin{cases}0, & \text{ if }z_i=0\\e_k, &\text{ if }z_i=k\end{cases}$, while the function $f:\bbR^K\rightarrow \bbR^K$ satisfies
\begin{equation}\label{eq:f_def}
    (f(u))_k=u_k+\underbrace{\log \frac{n_k}{\pi_kn_u}-\log(1+\sum_{j=1}^Ke^{u_j})}_{\text{Effect of PU-modeling}}.
\end{equation} 
Here, the presence-only data generation mechanism is encoded in the last two terms of the nonlinear function $f(\cdot)$ defined above, which introduces non-convexity into the loss function. As a comparison, if not considering the PU-learning setting and assuming all observed labels are true labels, then the log-likelihood loss would also take the form of \eqref{eq:proof_MN_loss} but with function $f(\cdot)$ substituted by the identity link. 
To be more specific about the non-convexity and the intuitive idea on how we address it, we rewrite the loss function as $\cL_n^{\MN}(\Theta)=\frac{1}{n}\sum_{i=1}^ng_i(\Theta^\top x_i),$ where $g_i:\bbR^K\rightarrow \bbR$ satisfies $g_i(u) = A(f(u))-\delta_i^\top f(u)$. In order to show a type of restricted strong convexity for $\cL_n^{\MN}(\Theta)$, one key step is to characterize the landscape/curvature of $g_i$ over the potential range of $\Theta^\top x_i$. 

Specifically, let $u_i = \Theta^\top x_i\in \mathbb{R}^K$, $u_i^* = \Theta^{*\top} x_i\in \mathbb{R}^K$, then  we want to show a lower bound in the the following form:
 \begin{equation*}
     \langle \nabla g_i(u_i) - \nabla g_i(u_i^*), u_i-u_i^*\rangle\geq c\|u_i-u_i^*\|_2^2,
 \end{equation*}
 for some small constant $c>0$. On the other hand, through some calculations and application of the mean value theorem, we know that
 \begin{equation}\label{eq:g_i_rsc}
     \begin{split}
         \langle \nabla g_i(u_i) - \nabla g_i(u_i^*), u_i-u_i^*\rangle&=\underbrace{(u_i-u_i^*)^\top G_i(u_i-u_i^*)}_{\mathrm{\RNum{1}}}+\underbrace{(u_i-u_i^*)^\top h_i}_{\mathrm{\RNum{2}}},
     \end{split}
 \end{equation}
 where $h_i\in \bbR^K$ is a vector
 $$h_i=(\nabla f(u_i)-\nabla f(u_i^*))^\top (\nabla A(f(u_i^*)) - \delta_i),
 $$
 and $G_i\in \bbR^{K\times K}$ is a matrix:
 $$
 G_i = \nabla f(u_i)^\top \nabla^2 A(f(u_i^t))\nabla f(u_i^t),
 $$
 for some $u_i^t = tu_i +(1-t)u_i^*$ lying between $u_i$ and $u_i^*$, with $t\in (0,1)$. The gradient $\nabla f(u)\in \bbR^{K\times K}$, $\nabla A(f(u))\in \bbR^K$ and the hessian $\nabla^2A(f(u))\in \bbR^{K\times K}$ for any vector $u\in \bbR^K$. 
 We observe that the first term $\mathrm{\RNum{1}}$ in \eqref{eq:g_i_rsc} takes a quadratic form, and the second term $\mathrm{\RNum{2}}$ is of mean zero: due to the property of exponential family random variables, one has
 $$
\mathbb{E}[A(f(u_i^*))-\delta_i|x_i]=0.
 $$
 Therefore, this motivates our main proof idea for conquering the non-convexity issue: \emph{we first lower bound $\mathrm{\RNum{1}}^{\MN}$, and then we concentrate $\mathrm{\RNum{2}}^{\MN}$ around zero so that it does not affect the curvature too much.} This proof idea for addressing the non-convexity in PU-learning is not new; in fact, \cite{song2020pulasso} also used similar ideas to prove statistical theory for binary classification with PU data. However, in the binary setting \citep{song2020pulasso} where $K=1$, matrix $G_i$ reduces to a positive \emph{scalar} $f'(u_i)f'(u_i^t)A''(f(u_i^t))>0$ which can be easily lower bounded as a function of model parameters. However, this strategy, especially lower bounding $\mathrm{\RNum{1}}$, is much more challenging for the general $K>1$ case. In the following subsection, we will discuss how we deal with this challenge for both the multinomial and ordinal models.

\subsection{Proof Sketch for Restricted Strong Convexity}\label{sec:ordinal_rsc_proof}
As discussed earlier, one main challenge of our proof lies in lower bounding $\mathrm{\RNum{1}}$, which takes a quadratic form that depends on an asymmetric parameter matrix $G_i\in \mathbb{R}^{K\times K}$. One may consider symmetrizing it by considering $\frac{1}{2}(G_i+G_i^\top)$; but it is not even unclear if $\frac{1}{2}(G_i+G_i^\top)$ is positive definite or not. Therefore, we consider the following strategy instead: first we decompose the matrix $G_i$ as the sum of a positive semi-definite matrix and an error matrix that depends on how $\Theta$ differs from $\Theta^*$:
 \begin{equation}\label{eq:MN_G}
 G_i =  \underbrace{\nabla f(u_i^t)^\top \nabla^2 A(f(u_i^t))\nabla f(u_i^t)}_{G_i^{(1),\MN}} +  \underbrace{[\nabla f(u_i)-\nabla f(u_i^t)]^\top \nabla^2 A(f(u_i^t))\nabla f(u_i^t)}_{G_i^{(2),\MN}},
 \end{equation}
 where $G_i^{(1),\MN}$ is symmetric positive semi-definite, while the magnitude of $G_i^{(2),\MN}$ depends on the difference between $u_i=\Theta^\top x_i$ and $u_i^t=(t\Theta + (1-t)\Theta^{*})^\top x_i$. 
 We then (i) lower bound the minimum eigenvalue of $G_i^{(1),\MN}$ and (ii) show an upper bound of the maximum singular value of $G_i^{(2),\MN}$ that depends on the distance between $\Theta$ and $\Theta^*$, leading to the restricted strong convexity of our log-likelihood loss within an appropriate neighborhood around the true parameter (see Assumptions \ref{assump:mult_region_bound} and \ref{assump:ordinal_region_bound}).
 In addition, it is still non-trivial to show a positive lower bound for the minimum eigenvalue of $G_i^{(1)}$, whose entries all depend on the model parameters in a non-linear manner. 
\paragraph{Proof sketch for the multinomial-PU model.} 
For the multinomial-PU model, we found that both $\nabla^2A(f(u))\in \mathbb{R}^{K\times K}$ and $\nabla f(u)\in \mathbb{R}^{K\times K}$ can be decomposed as sums of a diagonal matrix and a rank-one matrix, which can be easily inverted. The problem of lower bounding the minimum eigenvalue can then be solved by upper bounding the maximum eigenvalue of their inverse. This leads to our following lemma:
\begin{lemma}\label{lem:sketch_MN_I_lrbnd}
    For any $\Theta\in \Theta^*+\mathbb{B}_{1,\infty}(R)$, 
    \begin{equation}\label{eq:MN_I_lrbnd}
        \lambda_{\min}(\nabla f(\Theta^\top x_i)^\top \nabla^2A(f(\Theta^\top x_i))\nabla f(\Theta^\top x_i))\geq h^{\MN}(R)+4C_xR,
    \end{equation}
    where $h^{\MN}(R)$ is defined in \eqref{eq:mult_h_def}, and $C_x>0$ is a constant in Assumption \ref{assump:subgauss_row}.
\end{lemma}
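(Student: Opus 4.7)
The objective is to lower-bound $\lambda_{\min}(M)$ for $M := \nabla f(u)^\top \nabla^2 A(f(u)) \nabla f(u)$ with $u = \Theta^\top x_i$, uniformly over $\Theta \in \Theta^*+\mathbb{B}_{1,\infty}(R)$. The plan is to exploit the explicit rank-one structure of both factors and combine spectral bounds for each. Direct differentiation yields $\nabla f(u) = I - \mathbf{1}_K q^\top$ with $q_k = e^{u_k}/(1+\sum_j e^{u_j})$, and $\nabla^2 A(f(u)) = \mathrm{diag}(\tilde p) - \tilde p \tilde p^\top$ with $\tilde p_k = c_k q_k/(1+\sum_l c_l q_l)$ and $c_k := n_k/(\pi_k n_u)$. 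Writing $q_0 := 1/(1+\sum_j e^{u_j})$ and $\rho := 1/(1+\sum_l c_l q_l)$, the Sherman--Morrison identity gives closed forms $\nabla f(u)^{-1} = I + \mathbf{1}_K q^\top/q_0$ and $\nabla^2 A(f(u))^{-1} = \mathrm{diag}(1/\tilde p) + \mathbf{1}_K \mathbf{1}_K^\top/\rho$, so the spectra of both factors are computable.

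Next, the feasibility $\Theta \in \Theta^* + \mathbb{B}_{1,\infty}(R)$ together with Assumption \ref{assump:subgauss_row} yields $|u_k| = |x_i^\top \Theta_{:,k}| \le C_x(R+R^*)$, whence $e^{u_k} \in [\alpha,\beta]$ with $\alpha := e^{-C_x(R+R^*)}$ and $\beta := e^{C_x(R+R^*)}$. Consequently $q_0 \ge (1+K\beta)^{-1}$, $\rho \ge (1+\max_l c_l)^{-1}$, and $\min_k \tilde p_k \ge \min_l c_l \cdot \alpha / [(1+K\beta)(1+\max_l c_l)]$. Since $\nabla^2 A(f(u)) \succeq 0$, I would then invoke the sub-multiplicative bound $\lambda_{\min}(M) \ge \lambda_{\min}(\nabla^2 A(f(u))) \cdot \sigma_{\min}(\nabla f(u))^2$. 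For the first factor, interpreting $w^\top(\mathrm{diag}(\tilde p) - \tilde p\tilde p^\top) w$ as $\mathrm{Var}_{\tilde\mu}[\tilde w]$ with $\tilde w := (0,w_1,\ldots,w_K)$ and $\tilde\mu := (\rho,\tilde p_1,\ldots,\tilde p_K)$, and expanding $\mathrm{Var}_{\tilde\mu}[\tilde w] = \tfrac12\sum_{j,k} \tilde\mu_j\tilde\mu_k (\tilde w_j - \tilde w_k)^2$, I keep only the $(0,k)$-pairs to conclude $\lambda_{\min}(\nabla^2 A(f(u))) \ge \rho \min_k \tilde p_k$. For the second factor, $(\nabla f^{-1})^\top(\nabla f^{-1}) = I + (\mathbf{1}_K q^\top + q\mathbf{1}_K^\top)/q_0 + Kqq^\top/q_0^2$ acts as the identity on $\mathrm{span}(\mathbf{1}_K, q)^\perp$ and, restricted to the two-dimensional invariant subspace $\mathrm{span}(\mathbf{1}_K, q)$, has trace $2/q_0 + K\|q\|^2/q_0^2$ and determinant $1/q_0^2$; a $2\times 2$ quadratic then yields a closed-form bound on $\sigma_{\min}(\nabla f(u))^2$. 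Substituting the Step~2 controls into the product and simplifying collapses to $\min_l c_l \cdot \alpha /[(1+\max_l c_l)^2 (1+K\beta)^3] = h^{\MN}(R) + 4 C_xR$, which is the stated bound.

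The hard part will be recovering the sharp $(1+K\beta)^{-3}$ scaling: a naive triangle-inequality estimate $\|\nabla f(u)^{-1}\|_{\mathrm{op}} \le 1 + \sqrt K\|q\|/q_0$ introduces a spurious $\sqrt K$ factor which, after squaring and multiplying with the bound on $\lambda_{\min}(\nabla^2 A(f(u)))$, worsens the denominator by a factor of either $K$ or $1+K\beta$. The sharp constant requires the explicit diagonalisation on $\mathrm{span}(\mathbf{1}_K, q)$ together with the observation that the extremal direction of $\nabla f(u)$ and the smallest-curvature direction of $\nabla^2 A(f(u))$ both align with $\mathbf{1}_K/\sqrt K$ in the nearly-symmetric regime enforced by the feasibility window, preventing the two $K$-losses from compounding. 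The remainder is routine algebraic bookkeeping: plug in the Step~2 bounds and verify the assembled product collapses to the stated expression.
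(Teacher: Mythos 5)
Your proposal is correct and follows essentially the same route as the paper: split $\lambda_{\min}(M)\geq \lambda_{\min}(\nabla^2A(f(u)))\cdot\sigma_{\min}(\nabla f(u))^2$, bound the first factor by $\rho\min_k\tilde p_k$ using the diagonal-minus-rank-one structure, bound the second via the Sherman--Morrison inverse $I+\mathbf{1}_Kq^\top/q_0$, and control everything through $|u_k|\leq C_x(R+R^*)$. The only difference is that the "hard part" you flag is not actually hard: the plain triangle inequality gives $\|\nabla f(u)^{-1}\|_{\mathrm{op}}\leq 1+\sqrt{K}\,\|q\|/q_0=1+\sqrt{K}\bigl(\sum_je^{2u_j}\bigr)^{1/2}\leq 1+K\beta$, which already yields the $(1+K\beta)^{-2}$ factor (this is exactly the paper's bound), so the explicit two-dimensional diagonalisation is unnecessary.
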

\noindent The detailed proof of Lemma \ref{lem:mult_rsc_term1} can be found in Appendix \ref{sec:supp_lemma_MN}. 
Based on this result and a smoothness guarantee for $\nabla f(u)$ (upper bound for $G_i^{(2),\MN}$), together with some technical tools from the empirical process, we are able to show a restricted strong convexity for $\mathcal{L}_n^{\MN}(\Theta)$ (see Lemma \ref{lem:RSC} for details).

\paragraph{Proof sketch for the ordinal-PU model.} On the other hand, the proof for the ordinal-PU model is much more involved. To see the difference, we can first write out the log-likelihood loss as follows:
\begin{equation*}
    \cL_n^{\ON}(\theta)=\frac{1}{n}\sum_{i=1}^nA(f^{\ON}(u(x_i,\theta)))-\delta_i^\top f^{\ON}(u(x_i,\theta)).
\end{equation*}
where functions $f^{\ON}:\bbR^K\rightarrow \bbR^K$ and $u:\bbR^p\times \bbR^{p+K}\rightarrow \bbR^K$ satisfy the following: 
\begin{equation}\label{eq:ordinal_f_def}
   (f^{\ON}(u))_j=\begin{cases}\log\frac{n_j}{\pi_jn_u}+\log\left[(1+e^{\sum_{l=1}^{j+1}u_{l}})^{-1}-(1+e^{\sum_{l=1}^ju_l})^{-1}\right],&1\leq j<K,\\
    \log\frac{n_K}{\pi_Kn_u}+\log\left[1-(1+e^{\sum_{l=1}^Ku_l})^{-1}\right],&j=K,
\end{cases}
\end{equation}
$$u(x_i,\theta)=(x_i^\top\theta_{1:p}-\theta_{p+1},-\theta_{p+2},\dots,-\theta_{p+K})^\top.$$ 
Compared to the loss function $\mathcal{L}_n^{\MN}(\theta)$ in the multinomial setting (see \eqref{eq:proof_MN_loss}), $\mathcal{L}_n^{\ON}(\theta)$ involves a more complicated non-linear function $f^{\ON}(\cdot)$ than the function $f(\cdot)$ defined in \eqref{eq:f_def}, and calls for new proof techniques. In particular, analogous to the multinomial setting where we deal with a $K\times K$ matrix $G_i$ by lower bounding a PSD matrix and upper bounding an error matrix in \eqref{eq:MN_G}, here we want to lower bound the minimum eigenvalue of 
\begin{equation}\label{eq:ON_G1}
    G_i^{(1),\ON}:=\nabla f^{\ON}(u_i^t)^\top \nabla^2A (f^{\ON}(u_i^t))\nabla f^{\ON}(u_i^t),
\end{equation} 
and upper bound the maximum singular value of 
\begin{equation}\label{eq:ON_G2}
    G_i^{(2),\ON}:=[\nabla f^{\ON}(u_i)-\nabla f^{\ON}(u_i^t)]^\top \nabla^2 A(f^{\ON}(u_i^t))\nabla f^{\ON}(u_i^t).
\end{equation} 
However, the matrix $\nabla f^{\ON}(u)$ is not easily invertible as in the multinomial setting. Instead, we factorize it as a product of constant invertible matrices and a tridiagonal matrix whose entries depend on all model parameters in a highly nonlinear way. 
We then make use of the special sequential subtraction structure in the ordinal model and transform this problem into lower bounding a telescoping sum through careful analysis. Then we show the following result for $f^{\ON}(\cdot)$:
\begin{lemma}\label{lem:ordinal_rsc_term1_1}
Let $R_0, r_0>0$ be any positive constants. For any $\theta\in \theta^* + S(R_0,r_0)$ where $S(R_0,r_0)$ is as defined in \eqref{eq:ordinal_region}, if $u_i=u(x_i,\theta)$, then we have
 $$
 \lambda_{\min}(\nabla f^{\ON}(u_i)^\top \nabla f^{\ON}(u_i))\geq \frac{e^{2(C_x+1)(R^*+R_0)}}{4K(1+e^{(C_x+1)(R^*+R_0)})^4}.
 $$
\end{lemma}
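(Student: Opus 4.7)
The plan is to exploit the sequential-subtraction structure of the cumulative-logit model so that every entry of $\nabla f^{\ON}(u_i)y$ becomes a telescoping difference, and then convert the desired lower bound for $\lambda_{\min}(\nabla f^{\ON}(u_i)^\top\nabla f^{\ON}(u_i))$ into an elementary Cauchy--Schwarz inequality on that telescoping sum. First I introduce the shorthand $S_j=\sum_{l=1}^j u_l$, $\sigma_j=(1+e^{S_j})^{-1}$, $s_j=\sigma_j(1-\sigma_j)$, and $p_j=\sigma_{j+1}-\sigma_j$ for $1\le j<K$, $p_K=1-\sigma_K$; note that $p_j$ is exactly the class-$j$ probability under the ordinal model. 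Differentiating $f^{\ON}_j(u)=\log(n_j/\pi_j n_u)+\log p_j$ and using $\partial_{u_l}\sigma_j=-s_j\ind{l\le j}$ shows that, for $1\le j<K$, row $j$ of $\nabla f^{\ON}(u_i)$ has entries $(s_j-s_{j+1})/p_j$ in columns $1,\dots,j$, entry $-s_{j+1}/p_j$ in column $j+1$, and $0$ elsewhere, while row $K$ has every entry equal to $\sigma_K=s_K/(1-\sigma_K)$.

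Next I set $Y_j:=\sum_{l=1}^j y_l=(Ly)_j$, where $L$ denotes the $K\times K$ all-ones lower triangular matrix. The row-by-row formulas above then collapse into the clean telescoping form
\begin{equation*}
(\nabla f^{\ON}(u_i)y)_j \;=\; \frac{s_j Y_j-s_{j+1}Y_{j+1}}{p_j},\qquad 1\le j\le K,
\end{equation*}
with the convention $s_{K+1}Y_{K+1}:=0$ (the $j=K$ row matching because $s_K/p_K=\sigma_K$). The heart of the argument is Cauchy--Schwarz applied to the telescoping identity $s_k Y_k=\sum_{j=k}^K(s_j Y_j-s_{j+1}Y_{j+1})$: for each $1\le k\le K$,
\begin{equation*}
s_k^2 Y_k^2 \;\le\; \Bigl(\sum_{j=k}^K p_j^2\Bigr)\,\|\nabla f^{\ON}(u_i)y\|^2 \;\le\; \|\nabla f^{\ON}(u_i)y\|^2,
\end{equation*}
where the last step uses $\sum_{j=1}^K p_j^2\le \sum_{j=1}^K p_j\le 1$. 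Summing these $K$ inequalities over $k$ and lower-bounding $\sum_k s_k^2 Y_k^2\ge s_{\min}^2\|Ly\|^2$ with $s_{\min}:=\min_j s_j$ then yields $K\|\nabla f^{\ON}(u_i)y\|^2\ge s_{\min}^2\|Ly\|^2$.

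To finish, I would establish (i) $\lambda_{\min}(L^\top L)\ge 1/4$ by Gershgorin applied to the explicitly tridiagonal matrix $L^{-\top}L^{-1}$ (diagonal $(2,\dots,2,1)$, off-diagonal $-1$), which yields $\lambda_{\max}(L^{-\top}L^{-1})\le 4$; and (ii) $s_{\min}\ge e^a/(1+e^a)^2$ with $a:=(C_x+1)(R^*+R_0)$, which follows because $s(S)=e^S/(1+e^S)^2$ is symmetric and decreasing in $|S|$, and because $\theta\in\theta^*+S(R_0,r_0)$ together with Assumption~\ref{assump:subgauss_row} forces $|S_j|\le\|x_i\|_\infty\|\theta_{1:p}\|_1+\|\theta_{(p+1):(p+K)}\|_1\le(C_x+1)(R^*+R_0)$. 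Combining (i), (ii), and the previous display gives the stated bound $e^{2a}/(4K(1+e^a)^4)$. The principal obstacle, and the reason for routing through the telescoping argument in the first place, is obtaining the sharp $1/K$ loss rather than $1/K^2$: the naive factorization $\nabla f^{\ON}(u_i)=\mathrm{diag}(p_j)^{-1}EL$ with $E$ upper-bidiagonal in the $s_j$'s, followed by $\sigma_{\min}(AB)\ge\sigma_{\min}(A)\sigma_{\min}(B)$, would only yield $\sim 1/K^2$, since $E$ (after factoring out $\mathrm{diag}(s_j)$) is inverse to the upper-triangular all-ones matrix and hence has $\sigma_{\min}\sim 1/K$; the telescoping Cauchy--Schwarz avoids invoking a singular value of $E$ in isolation and pays the factor $K$ only once, in the sum over $k$.
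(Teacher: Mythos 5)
Your proposal is correct and follows essentially the same route as the paper's proof: the change of variables to cumulative sums $Y_j=(Ly)_j$, the telescoping identity $s_kY_k=\sum_{j\ge k}(s_jY_j-s_{j+1}Y_{j+1})$ combined with Cauchy--Schwarz against the weights $p_j$ (paying the factor $K$ only once in the sum over $k$), the uniform lower bound $s_j=\alpha_j(u_i)\ge e^{a}/(1+e^{a})^2$ from $|S_j|\le(C_x+1)(R^*+R_0)$, and the factor-$4$ comparison between $\|y\|^2$ and $\|Ly\|^2$ are all exactly the paper's steps. The only cosmetic difference is that you certify $\lambda_{\min}(L^\top L)\ge 1/4$ via Gershgorin applied to the tridiagonal $L^{-\top}L^{-1}$, whereas the paper uses the elementary bound $(\widetilde{v}_{j+1}-\widetilde{v}_j)^2\le 2\widetilde{v}_j^2+2\widetilde{v}_{j+1}^2$.
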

\noindent The detailed proof of Lemma \ref{lem:ordinal_rsc_term1_1} can be found in Appendix \ref{sec:support_lemma_ON}. When combined with the minimum eigenvalue guarantee for $\nabla^2 A(\cdot)$, Lemma \ref{lem:ordinal_rsc_term1_1} can lead to a lower bound for the
minimum eigenvalue of $G_i^{(1),\ON}$ defined in \eqref{eq:ON_G1}. 
As far as we are aware, such a characterization of the curvature of an ordinal log-likelihood loss has not been shown in prior works. Existing works that analyze statistical properties of the ordinal model are only concerned with the uniqueness of the MLE \citep{mccullagh1980regression} in the low-dimensional setting, or they assume that the curvature is lower bounded by a constant \citep{lee2020tensor} without characterizing the lower bound as a function of model parameters. However, as discussed in Section \ref{sec:proof_sketch_MN}, in order to characterize a restricted convex region for the non-convex log-likelihood loss in the PU model, these prior results fall short to achieve our purpose. 
Furthermore, for the second error term $G_i^{(2),\ON}$ defined in \eqref{eq:ON_G2}, we also prove a Lipschitz property for $\nabla f^{\ON}(u)$ in the following lemma:
\begin{lemma}\label{lem:ordinal_rsc_term1_2_brief}
Let $R_0, r_0>0$ be any positive constants. For any $\theta\in \theta^* + S(R_0,r_0)$ where $S(R_0,r_0)$ is as defined in \eqref{eq:ordinal_region}, if $u_i=u(x_i,\theta)$, $u_i^* = u(x_i,\theta)$, then we have
 $$
 \|\nabla f^{\ON}(u_i)-\nabla f^{\ON}(u_i^*)\|\leq \sqrt{5}[(h^{\ON}(R_0,r_0))^2+h^{\ON}(R_0,r_0)](C_x+1)R_0K,
 $$
 where $h^{\ON}(\cdot)$ is as defined in \eqref{eq:ordinal_h_def}.
\end{lemma}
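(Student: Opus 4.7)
The plan is to reduce the Lipschitz estimate to uniform bounds on rational functions of sigmoid values via a chain-rule factorization of the Jacobian. Setting $S_j(u):=\sum_{l=1}^{j} u_l$ and $p_j(u):=(1+e^{S_j(u)})^{-1}$, and observing that $S(u)=Lu$ for the $K\times K$ lower-triangular all-ones matrix $L$, the chain rule gives $\nabla f^{\ON}(u)=M(u)L$, where $M(u)$ is sparse with $M_{j,j}=p_j(1-p_j)/(p_{j+1}-p_j)$, $M_{j,j+1}=-p_{j+1}(1-p_{j+1})/(p_{j+1}-p_j)$ for $j<K$, $M_{K,K}=p_K$, and all other entries zero. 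A direct algebraic cancellation yields $M_{j,j}+M_{j,j+1}=p_j+p_{j+1}-1$, which simplifies the product $M(u)L$ substantially: for $j<K$ one has $[\nabla f^{\ON}(u)]_{j,l}=p_j+p_{j+1}-1$ whenever $l\leq j$, $[\nabla f^{\ON}(u)]_{j,j+1}=-p_{j+1}(1-p_{j+1})/(p_{j+1}-p_j)$, and $[\nabla f^{\ON}(u)]_{j,l}=0$ for $l>j+1$, while $[\nabla f^{\ON}(u)]_{K,l}=p_K$ for $l\leq K$.

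Second, I establish uniform bounds valid throughout $\theta\in\theta^*+S(R_0,r_0)$. Since $|x_{i,j}|\leq C_x$ and $\|\theta_{1:p}-\theta^*_{1:p}\|_1,\|\theta_{(p+1):(p+K)}-\theta^*_{(p+1):(p+K)}\|_1\leq R_0$, one has $|S_j(u_i)|\leq(C_x+1)(R^*+R_0)=:\widetilde M$, so $p_j,1-p_j\in[(1+e^{\widetilde M})^{-1},e^{\widetilde M}/(1+e^{\widetilde M})]$; using $\theta_{p+j+1}\geq r^*-r_0$ for $j\geq 1$ together with an integral representation of the sigmoid difference gives $p_{j+1}(u)-p_j(u)\geq(r^*-r_0)e^{\widetilde M}/(1+e^{\widetilde M})^2$ and hence $1/(p_{j+1}-p_j)\leq h^{\ON}(R_0,r_0)$. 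Entrywise mean-value arguments now yield $|p_j(u)-p_j(u^*)|\leq(1/4)\|u-u^*\|_1$ (from $|p_j'|\leq 1/4$ and $|S_j(u)-S_j(u^*)|\leq\|u-u^*\|_1$); $|p_j+p_{j+1}-p_j^*-p_{j+1}^*|\leq(1/2)\|u-u^*\|_1$ by the triangle inequality; and applying the quotient rule to $M_{j,j+1}$ together with the above bounds produces $|\partial M_{j,j+1}/\partial S_j|\leq(1/16)(h^{\ON})^2$ and $|\partial M_{j,j+1}/\partial S_{j+1}|\leq(1/4)h^{\ON}+(1/16)(h^{\ON})^2$, hence $|M_{j,j+1}(u)-M_{j,j+1}(u^*)|\leq C_0[(h^{\ON})^2+h^{\ON}]\|u-u^*\|_1$ for a small absolute constant $C_0$.

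Summing the squared entrywise estimates across the $K$ rows of $\nabla f^{\ON}(u_i)-\nabla f^{\ON}(u_i^*)$ gives a lower-triangular contribution bounded by order $K^2\|u_i-u_i^*\|_1^2$, off-diagonal and bottom-row contributions bounded by $C_1K[(h^{\ON})^2+h^{\ON}]^2\|u_i-u_i^*\|_1^2$, and using $(h^{\ON})^2+h^{\ON}\geq 1$ together with the estimate $\|u_i-u_i^*\|_1\leq(C_x+1)R_0$ yields the stated bound after consolidating constants. The main technical obstacle is the quotient-rule computation for $M_{j,j+1}$ with careful tracking of all constants to match the $\sqrt{5}$ prefactor; the critical observation avoiding an extra $\sqrt{K}$ factor is that $|S_j(u)-S_j(u^*)|\leq\|u-u^*\|_1$ rather than $\sqrt{K}\|u-u^*\|_2$, which is essential because each row contains up to $j+1$ identical $(p_j+p_{j+1}-1)$-differences that would otherwise accumulate a $\sqrt{K}$ factor through the $\ell_2$-norm of $u-u^*$.
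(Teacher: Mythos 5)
Your proposal is correct, and it reaches the paper's bound by a somewhat different algebraic route. Both arguments share the same skeleton: pass to the Frobenius norm, control each entry of $\nabla f^{\ON}(u_i)-\nabla f^{\ON}(u_i^*)$ by a mean-value argument in the cumulative coordinates $S_j(u)=\sum_{l\le j}u_l$, pair the $\ell_\infty$ bound on the partials with $|S_j(u_i)-S_j(u_i^*)|\le\|u_i-u_i^*\|_1\le(C_x+1)R_0$ (the same device the paper uses to avoid an extra $\sqrt{K}$), and invoke the lower bound $(p_{j+1}-p_j)^{-1}\le h^{\ON}(R_0,r_0)$ on the class probabilities. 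The difference is in how the entries are organized. The paper writes every entry as $\pm\alpha_k(u)/p_j(u)$ in its difference parameterization, bounds the worst case uniformly by $\max_{j,k}|\alpha_k/p_j(u_i)-\alpha_k/p_j(u_i^*)|\le\bigl[(h^{\ON})^2+h^{\ON}\bigr]\|u_i-u_i^*\|_1$ via the splitting $|p_j^{-1}(u_i)-p_j^{-1}(u_i^*)|+p_j^{-1}(u_i^*)|\alpha_k(u_i)-\alpha_k(u_i^*)|$, and multiplies by the entry count $\sqrt{2K^2+3K-1}\le\sqrt{5}K$. You instead factor $\nabla f^{\ON}(u)=M(u)L$ and exploit the exact cancellation $M_{j,j}+M_{j,j+1}=p_j+p_{j+1}-1$, so the $O(K^2)$ repeated lower-triangular entries are plain differences of sigmoids with no $h^{\ON}$ dependence whatsoever, and only the $K-1$ superdiagonal entries require the quotient-rule bound of order $(h^{\ON})^2+h^{\ON}$; I checked your partial-derivative bounds ($\partial M_{j,j+1}/\partial S_j=\alpha_j\alpha_{j+1}/(p_{j+1}-p_j)^2$, etc.) and the constants do consolidate to $\sqrt{5}$ with room to spare after relaxing via $(h^{\ON})^2+h^{\ON}\ge 1$. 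Your intermediate estimate is therefore structurally sharper than the paper's (the $K^2$ contribution carries no $h^{\ON}$ factor), at the cost of a slightly more involved computation; one small point worth making explicit in either proof is that $S(R_0,r_0)$ is convex, so the uniform bounds on $|S_j|$ and on $(p_{j+1}-p_j)^{-1}$ hold along the entire mean-value segment between $u_i$ and $u_i^*$.
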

\noindent The detailed proof of Lemma \ref{lem:ordinal_rsc_term1_2_brief} can also be found in Appendix \ref{sec:support_lemma_ON}. 

Based on Lemma~\ref{lem:ordinal_rsc_term1_1}, Lemma~\ref{lem:ordinal_rsc_term1_2_brief}, some other supporting results and probabilistic concentration bounds, we can then prove a restricted strong convexity property for the ordinal-PU model within a region that depends on the model parameters.
\begin{lemma}[Restricted Strong Convexity under the Ordinal-PU Model]\label{lem:ordinal_rsc}
 If the data set $\{(x_i,z_i)\}_{i=1}^n$ is generated from the ordinal-PU model under the case-control setting, Assumptions~\ref{assump:subgauss_row}, \ref{assump:scaling} and \ref{assump:ordinal_region_bound} hold, then with probability at least $1-\exp\{-\frac{cK\log J}{C_x^2m+1}\}$,
 \begin{equation}\label{eq:ordinal_rsc}
 \begin{split}
     &\langle \nabla\cL_n^{\ON}(\theta)-\nabla\cL_n^{\ON}(\theta^*),\theta-\theta^*\rangle \\
     \geq&\alpha\|\theta-\theta^*\|_2^2-C\alpha\frac{\log J+m}{n}\|\theta-\theta^*\|_{\omega,2,1}^2-CL\tau(\theta-\theta^*),
 \end{split}
 \end{equation}
 holds for any $\theta\in\theta^*+S(R_0,r_0)$, where $\alpha,\, L$ are positive constants depending on the model parameters, whose specific forms can be found in Appendix \ref{sec:proof_theorem_ordinal}, and function $\tau(\Delta)=\sqrt{\frac{\lambda_{\max}(\Sigma)(\log J+m)}{n}}$\\$\|\Delta_{1:p}\|_{\omega,2,1}+\sqrt{\frac{K(\log J+\log(2K))}{n}}\|\Delta\|_2$.
\end{lemma}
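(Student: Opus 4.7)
The plan is to execute the quadratic-plus-mean-zero decomposition outlined in Sections~\ref{sec:proof_sketch_MN} and \ref{sec:ordinal_rsc_proof} uniformly over $\theta^*+S(R_0,r_0)$. I would begin by writing $\cL_n^{\ON}(\theta)=\frac{1}{n}\sum_{i=1}^n g_i(u(x_i,\theta))$ with $g_i(u)=A(f^{\ON}(u))-\delta_i^\top f^{\ON}(u)$, and, exploiting the fact that $\theta\mapsto u(x_i,\theta)$ is affine, reduce
\[
\langle\nabla\cL_n^{\ON}(\theta)-\nabla\cL_n^{\ON}(\theta^*),\,\theta-\theta^*\rangle
= \frac{1}{n}\sum_{i=1}^n \langle \nabla g_i(u_i)-\nabla g_i(u_i^*),\,u_i-u_i^*\rangle,
\]
with $u_i=u(x_i,\theta)$ and $u_i^*=u(x_i,\theta^*)$. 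A mean-value expansion on each $\nabla g_i$ yields the ordinal analogue of \eqref{eq:g_i_rsc}: a symmetric quadratic piece $(u_i-u_i^*)^\top G_i^{\ON}(u_i-u_i^*)$ with $G_i^{\ON}=G_i^{(1),\ON}+G_i^{(2),\ON}$ as in \eqref{eq:ON_G1}--\eqref{eq:ON_G2}, plus a conditionally mean-zero remainder $(u_i-u_i^*)^\top h_i^{\ON}$ in which $h_i^{\ON}=(\nabla f^{\ON}(u_i)-\nabla f^{\ON}(u_i^*))^\top(\nabla A(f^{\ON}(u_i^*))-\delta_i)$.

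Next I would lower bound the quadratic piece. On the feasible set, any convex combination $u_i^t=(1-t)u_i^*+tu_i$ has entries uniformly bounded via $|x_{ij}|\le C_x$ (Assumption~\ref{assump:subgauss_row}) together with Assumption~\ref{assump:ordinal_region_bound}, which yields a uniform positive lower bound on $\lambda_{\min}(\nabla^2 A(f^{\ON}(u_i^t)))$; combined with Lemma~\ref{lem:ordinal_rsc_term1_1}, this gives $\lambda_{\min}(G_i^{(1),\ON})\ge 2\gamma_0$ for $\gamma_0$ as in the remark following Theorem~\ref{thm:ordinal_upp}. Lemma~\ref{lem:ordinal_rsc_term1_2_brief} bounds $\|G_i^{(2),\ON}\|$ by a term of order $R_0$ that Assumption~\ref{assump:ordinal_region_bound} forces strictly below $\gamma_0$, leaving a form at least $\gamma_0\|u_i-u_i^*\|_2^2$. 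Because $u_i-u_i^*=(x_i^\top\Delta_{1:p}-\Delta_{p+1},-\Delta_{p+2},\ldots,-\Delta_{p+K})^\top$ for $\Delta=\theta-\theta^*$, averaging over $i$ and invoking a standard restricted-eigenvalue lower bound on the empirical Gram matrix $\frac{1}{n}\sum_i x_ix_i^\top$ (which holds with high probability under Assumption~\ref{assump:subgauss_row} by techniques analogous to \citealp{loh2013regularized}) converts this into the $\alpha\|\Delta\|_2^2-C\alpha\frac{\log J+m}{n}\|\Delta_{1:p}\|_{\omega,2,1}^2$ contribution in \eqref{eq:ordinal_rsc}, with $\alpha\asymp\gamma_0\min\{\lambda_{\min}(\Sigma),1\}$.

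The third step concentrates the noise term $\frac{1}{n}\sum_i (u_i-u_i^*)^\top h_i^{\ON}$ uniformly over $\theta\in\theta^*+S(R_0,r_0)$. Rewriting the summand as $\langle(\nabla f^{\ON}(u_i)-\nabla f^{\ON}(u_i^*))(u_i-u_i^*),\,\xi_i\rangle$ with $\xi_i:=\nabla A(f^{\ON}(u_i^*))-\delta_i$ exposes a product of a deterministic (given $x_i$) nonlinear function of $\theta$ with a bounded conditionally mean-zero noise. Symmetrization followed by a vector-contraction inequality for Rademacher complexities \citep{maurer2016vector}, with Lipschitz constant furnished by Lemma~\ref{lem:ordinal_rsc_term1_2_brief}, reduces the symmetrized supremum to Rademacher averages of linear functionals in $\Delta$; splitting these according to the block structure of $S(R_0,r_0)$ produces the $\sqrt{\lambda_{\max}(\Sigma)(\log J+m)/n}\,\|\Delta_{1:p}\|_{\omega,2,1}$ term from the sparse regression block and the $\sqrt{K(\log J+\log(2K))/n}\,\|\Delta\|_2$ term from the $K$-dimensional offset block, together forming $\tau(\Delta)$. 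A bounded-differences inequality, whose envelope is supplied by $|x_{ij}|\le C_x$ and the componentwise boundedness of $\delta_i$ and $\nabla A(f^{\ON}(u_i^*))$, then upgrades the expectation bound to the claimed high-probability statement.

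The hardest part, as foreshadowed at the end of Section~\ref{sec:ordinal_rsc_proof}, is precisely this noise-concentration step: the map $\theta\mapsto h_i^{\ON}$ is genuinely vector-valued and nonlinear, so a scalar contraction does not apply, and producing the exact constants $\alpha$ and $L$ announced in the lemma (and hence $\gamma_0, L_0$ of the remark after Theorem~\ref{thm:ordinal_upp}) requires carefully tracking how the bounds in Lemmas~\ref{lem:ordinal_rsc_term1_1}--\ref{lem:ordinal_rsc_term1_2_brief}, the range of $u_i$ on the feasible set, and the case-control-design ratios $\frac{n_j}{\pi_j n_u}$ propagate through the mean-value expansion and the vector-contraction argument---a bookkeeping that is delicate despite being conceptually routine.
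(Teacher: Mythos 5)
Your proposal is correct and follows essentially the same route as the paper: the same split into a quadratic term (handled via the mean-value expansion, the decomposition $G_i^{(1),\ON}+G_i^{(2),\ON}$, Lemmas~\ref{lem:ordinal_rsc_term1_1}--\ref{lem:ordinal_rsc_term1_2_brief}, and a restricted-eigenvalue bound on the augmented design $[X,-1]$) plus a conditionally mean-zero term concentrated by symmetrization, the vector-contraction inequality of \cite{maurer2016vector}, and bounded differences. The only pieces you gloss over are the dyadic peeling over slices of $\tau(\Delta)$ (with a separate derivative bound near $\Delta=0$) needed to make the normalized supremum bound uniform over all of $S(R_0,r_0)$, and that the RE condition must be stated for the augmented Gram matrix including the intercept column rather than $\frac{1}{n}\sum_i x_ix_i^\top$ alone; both are standard and consistent with what the paper does.
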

Under appropriate rate conditions, one can show that the first term at the R.H.S. of \eqref{eq:ordinal_rsc} is the positive dominating term when $\theta-\theta^*$ belongs to a restricted cone, and hence eventually leads to our final statistical error bound in Theorem \ref{thm:ordinal_upp}.

\section{Simulation Study}\label{sec:simulation}
In this section, we present simulation studies to validate the theoretical results and to evaluate our algorithms for both the multinomial-PU and ordinal-PU models.
We focus on the case-control setting in simulations while investigating the the single-training-set scenario in the real data experiments. 
\subsection{Validating Theoretical Guarantees}\label{sec:theory_sim}
First we experimentally validate the theoretical scaling of the mean squared error bounds presented in Section \ref{sec:theory}, w.r.t. the sparsity $s$, dimension $p$, number of categories $K$ and sample size $n$. For the multinomial model, we consider group sparsity where each row $\Theta_{j,:}\in \mathbb{R}^K$ consists of one group; and we focus on entry-wise sparsity for the ordinal model. Given a sparsity level $s$, the support sets are randomly chosen and the non-zero parameters are sampled from $\mathrm{U}([-1,-0.5]\cup[0.5 1])$. The intercepts are chosen such that an intercept-only model assigns equal probabilities to each category. We sample the feature vector $X$ from i.i.d. standard Gaussian distribution. We then generate the positive-unlabeled responses $z_i,\,1\leq i\leq n$ by randomly drawing $n_u=\frac{n}{2}$ unlabeled samples from the population, and draw $n_j = \frac{n}{2K}$ labeled samples from the population with true label $j$, $1\leq j\leq K$. The tuning parameter is set as $\lambda = c\sqrt{\frac{\log p}{n}}$ for appropriately chosen constant $c$ for each $s$ and $K$. The initialization is chosen as the MLE for the intercept-only models which have closed-form solutions.

Figure \ref{fig:sim1} presents the estimation mean squared errors of both models under different $s,\,p,\,n$ with $K=2$. The $x$-axis are the theoretical scalings w.r.t. $s,\,p,\,n$ from Theorem \ref{thm:mult_upp} and \ref{thm:ordinal_upp}: $\sqrt{\frac{s\log p +K}{n}}$ for multinomial parameters with group sparsity of size-$K$ groups, and $\sqrt{\frac{s\log p}{n}}$ for ordinal parameters with sparsity $s$. The straight and close lines validate these scalings. 
\begin{figure}[!ht]
\centering
  \subfigure[Multinomial]{\includegraphics[height = 5cm]{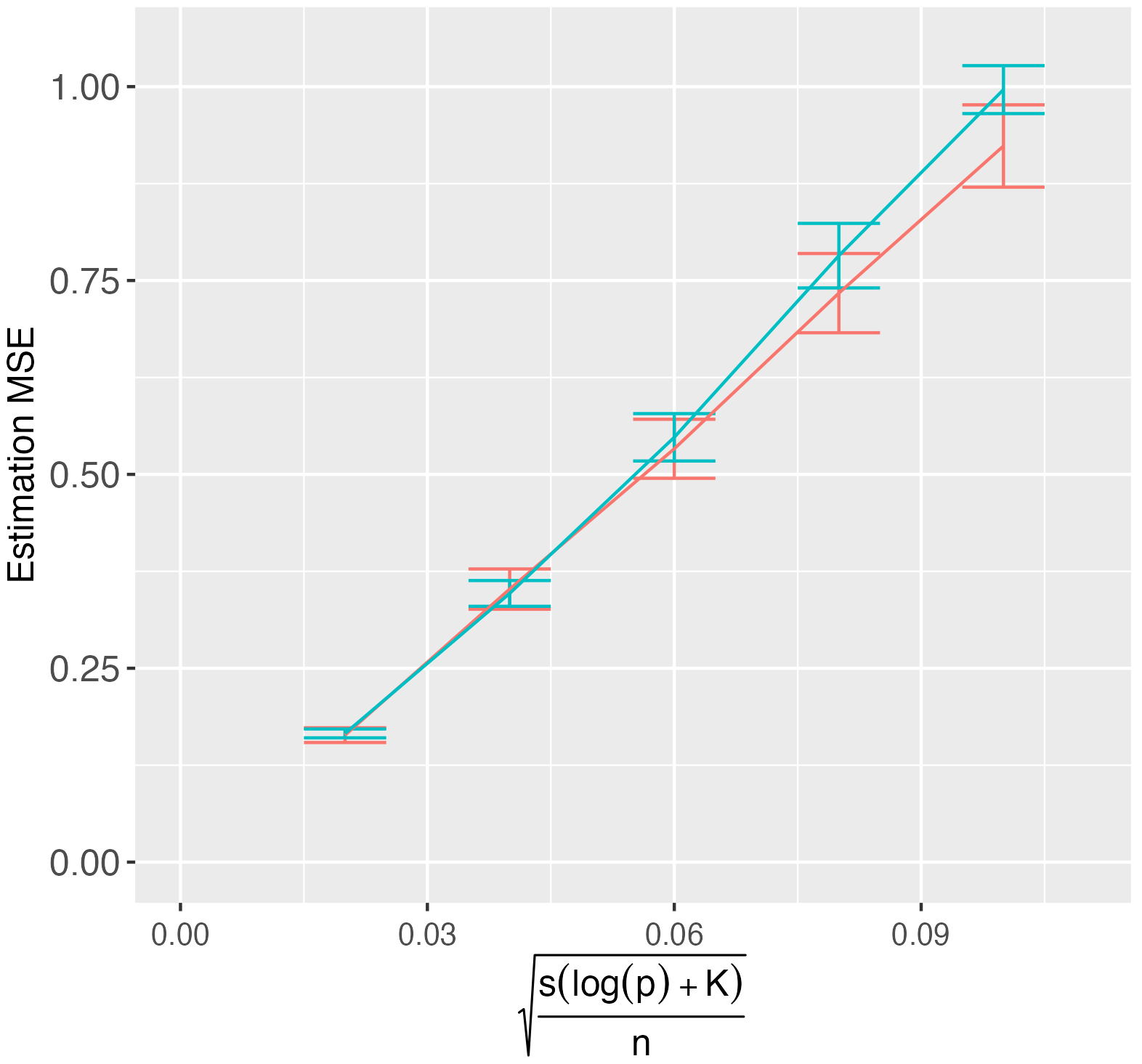}}
  \subfigure[Ordinal]{\includegraphics[height = 5cm]{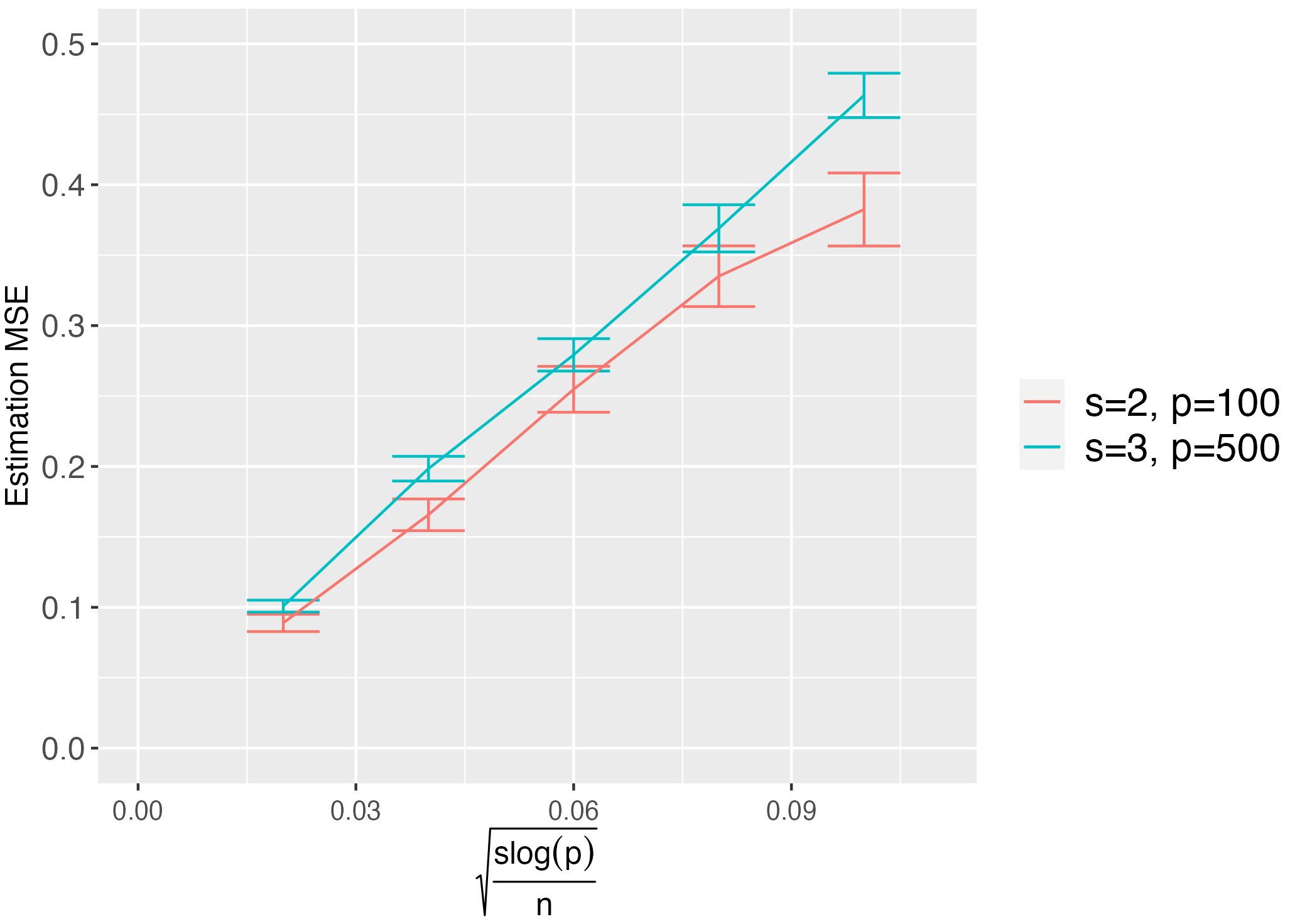}}
  \caption{MSE $\hat{E}\|\hat{\Theta}-\Theta^*\|_F$ for the multinomial model or $\hat{E}\|\hat{\theta}-\theta^*\|_2$ for the ordinal model, plotted against the theoretical rate $\sqrt{s(\log p+K)}{n}$ or $\sqrt{s\log p/n}$. The two lines correspond to different sparsities and dimensions, and they seem to align well to each other, validating our theoretical rate.}
  \label{fig:sim1}
\end{figure}
We also investigate how the mean squared errors depend on the number of positive categories $K$. We focus on $p=100$, $s = 2,3$, and sample size $n$ satisfying $\sqrt{\frac{s\log p}{n}}=0.02$, and the results are presented in Figure \ref{fig:sim2}. It turns out that although our theoretical scaling on $K$ is polynomial for both models, it may only be the case for the multinomial model as the total number of parameters is $Kp$, while the MSE for the ordinal model seems similar across different values of $K$.
\begin{figure}[!ht]
\centering
  \subfigure[Multinomial]{\includegraphics[height = 5cm]{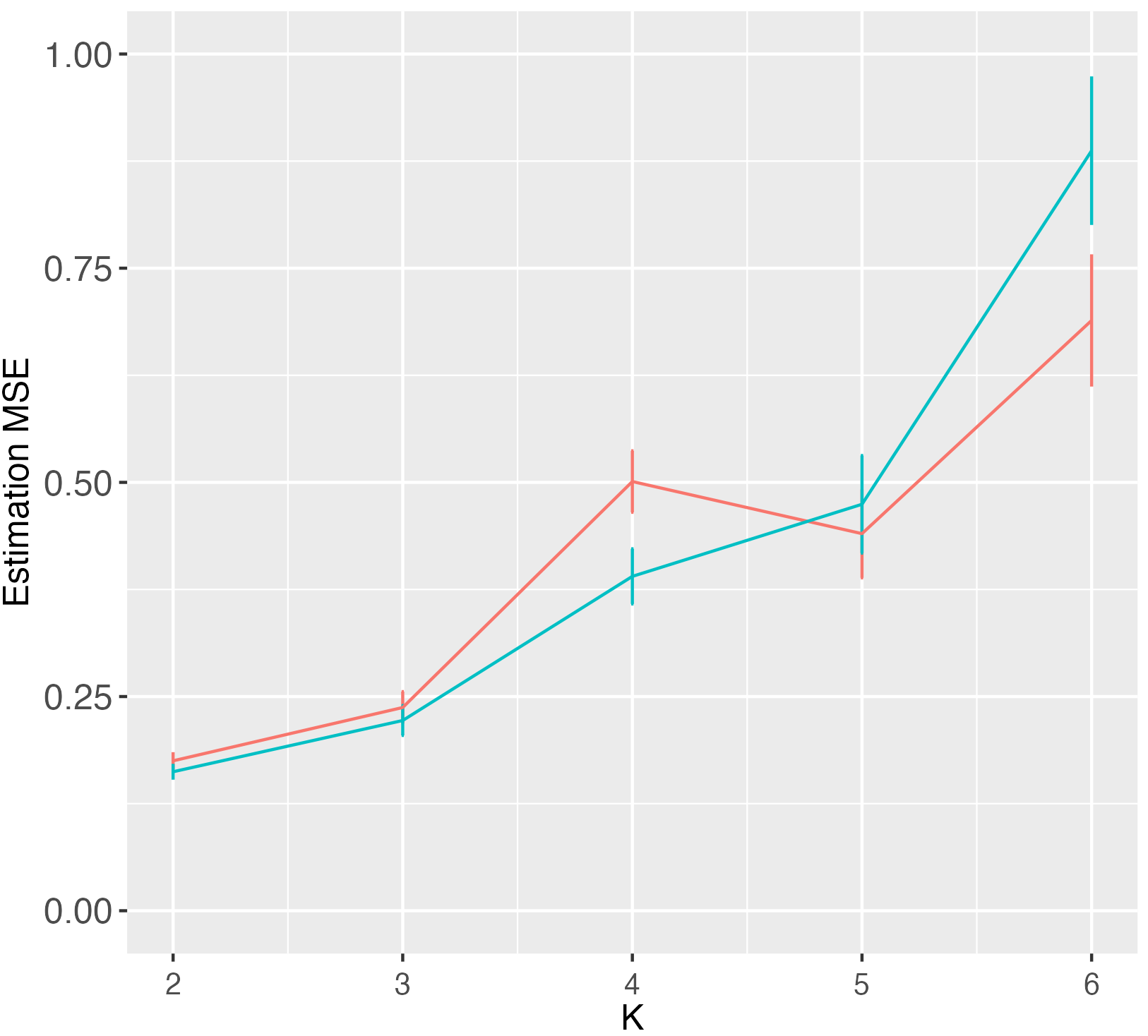}}
  \subfigure[Ordinal]{\includegraphics[height = 5cm]{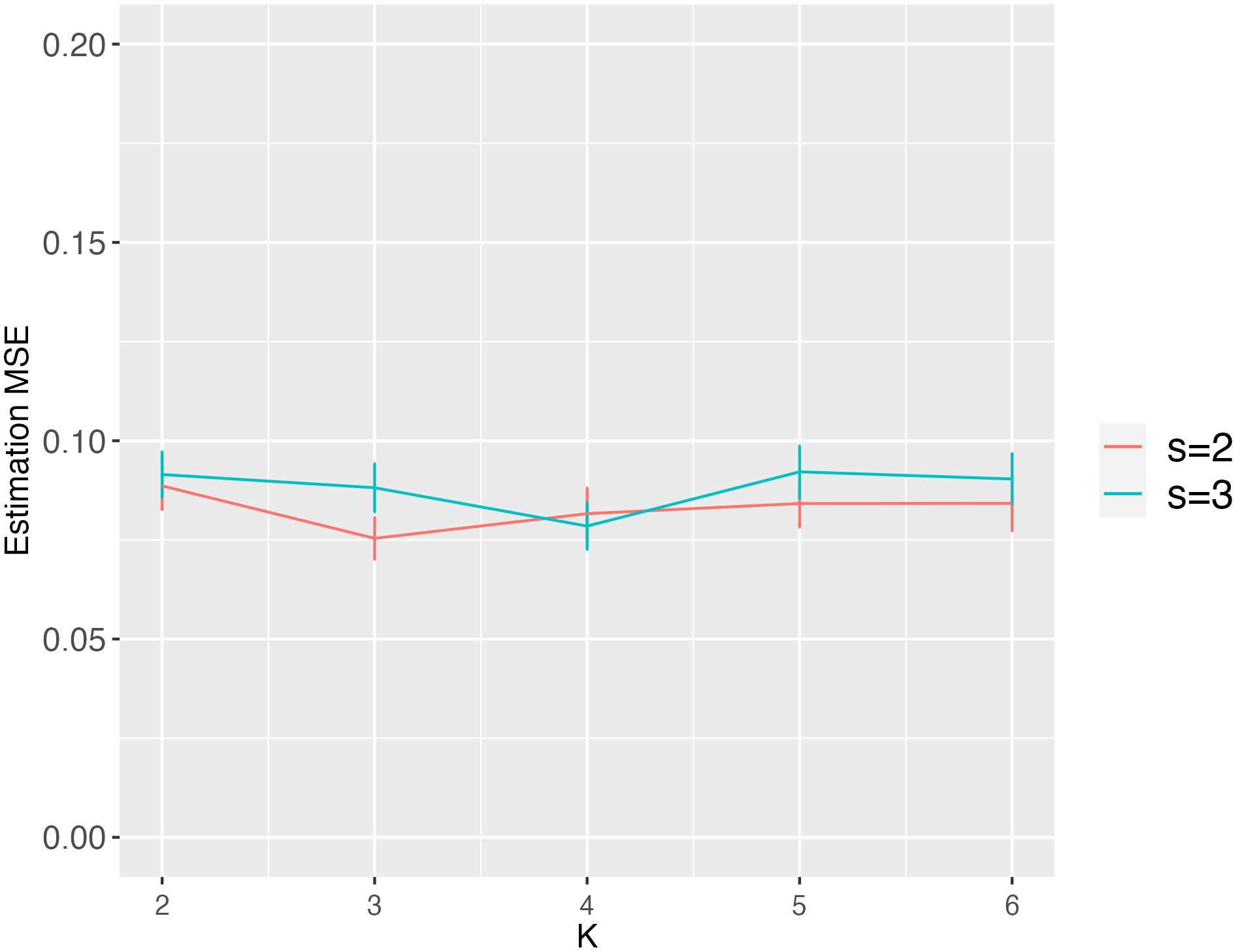}}
  \caption{MSE $\hat{E}\|\hat{\Theta}-\Theta^*\|_F$ for the multinomial model or $\hat{E}\|\hat{\theta}-\theta^*\|_2$ for the ordinal model, plotted against $K$. $p = 100$ and $n$ is chosen to satisfy $\sqrt{\frac{s\log p}{n}} = 0.02$. We can see the estimation error grows steadily as $K$ increases for the multinomial model, but does not change much for the ordinal model.}
  \label{fig:sim2}
\end{figure}
\subsection{Comparative Studies}
Here, we compare the prediction performance using our methods with the corresponding baselines which assume all unlabeled data are truly negative, and which directly minimize the $\ell_1$-penalized multinomial and ordinal loglikelihoods. We refer to our methods as the \emph{MN-PULasso} and \emph{ON-PULasso}, and the baselines as \emph{MN-Lasso} and \emph{ON-Lasso}. We also present the oracle prediction errors for reference, which are based on the true model parameters. Specifically, we compare the prediction errors of the fitted models / true models for predicting the true labels of a test data set of size $100$, and we investigate the effect of the true prevalence $\pi_j=\mathbb{P}(y=j),\,j=0,\dots,K$ of each category in the whole population and the sampling proportions $\frac{n_u}{n}$, $\frac{n_j}{n},\, j=1,\dots,K$. The model parameter settings are mostly the same as described in Section \ref{sec:theory_sim}, except that we set the intercept parameters to achieve different prevalence $\pi_j$, and each feature $X_j$ is now sampled from i.i.d. Gaussian distribution with variance $4$. We focus on $s=2,\,K=2,\,p=400,\,n=200$, and the tuning parameters are all chosen via 5-fold cross-validation.

We first fix $n_u=\frac{n}{2}=200$, $n_j = \frac{n}{2K}=50$ for $j>0$, while varying the true prevalence of different categories. However, directly setting specific values for the true prevalence is difficult, as the prevalence depends on the model parameters in a complex form. Instead, we set the intercept parameters carefully so that the true prevalence lies in an appropriate range. More details can be found in the Supplement. The resulting prevalence is estimated from the simulated data and we present the prevalence of positive samples $\pi_1+\pi_2 = 1-\pi_0$ as the x-axis of Figure \ref{fig:sim4}. 
A larger $\pi_1+\pi_2$ (smaller $\pi_0$) means that more unlabeled samples are in fact positive instead of being negative, and hence the baselines which treat those unlabeled samples as negative would suffer from significant bias. We also find that when the prevalence of different categories become more extreme, the oracle and our prediction errors become smaller since the prediction problem becomes easier with unbalanced test data. 
\begin{figure}[!ht]
\centering
  \subfigure[Multinomial]{\includegraphics[height = 5cm]{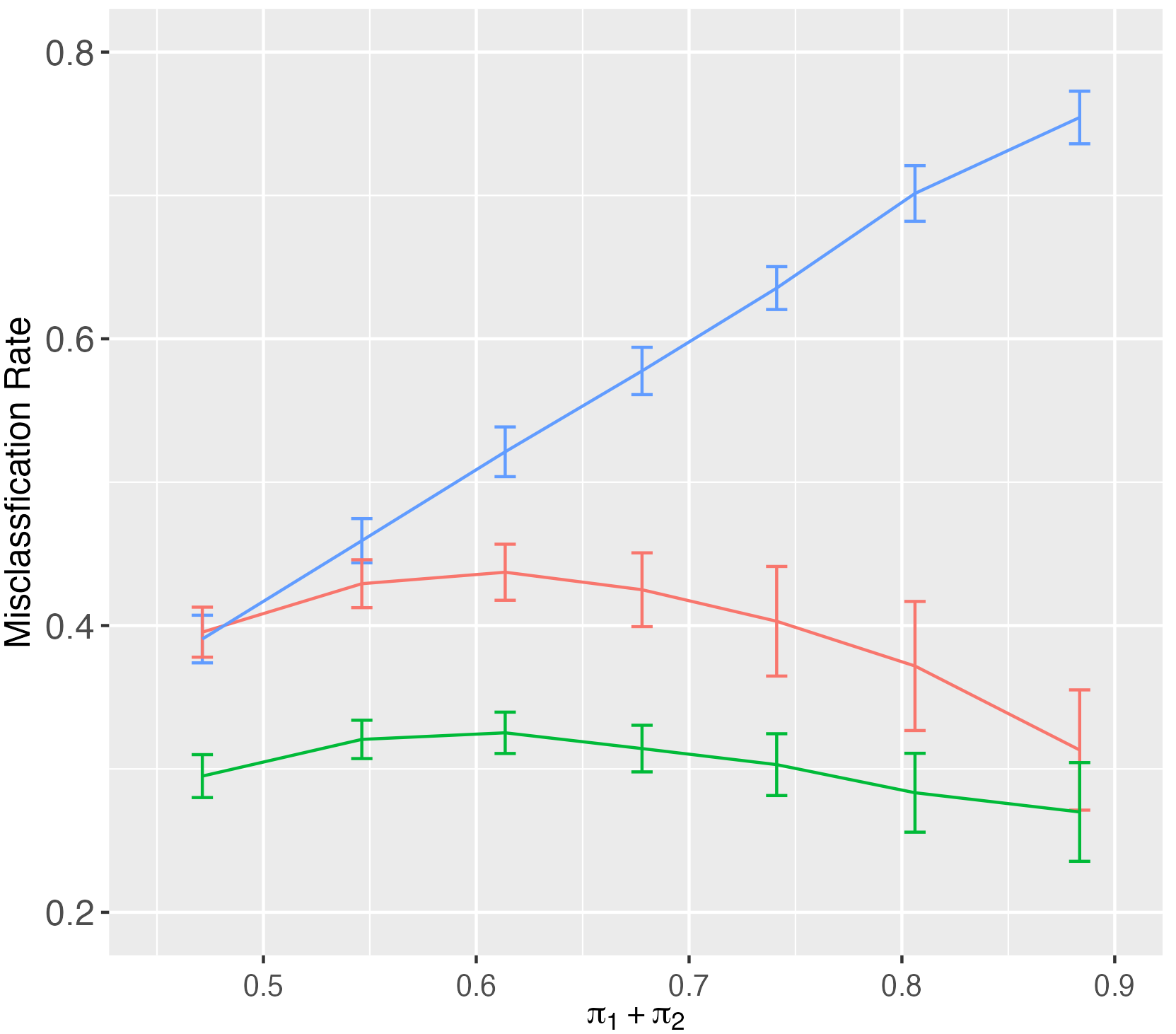}}
  \subfigure[Ordinal]{\includegraphics[height = 5cm]{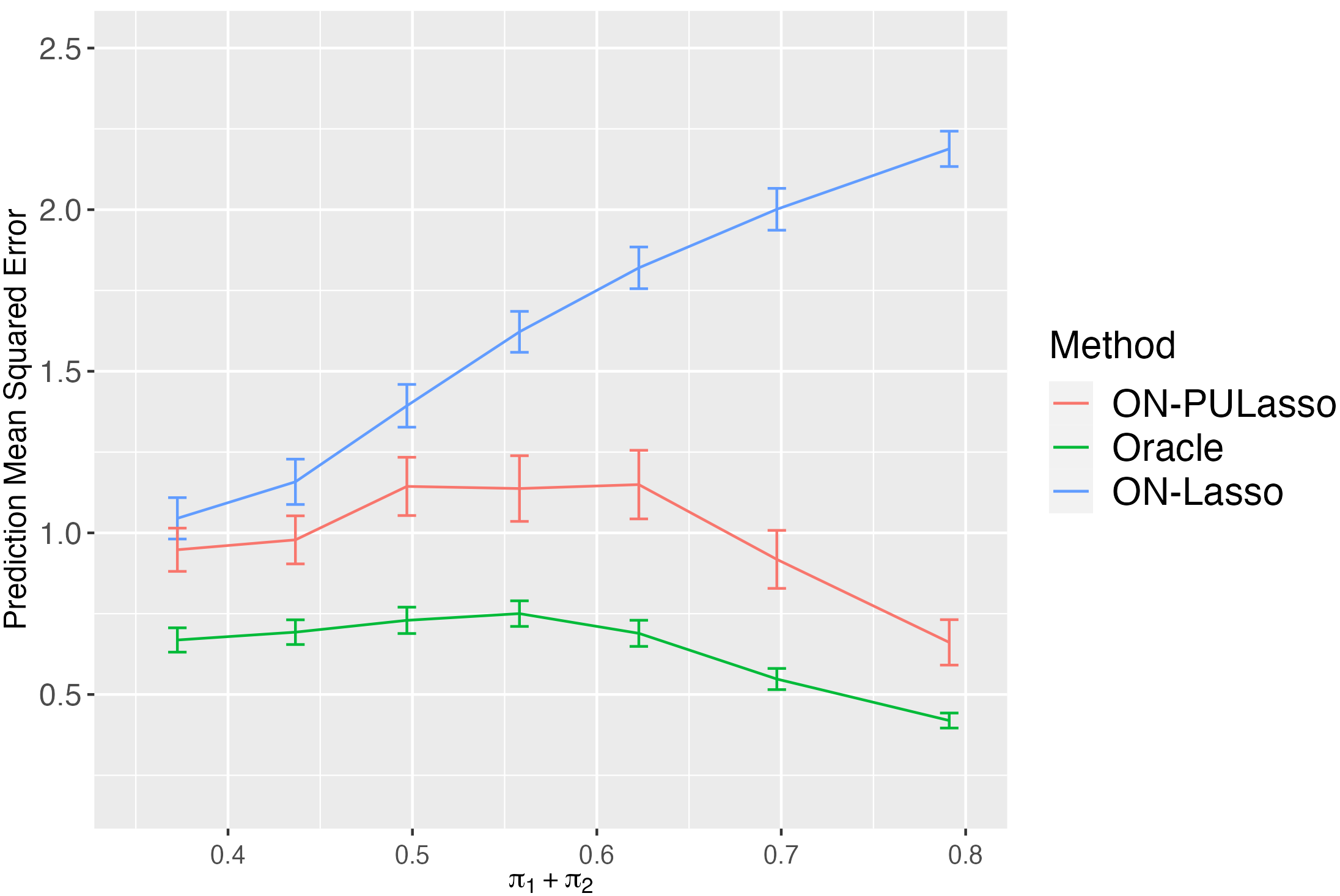}}
  \caption{Prediction performance comparisons between our methods (\emph{MN-PULasso} and \emph{ON-PULasso}), the baseline methods (\emph{MN-Lasso} and \emph{ON-Lasso}), and the oracle prediction based on the true models, plotted against the true prevalence of positive categories. For the multinomial model, the $y$-axis is the misclassification rates $\mathbb{P}_n(\hat{y}\neq y)$; for the ordinal model, the $y$-axis is the prediction mean squared error $\mathbb{E}_n(\hat{y}- y)^2$. 
  We can see our methods outperform the baseline methods that naively treat unlabeled data as zero, especially when the true prevalence of positive data increases.}
  \label{fig:sim4}
\end{figure}

In addition, we also investigate the effect of the sampling proportions. We set the intercepts of both models appropriately so that the true prevalence of all categories are similar (balanced data). More details can be found in the Supplement. We vary $\frac{n_u}{n}$ from $0.3$ to $0.9$, with $n_1=n_2 = \frac{n-n_u}{2}$. A larger $\frac{n_u}{n}$ means that there are more unlabeled samples and hence the estimation problem becomes more challenging. Figure \ref{fig:sim5} suggests that as $\frac{n_u}{n}$ increases, both our approaches and the baselines have larger prediction errors, but the baselines are more adversely affected by this issue.
\begin{figure}[!ht]
\centering
  \subfigure[Multinomial]{\includegraphics[height = 5cm]{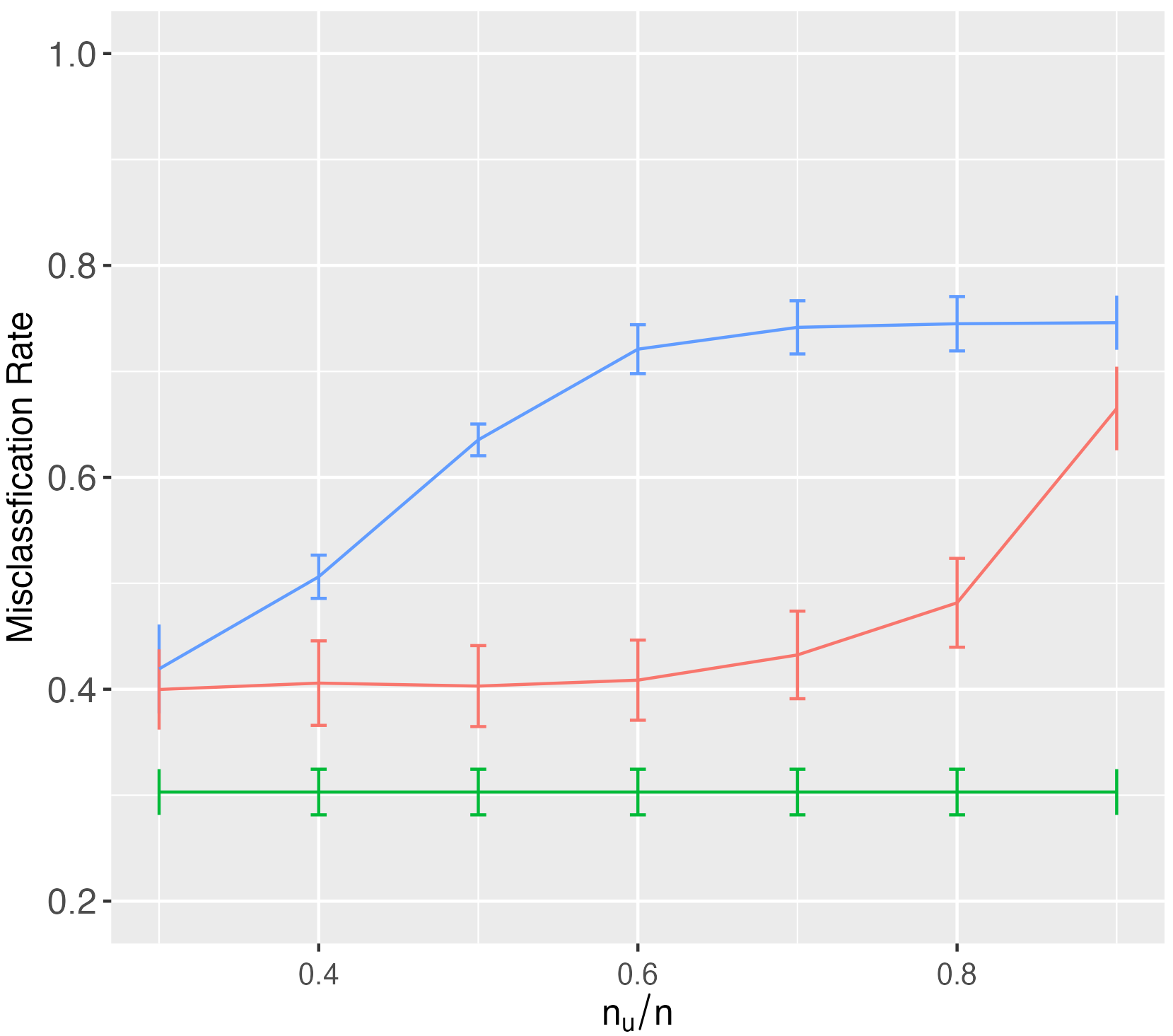}}
  \subfigure[Ordinal]{\includegraphics[height = 5cm]{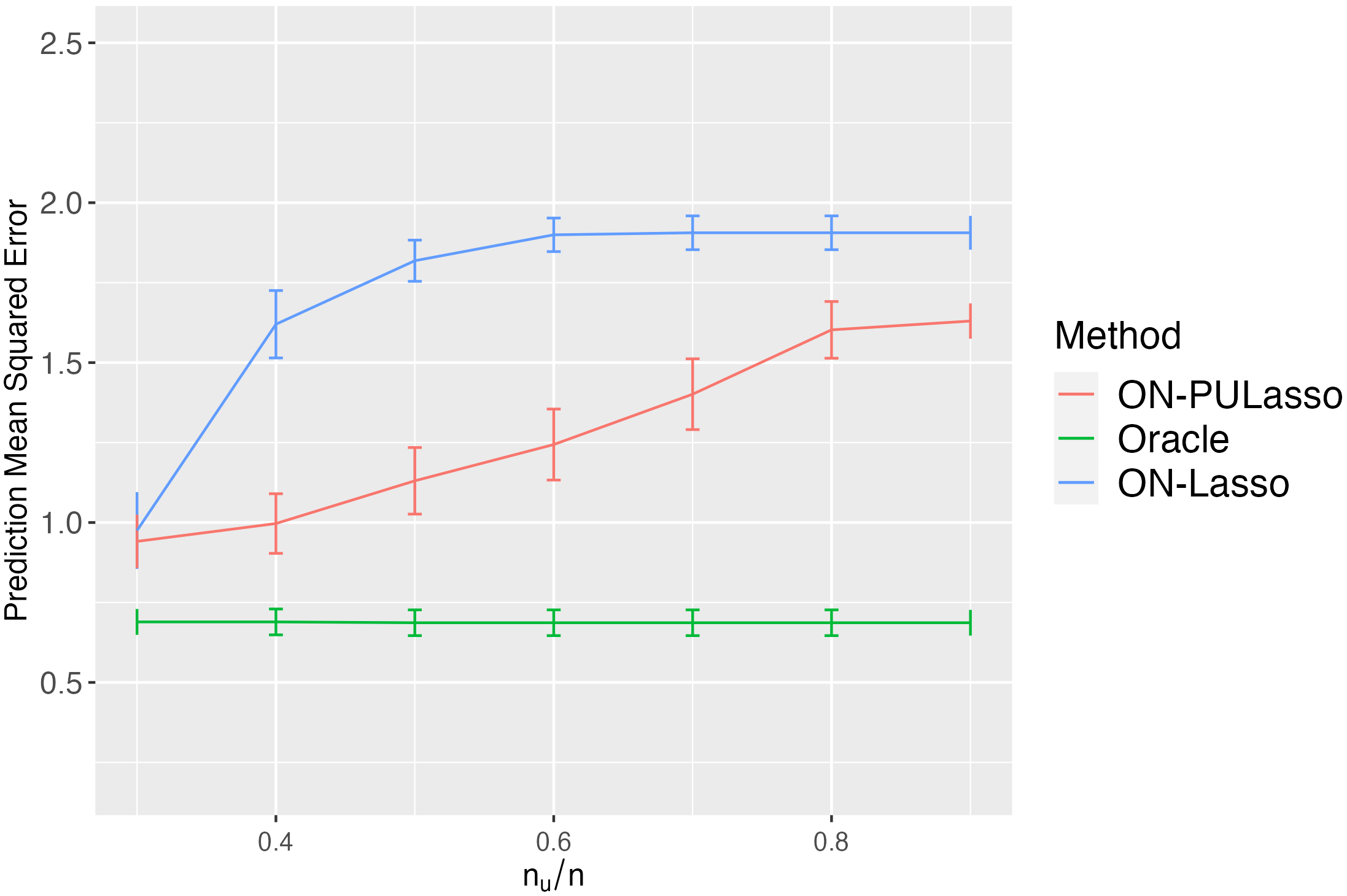}}
  \caption{Prediction performance comparisons between our approaches (\emph{MN-PULasso} and \emph{ON-PULasso}), the baselines (\emph{MN-Lasso} and \emph{ON-Lasso}), and the oracle prediction, plotted against the proportion of unlabeled samples. The problem is more challenging when there are more unlabeled samples, while the prediction errors of our approaches increase much more slowly than the baselines.}
  \label{fig:sim5}
\end{figure}

\section{Real Data Experiments}\label{sec:realdata}
In this section, we validate our approaches with two real data experiments. One data set has unordered labels and is suited to the multinomial PULasso approach; the other one has ordered labels and is used to validate the ordinal PULasso approach.
\subsection{Multinomial Application: Digit Recognition}\label{app:digit}
Digit recognition is an important problem in computer vision, which aims to classify the digits 0 through 9 from their images. However, it is usually expensive and time-consuming to obtain human-labeled samples, calling for a PU-learning approach for multi-class classification. Here, we would like to investigate the potential of our multinomial PULasso approach for solving this problem, since the images of different digits have no ordering pattern.  
We obtained the publicly available data ''Multiple Features Data Set''\footnote{https://archive.ics.uci.edu/ml/datasets/Multiple+Features} from the UCI Machine Learning Repository \citep{Dua:2019}, which consists of $n=2000$ handwritten digits from a Dutch utility map. The data is balanced (i.e. there are $200$ of each of the $10$ digits $0$ through $9$). There are $p=649$ features extracted from the images, including $76$ Fourier coefficients of the character shapes, $216$ profile correlations, $64$ Karhunen-Love coefficients, $240$ pixel averages in $2\times 3$ windows, $47$ Zernike moments, and $6$ morphological features. All samples in this data set are labeled correctly, but in order to validate our approach in PU settings, we manually contaminate the training set to make them positive and unlabeled. In particular, we contaminate the training set (70\% of the full data set) to imitate the PU data one might encounter in real applications, and apply both our multinomial PULasso approach and the baseline multinomial Lasso approach on the contaminated data; we then compare the prediction performance of these two methods on the clean test data. 

More specifically, with probability $1-\pi_i$, each of the digit $i$ in the training set were replaced with a $0$. This is the single-training-set scenario described in Appendix \ref{sec:multinom}, and the Multinomial PUlasso algorithm proposed there is used. To study the effect of $\pi_i$ and its misspecification, here we perform experiments on different values of $\pi_i$ for data generation and $\hat{\pi}_i$ for model fitting. We set all $\pi_i$ ($\hat{\pi}_i$) to be the same across $i\in \{1,\dots,9\}$. The regularization parameters are all chosen via $5$-fold cross validation on the training set. Then we measure the prediction accuracy on the test data set by predicting the class with the highest predicted probability for the each test sample, and report the percentage of correct predictions in Table \ref{tab:digit}. We can see that (i) MN-PULasso outperforms the baseline MN-Lasso when the probability of observing a labeled sample is lower ($\pi_i=0.4,\,0.6$) even when this probability is misspecified in our algorithm; (ii) For different levels of $\pi_i$, MN-PULasso with the correct $\hat{\pi}_i=\pi_i$ always performs the best. 
\begin{table}[!ht]
    \centering
    \scalebox{0.85}{
    \begin{tabular}{|c|c|c|c|c|c|c|c|c|c|}
 \hline
   & \multirow{2}{*}{MN-Lasso} & \multicolumn{5}{c|}{MN-PULasso}\\
   \cline{3-7}
   &&$\hat{\pi}_i=0.5$& $\hat{\pi}_i=0.6$&$\hat{\pi}_i=0.7$& $\hat{\pi}_i=0.8$& $\hat{\pi}_i=0.9$\\
 \hline
    $\pi_i=0.4$ & 26.17\% & 79.83\% & 66.50\% & 52.00\% & 43.50\% & 36.50\% \\
 \hline
 $\pi_i=0.6$ & 66.17\% & 88.50\% & 98.00\% & 91.33\% & 82.00\% & 72.00\%\\
 \hline
 $\pi_i=0.8$ & 92.33\% & 47.00\% & 82.50\% & 89.00\% & 97.33\% & 94.33\%\\
 \hline
\end{tabular}}
    \caption{Digit recognition data experiments: Multinomial PULasso and Multinomial Lasso accuracy on the test data set (30\%), both trained on the training data set (70\%). For any positive sample with label $i$ ($i>0$) in the training set, the probability of it being unlabeled is $1-\pi_i$. The input to our MN-PULasso method is $\hat{\pi}_i$ .We observe that our MN-PULasso with correct observational probability input $\hat{\pi}_i=\pi_i$ always performs the best; it is also reasonably robust against misspecification of $\pi_i$.}
    \label{tab:digit}
\end{table}

\subsection{Ordinal Application: Colposcopy Subjectivity}\label{app:colposcopy}
Colposcopies are used to examine the cervix, vagina, and vulva, typically done if abnormalities are noticed in a pap smear. Instead of having a medical expert looking at the patient in real time, a more efficient strategy is to record a video (digital colposcopy) for the medical expert to examine later, so that this frees time for medical practitioners to see more patients. An important question is whether the digital colposcopies have good enough qualities for the medical experts to make diagnosis.

To investigate this problem, we study the Quality Assessment of Digital Colposcopies Data Set \citep{fernandes2017transfer} from the UCI Machine Learning Repository \citep{Dua:2019}. This data set consists of $n=287$ digital colposcopies used to check for cervical cancer, and $p=62$ features were extracted from each of the digital colposcopies. $6$ medical experts looked at the digital colposcopies and rated the quality of them as either poor or good. %When the data was originally collected, each colposcopy was labeled as poor or good quality based on what the majority of the experts rated the digital colposcopy as. 
We then give each colposcopy an ordinal ranking by counting how many of the medical experts rated the colposcopy as good quality, with a range from $0$ to $5$ in this data set, leading to $5$ positive classes and one negative class. The goal of our study here is to develop a model that uses the $p$ features to predict how many experts considered the image to be of good quality.

%Again, this is not PU data. 
Just as with the digit recognition application, this particular colposcopy data set is not PU data. However, getting medical experts to label the quality of colposcopies is time-consuming and expensive, which could lead to huge amount of unlabeled samples in real applications. Therefore, developing a method for PU and ordinal data in this application is an important task, and we study the potential of our approach in this scenario by manually masking the data to make the samples positive and unlabeled. In particular, with probability $1-\pi_i$, each of the colposcopies with $i$ experts rating as good quality in the training set (70\% of the data set) were altered to have $0$ as its contaminated label, making $0$ the unlabeled class and all other classes labeled. As before, this was only done on the training data, so we can still use the testing set (the remaining 30\% of the data set) to compute accuracy. This is the single-training-set scenario described in Appendix \ref{sec:ordinal}, and the Ordinal PUlasso algorithm proposed there is used. In addition, we consider different values of $\pi_i$ for generating the PU data and $\hat{\pi}_i$ used in the traininng algorithm to study their effects. For simplicity, all $\pi_i$'s and $\hat{\pi}_i$'s are set as the same for all $0\leq i\leq 6$. 

We trained using $5$-fold cross validation on the training set. %We computed accuracy on the remaining $10\%$ of the data. 
Because the labels are ordinal, predicting a sample with true label $6$ as zero is worse than predicting it as $5$. Hence we report the mean squared error to indicate the prediction performance instead of the percentage of correct classifications. As shown in Table~\ref{tab:colposcopy}, Ordinal PUlasso always outperform Ordinal Lasso which treats all unlabeled samples as negative ones, when the probability $\pi_i$ of correctly observing positively labeled data ranges from $0.6$ to $0.8$, even when $\pi_i$ is misspecified moderately.

\begin{table}[!ht]
    \centering
    \scalebox{0.85}{
    \begin{tabular}{|c|c|c|c|c|c|c|c|c|c|}
 \hline
   & \multirow{2}{*}{ON-Lasso} & \multicolumn{5}{c|}{ON-PULasso}\\
   \cline{3-7}
   &&$\hat{\pi}_i=0.5$& $\hat{\pi}_i=0.6$&$\hat{\pi}_i=0.7$& $\hat{\pi}_i=0.8$& $\hat{\pi}_i=0.9$\\
 \hline
    $\pi_i=0.6$ & 10.70 & 7.27 & 3.79 & 4.72 & 6.92 & 8.60 \\
 \hline
 $\pi_i=0.7$ & 9.36 & 7.27 & 7.27 & 5.53 & 6.28 & 9.01 \\
 \hline
 $\pi_i=0.8$ & 7.97 & 6.41 & 4.51 & 7.27 & 4.99 & 5.93\\
 \hline
\end{tabular}}
    \caption{Colposcopy quality assessment data experiments: Ordinal PULasso and Ordinal Lasso prediction MSE on the test data set (30\%), both trained on the training data set (70\%). For any positive sample with label $i$ ($i>0$) in the training set, the probability of it being unlabeled is $1-\pi_i$. The input to our ON-PULasso method is $\hat{\pi}_i$. We observe that for reasonably chosen $\hat{\pi}_i$, our ON-PULasso algorithm always have more accurate predictions than the baseline method ON-Lasso.}
    \label{tab:colposcopy}
\end{table}
\section{Discussion}\label{sec:disc}
In this paper, we focus on the high-dimensional classification problem with the positive and unlabeled (PU) data, where only some positive samples are labeled, a common situation arising in many applications. Such data sets posit unique optimization and statistical challenges due to the non-convex landscape of the log-likelihood loss. Going beyond prior works that focus on binary classification, here we are interested in the setting with multiple positive categories, which is more general but also magnifies the non-convexity issues. In particular, we propose a multinomial PU model and an ordinal PU model for unordered and ordered labels, respectively, with accompanying algorithms to estimate the models. Despite the challenging non-convexity of the problems, especially for the ordinal model, we manage to show the algorithmic convergence and characterize the statistical error bound for a reasonable initialization. A series of simulation and real data studies suggest the practical usefulness and application potential of our proposed models and methods.

There are also a number of open problems that might be worth investigating in the future. Although our theory and empirical studies have validated the efficacy of our non-convex approaches, it may still be of interest to develop convex methods and compare their performance with our approaches, probably leveraging the idea of moment methods in \cite{song2020convex}. Our current models are all parametric, while it is also possible to consider extensions to semi-parametric models, or to incorporate more flexible non-parametric machine learning algorithms such as deep neural networks. Furthermore, there are a number of other applications in biomedical engineering and others that we can adapt our framework to. 

\clearpage
\appendix
\section{Proof of Theorem~\ref{thm:mult_upp}}\label{sec:proof_theorem_mult}
%%------------------------------%--------------------
For simplicity, we will omit $\cL_n^{\MN}(\Theta,0_{K\times 1})$ to $\cL_n^{\MN}(\Theta)$ in the following. We first present two major supporting lemmas for proving Theorem~\ref{thm:mult_upp}.
\begin{lemma}[Deviation Bound under the Multinomial-PU Model]\label{lem:multinomial_dev_bnd}
If the data set $\{(x_i,z_i)\}_{i=1}^n$ is generated by the multinomial-PU model under the case-control setting, and Assumptions \ref{assump:subgauss_row}-\ref{assump:mult_region_bound} hold, then
\begin{equation}
     \|\nabla\cL_n^{\MN}(\Theta^*)\|_{\omega^{-1},2,\infty}\leq C\lambda^{\frac{1}{2}}_{\max}(\Sigma)\max_{1\leq j\leq J}\sqrt{\frac{c_j(m+\log J)}{n}},
\end{equation}
with probability at least $1-\exp\{-c(\log J+m)\}$.
\end{lemma}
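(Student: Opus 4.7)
The plan is to express the gradient at $\Theta^*$ as an empirical mean of conditionally mean-zero random matrices, then bound the Frobenius norm of each group-restriction block via a Bernstein-type concentration combined with an $\epsilon$-net. Differentiating the multinomial-PU loss in \eqref{eq:proof_MN_loss},
\[
\nabla\cL_n^{\MN}(\Theta^*) \;=\; \frac{1}{n}\sum_{i=1}^n x_i\,\xi_i^\top,
\qquad
\xi_i \;=\; \nabla f(u_i^*)^\top\bigl(\nabla A(f(u_i^*)) - \delta_i\bigr),
\]
with $u_i^* = \Theta^{*\top} x_i$. The exponential-family identity $\bbE[\delta_i\mid x_i] = \nabla A(f(u_i^*))$ gives $\bbE[\xi_i\mid x_i]=0$. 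Writing $\nabla f(u) = I - \mathbf{1}p(u)^\top$ with $\|p(u)\|_1\leq 1$, and $v=\nabla A(f(u_i^*))-\delta_i$, one computes $(\xi_i)_k = v_k - p(u_i^*)^\top v$, so the deterministic bound $\|\xi_i\|_\infty \leq 2$ holds; this uniform boundedness of $\xi_i$ is what allows a Bernstein-type analysis downstream.

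By Assumption~\ref{assump:scaling} the weights satisfy $\omega_j \geq c > 0$, so it suffices to bound $\max_{j\in[J]}\|(\nabla\cL_n^{\MN}(\Theta^*))_{\cG_j}\|_F$ up to a universal constant. For a fixed group $\cG_j = \cR_j\times\cC_j$ I use the variational representation
\[
\|(\nabla\cL_n^{\MN}(\Theta^*))_{\cG_j}\|_F \;=\; \sup_{\|U\|_F\leq 1,\, U\in\bbR^{r_j\times c_j}}\frac{1}{n}\sum_{i=1}^n (x_{i,\cR_j})^\top U\,\xi_{i,\cC_j}.
\]
For each fixed $U$, the summand is conditionally mean-zero (given $x_i$), bounded by $2\sqrt{c_j}\|x_{i,\cR_j}^\top U\|_2$, and has conditional variance at most $4c_j\|x_{i,\cR_j}^\top U\|_2^2$ (using Cauchy--Schwarz together with $\|\xi_{i,\cC_j}\|_2\leq 2\sqrt{c_j}$). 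A Bernstein inequality conditional on $\{x_i\}$, together with a sub-Gaussian quadratic-form bound giving $\frac{1}{n}\sum_i\|x_{i,\cR_j}^\top U\|_2^2 \leq C\lambda_{\max}(\Sigma)\|U\|_F^2$ uniformly over $U$ on the discretization, controls the per-$U$ tail. An $\epsilon$-net of the unit Frobenius ball in $\bbR^{r_j\times c_j}$ of cardinality at most $e^{Cr_jc_j}\leq e^{Cm}$ and a final union bound across the $J$ groups then yield
\[
\|(\nabla\cL_n^{\MN}(\Theta^*))_{\cG_j}\|_F \;\leq\; C\lambda_{\max}^{1/2}(\Sigma)\sqrt{\frac{c_j(m+\log J)}{n}}
\]
for every $j$, with probability at least $1-\exp(-c(m+\log J))$; dividing by $\min_j\omega_j \geq c$ gives the claim.

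The principal obstacle is extracting the sharp factor $\lambda_{\max}^{1/2}(\Sigma)$ rather than the loose $C_x\sqrt{r_j}$ one would obtain from the deterministic coordinate bound $|x_{ij}|\leq C_x$; this forces the use of Bernstein with a variance proxy scaling in $\Sigma$, coupled with a separate uniform sub-Gaussian quadratic-form bound for the design $X_{:,\cR_j}$. A secondary subtlety is that coordinates of $\xi_i$ are mutually dependent and the Hessian block $\nabla f(u)^\top\nabla^2 A(f(u))\nabla f(u)$ can have operator norm scaling with $K$, so we avoid a coordinate-wise sub-Gaussian analysis and instead use the group-wise bound $\|\xi_{i,\cC_j}\|_2\leq 2\sqrt{c_j}$; this is what produces the factor $\sqrt{c_j}$ (rather than $\sqrt{K}$) in the final rate.
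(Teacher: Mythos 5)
Your argument is correct and reaches the stated bound by a route that agrees with the paper's in its skeleton but differs in the concentration device. Both proofs start from the same identity $\nabla\cL_n^{\MN}(\Theta^*)=\frac{1}{n}\sum_i x_i\xi_i^\top$ with $\xi_i$ conditionally mean-zero and $\|\xi_i\|_\infty\le 2$ (the paper's $\widetilde{\epsilon}_{ik}$ is exactly your $(\xi_i)_k$ up to sign; note your formula should read $(\xi_i)_k=v_k-p_k(u_i^*)\,(\mathbf{1}^\top v)$ rather than $v_k-p(u_i^*)^\top v$, which still yields the bound $2$), both condition on $X$, and both extract the factor $\lambda_{\max}^{1/2}(\Sigma)$ from the operator norm of the sub-Gaussian design block, $\|X_{\cR_j}\|\le C\lambda_{\max}^{1/2}(\Sigma)\sqrt{n}$. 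Where you diverge is in handling the supremum defining the Frobenius norm of each group block: the paper first peels off a deterministic factor $\sqrt{c_j}$ via $\|M\|_F\le\sqrt{c_j}\max_{k\in\cC_j}\|M_{:,k}\|_2$ and then concentrates each column norm in one shot using Talagrand's concentration for convex Lipschitz functions of the bounded vector $\widetilde{\epsilon}_{:,k}$ (no net; the union is only over the $c_j$ columns and the $J$ groups), whereas you keep the whole block, pass to an $\epsilon$-net of the unit Frobenius ball of cardinality $e^{Cr_jc_j}\le e^{Cm}$, and apply a pointwise exponential inequality; the extra $Cm$ in your union bound is harmless because the target exponent already contains $m$. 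Both routes are valid and give the same rate. One caution on your version: since you invoke Bernstein, you must verify that the sub-exponential regime, governed by $\max_i\|U^\top x_{i,\cR_j}\|_2\le C_x\sqrt{r_j}$, does not dominate at the chosen threshold, which requires an additional appeal to the rate conditions in Assumption~\ref{assump:scaling}; because the summands are bounded, it is cleaner to use Hoeffding with variance proxy $\sum_i 4c_j\|U^\top x_{i,\cR_j}\|_2^2$, which places you directly in the sub-Gaussian regime and makes the step airtight.
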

\begin{lemma}[Restricted Strong Convexity under the Multinomial-PU Model]\label{lem:RSC}
If the data set $\{(x_i,z_i)\}_{i=1}^n$ is generated by the multinomial-PU model under the case-control setting, and Assumptions \ref{assump:subgauss_row}-\ref{assump:mult_region_bound} hold, then with probability at least $$1-\exp\{-cR^{*2}K\lambda_{\max}(\Sigma)\min_jr_j^{-1}(\log J+m)\},$$
\begin{equation}
\begin{split}
    &\left\langle\nabla \cL_n^{\MN}(\Theta)-\nabla \cL(\Theta^*),\Theta-\Theta^*\right\rangle\\
    \geq &\alpha\|\Theta-\Theta^*\|_2^2-\tau_1\frac{\log J+m}{n}\|\Theta-\Theta^*\|_{\omega,2,1}^2-\tau_2\sqrt{\frac{\log J+m}{n}}\|\Theta-\Theta^*\|_{\omega,2,1}
\end{split}
\end{equation}
holds for any $\Theta\in \{\Theta':\max_{1\leq i\leq K}\|\Theta^*_{:,i}-\Theta'_{:,i}\|_1\leq R_0\}$, where $\alpha=\frac{h^{\MN}(R_0)\lambda_{\min}(\Sigma)}{2}$, $\tau_1=C\alpha$, $\tau_2=C(K+C_xR^*\sqrt{K})\lambda^{\frac{1}{2}}_{\max}(\Sigma)$.
\end{lemma}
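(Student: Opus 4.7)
\textbf{Proof Proposal for Lemma \ref{lem:RSC}.}

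The plan is to follow the decomposition already sketched in Section~\ref{sec:proof_sketch_MN}, and then carry out (a) a population-to-empirical restricted eigenvalue step for the quadratic part and (b) a Rademacher-based concentration step for the mean-zero part. First, writing $u_i = \Theta^\top x_i$ and $u_i^* = \Theta^{*\top}x_i$, I would express
\[
\big\langle \nabla\cL_n^{\MN}(\Theta)-\nabla\cL_n^{\MN}(\Theta^*),\,\Theta-\Theta^*\big\rangle
=\frac{1}{n}\sum_{i=1}^n\Big[(u_i-u_i^*)^\top G_i(u_i-u_i^*)+(u_i-u_i^*)^\top h_i\Big],
\]
with $G_i,h_i$ as defined in \eqref{eq:g_i_rsc}, and then split $G_i=G_i^{(1),\MN}+G_i^{(2),\MN}$ as in \eqref{eq:MN_G}. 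Since $\Theta^t=t\Theta+(1-t)\Theta^*\in\Theta^*+\mathbb{B}_{1,\infty}(R_0)$ by convexity, Lemma~\ref{lem:sketch_MN_I_lrbnd} applies to $G_i^{(1),\MN}$ with radius $R_0$ and yields
\[
(u_i-u_i^*)^\top G_i^{(1),\MN}(u_i-u_i^*)\;\ge\;\bigl(h^{\MN}(R_0)+4C_xR_0\bigr)\|u_i-u_i^*\|_2^2.
\]

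Next I would show that $G_i^{(2),\MN}$ is small enough to be absorbed into the $4C_xR_0$ slack. Using that $\nabla f$ is Lipschitz with a constant controlled by the magnitudes appearing in \eqref{eq:f_def}, that $\|\nabla^2 A\|_{\mathrm{op}}$ is bounded on bounded inputs, and the coordinatewise bound $|x_{ij}|\le C_x$ together with $\max_j\|(\Theta-\Theta^*)_{:,j}\|_1\le R_0$, one gets $\|G_i^{(2),\MN}\|_{\mathrm{op}}\le 4C_xR_0$ uniformly over $\Theta\in\Theta^*+\mathbb{B}_{1,\infty}(R_0)$. Combining these two bounds reduces the quadratic part to at least $h^{\MN}(R_0)\cdot\frac{1}{n}\|X(\Theta-\Theta^*)\|_F^2$. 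I would then invoke a standard group-sparse restricted eigenvalue inequality for sub-Gaussian designs (e.g.\ along the lines of Raskutti--Wainwright--Yu, applied columnwise and aggregated) to get, with the claimed probability,
\[
\frac{1}{n}\|X(\Theta-\Theta^*)\|_F^2\;\ge\;\tfrac{\lambda_{\min}(\Sigma)}{2}\|\Theta-\Theta^*\|_2^2-C\,\tfrac{\log J+m}{n}\|\Theta-\Theta^*\|_{\omega,2,1}^2,
\]
which produces the $\alpha\|\Theta-\Theta^*\|_2^2$ and the $\tau_1$ terms with $\alpha=\tfrac{h^{\MN}(R_0)\lambda_{\min}(\Sigma)}{2}$ and $\tau_1=C\alpha$.

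The main obstacle is the mean-zero cross term $\mathrm{II}_n:=\frac{1}{n}\sum_{i=1}^n(u_i-u_i^*)^\top h_i$, because $h_i$ depends on $\Theta$ through the vector-valued nonlinear map $u\mapsto\nabla f(u)$, so the standard scalar contraction inequality is not applicable. My plan here is: (i) note that $\mathbb{E}[\nabla A(f(u_i^*))-\delta_i\mid x_i]=0$, so $\mathrm{II}_n$ is a centered empirical process indexed by $\Delta=\Theta-\Theta^*$; (ii) symmetrize to Rademacher variables; (iii) apply the Maurer vector-contraction inequality \citep{maurer2016vector} to the mapping $u\mapsto\nabla f(u)-\nabla f(u_i^*)$, whose Lipschitz constant on the bounded set $\{u:\|u\|_\infty\le C_x(R^*+R_0)\}$ I would bound explicitly and track as $O(K)$; (iv) peel off $\delta_i-\nabla A(f(u_i^*))$, whose $\|\cdot\|_2$-diameter contributes the extra $C_xR^*\sqrt{K}$ factor; (v) close with a standard suprema-of-linear-process bound over the group-sparsity dual norm, which delivers the complexity $\lambda_{\max}^{1/2}(\Sigma)\sqrt{(\log J+m)/n}$ with the $\|\Delta\|_{\omega,2,1}$ scaling. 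Putting these together gives $|\mathrm{II}_n|\le \tau_2\sqrt{(\log J+m)/n}\,\|\Theta-\Theta^*\|_{\omega,2,1}$ with $\tau_2=C(K+C_xR^*\sqrt{K})\lambda_{\max}^{1/2}(\Sigma)$, and a union bound with the restricted eigenvalue event yields the probability stated in the lemma. The delicate points are the precise Lipschitz dependence of $\nabla f$ on $K$ (which drives $\tau_2$) and keeping the tail exponent sharp enough to match the claimed $-\exp\{-cR^{*2}K\lambda_{\max}(\Sigma)\min_j r_j^{-1}(\log J+m)\}$ term.
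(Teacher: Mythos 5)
Your proposal follows essentially the same route as the paper's proof: the same decomposition into a quadratic term (handled via Lemma \ref{lem:sketch_MN_I_lrbnd} plus a Lipschitz bound on $G_i^{(2),\MN}$ absorbed into the $4C_xR_0$ slack, followed by a group-sparse restricted eigenvalue step) and a mean-zero term (handled by symmetrization, the Maurer vector-contraction inequality with an $O(K+C_xR^*\sqrt{K})$ Lipschitz constant, a bound on the Rademacher linear process in the $\|\cdot\|_{\omega^{-1},2,\infty}$ norm, and peeling). The structure, key lemmas, and resulting constants all match the paper's argument.
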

To further make use of Lemma \ref{lem:RSC}, here we present another lemma that guarantees the curvature term $\alpha>0$ and shows that the dominating slack term is $\tau_2\sqrt{\frac{\log J+m}{n}}\|\Theta-\Theta^*\|_{\omega,2,1}$.
\begin{lemma}\label{lem:MN_region_results}
    Under Assumptions \ref{assump:subgauss_row} and \ref{assump:mult_region_bound}, for any $\Theta\in \Theta^* + \mathbb{B}_{1,\infty}(R_0)$, we have
    \begin{equation*}
        h^{\MN}(R_0)>0,\quad \tau_1\frac{\log J+m}{n}\|\Theta-\Theta^*\|_{\omega,2,1}^2\leq \tau_2\sqrt{\frac{\log J+m}{n}}\|\Theta-\Theta^*\|_{\omega,2,1}.
    \end{equation*}
\end{lemma}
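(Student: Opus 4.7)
The plan is to handle the two conclusions separately: part one requires a delicate algebraic verification, while part two follows by routine bookkeeping once part one is in hand.

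For the positivity $h^{\MN}(R_0)>0$, I would directly compare the first (positive) term of $h^{\MN}(R_0)$ against the upper bound on $4C_xR_0$ provided by Assumption~\ref{assump:mult_region_bound}. Taking the ratio cancels the common factor $\min_j(n_j/\pi_jn_u)\,e^{-C_xR^*}/(1+\max_j(n_j/\pi_jn_u))^2$, so the task reduces to showing
$$\left(\frac{1+1.1Ke^{C_xR^*}}{1+Ke^{C_x(R_0+R^*)}}\right)^{3}>e^{C_xR_0}.$$
Because the denominator in the bound on $R_0$ contains $(1+1.1Ke^{C_xR^*})^{3}$ together with a $K^3 e^{3C_x R^*}$ factor, Assumption~\ref{assump:mult_region_bound} forces $C_xR_0$ to be much smaller than $\log(1.1)$. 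Writing $\epsilon:=e^{C_xR_0}-1$, expanding both sides in $\epsilon$, and using $Ke^{C_xR^*}\ge 1$, the strict inequality reduces at leading order to something like $\epsilon/3<(0.1-4\epsilon/3)Ke^{C_xR^*}$, which is easily verified from the smallness of $\epsilon$. The constant $1.1$ in Assumption~\ref{assump:mult_region_bound} has been chosen precisely to provide this slack.

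For the second claim, assume $\|\Theta-\Theta^*\|_{\omega,2,1}>0$ (the inequality is trivial otherwise) and divide both sides by this quantity; it remains to show $\tau_1\sqrt{(\log J+m)/n}\,\|\Theta-\Theta^*\|_{\omega,2,1}\le\tau_2$. The key step is to bound the group norm using the feasible-region constraint: because $\{\cG_j\}_j$ partition $[p]\times[K]$ and $\|v\|_2\le\|v\|_1$,
$$\|\Theta-\Theta^*\|_{\omega,2,1}\le(\max_j\omega_j)\sum_j\|\Delta_{\cG_j}\|_1=(\max_j\omega_j)\|\Delta\|_1\le(\max_j\omega_j)\,KR_0,$$
where $\Delta:=\Theta-\Theta^*$ and the last inequality uses $\|\Delta\|_1=\sum_{i=1}^K\|\Delta_{:,i}\|_1\le KR_0$ from $\Delta\in\mathbb{B}_{1,\infty}(R_0)$. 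Combining this with $\max_j\omega_j\le C\sqrt{n/(\log J+m)}$ from Assumption~\ref{assump:scaling} converts the desired inequality into $C\tau_1KR_0\le\tau_2$. Plugging in $\tau_1\asymp h^{\MN}(R_0)\lambda_{\min}(\Sigma)$ and $\tau_2\asymp(K+C_xR^*\sqrt{K})\lambda_{\max}^{1/2}(\Sigma)$ yields $R_0h^{\MN}(R_0)\lesssim\lambda_{\max}^{1/2}(\Sigma)/\lambda_{\min}(\Sigma)$ up to a constant factor. Since Assumption~\ref{assump:mult_region_bound} makes both $R_0$ and $h^{\MN}(R_0)$ very small (exponentially in $C_xR^*$ and polynomially in $K$), while the eigenvalue ratio is bounded away from zero by Assumptions~\ref{assump:subgauss_row} and~\ref{assump:scaling}, this last inequality follows immediately.

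The main obstacle is the algebraic verification in part one: one must quantify how Assumption~\ref{assump:mult_region_bound} constrains $R_0$ through its $K^3 e^{3C_xR^*}$ denominator and then argue that the factor $1.1$ built into the bound provides just enough slack to dominate the perturbation $e^{C_xR_0}>1$ entering $h^{\MN}(R_0)$. Once this positivity is established, part two is essentially bookkeeping from the feasible-region constraint and the rate conditions in Assumption~\ref{assump:scaling}.
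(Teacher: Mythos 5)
Your proposal is correct and follows essentially the same route as the paper's proof: for positivity you use the crude bound $(1+1.1Ke^{C_xR^*})^3\geq 3.2^3$ to force $e^{C_xR_0}$ close to $1$ and then show the built-in $1.1$ factor absorbs this perturbation (the paper does the identical verification via the intermediate constant $1.01$ and the ratio $\tfrac{1+1.1a}{1+1.01a}>(1.01)^{1/3}$ rather than your first-order expansion in $\epsilon$), and for the second claim you use the same chain $\|\Theta-\Theta^*\|_{\omega,2,1}\leq K\max_j\omega_jR_0$ together with the rate conditions on $\omega_j$, $h^{\MN}(R_0)<1$, and the eigenvalue bounds. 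The only caveat is that your closing step for part two ("follows immediately") glosses over the needed relation between $C_x$ and $\lambda_{\min}^{1/2}(\Sigma)$, but the paper's own proof is exactly as terse there and cites the same assumptions.
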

The proofs of Lemmas~\ref{lem:multinomial_dev_bnd}-\ref{lem:MN_region_results} will be presented in Section~\ref{sec:proof-MN-lem}. Now we are ready to prove Theorem \ref{thm:mult_upp}.
\begin{proof}[Proof of Theorem \ref{thm:mult_upp}]
    Since $\widehat{\Theta}$ is a stationary point of $$\min_{\Theta-\Theta^*\in\bbB_{1,\infty}(R_0)}\cL_n^{\MN}(\Theta)+\lambda\|\Theta\|_{\omega,2,1},$$ for any $\Theta\in \bbR^{p\times K}$ such that $\max_{1\leq i\leq K}\|\Theta_{:,i}-\Theta^*_{:,i}\|_1\leq R_0$, we have
    \begin{equation}
        \begin{split}
            \langle\nabla\cL_n^{\MN}(\widehat{\Theta})+\nabla P_{\lambda}^{\MN}(\widehat{\Theta}),\Theta-\widehat{\Theta}\rangle\geq 0,
        \end{split}
    \end{equation}
    where $\nabla \lambda\|\Theta\|_{\omega,2,1}$ is any sub-gradient of $\lambda\|\Theta\|_{\omega,2,1}$. Since $\Theta^*$ is also in the feasible set, 
    \begin{equation}
        \begin{split}
            \langle\nabla\cL_n^{\MN}(\widehat{\Theta})+\nabla P_{\lambda}^{\MN}(\widehat{\Theta}),\Theta^*-\widehat{\Theta}\rangle\geq& 0,\\
            \langle\nabla\cL_n^{\MN}(\widehat{\Theta})-\nabla\cL_n^{\MN}(\Theta^*),\widehat{\Theta}-\Theta^*\rangle\leq&\langle -\nabla \cL_n^{\MN}(\Theta^*)-\nabla P_{\lambda}^{\MN}(\widehat{\Theta}),\widehat{\Theta}-\Theta^*\rangle.
        \end{split}
    \end{equation}
    Let $\widehat{\Delta}=\widehat{\Theta}-\Theta^*$, $S=\{j:\|\Theta_{\cG_j}\|>0\}$, $\cG_S=\cup_{j\in S}\cG_j$. By Lemma \ref{lem:multinomial_dev_bnd}, Lemma \ref{lem:RSC},
    \begin{equation}
        \begin{split}
            &\alpha \|\widehat{\Delta}\|_2^2-\tau_1\frac{\log J+m}{n}\|\widehat{\Delta}\|_{\omega,2,1}^2-\tau_2\sqrt{\frac{\log J+m}{n}}\|\widehat{\Delta}\|_{\omega,2,1}\\
            \leq& C\tau_2\sqrt{\frac{\log J+m}{n}}\|\widehat{\Delta}\|_{\omega,2,1}+\langle \nabla P_{\lambda}^{\MN}(\widehat{\Theta}),\Theta^*-\widehat{\Theta}\rangle.
        \end{split}
    \end{equation}
    Invoking Lemma \ref{lem:MN_region_results}, one has $\tau_1\frac{\log J+m}{n}\|\widehat{\Delta}\|_{\omega,2,1}^2\leq \tau_2\sqrt{\frac{\log J+m}{n}}\|\widehat{\Delta}\|_{\omega,2,1}$. Also note that 
    \begin{equation}
        \begin{split}
            \langle \nabla P_{\lambda}^{\MN}(\widehat{\Theta}),\Theta^*-\widehat{\Theta}\rangle\leq& P_{\lambda}^{\MN}(\Theta^*)-P_{\lambda}^{\MN}(\widehat{\Theta})\\
            \leq&\lambda \|\widehat{\Delta}_{\cG_S}\|_{\omega_S,2,1}-\lambda\|\widehat{\Delta}_{\cG_{S^c}}\|_{\omega_{S^c},2,1},
        \end{split}
    \end{equation}
    which implies
    \begin{equation}
    \begin{split}
        \alpha \|\widehat{\Delta}\|_2^2\leq &\frac{3\lambda}{2}\|\widehat{\Delta}_{\cG_S}\|_{\omega_S,2,1}-\frac{\lambda}{2}\|\widehat{\Delta}_{\cG_S^c}\|_{\omega_{S^c},2,1},\\
        \|\widehat{\Delta}\|_2\leq&\frac{3\|\omega_S\|_2\lambda}{h^{\MN}(R_0)\lambda_{\min}(\Sigma)},\\
        \|\widehat{\Delta}\|_{\omega,2,1}\leq& 4\|\widehat{\Delta}_{\cG_S}\|_{\omega_S,2,1}\leq\frac{12\|\omega_S\|^2_2\lambda}{h^{\MN}(R_0)\lambda_{\min}(\Sigma)}.
    \end{split}
    \end{equation}
as long as $\lambda\geq C\tau_2\sqrt{\frac{\log J+m}{n}}$, with probability at least \begin{equation}
    1-\exp\{-cR^{*2}K\lambda_{\max}(\Sigma)\min_jr_j^{-1}(\log J+m)\}-\exp\{-c(\log J+m)\}.
\end{equation}
\end{proof}
\subsection{Proofs of Lemmas~\ref{lem:multinomial_dev_bnd}, \ref{lem:RSC} and \ref{lem:MN_region_results}}\label{sec:proof-MN-lem}
\begin{proof}[Proof of Lemma \ref{lem:multinomial_dev_bnd}]
First note that for any $1\leq j\leq p$, $1\leq k\leq K$,
\begin{equation}
\begin{split}
    (\nabla \cL_n^{\MN}(\Theta^*))_{jk}=&-\frac{1}{n}\sum_{i=1}^n \left[\sum_{k'=1}^K (\delta_{ik'}-\bbE(\delta_{ik'}|x_i))\frac{\partial \eta_{ik'}}{\partial \Theta^*_{jk}}\right]\\
    =&-\frac{1}{n}\sum_{i=1}^n X_{ij}\left[\sum_{k'=1}^K \epsilon_{ik'}\left(\ind{k'=k}-\frac{e^{x_i^\top\Theta_k}}{1+\sum_{m=1}^Ke^{x_i^\top\Theta_m}}\right)\right],
\end{split}
\end{equation}
where $\delta_{ik}=\ind{z_i=k}$, $\epsilon_{ik'}=\delta_{ik'}-\bbE(\delta_{ik'}|x_i)$. Let $$\widetilde{\epsilon}_{ik}=\sum_{k'=1}^K \epsilon_{ik'}\left(\ind{k'=k}-\frac{e^{x_i^\top\Theta_k}}{1+\sum_{m=1}^Ke^{x_i^\top\theta_m}}\right),$$ then $\{\widetilde{\epsilon}_{ik}\}_{i=1}^n$ are independent mean 0 random variables lying on $(-2,2)$. Here we have $|\widetilde{\epsilon}_{ik}|\leq 2$ since $$\left|\sum_{k'=1}^K \delta_{ik'}\left(\ind{k'=k}-\frac{e^{x_i^\top\Theta_k}}{1+\sum_{m=1}^Ke^{x_i^\top\theta_m}}\right)\right|\leq 1.$$ Thus we can write
\begin{equation}
    \begin{split}
        \left\|(\nabla \cL_n^{\MN}(\Theta^*))_{\cR_j,\cC_j}\right\|_F=\left\|\frac{1}{n}X^\top_{\cR_j}\widetilde{\epsilon}_{:,\cC_j}\right\|_F\leq\frac{\sqrt{c_j}}{n}\max_{k\in \cC_j}\left\|X^\top_{\cR_j}\widetilde{\epsilon}_{:,k}\right\|_2.
    \end{split}
\end{equation}
Now we bound $\|X^\top_{\cR_j}\widetilde{\epsilon}_{:,k}\|_2$ for each particular $k\in \cC_j$ w.h.p., conditioning on $X$. First note that 
\begin{equation}
    \begin{split}
        \bbE(\|X^\top_{\cR_j}\widetilde{\epsilon}_{:,k}\|_2|X)\leq&\left(\bbE(\|X^\top_{\cR_j}\widetilde{\epsilon}_{:,k}\|_2^2|X)\right)^{\frac{1}{2}}\\
        =&\left(\sum_{i=1}^n\bbE(\widetilde{\epsilon}^2_{i,k}|X)(X_{\cR_j}X^\top_{\cR_j})_{ii}\right)^{\frac{1}{2}}\\
        \leq &2\sqrt{\tr(X_{\cR_j}X^\top_{\cR_j})}\\
        \leq&2\sqrt{r_j}\|X_{\cR_j}\|.
    \end{split}
\end{equation}
Meanwhile, let $g(u)=\|X^\top_{\cR_j}u\|_2$ be a function from $\bbR^n$ to $\bbR$, then $g$ is convex and $\|X_{\cR_j}\|$-Lipschitz: 
\begin{equation}
    \left|g(u)-g(v)\right|\leq \|X^{\top}_{\cR_j}(u-v)\|_2\leq \|X_{\cR_j}\|\|u-v\|_2.
\end{equation}
Applying Talagrand's contraction inequality (Theorem 5.2.16 in \cite{vershynin2018high}) leads to 
\begin{equation}\label{eq:dev-bnd-term-expt}
    \left\|\|X^\top_{\cR_j}\widetilde{\epsilon}_{:,k}\|_2-\bbE(\|X^\top_{\cR_j}\widetilde{\epsilon}_{:,k}\|_2|X)\right\|_{\psi_2}\leq C\|X_{\cR_j}\|,
\end{equation}
which further implies that
\begin{equation}\label{eq:dev-bnd-term}
    \bbP(\|X^\top_{\cR_j}\widetilde{\epsilon}_{:,k}\|_2-\bbE(\|X^\top_{\cR_j}\widetilde{\epsilon}_{:,k}\|_2)>C\|X_{\cR_j}\|\sqrt{n}t_j)\leq \exp\{-nt_j^2\},
\end{equation}
for any $t_j>0$. Combine \eqref{eq:dev-bnd-term-expt}, \eqref{eq:dev-bnd-term}, and take a union bound over $k\in \cC_j$ and $j\in [J]$, one can show that with probability at least $1-\sum_{j=1}^J c_j\exp\{-nt_j^2\}$,
\begin{equation}\label{eq:dev_bnd_X_norm}
\begin{split}
    \|\nabla\cL_n^{\MN}(\Theta)\|_{\omega^{-1},2,\infty}=&\max_{1\leq j\leq J}\frac{1}{\omega_j}\|(\nabla \cL_n^{\MN}(\Theta^*))_{\cR_j,\cC_j}\|_F\\
    \leq&\max_{1\leq j\leq J}\frac{\sqrt{c_j}\|X_{\cR_j}\|}{\omega_j \sqrt{n}}\left(2\sqrt{\frac{r_j}{n}}+Ct_j\right)
\end{split}
\end{equation}
Now we provide a probabilistic bound for $\|X_{R_j}\|$. Due to Assumption \ref{assump:subgauss_row}, $X_{\cR_j}\Sigma_{\cR_j,\cR_j}^{-\frac{1}{2}}\in \bbR^{n\times r_j}$ has independent sub-Gaussian rows with sub-Gaussian parameter $\sigma\lambda^{-\frac{1}{2}}_{\min}(\Sigma_{R_j,R_j})\leq C$ and covariance $I_{r_j}$. One can apply bounds on spectral norm of matrices with independent isotropic sub-Gaussian rows (Theorem 5.39 in \cite{vershynin2010introduction}) on $X_{\cR_j}\Sigma_{\cR_j,\cR_j}^{-\frac{1}{2}},1\leq j\leq J$ and get the following:
\begin{equation}\label{eq:X_norm_bnd}
    \|X_{\cR_j}\|\leq \lambda^{\frac{1}{2}}_{\max}(\Sigma)\|X_{\cR_j}\Sigma_{\cR_j,\cR_j}^{-\frac{1}{2}}\|\leq \lambda^{\frac{1}{2}}_{\max}(\Sigma)C(\sqrt{n}+\sqrt{r_j})\leq C\lambda^{\frac{1}{2}}_{\max}(\Sigma)\sqrt{n},
\end{equation}
holds for $1\leq j\leq J$ with probability at least $1-2J\exp\{-cn\}\geq 1-\exp\{-cn\}$, if $C>0$ is chosen appropriately in \eqref{eq:X_norm_bnd}. 
Let $t_j=\sqrt{\frac{m+\log (2c_jJ)}{n}}$ in \eqref{eq:dev_bnd_X_norm}, then we have
\begin{equation}
\begin{split}
    \|\nabla\cL_n^{\MN}(\Theta)\|_{\omega^{-1},2,\infty}\leq& C\lambda^{\frac{1}{2}}_{\max}(\Sigma)\max_{1\leq j\leq J}\frac{\sqrt{c_j}}{\omega_j}\sqrt{\frac{m+\log (2c_jJ)}{n}}\\
     \leq&C\lambda^{\frac{1}{2}}_{\max}(\Sigma)\max_{1\leq j\leq J}\sqrt{\frac{c_j(m+\log J)}{n}},
\end{split}
\end{equation}
with probability at least $$1-\exp\{-(\log J+m)\}-\exp\{-cn\}\geq 1-\exp\{-c(\log J+m)\},$$
where we have applied Assumption~\ref{assump:scaling}: $\log J,m\leq Cn$.
\end{proof}

\begin{proof}[Proof of Lemma~\ref{lem:RSC}]
Let $\Delta=\Theta-\Theta^*$. First note that
\begin{equation}
    \nabla \cL_n^{\MN}(\Theta)=\frac{1}{n}\sum_{i=1}^n x_i(\nabla A(\eta_i)-\delta_i)^\top \nabla f(\Theta^\top x_i),
\end{equation}
where $\delta_i\in \bbR^K$ with $\delta_{ik}=\ind{z_i=k}$.
Hence,
\begin{equation}
    \begin{split}
        &\left\langle\nabla \cL_n^{\MN}(\Theta)-\nabla \cL(\Theta^*),\Theta-\Theta^*\right\rangle\\
        =&\frac{1}{n}\sum_{i=1}^n x_i^\top (\Theta-\Theta^*)\nabla f(\Theta^\top x_i)^\top \left[\nabla A(f(\Theta^\top x_i))-\nabla A(f(\Theta^{*\top}x_i))\right]\\
        &-\frac{1}{n}\sum_{i=1}^n x_i^\top (\Theta-\Theta^*)\left(\nabla f(\Theta^\top x_i)-\nabla f(\Theta^{*\top}x_i)\right)^\top \epsilon_i\\
        :=&\mathrm{\RNum{1}}(\Delta)-\mathrm{\RNum{2}}(\Delta),
    \end{split}
\end{equation}
where $\epsilon_i=\delta_i-\nabla A(f(\Theta^{*\top}x_i))$.
We will provide a lower bound for $\mathrm{\RNum{1}}(\Delta)$ and concentrate $\mathrm{\RNum{2}}(\Delta)$ around $0$.
\begin{enumerate}
    \item Lower bounding $\mathrm{\RNum{1}}(\Delta)$.\\
    The following lemma provides a lower bound for $\mathrm{\RNum{1}}(\Delta)$ in terms of $\frac{1}{n}\sum_{i=1}^n \|\Delta^\top x_i\|_2^2$.
    \begin{lemma}\label{lem:mult_rsc_term1}
   As long as $\max_{1\leq i\leq K}\|\Theta_{:,i}-\Theta^*_{:,i}\|_1\leq R_0$ and $h^{\MN}(R_0)>0$ where $h$ is defined in \eqref{eq:mult_h_def}, it is guaranteed that
\begin{equation}
   \mathrm{\RNum{1}}(\Delta) \geq \frac{h^{\MN}(R_0)}{n}\sum_{i=1}^n \|\Delta^\top x_i\|_2^2.
\end{equation}
    \end{lemma}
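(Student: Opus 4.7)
The plan is to reduce the bound on $\mathrm{\RNum{1}}(\Delta)$ to a quadratic form in $v_i := \Delta^\top x_i \in \bbR^K$ and then invoke the curvature estimate already recorded in Lemma~\ref{lem:sketch_MN_I_lrbnd}. Introduce the shorthands $u_i = \Theta^\top x_i$, $u_i^* = \Theta^{*\top}x_i$, and for $t \in [0,1]$ the convex combination $u_i^t = tu_i + (1-t)u_i^*$, which corresponds to the parameter $\Theta^t := t\Theta + (1-t)\Theta^*$. By the fundamental theorem of calculus applied to the vector-valued map $t \mapsto \nabla A(f(u_i^t))$ together with the chain rule,
\begin{equation*}
\nabla A(f(u_i))-\nabla A(f(u_i^*)) \;=\; \int_0^1 \nabla^2 A(f(u_i^t))\,\nabla f(u_i^t)\,dt \cdot v_i,
\end{equation*}
so that
\begin{equation*}
\mathrm{\RNum{1}}(\Delta) \;=\; \frac{1}{n}\sum_{i=1}^n v_i^\top \nabla f(u_i)^\top \int_0^1 \nabla^2 A(f(u_i^t))\,\nabla f(u_i^t)\,dt \; v_i.
\end{equation*}

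Next, I would split $\nabla f(u_i)^\top = \nabla f(u_i^t)^\top + [\nabla f(u_i) - \nabla f(u_i^t)]^\top$ inside the integral to get $\mathrm{\RNum{1}}(\Delta) = T_1 + T_2$, where $T_1$ has the symmetric PSD integrand $\nabla f(u_i^t)^\top\nabla^2 A(f(u_i^t))\nabla f(u_i^t)$, and $T_2$ carries the asymmetric error $[\nabla f(u_i) - \nabla f(u_i^t)]^\top\nabla^2 A(f(u_i^t))\nabla f(u_i^t)$. Since $\Theta^t-\Theta^* = t(\Theta-\Theta^*)$, the assumption $\max_k\|\Delta_{:,k}\|_1\leq R_0$ gives $\Theta^t \in \Theta^* + \bbB_{1,\infty}(R_0)$ for every $t\in[0,1]$. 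Lemma~\ref{lem:sketch_MN_I_lrbnd} therefore applies pointwise in $t$ and yields
\begin{equation*}
T_1 \;\geq\; \frac{h^{\MN}(R_0)+4C_xR_0}{n}\sum_{i=1}^n \|v_i\|_2^2.
\end{equation*}

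For $T_2$, I would bound the operator norm of its integrand uniformly in $t$. Since $f(u) = u + c - \mathbf{1}_K \log(1+\sum_j e^{u_j})$ for a constant vector $c$, its Jacobian takes the explicit form $\nabla f(u) = I_K - \mathbf{1}_K\sigma(u)^\top$ with $\sigma$ the softmax with $0$ prepended, so that $\nabla f(u_i) - \nabla f(u_i^t) = -\mathbf{1}_K(\sigma(u_i)-\sigma(u_i^t))^\top$. Combined with the standard spectral bound on $\nabla^2 A$ (a diagonal-minus-rank-one matrix) and the entrywise estimate $\|u_i - u_i^t\|_\infty \leq (1-t)C_xR_0$ coming from $|x_{ij}|\leq C_x$ and $\max_k\|\Delta_{:,k}\|_1\leq R_0$, the Lipschitz continuity of $\sigma$ gives
\begin{equation*}
\sup_{t\in[0,1]}\bigl\|[\nabla f(u_i)-\nabla f(u_i^t)]^\top \nabla^2 A(f(u_i^t))\nabla f(u_i^t)\bigr\| \;\leq\; 4C_xR_0,
\end{equation*}
so $T_2\geq -\tfrac{4C_xR_0}{n}\sum_i \|v_i\|_2^2$. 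Adding the two bounds and noting that $\|v_i\|_2^2 = \|\Delta^\top x_i\|_2^2$ produces the claim.

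The main obstacle is the estimate used in $T_2$: the numerical constant multiplying $C_x R_0$ must come out to at most the $4C_xR_0$ already earmarked for cancellation in $h^{\MN}(R_0) + 4C_xR_0$, otherwise the surviving curvature would be weaker than $h^{\MN}(R_0)$. Getting this clean requires carefully tracking the softmax/log-partition constants appearing in $\nabla f$ and $\nabla^2 A$ (the factors of $\min_j\frac{n_j}{\pi_jn_u}$, $\max_j\frac{n_j}{\pi_jn_u}$, and the $(1+Ke^{C_x(R+R^*)})$-terms that define $h^{\MN}$) so that the Lipschitz argument uses exactly the same quantities that are baked into the definition \eqref{eq:mult_h_def} of $h^{\MN}(R)$. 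Once this bookkeeping matches, the rest of the argument is mechanical.
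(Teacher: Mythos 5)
Your proposal is correct and follows essentially the same route as the paper's proof: a Taylor/mean-value expansion of $\nabla A(f(\cdot))$ along the segment from $\Theta^*$ to $\Theta$, the split of $\nabla f(u_i)^\top$ into $\nabla f(u_i^t)^\top$ plus an error, the pointwise application of Lemma~\ref{lem:sketch_MN_I_lrbnd} to the PSD part, and the $4C_xR_0$ operator-norm bound on the error term via $\nabla f(u)=I_K-\mathbf{1}_K\nabla A(u)^\top$ and the Lipschitz bound on $\nabla A$. The only cosmetic difference is your use of the integral form of Taylor's theorem in place of the paper's single intermediate point $t$, and the bookkeeping you flag at the end (the factor $2\times 2C_xR_0=4C_xR_0$) works out exactly as anticipated.
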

    In the following we prove that
    \begin{equation}\label{eq:rec1}
        \frac{h^{\MN}(R_0)}{n}\sum_{i=1}^n \|\Delta^\top x_i\|_2^2\geq \alpha\|\Delta\|_2^2-\tau_1\frac{\log J+m}{n}\|\Delta\|_{\omega,2,1}^2.
    \end{equation}
    for $\alpha=\frac{h^{\MN}(R_0)\lambda_{\min}(\Sigma)}{2},\tau_1=C\alpha$ where $C>0$ is a constant. If $\|\Delta\|_{\omega,2,1}\geq \sqrt{\frac{\alpha n}{\tau_1(\log J+m)}}\|\Delta\|_2$, then the R.H.S. of \eqref{eq:rec1} is non-positive and \eqref{eq:rec1} holds trivially. Thus it suffices to prove 
    \begin{equation}\label{eq:rec2}
        \inf_{\|\Delta\|_{\omega,2,1}\leq \sqrt{\rho},\|\Delta\|_2=1}\frac{1}{n}\sum_{i=1}^n \|\Delta^\top x_i\|_2^2\geq \frac{\lambda_{\min}(\Sigma)}{2},
    \end{equation}
     where $\rho=\frac{\alpha n}{\tau_1(\log J+m)}$. Since $\bbE(\frac{1}{n}\sum_{i=1}^n \|\Delta^\top x_i\|_2^2)\geq \lambda_{\min}(\Sigma)\|\Delta\|_2^2$, \eqref{eq:rec2} can be implied by 
      \begin{equation}\label{eq:rec_uniform_dev}
        \sup_{\Delta\in \bbB_{\omega,2,1}(\sqrt{\rho})\cap \bbB_{2}(1)}\left|\frac{1}{n}\sum_{i=1}^n\left(\|\Delta^\top x_i\|_2^2-\bbE(\|\Delta^\top x_i\|_2^2)\right)\right|\leq  \frac{\lambda_{\min}(\Sigma)}{2}.
    \end{equation}
     The following Lemma shows the connection between $\bbB_{\omega,2,1}(\sqrt{\rho})\cap \bbB_{2}(1)$ and sparse set:
    \begin{lemma}\label{lem:l1_l0_balls}
    For any $\rho>0$, 
    \begin{equation}
        \bbB_{\omega,2,1}(\sqrt{\rho})\cap \bbB_{2}(1)\subset \left(1+\frac{2}{\min_j \omega_j}\right)\mathrm{cl}\{\mathrm{conv}\{\bbB_{\cG,0}(\lfloor \rho\rfloor)\cap \bbB_2(1)\}\},
    \end{equation}
    where $\bbB_{\cG,0}(\lfloor \rho\rfloor)=\{U\in \bbR^{p\times K}: |\{j:\|U_{\cR_j,\cC_j}\|_2>0\}|\leq \lfloor \rho\rfloor\}$.
    \end{lemma}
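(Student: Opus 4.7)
The plan is to realize any $U \in \bbB_{\omega,2,1}(\sqrt{\rho}) \cap \bbB_2(1)$ as a (scaled) convex combination of group-sparse matrices via a ``shelling'' decomposition that splits $U$ into consecutive blocks of $s := \lfloor \rho \rfloor$ groups each, arranged by decreasing importance. Concretely, sort the group indices so that $\omega_{\pi(1)}\|U_{\cG_{\pi(1)}}\|_2 \geq \omega_{\pi(2)}\|U_{\cG_{\pi(2)}}\|_2 \geq \cdots$, and let $T_1, T_2, \dots$ be the resulting consecutive blocks of size $s$ (with a possibly smaller final block). Let $U^{(k)}$ be the restriction of $U$ to the groups indexed by $T_k$, so that $U = \sum_k U^{(k)}$, each $U^{(k)}$ has at most $s$ nonzero groups, and its normalization $W^{(k)} := U^{(k)}/\|U^{(k)}\|_2$ (whenever $U^{(k)} \neq 0$) lies in $\bbB_{\cG,0}(s) \cap \bbB_2(1)$. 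I focus on the regime $\rho \geq 1$, which is the one relevant to Lemma \ref{lem:RSC}.

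The core estimate is an $\ell_2$ bound on each block. For the first block, $\|U^{(1)}\|_2 \leq \|U\|_2 \leq 1$. For $k \geq 2$, the sorting ensures that every $j \in T_k$ satisfies $\omega_j \|U_{\cG_j}\|_2 \leq \omega_{j'}\|U_{\cG_{j'}}\|_2$ for all $j' \in T_{k-1}$; averaging gives $\omega_j \|U_{\cG_j}\|_2 \leq A_{k-1}/s$ where $A_{k-1} := \sum_{j' \in T_{k-1}} \omega_{j'}\|U_{\cG_{j'}}\|_2$. Dividing by $\omega_j$, squaring, and summing over $j \in T_k$ yields
\begin{equation*}
\|U^{(k)}\|_2^2 \leq \frac{A_{k-1}^2}{s^2}\sum_{j \in T_k} \frac{1}{\omega_j^2} \leq \frac{A_{k-1}^2}{s \, \min_j \omega_j^2},
\end{equation*}
so $\|U^{(k)}\|_2 \leq A_{k-1}/(\sqrt{s}\min_j \omega_j)$. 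Telescoping across $k \geq 2$ and using $\sum_{k \geq 1} A_k = \|U\|_{\omega,2,1} \leq \sqrt{\rho}$, together with $\sqrt{\rho/s} \leq \sqrt{2}$ for $\rho \geq 1$, produces $\sum_{k \geq 2}\|U^{(k)}\|_2 \leq \sqrt{2}/\min_j \omega_j$.

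Combining, $C := \sum_k \|U^{(k)}\|_2 \leq 1 + \sqrt{2}/\min_j \omega_j \leq 1 + 2/\min_j \omega_j$, and $U = C\sum_k \lambda_k W^{(k)}$ with $\lambda_k = \|U^{(k)}\|_2/C$ defines a convex combination (after dropping vanishing terms) of elements of $\bbB_{\cG,0}(s) \cap \bbB_2(1)$. Hence $U$ lies in $C$ times the convex hull, and therefore in $(1+2/\min_j \omega_j)\cdot \mathrm{cl}\{\mathrm{conv}\{\bbB_{\cG,0}(\lfloor \rho \rfloor) \cap \bbB_2(1)\}\}$, as claimed; the closure is automatic since $J$ is finite. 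The main technical subtlety is the appearance of the weights: the natural shelling inequality lives on the weighted quantities $\omega_j\|U_{\cG_j}\|_2$, but the target inclusion involves the unweighted block norms $\|U_{\cG_j}\|_2$. Converting between the two is what forces the $1/\min_j \omega_j$ factor, and using the crude bound $\sum_{j \in T_k} 1/\omega_j^2 \leq |T_k|/\min_j \omega_j^2$ is crucial to avoid losing an extra $\sqrt{s}$ factor when summing over the block, which would otherwise destroy the bound.
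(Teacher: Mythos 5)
Your proof is correct, but it takes a genuinely different route from the paper. The paper's argument is a dual one: since both sides are convex sets, it compares their support functions, bounding $\sup_{V\in \bbB_{\omega,2,1}(\sqrt{\rho})\cap\bbB_2(1)}\langle V,U\rangle$ by splitting $U$ into its top $\lfloor\rho\rfloor$ groups (by unweighted norm $\|U_{\cG_j}\|_2$) and the rest, using $\langle V,U\rangle\leq \|V\|_2\|U_{\cG_{\lfloor\rho\rfloor}}\|_2+\|V\|_{\omega,2,1}\|U_{\cG_{\lfloor\rho\rfloor}^c}\|_{\omega^{-1},2,\infty}$ and the pigeonhole bound $\|U_{\cG_{\lfloor\rho\rfloor}^c}\|_{\omega^{-1},2,\infty}\leq \|U_{\cG_{\lfloor\rho\rfloor}}\|_2/(\sqrt{\lfloor\rho\rfloor}\min_j\omega_j)$; this matches the support function of the right-hand side exactly and finishes in a few lines. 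Your argument is the primal counterpart: a Cand\`es--Tao-style shelling decomposition that exhibits $U$ explicitly as $(1+2/\min_j\omega_j)$ times a convex combination of group-sparse unit-ball elements, sorting by the \emph{weighted} quantities $\omega_j\|U_{\cG_j}\|_2$ and converting back via $\min_j\omega_j$. Both yield the same constant and both implicitly need $\rho\geq 1$ (the paper's $\sqrt{\rho/\lfloor\rho\rfloor}\leq 2$ step fails otherwise, and indeed the lemma is false for $\rho<1$ since then the right-hand side is $\{0\}$; the application in Lemma \ref{lem:RSC} has $\rho\geq 2$, so this is harmless). The dual proof is shorter and avoids the bookkeeping of partial final blocks and of passing from $U/C$ to $U/(1+2/\min_j\omega_j)$ (which you should justify by noting the convex hull contains $0$, though it is routine); your constructive proof is more self-contained and makes explicit which sparse elements witness the containment, which can be useful if one later needs the decomposition itself rather than just the inclusion.
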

    By Lemma \ref{lem:l1_l0_balls}, one can show that
    \begin{equation}
        \begin{split}
            &\sup_{\Delta\in \bbB_{\omega,2,1}(\sqrt{\rho})\cap \bbB_{2}(1)}\left|\frac{1}{n}\sum_{i=1}^n\left(\|\Delta^\top x_i\|_2^2-\bbE(\|\Delta^\top x_i\|_2^2)\right)\right|\\
            \leq&\left(1+\frac{2}{\min_j \omega_j}\right)\sup_{\Delta\in \mathrm{conv}\{\bbB_{\cG,0}(\lfloor \rho\rfloor)\cap \bbB_2(1)\}}\left|\tr\left[\Delta^\top \left(\frac{1}{n}X^\top X-\Sigma\right)\Delta\right]\right|.
        \end{split}
    \end{equation}
    For any $\Delta\in \mathrm{conv}\{\bbB_{\cG,0}(\lfloor \rho\rfloor)\cap \bbB_2(1)\}$, there exists $\Delta_1,\Delta_2,\dots,\Delta_k\in \bbB_{\cG,0}(\lfloor \rho\rfloor)\cap \bbB_2(1)$, $\beta_1,\dots,\beta_k\geq 0$ and $\sum_{i=1}^k\beta_i=1$ such that $\Delta=\sum_{i=1}^k\beta_i\Delta_i$. Define function $g(U,U)=\tr\left[U^\top \left(\frac{1}{n}X^\top X-\Sigma\right)U\right]$. We can prove that
    \begin{equation}
        \begin{split}
            &\left|g(\Delta,\Delta)\right|\\
            =&\left|g(\sum_{i=1}^k\beta_i\Delta_i,\sum_{i=1}^k\beta_i\Delta_i)\right|\\
            =&\left|\sum_{i,j=1}^k\beta_i\beta_j\tr\left[\Delta_i^\top\left(\frac{1}{n}X^\top X-\Sigma\right)\Delta_j\right]\right|\\
            \leq&\left|\frac{1}{2}\sum_{i,j=1}^k\beta_i\beta_j\left(g(\Delta_i+\Delta_j,\Delta_i+\Delta_j)-g(\Delta_i,\Delta_i)-g(\Delta_j,\Delta_j)\right)\right|\\
            \leq&3\sup_{U\in \bbB_{\cG,0}(2\lfloor \rho\rfloor)\cap \bbB_2(1)}\left|g(U,U)\right|\sum_{i,j=1}^k\beta_i\beta_j\\
            =&3\sup_{U\in \bbB_{\cG,0}(2\lfloor \rho\rfloor)\cap \bbB_2(1)}\left|g(U,U)\right|.
        \end{split}
    \end{equation}
    Thus now we only need to upper bound $\sup_{\Delta\in \bbB_{\cG,0}(2\lfloor \rho\rfloor)\cap \bbB_2(1)}\left|g(\Delta,\Delta)\right|$. For any $S\subset [J]$ with $|S|=2\lfloor \rho\rfloor$, let $\cK(\cG_S)=\{U\in \bbR^{p\times K}:\|U\|_2= 1, \|U_{\cG_j}\|=0\text{ for }j\notin S\}$. Then one can show that
    \begin{equation}
        \sup_{\Delta\in \bbB_{\cG,0}(2\lfloor \rho\rfloor)\cap \bbB_2(1)}\left|g(\Delta,\Delta)\right|= \sup_{S\subset[J], |S|=2\lfloor \rho\rfloor}\sup_{\Delta\in \cK(\cG_S)}\left|g(\Delta,\Delta)\right|,
    \end{equation}
    For any such $S$, by the upper bound of covering numbers of the sphere (Lemma 5.2 in \cite{vershynin2010introduction}), there exists a $\frac{1}{10}$-net $\cN_S$ of $\cK(\cG_S)$, w.r.t. $\ell_2$ norm, with size $|\cN_S|\leq 21^{\sum_{j\in S}c_jr_j}\leq 21^{2\rho m}$. We can bound $\sup_{\Delta\in \cK(\cG_S)}\left|g(\Delta,\Delta)\right|$ in terms of $\sup_{\Delta\in \cN_S}\left|g(\Delta,\Delta)\right|$ as follows: $\forall U\in \cK(\cG_S)$, $\exists U'\in \cN_S, \varepsilon=U-U'$ such that $\|\varepsilon\|_2\leq \frac{1}{10}$, thus
    \begin{equation}
        \begin{split}
            \left|g(U,U)\right|=&\left|g(U'+\epsilon,U+\epsilon)\right|\\
            \leq&\left|g(U',U')\right|+\left|g(\varepsilon,\varepsilon)\right|+2\left|\tr\left[U'^\top (\frac{1}{n}X^\top X-\Sigma)\varepsilon\right]\right|\\
            \leq&\frac{5}{4}\left|g(U',U')\right|+5\left|g(\varepsilon,\varepsilon)\right|\\
            &+\left|\tr\left[(\frac{1}{2}U'+2\varepsilon)^\top (\frac{1}{n}X^\top X-\Sigma)(\frac{1}{2}U'+2\varepsilon)\right]\right|\\
            \leq &\frac{5}{4}\sup_{\Delta\in \cN_S}\left|g(\Delta,\Delta)\right|+\frac{27}{50}\sup_{\Delta\in \cK(\cG_S)}\left|g(\Delta,\Delta)\right|,
        \end{split}
    \end{equation}
    which implies 
    \begin{equation}
    \begin{split}
        (1-\frac{27}{50})\sup_{\Delta\in \cK(\cG_S)}\left|g(\Delta,\Delta)\right|\leq \frac{5}{4}\sup_{\Delta\in \cN_S}\left|g(\Delta,\Delta)\right|\\
        \sup_{\Delta\in \cK(\cG_S)}\left|g(\Delta,\Delta)\right|\leq 3\sup_{\Delta\in \cN_S}\left|g(\Delta,\Delta)\right|.
    \end{split}    
    \end{equation} 
    Therefore, \eqref{eq:rec2} can be implied by  
    \begin{equation}\label{eq:rec_reduced}
        \sup_{S\subset[J],|S|=2\lfloor\rho\rfloor}\sup_{\Delta\in \cN_S}\left|g(\Delta,\Delta)\right|\leq \frac{\lambda_{\min}(\Sigma)}{18(1+2\max_j\omega_j^{-1})}.
    \end{equation}
    The following lemma bounds $\left|g(\Delta,\Delta)\right|$ with high probability.
    \begin{lemma}\label{lem:rec_dev}
    For any $\Delta\in \bbR^{p\times K}$ such that $\|\Delta\|_2=1$,
    \begin{equation}
            \bbP\left(\left|g(\Delta,\Delta)\right|\geq t\right)\leq 2K\exp\left\{-cn\min\left\{\frac{t^2}{\sigma^4},\frac{t}{\sigma^2}\right\}\right\}.
        \end{equation}
    for any $t>0$.
    \end{lemma}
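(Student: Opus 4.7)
The plan is to reduce the matrix-quadratic-form deviation to a sum of $K$ scalar sub-exponential deviations, each of which is controlled by the standard Bernstein inequality. First I would rewrite
\[
g(\Delta,\Delta)=\tr\!\left[\Delta^\top\!\left(\tfrac{1}{n}X^\top X-\Sigma\right)\!\Delta\right]
=\sum_{k=1}^K\left[\frac{1}{n}\sum_{i=1}^n(\Delta_{:,k}^\top x_i)^2-\mathbb{E}(\Delta_{:,k}^\top x_i)^2\right],
\]
so that the problem decouples across the columns $\Delta_{:,k}$ of $\Delta$, with weights $\|\Delta_{:,k}\|_2$ that satisfy $\sum_{k=1}^K\|\Delta_{:,k}\|_2^2=\|\Delta\|_2^2=1$.

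Next, I would use Assumption~\ref{assump:subgauss_row}: since each $x_i$ is mean-zero sub-Gaussian with parameter $\sigma$, the linear image $\Delta_{:,k}^\top x_i$ is mean-zero sub-Gaussian with parameter at most $\sigma\|\Delta_{:,k}\|_2$. A standard fact about sub-Gaussian squares then gives that $(\Delta_{:,k}^\top x_i)^2-\mathbb{E}(\Delta_{:,k}^\top x_i)^2$ is a mean-zero sub-exponential random variable with $\psi_1$-parameter bounded by $C\sigma^2\|\Delta_{:,k}\|_2^2$. Since the samples $\{x_i\}_{i=1}^n$ are independent, Bernstein's inequality for sums of independent sub-exponentials (see, e.g., Theorem~2.8.1 of \cite{vershynin2018high}) yields, for every $s_k>0$,
\[
\mathbb{P}\!\left(\left|\frac{1}{n}\sum_{i=1}^n\bigl[(\Delta_{:,k}^\top x_i)^2-\mathbb{E}(\Delta_{:,k}^\top x_i)^2\bigr]\right|\geq s_k\right)
\leq 2\exp\!\left\{-cn\min\!\left(\frac{s_k^2}{\sigma^4\|\Delta_{:,k}\|_2^4},\,\frac{s_k}{\sigma^2\|\Delta_{:,k}\|_2^2}\right)\right\}.
\]

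Finally, I would choose the column-wise thresholds proportional to the column weights, namely $s_k=\|\Delta_{:,k}\|_2^2\,t$. This choice has two features: first, by $\sum_k\|\Delta_{:,k}\|_2^2=1$ the thresholds sum to $t$, so the event $\{|g(\Delta,\Delta)|\leq t\}$ is implied by the intersection of the $K$ column-wise events; second, with this scaling both ratios inside the Bernstein exponent become $t^2/\sigma^4$ and $t/\sigma^2$, independent of $k$ (columns with $\|\Delta_{:,k}\|_2=0$ contribute zero deviation and can be dropped). A union bound over $k\in[K]$ then yields the claimed
\[
\mathbb{P}(|g(\Delta,\Delta)|\geq t)\leq 2K\exp\!\left\{-cn\min\!\left(\frac{t^2}{\sigma^4},\,\frac{t}{\sigma^2}\right)\right\}.
\]

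There is no real obstacle in this argument: the key structural insight is just that tracing out $\Delta^\top(\tfrac{1}{n}X^\top X-\Sigma)\Delta$ turns a matrix-valued concentration into $K$ i.i.d.-style scalar concentrations, and that allocating the budget $t$ across columns in proportion to $\|\Delta_{:,k}\|_2^2$ makes the Bernstein exponents uniform so the union bound contributes only the factor $K$. An alternative route would be to apply the Hanson--Wright inequality directly to the $(nK)$-dimensional vector $\mathrm{vec}(X\Delta)$, but the column-wise Bernstein approach above is more transparent and avoids introducing additional matrix-norm quantities that would then have to be related to $\|\Delta\|_2$.
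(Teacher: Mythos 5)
Your proof is correct and follows essentially the same route as the paper's: decompose $g(\Delta,\Delta)$ into the $K$ column-wise quadratic forms, note each $(\Delta_{:,k}^\top x_i)^2$ is sub-exponential with $\psi_1$-norm $C\sigma^2\|\Delta_{:,k}\|_2^2$, apply Bernstein with threshold $t\|\Delta_{:,k}\|_2^2$, and union bound using $\sum_k\|\Delta_{:,k}\|_2^2=1$. No gaps.
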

    Let $t=\frac{\lambda_{\min}(\Sigma)}{18(1+2\max_j\omega_j^{-1})}$ in Lemma \ref{lem:rec_dev} and take a union bound over $\Delta\in \cN_S$, $S\subset[J], |S|=2\lfloor\rho\rfloor$, one can see that \eqref{eq:rec_reduced} holds with probability at least
    \begin{equation}
    \begin{split}
        &1-2K(21)^{2\rho m}\binom{J}{2\lfloor \rho\rfloor}\exp\{-cn\}\\
        \geq &1-2K\exp\{\rho[2\log(21)m+3\log(eJ/2\lfloor \rho\rfloor)]-cn\}\\
        \geq&1-2K\exp\{\rho[2\log(21)m+3\log J]-cn\}\\
        \geq&1-2K\exp\{-\frac{cn}{2}\}
        \end{split}
    \end{equation}
    where the third line holds if $\rho=\frac{\alpha n}{\tau_1(\log J+m)}=c\frac{n}{\log J+m}\geq 2$, and the last line holds since $\rho(\log J+m)= \frac{\alpha n}{\tau_1}\leq Cn$.
    \item Concentrating $\mathrm{\RNum{2}}(\Delta)$ around $0$.\\
    First recall that 
    \begin{equation}
        \mathrm{\RNum{2}}(\Delta)=\frac{1}{n}\sum_{i=1}^n x_i^\top \Delta\left(\nabla f(\Theta^{*\top} x_i+\Delta^\top x_i)-\nabla f(\Theta^{*\top}x_i)\right)^\top \epsilon_i.
    \end{equation}
    Let $\phi(\Delta,\{x_i\}_{i=1}^n,\{\epsilon_i\}_{i=1}^n)=\frac{\mathrm{\RNum{2}}(\Delta)}{\|\Delta\|_{\omega,2,1}}$, then our goal is to provide an upper bound for 
    \begin{equation*}
        \sup_{\Delta\in \bbB_{1,\infty}(R_0)}|\phi(\Delta,\{x_i\}_{i=1}^n,\{\epsilon_i\}_{i=1}^n)|.
    \end{equation*}
    We start by upper bounding 
    \begin{equation*}
        g^{(1)}_t((x_1,\epsilon_1),\dots,(x_n,\epsilon_n)):=\sup_{\frac{t}{2}\leq \|\Delta\|_{\omega,2,1}\leq t}\phi(\Delta,\{x_i\}_{i=1}^n,\{\epsilon_i\}_{i=1}^n)
    \end{equation*} 
    for any $t>0$, and following similar arguments we obtain the upper bound for
    \begin{equation*}
        g^{(2)}_t((x_1,\epsilon_1),\dots,(x_n,\epsilon_n)):\sup_{\frac{t}{2}\leq \|\Delta\|_{\omega,2,1}\leq t}-\phi(\Delta,\{x_i\}_{i=1}^n,\{\epsilon_i\}_{i=1}^n).
    \end{equation*}
    Then we apply a peeling argument to extend the bound to $\bbB_{1,\infty}(R_0)$.
    \begin{lemma}[Symmetrization theorem]\label{lem:symmetrization}
    Let $U_1,\dots, U_n$ be independent random variables with values in $\mathcal{U}$ and $\{\varepsilon_i\}$ be an i.i.d. sequence of Rademacher variables, which take values $\pm 1$ each with probability $\frac{1}{2}$. Let $\Gamma$ be a class of real-valued functions on $\mathcal{U}$, then
    \begin{equation*}
        \bbE\left(\sup_{\gamma\in \Gamma}\sum_{i=1}^n(\gamma(U_i)-\bbE(\gamma(U_i)))\right)\leq2\bbE\left(\sup_{\gamma\in \Gamma}\sum_{i=1}^n\varepsilon_i\gamma(U_i)\right).
    \end{equation*}
    \end{lemma}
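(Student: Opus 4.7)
The statement is the classical symmetrization inequality, so the plan is the textbook two-step argument: (i) replace $\bbE(\gamma(U_i))$ by an independent copy and use Jensen's inequality to pull the expectation outside the supremum, then (ii) insert Rademacher signs using the symmetry of the resulting differences.

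More concretely, I would first introduce a ghost sample $U_1',\dots,U_n'$ consisting of independent random variables with $U_i' \stackrel{d}{=} U_i$, drawn independently of $U_1,\dots,U_n$ and of the Rademacher sequence $\{\varepsilon_i\}$. Since $\bbE(\gamma(U_i)) = \bbE_{U'}(\gamma(U_i'))$, one can write
\begin{equation*}
\sum_{i=1}^n\bigl(\gamma(U_i)-\bbE\gamma(U_i)\bigr) = \bbE_{U'}\left[\sum_{i=1}^n\bigl(\gamma(U_i)-\gamma(U_i')\bigr)\;\Big|\;U_1,\dots,U_n\right].
\end{equation*}
Applying Jensen's inequality to the convex supremum functional, and then taking the outer expectation over $U$, gives
\begin{equation*}
\bbE\left(\sup_{\gamma\in\Gamma}\sum_{i=1}^n\bigl(\gamma(U_i)-\bbE\gamma(U_i)\bigr)\right)\leq \bbE\left(\sup_{\gamma\in\Gamma}\sum_{i=1}^n\bigl(\gamma(U_i)-\gamma(U_i')\bigr)\right).
\end{equation*}

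Next I would exploit symmetry. For each $i$, the pair $(U_i,U_i')$ is exchangeable, so the random variable $\gamma(U_i)-\gamma(U_i')$ has the same joint distribution (across $\gamma\in\Gamma$) as $\varepsilon_i\bigl(\gamma(U_i)-\gamma(U_i')\bigr)$, where $\{\varepsilon_i\}$ are i.i.d.\ Rademacher variables independent of $(U_i,U_i')$. Since the $n$ pairs are independent, one can insert an independent Rademacher factor in each summand simultaneously without changing the joint law of the process indexed by $\gamma$. Hence
\begin{equation*}
\bbE\left(\sup_{\gamma\in\Gamma}\sum_{i=1}^n\bigl(\gamma(U_i)-\gamma(U_i')\bigr)\right)=\bbE\left(\sup_{\gamma\in\Gamma}\sum_{i=1}^n\varepsilon_i\bigl(\gamma(U_i)-\gamma(U_i')\bigr)\right).
\end{equation*}
The final step is the triangle inequality inside the supremum, which splits the ghost term off and yields a factor of two:
\begin{equation*}
\bbE\left(\sup_{\gamma\in\Gamma}\sum_{i=1}^n\varepsilon_i\bigl(\gamma(U_i)-\gamma(U_i')\bigr)\right)\leq 2\,\bbE\left(\sup_{\gamma\in\Gamma}\sum_{i=1}^n\varepsilon_i\gamma(U_i)\right),
\end{equation*}
where I used $\{-\varepsilon_i\}\stackrel{d}{=}\{\varepsilon_i\}$ together with $U_i'\stackrel{d}{=}U_i$ to conclude that the ghost-sample term has the same expectation as the original. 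Chaining these three inequalities proves the claim.

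Since the result is a standard identity, I do not anticipate a conceptual obstacle; the only point to be careful about is the implicit measurability/integrability condition needed for Jensen's inequality to apply to $\sup_{\gamma\in\Gamma}$ (e.g., countability of $\Gamma$ or an appropriate version of $\gamma\mapsto\sum_i\gamma(U_i)$), which I would briefly state as a standing regularity assumption on the function class.
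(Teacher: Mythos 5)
Your proposal is correct and follows essentially the same route as the paper's proof: introduce a ghost sample, apply Jensen's inequality to move the expectation outside the supremum, insert Rademacher signs via exchangeability of the pairs, and split the supremum into two terms each distributed as the original Rademacher process. No substantive differences to note.
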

    The symmetrization theorem commonly seen in literature \citep{vaart1997weak, wainwright2019high} considers the absolute value $\left|\gamma(U_i)-\bbE(\gamma(U_i))\right|$ instead of $\gamma(U_i)-\bbE(\gamma(U_i))$. For completeness, we will also provide a proof for Lemma~\ref{lem:symmetrization}, although this proof is basically the same as the one with absolute values, and probably has already been shown in past literatures.
    
    We apply Lemma \ref{lem:symmetrization} by letting $U_i=(\epsilon_i,x_i)$, \begin{equation*}
        \gamma(x_i,\epsilon_i)=\frac{1}{n\|\Delta\|_{\omega,2,1}}x_i^\top \Delta(\nabla f((\Theta^*+\Delta)^\top x_i)-\nabla f(\Theta^{*\top}x_i))^\top \epsilon_i.
    \end{equation*} Then we have 
    \begin{equation}
    \begin{split}
        &\bbE g^{(1)}_t((x_1,\epsilon_1),\dots,(x_n,\epsilon_n))\\
        \leq &\frac{2}{n}\bbE\sup_{\frac{t}{2}\leq \|\Delta\|_{\omega,2,1}\leq t}\frac{\sum_{i=1}^n h_i(\Delta^\top x_i)\varepsilon_i}{\|\Delta\|_{\omega,2,1}}\\
        \leq&\frac{4}{nt}\bbE\sup_{\frac{t}{2}\leq \|\Delta\|_{\omega,2,1}\leq t}\sum_{i=1}^n h_i(\Delta^\top x_i)\varepsilon_i,
    \end{split}
    \end{equation}
    where the multivariate function $h_i:\bbR^K\rightarrow \bbR$ is defined as 
    \begin{equation}
    h_i(u)=u^\top\left(\nabla f(\Theta^{*\top} x_i+u)-\nabla f(\Theta^{*\top}x_i)\right)^\top \epsilon_i.
    \end{equation}
    Meanwhile, apply Lemma~\ref{lem:symmetrization} on $-\gamma(x_i,\epsilon_i)$ also leads to 
    \begin{equation*}
        \bbE g^{(2)}_t((x_1,\epsilon_1),\dots,(x_n,\epsilon_n))\leq \frac{4}{nt}\bbE\sup_{\frac{t}{2}\leq\|\Delta\|_{\omega,2,1}\leq t}\sum_{i=1}^n h_i(\Delta^\top x_i)\varepsilon_i.
    \end{equation*}
    The following lemma shows that $h_i(u)$ is $L$-Liptchitz, where $L=K+\sqrt{K}R^*C_x+2$.
    \begin{lemma}\label{lem:liptchitz-h}
    For any $u\in \bbR^K$, $1\leq i\leq n$, $\|\nabla h_i(u)\|_2\leq L=K+\sqrt{K}R^*C_x+2$. 
    \end{lemma}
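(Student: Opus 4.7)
The plan is to reduce $h_i(u)$ to a simple scalar expression by exploiting the rank-one-plus-identity structure of $\nabla f$, then differentiate and bound the two resulting pieces. Explicitly, $\nabla f(v)=I_K-\mathbf{1}_K p(v)^\top$ where $p_j(v)=e^{v_j}/(1+\sum_l e^{v_l})$ is the softmax vector, so that $(\nabla f(v_i^*+u)-\nabla f(v_i^*))^\top\epsilon_i=(\mathbf{1}_K^\top\epsilon_i)(p(v_i^*)-p(v_i^*+u))$, where $v_i^*:=\Theta^{*\top}x_i$. Setting $c_i:=\mathbf{1}_K^\top\epsilon_i$ and $q(u):=p(v_i^*)-p(v_i^*+u)$, this collapses $h_i$ to the clean form $h_i(u)=c_i\,u^\top q(u)$. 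The scalar factor $|c_i|$ is easy to control: $\delta_i$ is an indicator with at most one nonzero entry and $\nabla A(f(v_i^*))$ is a sub-probability vector, so $|c_i|\leq\|\epsilon_i\|_1\leq 2$, which accounts for the constant $+2$ in $L$.

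Differentiating and using $\partial p_j(v)/\partial v_l=\delta_{jl}p_j(v)-p_j(v)p_l(v)$, I obtain $\nabla h_i(u)=c_i\left[q(u)-H(v_i^*+u)u\right]$, where $H(v)=\mathrm{diag}(p(v))-p(v)p(v)^\top$ is the softmax Hessian. The task then splits into bounding $\|q(u)\|_2$ and $\|H(v_i^*+u)u\|_2$. The first is immediate: $\|q(u)\|_2\leq\|q(u)\|_1\leq\|p(v_i^*)\|_1+\|p(v_i^*+u)\|_1\leq 2$. For the second I will use the boundedness $\|v_i^*\|_\infty\leq R^*C_x$, which follows from $\max_k\|\Theta^*_{:,k}\|_1\leq R^*$ together with the entrywise bound $|x_{ij}|\leq C_x$ in Assumption~\ref{assump:subgauss_row}; this is where the $\sqrt{K}R^*C_x$ term enters, via the companion estimate $\|v_i^*\|_2\leq\sqrt{K}R^*C_x$.

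The main obstacle is establishing the uniform bound on $\|H(v_i^*+u)u\|_2$, since the naive operator-norm estimate $\|H(v_i^*+u)u\|_2\leq\|u\|_2$ grows unboundedly with $u$. The rescue is a key cancellation: the $l$-th component equals $p_l(v_i^*+u)\bigl(u_l-p(v_i^*+u)^\top u\bigr)$, and as $u$ grows in any direction, $p(v_i^*+u)$ concentrates on a single coordinate, making the weighted deviation $u_l-p(v_i^*+u)^\top u$ degenerate so that the product stays uniformly bounded. A careful case analysis, separating the coordinate on which $p(v_i^*+u)$ concentrates from the remaining ones and invoking $\|v_i^*\|_\infty\leq R^*C_x$, will yield a bound of order $K+\sqrt{K}R^*C_x$ on $\|H(v_i^*+u)u\|_2$. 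Combined with the trivial bounds $|c_i|\leq 2$ and $\|q(u)\|_2\leq 2$, this gives $\|\nabla h_i(u)\|_2\leq L=K+\sqrt{K}R^*C_x+2$. The step I expect to require the most care is pinning down the exact constants in the case analysis, since the cancellation between $p(v_i^*+u)\odot u$ and $p(v_i^*+u)(p(v_i^*+u)^\top u)$ is tight and cannot afford an extra factor of $K$ or $\sqrt{K}$; the rank-one reduction of $\nabla f$ in Step~1 is what ensures that the vector $\epsilon_i$ contributes only through the scalar $c_i$, so that the entire difficulty is confined to a single $K\times K$ softmax-Hessian computation rather than a genuinely $K^2$-dimensional object.
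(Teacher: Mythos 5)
Your reduction is exactly the one the paper uses: $\nabla f(v)=I_K-\mathbf{1}_K\nabla A(v)^\top$ collapses $h_i$ to $c_i\,u^\top q(u)$ with $c_i=\mathbf{1}_K^\top\epsilon_i$, and the gradient splits into $q(u)$ plus a softmax-Hessian term. However, your constant tracking does not close. You bound $|c_i|\le\|\epsilon_i\|_1\le 2$ and separately $\|q(u)\|_2\le 2$; since $c_i$ multiplies \emph{both} terms of $\nabla h_i(u)=c_i\bigl[q(u)-H(v_i^*+u)u\bigr]$, your estimates give at best $2\bigl(2+K+\sqrt{K}R^*C_x\bigr)=2L$, not $L$. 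The fix is the sharper bound $|c_i|=\bigl|\mathbf{1}_K^\top\delta_i-\mathbf{1}_K^\top\bbE(\delta_i|x_i)\bigr|\le 1$, which holds because $\mathbf{1}_K^\top\delta_i=\ind{z_i\neq 0}\in\{0,1\}$ and $\mathbf{1}_K^\top\bbE(\delta_i|x_i)\in[0,1]$. With that, the $+2$ in $L$ comes from $\|q(u)\|_2\le 2$, not from $|c_i|$ as you suggest.

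Second, the step you yourself flag as delicate --- bounding $\|H(v_i^*+u)u\|_2$ uniformly in $u$ --- is where the paper has a device your sketch is missing. Rather than a case analysis on where the softmax concentrates, write $u=(v_i^*+u)-v_i^*$ and bound the two pieces separately: for every $v$ one has $\|H(v)v\|_2\le K$, because the $l$-th coordinate of $H(v)v$ equals $\bigl(v_le^{v_l}+\sum_j(v_l-v_j)e^{v_l+v_j}\bigr)/(1+\sum_k e^{v_k})^2$ and the elementary inequality $x\le e^x$ absorbs the linear factors into exponentials; meanwhile $\|H(v)v_i^*\|_2\le\|v_i^*\|_2\le\sqrt{K}\|v_i^*\|_\infty\le\sqrt{K}R^*C_x$ using $\lambda_{\max}(H(v))\le 1$. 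This yields $\|H(v_i^*+u)u\|_2\le K+\sqrt{K}R^*C_x$ with no casework, and it is essentially what your ``careful case analysis'' would have to reconstruct; as written, your proposal leaves the hardest estimate of the lemma unproved.
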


    The following lemma proved in \cite{maurer2016vector} presents a contraction inequality for Rademacher average when the contraction function has vector-valued domain:
    \begin{lemma}[\cite{maurer2016vector}]\label{lem:contraction_vector_domain}
   For any countable set $\mathcal{S}$ and functions $\psi_i:\mathcal{S}\rightarrow \bbR$, $\phi_i: \mathcal{S}\rightarrow \ell_2$, $1\leq i\leq n$ satisfying
   \begin{equation*}
       \forall s,s'\in \mathcal{S}, \psi_i(s)-\psi_i(s')\leq \|\phi_i(s)-\phi_i(s')\|,
   \end{equation*}
   we have
   \begin{equation*}
       \bbE\sup_{s\in\mathcal{S}}\sum_{i=1}^n\varepsilon_i\psi_i(s)\leq \sqrt{2}\bbE\sup_{s\in\mathcal{S}}\sum_{i,k}\widetilde{\epsilon}_{ik}\phi_i(s)_k,
   \end{equation*}
   where $\ell_2$ is the Hilbert space of square summable sequences of real numbers, $\varepsilon_i$, $\widetilde{\varepsilon}_{ik}$ are independent Rademacher random variables for $1\leq i\leq n$ and $1\leq k< \infty$, and $\phi_i(s)_k$ is the $k$-th coordinate of $\phi_i(s)_k$.
    \end{lemma}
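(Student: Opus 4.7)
The statement is attributed to Maurer (2016), so the first move in my write-up would be to simply invoke that reference. Nevertheless, to be self-contained, my plan is to sketch the standard inductive proof strategy, which reduces the $n$-sample contraction bound to a single-sample inequality and then handles the single-sample case with a direct variance comparison.

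The plan is to proceed in two stages. First, I would reduce to the one-variable case by iterative conditioning on the Rademacher variables. Concretely, fix all $\varepsilon_j$ for $j \neq i$ and treat the function $f_i(s) := \sum_{j \neq i} \varepsilon_j \psi_j(s)$ as an arbitrary scalar functional on $\mathcal{S}$. If I can establish the single-step inequality
\[
\mathbb{E}_{\varepsilon_i} \sup_{s \in \mathcal{S}} \Bigl[ f_i(s) + \varepsilon_i \psi_i(s) \Bigr]
\;\le\;
\sqrt{2}\,\mathbb{E}_{\widetilde{\varepsilon}_{i\cdot}} \sup_{s \in \mathcal{S}} \Bigl[ f_i(s) + \sum_{k} \widetilde{\varepsilon}_{ik}\, \phi_i(s)_k \Bigr],
\]
then I can iterate over $i = 1, \dots, n$, absorbing one Rademacher variable per step, to obtain the $n$-variable bound. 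This telescoping argument is routine given a clean single-step inequality, since the right-hand-side absorbs $\sum_k \widetilde{\varepsilon}_{ik}\phi_i(s)_k$ into the "arbitrary scalar" at each subsequent step.

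The heart of the argument is the single-step bound. My plan here is to exploit the Lipschitz hypothesis $\psi_i(s) - \psi_i(s') \le \|\phi_i(s) - \phi_i(s')\|$ via a symmetrization trick: write the expectation $\mathbb{E}_{\varepsilon_i}\sup_s[f_i(s) + \varepsilon_i\psi_i(s)]$ as $\tfrac{1}{2}\sup_{s,s'}[f_i(s) + f_i(s') + (\psi_i(s) - \psi_i(s'))]$ and then use the vector inequality $\psi_i(s) - \psi_i(s') \le \|\phi_i(s) - \phi_i(s')\|$. Replacing $\|\phi_i(s)-\phi_i(s')\|$ by its expression as $\sqrt{\sum_k (\phi_i(s)_k - \phi_i(s')_k)^2}$, and then relating this to the Rademacher-weighted sum $\sum_k \widetilde{\varepsilon}_{ik}(\phi_i(s)_k - \phi_i(s')_k)$ via a Khintchine-type variance calculation, yields the $\sqrt{2}$ factor. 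The $\sqrt{2}$ reflects the ratio between $\mathbb{E}|\sum_k \widetilde{\varepsilon}_{ik}a_k|$ and $\|a\|_2$ that one controls through Jensen's inequality, together with the pairwise symmetrization on $(s,s')$.

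I expect the main obstacle to be making the single-step inequality rigorous with the correct constant $\sqrt{2}$; one must be careful that the auxiliary supremum over pairs $(s,s')$ is controlled in a way that preserves the $f_i(s)$ offset and does not introduce extra factors of $2$. A Gaussian detour (comparing to $\mathbb{E}\sup_s [f_i(s) + g_i \psi_i(s)]$ with $g_i$ standard normal, using Sudakov-Fernique to compare covariance structures with $\sum_k \widetilde{g}_{ik} \phi_i(s)_k$, and then converting Gaussian to Rademacher moments) is an alternative route but introduces constants that must be carefully tracked. Because the lemma is a direct verbatim statement from Maurer's paper, my actual write-up in the manuscript will simply cite \cite{maurer2016vector} without reproducing the proof.
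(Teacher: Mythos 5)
The paper does not prove this lemma at all --- it is quoted verbatim from \cite{maurer2016vector} and used as a black box, so simply citing that reference is exactly what the paper does, and your plan is therefore consistent with it. One caveat on your sketch, should you ever write it out: in the single-step inequality the factor $\sqrt{2}$ must appear \emph{inside} the supremum, multiplying only $\sum_k \widetilde{\varepsilon}_{ik}\phi_i(s)_k$ (placing it outside the whole expectation also rescales the offset $f_i$ and breaks the telescoping over $i$), and the $\sqrt{2}$ itself comes from the Khintchine $L_1$--$L_2$ comparison $\|a\|_2\leq\sqrt{2}\,\bbE\bigl|\sum_k\widetilde{\varepsilon}_{ik}a_k\bigr|$ rather than from Jensen's inequality, which goes in the opposite direction.
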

    We apply Lemma \ref{lem:contraction_vector_domain} by letting $\mathcal{S}=\{\Delta\in \mathbb{Q}^{p\times K}:\frac{t}{2}\leq \|\Delta\|_{\omega,2,1}\leq t\}$, $\psi_i(s)=h_i(s^\top x_i)$, $(\phi_i(s))_{1:K}=Ls^\top x_i$ and $(\phi_i(s))_{(K+1):\infty}=0$. Then one can show that
    \begin{equation*}
        \begin{split}
            \bbE \sup_{\Delta\in\mathcal{S}}\sum_{i=1}^n h_i(\Delta^\top x_i)\varepsilon_i
            \leq&\sqrt{2}L\bbE\sup_{\Delta\in \mathcal{S}}\sum_{i=1}^n\sum_{k=1}^K\Delta_{:,k}^\top x_i\widetilde{\varepsilon}_{ik}\\
            \leq&\sqrt{2}Lt\bbE\left\|\sum_{i=1}^n x_i\widetilde{\varepsilon}_i^\top\right\|_{\omega^{-1},2,\infty}.
        \end{split}
    \end{equation*}
    Note that $\mathcal{S}$ is dense in $\{\Delta\in \bbR^{p\times K}:\frac{t}{2}\leq \|\Delta\|_{\omega,2,1}\leq t\}$ and $\sum_{i=1}^nh_i(\Delta^\top x_i)\varepsilon_i$ is a continuous function of $\Delta$, thus we have for $j=1,2$,
    \begin{equation}\label{eq:g_t_expt}
        \begin{split}
            &\bbE g_t^{(j)}((x_1,\epsilon_1),\dots,(x_n,\epsilon_n))\\
            =&\frac{4}{nt}\bbE\sup_{\frac{t}{2}\leq \|\Delta\|_{\omega,2,1}\leq t}\sum_{i=1}^n h_i(\Delta^\top x_i)\varepsilon_i\\
            =&\frac{4}{nt}\bbE\sup_{\Delta\in \mathcal{S}}\sum_{i=1}^n h_i(\Delta^\top x_i)\varepsilon_i\\
            \leq&\frac{4\sqrt{2}L}{n}\bbE\left\|\sum_{i=1}^n x_i\widetilde{\varepsilon}_i^\top\right\|_{\omega^{-1},2,\infty}.
        \end{split}
    \end{equation}
    The following lemma provides an upper bound for $\bbE\left\|\sum_{i=1}^nx_i\widetilde{\varepsilon}_i^\top\right\|_{\omega^{-1},2,\infty}$. 
    \begin{lemma}\label{lem:mult_dev_expt}
     \begin{equation*}
        \bbE\left\|\sum_{i=1}^nx_i\widetilde{\varepsilon}_i^\top\right\|_{\omega^{-1},2,\infty}\leq C\lambda_{\max}^{\frac{1}{2}}(\Sigma)\sqrt{(m+\log J)n}.
     \end{equation*}
    \end{lemma}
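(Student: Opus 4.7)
The plan is to rewrite the quantity to expose its structure, condition on the covariates, and then combine Talagrand's concentration for Rademacher variables with the same operator-norm bound used in the proof of Lemma~\ref{lem:multinomial_dev_bnd}. Specifically, writing $M_j = X_{\cR_j}^\top \widetilde{\varepsilon}_{:,\cC_j} \in \bbR^{r_j\times c_j}$, we have
$$\left\|\sum_{i=1}^n x_i \widetilde{\varepsilon}_i^\top\right\|_{\omega^{-1},2,\infty} = \max_{1\leq j\leq J} \frac{\|M_j\|_F}{\omega_j},$$
so the goal becomes bounding the expectation of this maximum.

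First, condition on $X$ and treat each $j$ separately. Since the columns of $\widetilde{\varepsilon}_{:,\cC_j}$ are independent mean-zero $\pm 1$ vectors with identity covariance, $\bbE[\|X_{\cR_j}^\top \widetilde{\varepsilon}_{:,k}\|_2^2 \mid X] = \|X_{\cR_j}\|_F^2 \leq r_j\|X_{\cR_j}\|^2$, and summing over $k\in\cC_j$ followed by Jensen gives $\bbE[\|M_j\|_F \mid X] \leq \sqrt{c_jr_j}\,\|X_{\cR_j}\| \leq \sqrt{m}\,\|X_{\cR_j}\|$. Moreover, $\widetilde{\varepsilon}_{:,\cC_j}\mapsto \|M_j\|_F$ is convex and $\|X_{\cR_j}\|$-Lipschitz, so Talagrand's inequality for convex Lipschitz functions of Rademacher vectors (as in the proof of Lemma~\ref{lem:multinomial_dev_bnd}) implies that $\|M_j\|_F - \bbE[\|M_j\|_F\mid X]$ is sub-Gaussian with parameter $C\|X_{\cR_j}\|$ conditional on $X$. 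Applying the standard sub-Gaussian maximum bound to the $J$ rescaled variables $(\|M_j\|_F-\bbE[\|M_j\|_F\mid X])/\omega_j$ then yields
$$\bbE\!\left[\max_j \frac{\|M_j\|_F}{\omega_j} \,\Big|\, X\right] \;\leq\; C\sqrt{m+\log J}\;\max_j \frac{\|X_{\cR_j}\|}{\omega_j}.$$

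Finally, take expectation over $X$. The bound \eqref{eq:X_norm_bnd} already established in the proof of Lemma~\ref{lem:multinomial_dev_bnd} (via Theorem 5.39 of \cite{vershynin2010introduction} applied to $X_{\cR_j}\Sigma_{\cR_j,\cR_j}^{-1/2}$) together with sub-Gaussian concentration of the operator norm gives $\bbE\|X_{\cR_j}\| \leq C\lambda_{\max}^{1/2}(\Sigma)(\sqrt{n}+\sqrt{r_j})$ and sub-Gaussian tails with parameter $C\lambda_{\max}^{1/2}(\Sigma)$; a union bound over $j\in[J]$ plus the rate condition $\log J+m\leq Cn$ from Assumption~\ref{assump:scaling} thus yields $\bbE\max_j \|X_{\cR_j}\| \leq C\lambda_{\max}^{1/2}(\Sigma)\sqrt{n}$. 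Combined with $\min_j\omega_j\geq c$ (also from Assumption~\ref{assump:scaling}), this gives $\bbE\max_j\|X_{\cR_j}\|/\omega_j \leq C\lambda_{\max}^{1/2}(\Sigma)\sqrt{n}$, and tower rule delivers the stated $C\lambda_{\max}^{1/2}(\Sigma)\sqrt{n(m+\log J)}$. The only delicate point is that the conditional sub-Gaussian parameters in the second paragraph depend on $\|X_{\cR_j}\|$, which is itself random; the saving is that only the expectation of $\max_j \|X_{\cR_j}\|/\omega_j$ (not a high-probability bound with an extra $\sqrt{\log J}$ factor) needs to be controlled, and under the rate regime $\log J + m = O(n)$ the Gaussian width $\sqrt{n} + \sqrt{m+\log J}$ is already of order $\sqrt{n}$.
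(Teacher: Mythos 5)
Your proposal is correct and uses essentially the same ingredients as the paper's proof: the conditional trace/Jensen bound $\bbE[\|X_{\cR_j}^\top \widetilde{\varepsilon}_{:,\cC_j}\|_F\mid X]\leq \sqrt{r_jc_j}\,\|X_{\cR_j}\|$, Talagrand's concentration for the convex Lipschitz map $\widetilde{\varepsilon}_{:,\cC_j}\mapsto\|M_j\|_F$, and the sub-Gaussian operator-norm bound on $X_{\cR_j}$. The only difference is bookkeeping — you apply the sub-Gaussian maximal inequality conditionally and then bound $\bbE\max_j\|X_{\cR_j}\|/\omega_j$ via the tower rule, whereas the paper merges the two tail bounds into a single unconditional tail for each $j$ and integrates $\int_0^\infty \bbP(\max_j\cdot>t)\,\mathrm{d}t$; both routes yield the stated $C\lambda_{\max}^{1/2}(\Sigma)\sqrt{(m+\log J)n}$ under the same rate conditions.
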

    Getting back to \eqref{eq:g_t_expt}, we know that for $j=1,2$,
    \begin{equation}
        \bbE g^{(j)}_t((x_1,\epsilon_1),\dots,(x_n,\epsilon_n))\leq CL\lambda^{\frac{1}{2}}_{\max}(\Sigma)\sqrt{\frac{m+\log J}{n}}.
    \end{equation}
    Note that
    \begin{equation}
    \begin{split}
         &\frac{\left|x_i^\top \Delta(\nabla f(\Theta^{*\top}x_i+\Delta^\top x_i)-\nabla f(\Theta^{*\top}x_i))\epsilon_i\right|}{n\|\Delta\|_{\omega,2,1}}\\
        \leq&\frac{\left|x_i^\top \Delta(\nabla A(\Theta^{*\top}x_i+\Delta^\top x_i)-\nabla A(\Theta^{*\top}x_i))\right|}{n\|\Delta\|_{\omega,2,1}}\\
        \leq&\frac{1}{n}\|x_i(\nabla A(\Theta^{*\top}x_i+\Delta^\top x_i)-\nabla A(\Theta^{*\top}x_i))^\top\|_{\omega^{-1},2,\infty}\\
        \leq&\frac{C}{n}\max_j\|x_{i,\cR_j}\|_2\|\nabla A(\Theta^{*\top}x_i+\Delta^\top x_i)_{\cC_j}-\nabla A(\Theta^{*\top}x_i)_{\cC_j}\|_2\\
        \leq &\frac{CC_x\max_j\sqrt{r_j}}{n},
    \end{split}
    \end{equation}
    which implies that for any $\{(x_i,\epsilon_i)\}_{i=1}^n$, $(x_i',\epsilon_i')$, $j=1,2$,
    \begin{equation*}
    \begin{split}
        &\bigg|g_t^{(j)}((x_1,\epsilon_1),\dots,(x_i,\epsilon_i),\dots,(x_n,\epsilon_n))\\
        &-g^{(j)}_t((x_1,\epsilon_1),\dots,(x_i',\epsilon_i'),\dots,(x_n,\epsilon_n))\bigg|\\
        \leq&\frac{CC_x\max_j\sqrt{r_j}}{n}.
    \end{split}
    \end{equation*}
    Thus we can apply the bounded difference inequality \citep{mcdiarmid1989method} and obtain the following result:
    \begin{equation}\label{eq:rsc_term2_slice}
        \sup_{\frac{t}{2}\leq \|\Delta\|_{\omega,2,1}\leq t}\left|\phi(\Delta,\{x_i\}_{i=1}^n,\{\epsilon_i\}_{i=1}^n)\right|\leq CL\lambda^{\frac{1}{2}}_{\max}(\Sigma)\sqrt{\frac{m+\log J}{n}},
    \end{equation}
    with probability at least $1-2\exp\{-cL^2C_x^{-2}\lambda_{\max}(\Sigma)\min_jr_j^{-1}(m+\log J)\}$.
    Now we apply a peeling argument to extend probabilistic bound \eqref{eq:rsc_term2_slice} to all $\Delta\in \bbB_{1,\infty}(R_0)$. Since $\|\Delta\|_{\omega,2,1}\leq \max_j\omega_jK\|\Delta\|_{1,\infty}$, $\bbB_{1,\infty}(R_0)\subset \bbB_{\omega,2,1}(\max_j\omega_jKR_0)$. Let $C_n=\frac{C}{C_x^2}L\lambda^{\frac{1}{2}}_{\max}(\Sigma)\min_j\omega_j^2\sqrt{\frac{\log J+m}{m^2n}}$, $N=\log_2(\frac{\max_j\omega_jKR_0}{C_n})$, then one can show that
    \begin{equation}
    \begin{split}
        \bbB_{1,\infty}(R_0)\subset&\cup_{k=1}^{N}\{\Delta:2^{k-1}C_n\leq \|\Delta\|_{\omega,2,1}\leq 2^kC_n\}\\
        &\cup \{\Delta:0\leq \|\Delta\|_{\omega,2,1}\leq C_n\}.
    \end{split}
    \end{equation}
    %It is straightforward to extend \eqref{} to the union of the first $N$ sets, by taking a union bound. 
    First we consider how to establish the bound uniformly for $\{\Delta:0\leq \|\Delta\|_{\omega,2,1}\leq C_n\}$. For any $r>0, d\in \bbR^{p\times K}$ satisfying $\|d\|_{\omega,2,1}=1$, let $\widetilde{\phi}(r,d,\{x_i\}_{i=1}^n,\{\epsilon_i\}_{i=1}^n)=\phi(rd,\{x_i\}_{i=1}^n,\{\epsilon_i\}_{i=1}^n)$. First note that
    \begin{equation}
        \begin{split}
            &\left|\frac{\partial \widetilde{\phi}(r,d,\{x_i\}_{i=1}^n,\{\epsilon_i\}_{i=1}^n)}{\partial r}\right|\\
            =&\left|\frac{1}{n}\sum_{i=1}^nx_i^\top d\nabla^2A(\Theta^{*\top}x_i+rd^\top x_i)d^\top x_i1_K^\top \epsilon_i\right|\\
            \leq&\frac{1}{n}\sum_{i=1}^n\|x_i^\top d\|_2^2\\
            \leq&C_x^2\|d\|_1^2\\
            \leq&C_x^2\max_j\omega_j^{-2}m,
        \end{split}
    \end{equation}
    which implies that for any $\|\Delta\|_{\omega,2,1}\leq C_n$,
    \begin{equation*}
    \begin{split}
        &\left|\phi(\Delta,\{x_i\}_{i=1}^n,\{\epsilon_i\}_{i=1}^n)\right|\\
        =&\left|\widetilde{\phi}(\|\Delta\|_{\omega,2,1},\frac{\Delta}{\|\Delta\|_{\omega,2,1}},\{x_i\}_{i=1}^n,\{\epsilon_i\}_{i=1}^n)\right|\\
        \leq&\left|\widetilde{\phi}(C_n,\frac{\Delta}{\|\Delta\|_{\omega,2,1}},\{x_i\}_{i=1}^n,\{\epsilon_i\}_{i=1}^n)\right|+C_x^2\max_j\omega_j^{-2}mC_n\\
        \leq&\sup_{\|\Delta\|_{\omega,2,1}=C_n}\left|\phi(\Delta,\{x_i\}_{i=1}^n,\{\epsilon_i\}_{i=1}^n)\right|+CL\lambda^{\frac{1}{2}}_{\max}(\Sigma)\sqrt{\frac{\log J+m}{n}}.
    \end{split}
    \end{equation*}
    Therefore, 
    \begin{equation}
        \begin{split}
            &\sup_{\Delta\in \bbB_{1,\infty}(R_0)}\left|\phi(\Delta,\{x_i\}_{i=1}^n,\{\epsilon_i\}_{i=1}^n)\right|\\
            \leq &\max_{0\leq k\leq N}\sup_{2^{k-1}C_n\leq \|\Delta\|_{\omega,2,1}\leq 2^kC_n}\left|\phi(\Delta,\{x_i\}_{i=1}^n,\{\epsilon_i\}_{i=1}^n)\right|\\
            &+CL\lambda^{\frac{1}{2}}_{\max}(\Sigma)\sqrt{\frac{\log J+m}{n}}.
        \end{split}
    \end{equation}
    Now note that
    \begin{equation*}
    \begin{split}
        N=&\log_2\frac{C_x^2R_0Km\max_j\omega_j}{L\lambda_{\max}^{\frac{1}{2}}(\Sigma)\min_j\omega_j^2}\sqrt{\frac{n}{\log J+m}}\\
        \leq&\frac{1}{2}\log_2 n+\log_2 m+\log_2K+\log_2\frac{C_x\max_j\omega_j}{\lambda_{\max}^{\frac{1}{2}}(\Sigma)}+C\\
        \leq&C\log n,
    \end{split}
    \end{equation*}
    where we applied the fact that $$R_0\leq\frac{\min_j\frac{n_j}{\pi_jn_u}}{4C_x(1+\max_j\frac{n_j}{\pi_jn_u})^2}\leq\frac{1}{16C_x},$$
    $L\geq 1$, $\min_j\omega_j\geq c$ on the second line, and applied $K,m\leq Cn$, $$C_x\max_j\omega_j\lambda_{\max}^{-\frac{1}{2}}(\Sigma)\leq n^C$$ on the third line. Take a union bound over $0\leq k\leq N$, which implies that
    \begin{equation}
        \sup_{\Delta\in \bbB_{1,\infty}(R_0)}\left|\phi(\Delta,\{x_i\}_{i=1}^n,\{\epsilon_i\}_{i=1}^n)\right|\leq CL\lambda^{\frac{1}{2}}_{\max}(\Sigma)\sqrt{\frac{\log J+m}{n}},
    \end{equation}
    with probability at least 
    \begin{equation}
        \begin{split}
            &1-C\log n\exp\{-cL^2C_x^{-2}\lambda_{\max}(\Sigma)\min_jr_j^{-1}(\log J+m)\}\\
            \geq&1-\exp\{C\log\log n-cL^2C_x^{-2}\lambda_{\max}(\Sigma)\min_jr_j^{-1}(\log J+m)\}\\
            \geq&1-\exp\{-cK^2\min_jr_j^{-1}(\log J+m)\},
        \end{split}
    \end{equation}
    where the last line is due to that 
    \begin{equation*}
    \begin{split}
        &L^2C_x^{-2}\lambda_{\max}(\Sigma)\min_jr_j^{-1}(\log J+m)\\
        \geq &cK^2\min_jr_j^{-1}(\log J+m)\\
        \geq&C\log\log n.
    \end{split}
    \end{equation*}
\end{enumerate}
\end{proof}
\begin{proof}[Proof of Lemma \ref{lem:MN_region_results}]
Recall that Assumption \ref{assump:mult_region_bound} requires $R_0\leq \frac{\min_j\frac{n_j}{\pi_jn_u}e^{-C_xR^*}}{4C_x(1+\max_j\frac{n_j}{\pi_jn_u})^2(1+1.1Ke^{C_xR^*})^3}$.
    We first prove that $h^{\MN}(R_0)>0$ under this constraint for $R_0$. Noting that $\frac{\min_j\frac{n_j}{\pi_jn_u}}{(1+\max_j\frac{n_j}{\pi_jn_u})^2}<1$, $e^{-C_xR^*}\leq 1$, $1+1.1Ke^{C_xR^*}\geq 3.2$, we know that $e^{C_xR_0}<e^{\frac{1}{4\times 3.2^3}}<1.01$. Therefore, 
\begin{equation*}
\begin{split}
    h^{\MN}(R_0) &= \frac{e^{-C_x(R_0+R^*)}\min_j\frac{n_j}{\pi_j n_u}}{(1+\max_j\frac{n_j}{\pi_j n_u})^2\left(1+Ke^{C_x(R_0+R^*)}\right)^3}-4C_xR_0\\
    &>\frac{e^{-C_xR^*}\min_j\frac{n_j}{\pi_j n_u}}{1.01(1+\max_j\frac{n_j}{\pi_j n_u})^2\left(1+1.01Ke^{C_xR^*}\right)^3}-4C_xR_0\\
    &\geq \frac{\min_j\frac{n_j}{\pi_j n_u}e^{-C_xR^*}}{(1+\max_j\frac{n_j}{\pi_j n_u})^2\left(1+1.1Ke^{C_xR^*}\right)^3}-4C_xR_0\\
    &\geq 0,
\end{split}
\end{equation*}
where the last line is due to Assumption \ref{assump:mult_region_bound}, the penultimate line is due to the fact that
\begin{equation*}
    \frac{1+1.1Ke^{C_xR^*}}{1+1.01Ke^{C_xR^*}}\geq 1+ \frac{0.09Ke^{C_xR^*}}{1+1.01Ke^{C_xR^*}}\geq 1+ \frac{0.09Ke^{C_xR^*}}{1.51Ke^{C_xR^*}}=1.059> (1.01)^{1/3},
\end{equation*}
We have also applied the fact that $K\geq 2$ in the inequality above. 

While for the second claim in Lemma \ref{lem:MN_region_results}, we note that our previous argument suggests $C_xR_0 < c$ for a constant $c>0$. In addition, by the definition of $h^{\MN}(R_0)$, it is straightforward to see that $0<h^{\MN}(R_0)<1$. Furthermore, Assumptions \ref{assump:subgauss_row} and \ref{assump:scaling} suggest that $C\lambda_{\min}^{\frac{1}{2}}(\Sigma)\leq C_x$ and $\max_j\omega_j\leq C\sqrt{\frac{n}{\log J+m}}$, which then implies 
    $$R_0\leq \frac{c\lambda_{\max}^{\frac{1}{2}}(\Sigma)(K+C_xR^*\sqrt{K})}{h^{\MN}(R_0)\lambda_{\min}(\Sigma)\max_j\omega_jK}\sqrt{\frac{n}{\log J+m}}.$$ Therefore,
    \begin{equation}
        \|\widehat{\Delta}\|_{\omega,2,1}\leq K\max_j\omega_j R_0\leq\frac{\tau_2}{\tau_1}\sqrt{\frac{n}{\log J+m}},
    \end{equation}
    and thus $\tau_1\frac{\log J+m}{n}\|\widehat{\Delta}\|_{\omega,2,1}^2\leq \tau_2\sqrt{\frac{\log J+m}{n}}\|\widehat{\Delta}\|_{\omega,2,1}$. 
\end{proof}
\section{Proof of Theorem~\ref{thm:ordinal_upp}}\label{sec:proof_theorem_ordinal}
Lemma~\ref{lem:ordinal_rsc} shown in Section~\ref{sec:main_proofs} and the following lemma are the key building blocks for the proof of Theorem~\ref{thm:ordinal_upp}.
\begin{lemma}[Deviation Bound under the Ordinal-PU Model]\label{lem:ordinal_dev_bnd}
If the data set $\{(x_i,z_i)\}_{i=1}^n$ is generated from the ordinal-PU model under the case-control setting, then
\begin{equation}
\begin{split}
    \|(\nabla\cL_n^{\ON}(\Theta^*))_{1:p}\|_{\omega^{-1},2,\infty}\leq& Ch^{\ON}(0,0)\lambda_{\max}^{\frac{1}{2}}(\Sigma)\sqrt{\frac{\log J+m}{n}},\\
    \|(\nabla\cL_n^{\ON}(\theta^*))_{(p+1):(p+K)}\|_{\infty}\leq&Ch^{\ON}(0,0)\sqrt{\frac{\log J+m+\log K}{n}},
\end{split}
\end{equation}
with probability at least $1-3\exp\{-(\log J+m)\}$. Here $h^{\ON}(\cdot,\cdot)$ is as defined in~\eqref{eq:ordinal_h_def}.
\end{lemma}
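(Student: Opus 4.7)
}
The plan is to mimic the structure of the deviation bound argument for Lemma~\ref{lem:multinomial_dev_bnd}, but to isolate the extra multiplicative factor that arises from the ordinal link. First I would apply the chain rule to the loss $\cL_n^{\ON}(\theta)=\frac{1}{n}\sum_i[A(f^{\ON}(u(x_i,\theta)))-\delta_i^\top f^{\ON}(u(x_i,\theta))]$ at $\theta=\theta^*$. Writing $\epsilon_i=\delta_i-\nabla A(f^{\ON}(u_i^*))$, which is mean-zero given $x_i$ by the exponential-family identity, and using the sparsity of $\nabla_\theta u(x_i,\theta^*)$ (only the first row of $u$ depends on $\theta_{1:p}$; each $\theta_{p+j}$ enters exactly one coordinate of $u$), the gradient decomposes cleanly as
\begin{equation*}
(\nabla\cL_n^{\ON}(\theta^*))_{1:p}=-\frac{1}{n}\sum_{i=1}^n \xi_i\, x_i,\qquad
(\nabla\cL_n^{\ON}(\theta^*))_{p+j}=-\frac{1}{n}\sum_{i=1}^n \widetilde\xi_{i,j},\quad 1\le j\le K,
\end{equation*}
where $\xi_i=[\nabla f^{\ON}(u_i^*)]_{:,1}^\top\epsilon_i$ and $\widetilde\xi_{i,j}$ is an analogous scalar built from the $j$-th relevant column of $\nabla f^{\ON}(u_i^*)$. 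Both are mean-zero given $\{x_i\}$ and bounded since $\|\epsilon_i\|_\infty\le 1$.

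Second, I would establish the deterministic upper bound $\max_i|\xi_i|,\max_{i,j}|\widetilde\xi_{i,j}|\le C\,h^{\ON}(0,0)$. This is the ordinal analogue of the boundedness step in the multinomial proof, but here it requires an explicit estimate on the entries of $\nabla f^{\ON}(u)$ at $u=u(x_i,\theta^*)$. Using the closed-form expression \eqref{eq:ordinal_f_def} for $f^{\ON}$ and the fact that $\|\theta^*_{1:p}\|_1\le R^*$, $|x_{ij}|\le C_x$, and the partial sums $\sum_{l\le j}\theta^*_{p+l}$ are lower bounded by $r^*$-differences, each partial derivative of $f^{\ON}_j$ with respect to the cumulative-logit arguments is bounded in absolute value by a quantity of the form $\tfrac{(1+e^{(C_x+1)R^*})^2}{r^* e^{(C_x+1)R^*}}\vee(1+e^{(C_x+1)R^*})$, i.e.\ by $h^{\ON}(0,0)$. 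This step is a stripped-down version of the Jacobian analysis underpinning Lemma~\ref{lem:ordinal_rsc_term1_2_brief}: no Lipschitz estimate is needed, only an $\ell_\infty$ bound evaluated at a single point $\theta^*$.

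Third, with $\xi_i$ bounded by $Ch^{\ON}(0,0)$ and mean zero, the concentration argument is essentially the one from the proof of Lemma~\ref{lem:multinomial_dev_bnd}. For each group $\cG_j=\cR_j\times\cC_j$ of the regression penalty, I would bound $\mathbb{E}\|X_{\cR_j}^\top\xi\|_2 \le Ch^{\ON}(0,0)\sqrt{r_j}\,\|X_{\cR_j}\|$ by a second-moment calculation, then use Talagrand's contraction inequality (Theorem~5.2.16 of \cite{vershynin2018high}) to obtain a sub-Gaussian tail of parameter $Ch^{\ON}(0,0)\|X_{\cR_j}\|$, and finally invoke the sub-Gaussian matrix spectral-norm bound (Theorem~5.39 of \cite{vershynin2010introduction}) together with Assumption~\ref{assump:subgauss_row} to conclude $\|X_{\cR_j}\|\le C\lambda_{\max}^{1/2}(\Sigma)\sqrt n$. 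Choosing the deviation parameter $t_j=\sqrt{(m+\log J)/n}$ and taking a union bound over the $J$ groups (absorbing a $\log c_j$ factor using Assumption~\ref{assump:scaling}) yields the first claimed bound. For the offset part, I would apply a Hoeffding (or sub-Gaussian concentration) bound directly to $\tfrac1n\sum_i\widetilde\xi_{i,j}$ for each fixed $j$ and union-bound over $j\in[K]$, which contributes the extra $\log K$ term.

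\paragraph{Main obstacle.} The nontrivial part is the deterministic bound in the second step: unlike the multinomial case where $\nabla f$ has a simple rank-one-plus-diagonal structure, $\nabla f^{\ON}$ depends on differences of reciprocal cumulative-logit expressions, and controlling its entries uniformly in $i$ requires carefully exploiting the ordinal gap $r^*$ and the magnitude constraint $R^*$ to produce the specific form $h^{\ON}(0,0)$. Once this bound is in hand, the remainder of the argument follows the template of Lemma~\ref{lem:multinomial_dev_bnd} with only cosmetic changes.
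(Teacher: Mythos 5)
Your proposal is correct and follows essentially the same route as the paper's proof: the same decomposition of the gradient into $-\frac{1}{n}\sum_i x_i\,\widetilde{\epsilon}_{i1}$ for the regression block and $\frac{1}{n}\sum_i\widetilde{\epsilon}_{ij}$ for the offsets with $\widetilde{\epsilon}_{ik}=(\nabla f^{\ON}(u_i^*))_{:,k}^\top\epsilon_i$, the same deterministic bound $|\widetilde{\epsilon}_{ik}|\leq \frac{1}{2}\max_j p_j(u_i^*)^{-1}\leq \frac{1}{2}h^{\ON}(0,0)$ via the explicit form of $\nabla f^{\ON}$ and the bounds on $p_j,\alpha_j$ (Lemma~\ref{lem:ordinal_p_alpha_f_bnd}), and the same concentration chain (second-moment bound, Talagrand contraction, sub-Gaussian spectral norm control of $\|X_{\cG_j}\|$, union bounds, with Hoeffding plus a union over $k\in[K]$ for the offset part). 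You also correctly identified the one genuinely new ingredient relative to Lemma~\ref{lem:multinomial_dev_bnd}, namely the $\ell_\infty$ control of $\nabla f^{\ON}(u_i^*)$ in terms of $h^{\ON}(0,0)$.
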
 
While for Lemma \ref{lem:ordinal_rsc}, 
the specific forms of constants $\alpha$ and $L$ are as follows:
 $\alpha=\frac{1}{2}\gamma\min\{\lambda_{\min}(\Sigma),1\}$,  where $$\gamma=\min_j\frac{n_j}{\pi_jn_u}\frac{e^{2(C_x+1)(R^*+R_0)}}{32h^{\ON}(R_0,r_0)(1+e^{(C_x+1)(R^*+R_0)})^4K},$$
 and $L=(\frac{4}{3}(C_x+1)R_0+1)\sqrt{K}h^{\ON}(R_0,r_0)+\frac{(C_x+1)R_0}{4}\sqrt{K}(h^{\ON}(R_0,r_0))^2$.

Now we are ready to prove Theorem~\ref{thm:ordinal_upp} based on Lemma~\ref{lem:ordinal_dev_bnd} and Lemma~\ref{lem:ordinal_rsc}.
\begin{proof}[Proof of Theorem \ref{thm:ordinal_upp}]
    Similarly from the proof of Theorem~\ref{thm:mult_upp}, one can show that 
    \begin{equation}
        \begin{split}
            \langle\nabla\cL_n^{\ON}(\widehat{\theta})+\lambda\nabla \|\widehat{\theta}_{1:p}\|_{\omega,2,1},\theta^*-\widehat{\theta}\rangle\geq& 0,\\
            \langle\nabla\cL_n^{\ON}(\widehat{\theta})-\nabla\cL_n^{\ON}(\theta^*),\widehat{\theta}-\theta^*\rangle\leq&\langle -\nabla \cL_n^{\ON}(\theta^*)-\lambda\nabla \|\widehat{\theta}_{1:p}\|_{\omega,2,1},\widehat{\theta}-\theta^*\rangle.
        \end{split}
    \end{equation}
    where $\nabla \|\theta_{1:p}\|_{\omega,2,1}$ is any sub-gradient of $\|\theta_{1:p}\|_{\omega,2,1}$ as a function of $\theta$. 
    Let $\widehat{\Delta}=\widehat{\theta}-\theta^*$, and $\cG_S=\cup_{j\in S}\cG_j$ where $S$ is defined earlier as $S=\{j:\theta^*_{\cG_j}\neq 0\}$. By Lemma \ref{lem:ordinal_dev_bnd}, Lemma \ref{lem:ordinal_rsc},
    \begin{equation}
        \begin{split}
            &\alpha \|\widehat{\Delta}\|_2^2-C\alpha\frac{\log J+m}{n}\|\widehat{\Delta}_{1:p}\|_{\omega,2,1}^2\\
            &-CL\left(\lambda_{\max}^{\frac{1}{2}}(\Sigma)\sqrt{\frac{\log J+m}{n}}\|\widehat{\Delta}_{1:p}\|_{\omega,2,1}+\sqrt{\frac{K(\log J+\log(2K))}{n}}\|\widehat{\Delta}\|_2\right)\\
            \leq& Ch^{\ON}(0,0)\lambda_{\max}^{\frac{1}{2}}(\Sigma)\sqrt{\frac{\log J+m}{n}}\|\widehat{\Delta}_{1:p}\|_{\omega,2,1}\\
            &+Ch^{\ON}(0,0)\sqrt{\frac{K(\log J+m+\log K)}{n}}\|\widehat{\Delta}\|_2+\lambda\langle \nabla \|\widehat{\theta}_{1:p}\|_{\omega,2,1},\theta^*-\widehat{\theta}\rangle.
        \end{split}
    \end{equation}
    By the definition of $L$, $L\geq h^{\ON}(R_0,r_0)\sqrt{K}\geq h^{\ON}(0,0)$ and thus the inequality above can be transformed to 
    \begin{equation*}
    \begin{split}
        \alpha \|\widehat{\Delta}\|_2^2\leq& C\alpha\frac{\log J+m}{n}\|\widehat{\Delta}_{1:p}\|_{\omega,2,1}^2+CL\lambda_{\max}^{\frac{1}{2}}(\Sigma)\sqrt{\frac{\log J+m}{n}}\|\widehat{\Delta}_{1:p}\|_{\omega,2,1}\\
            &+CL\sqrt{\frac{K(\log J+m+\log K)}{n}}\|\widehat{\Delta}\|_2+\lambda\langle \nabla \|\widehat{\theta}_{1:p}\|_{\omega,2,1},\theta^*-\widehat{\theta}\rangle.
    \end{split}
    \end{equation*}
    Meanwhile, by the definition of $\alpha, L$, Assumption~\ref{assump:scaling} and the fact that $h^{\ON}(R_0,r_0)\geq 1$, one can show that $\alpha\leq C$, $L\geq 1$, $\max_j\omega_j\leq \sqrt{\frac{n}{\log J +m}}$. In addition, by Assumptions~\ref{assump:subgauss_row} and \ref{assump:scaling}, one has $\lambda_{\max}(\Sigma)\leq C\sigma^2\leq C\min\{\lambda_{\min}(\Sigma),1\}$, which implies $0<C_1\leq \lambda_{\min}(\Sigma)\leq \lambda_{\max}(\Sigma)\leq C_2$. Assumption \ref{assump:ordinal_region_bound} also implies $R_0\leq C$ for a constant $C>0$. Hence we know that
    \begin{equation*}
    \begin{split}
        \|\widehat{\Delta}_{1:p}\|_{\omega,2,1}\leq &\max_j\omega_j\|\widehat{\Delta}_{\cG_j}\|_1\\
        \leq &\max_j\omega_j R_0\\
        \leq &\frac{C}{\alpha}L\lambda_{\max}^{\frac{1}{2}}(\Sigma)\sqrt{\frac{n}{\log J+m}},
    \end{split}
    \end{equation*}
    which further implies $\alpha\frac{\log J+m}{n}\|\widehat{\Delta}_{1:p}\|_{\omega,2,1}^2\leq CL\lambda_{\max}^{\frac{1}{2}}(\Sigma)\sqrt{\frac{\log J+m}{n}}\|\widehat{\Delta}_{1:p}\|_{\omega,2,1}$. Moreover, \begin{equation*}
    \begin{split}
        \langle \nabla \|\widehat{\theta}_{1:p}\|_{\omega,2,1},\theta^*-\widehat{\theta}\rangle\leq& \|\theta^*_{1:p}\|_{\omega,2,1}-\|\widehat{\theta}_{1:p}\|_{\omega,2,1}\\
        \leq&\|\widehat{\Delta}_{\cG_S}\|_{\omega_S,2,1}-\|\widehat{\Delta}_{\cG_S^c}\|_{\omega_{S^c},2,1}.
    \end{split}
    \end{equation*}
    Therefore, one can show that
    \begin{equation*}
        \begin{split}
            \alpha\|\widehat{\Delta}\|_2^2\leq& \frac{3\lambda}{2}\|\widehat{\Delta}_{\cG_S}\|_{\omega_S,2,1}+CL\sqrt{\frac{K(\log J+m+\log K)}{n}}\|\widehat{\Delta}\|_2\\
            \leq&\left(\frac{3\lambda\|\omega_S\|_2}{2}+CL\sqrt{\frac{K(\log J+m+\log K)}{n}}\right)\|\widehat{\Delta}\|_2,\\
           \|\widehat{\Delta}\|_2 \leq&\frac{3\|\omega_S\|_2\lambda}{\gamma\min\{\lambda_{\min}(\Sigma),1\}}+\frac{CL}{\gamma\min\{\lambda_{\min}(\Sigma),1\}}\sqrt{\frac{K(\log J+m+\log K)}{n}},
        \end{split}
    \end{equation*}
    with probability at least $1-\exp\{-\frac{cK\log J}{C_x^2m+1}\}-3\exp\{-(\log J+m)\}$. Let $\gamma_0=\max_j\frac{\pi_jn_u}{n_j}K\gamma$ and $L_0=\frac{L}{\sqrt{K}}$, we obtain the final results.
\end{proof}
\subsection{Proofs of Lemma~\ref{lem:ordinal_dev_bnd} and Lemma~\ref{lem:ordinal_rsc}}\label{sec:proof_ordinal_lem}
Recall the definitions of functions $f^{\ON}:\bbR^K\rightarrow \bbR^K$ and $u:\bbR^p\times \bbR^{p+K}\rightarrow \bbR^K$, first defined in Section~\ref{sec:ordinal_rsc_proof}:
\begin{equation*}
   (f^{\ON}(u))_j=\begin{cases}\log\frac{n_j}{\pi_jn_u}+\log\left[(1+e^{\sum_{l=1}^{j+1}u_{l}})^{-1}-(1+e^{\sum_{l=1}^ju_l})^{-1}\right],&1\leq j<K,\\
    \log\frac{n_K}{\pi_Kn_u}+\log\left[1-(1+e^{\sum_{l=1}^Ku_l})^{-1}\right],&j=K,
\end{cases}
\end{equation*}
$$u(x_i,\theta)=(x_i^\top\theta_{1:p}-\theta_{p+1},-\theta_{p+2},\dots,-\theta_{p+K})^\top.$$
The log-likelihood loss $\cL_n^{\ON}(\theta)$ can be written as follows:
\begin{equation*}
    \cL_n^{\ON}(\theta)=\frac{1}{n}\sum_{i=1}^n\log\left(1+\sum_{k=1}^Ke^{f^{\ON}(u(x_i,\theta))}\right)-\sum_{k=1}^K\ind{z_i=k}f^{\ON}(u(x_i,\theta)).
\end{equation*}

Also recall the functions $p:\bbR^K\rightarrow \bbR^K$ and $\alpha:\bbR^K\rightarrow \bbR^{K+1}$ first defined in Section~\ref{sec:ordinal_rsc_proof_lemmas}:
\begin{equation*}
    \alpha_j(u)=\begin{cases}e^{\sum_{l=1}^ju_{l}}\left(1+e^{\sum_{l=1}^{j}u_{l}}\right)^{-2},&j\leq K,\\
    0,&j=K+1,
    \end{cases}
\end{equation*}
and
\begin{equation*}
    p_j(u)=\begin{cases}
    (1+e^{\sum_{l=1}^{j+1}u_{l}})^{-1}-(1+e^{\sum_{l=1}^ju_{l}})^{-1},&1\leq j<K,\\
    1-(1+e^{\sum_{l=1}^ju_{l}})^{-1},&j=K.
    \end{cases}
\end{equation*}
These two functions determine the derivatives of function $f^{\ON}$ and thus influences both our deviation bound and lower bound on the curvature. Lemma~\ref{lem:ordinal_p_alpha_f_bnd} shows the ranges of $p(\cdot),\alpha(\cdot)$ and derivatives of $f^{\ON}(\cdot)$, and will be useful in our proofs of Lemma~\ref{lem:ordinal_dev_bnd} and Lemma~\ref{lem:ordinal_rsc}. 

\begin{proof}[Proof of Lemma \ref{lem:ordinal_dev_bnd}]
First note that,
\begin{equation}
\begin{split}
    \nabla \cL_n^{\ON}(\theta^*)=&-\frac{1}{n}\sum_{i=1}^n \sum_{k'=1}^K (\delta_{ik'}-\bbE(\delta_{ik'}|x_i))\frac{\partial f_{k'}(u(x_i,\theta^*))}{\partial \theta^*}\\
    =&-\frac{1}{n}\sum_{i=1}^n \begin{pmatrix}
    x_i & 0 & \dots & 0\\
    -1 & 0&\dots&0\\
    0&-1&\dots&0\\
    0&0&\ddots&0\\
    0&0&\dots&-1
    \end{pmatrix}(\nabla f^{\ON}(u_i^*))^\top \epsilon_i,
\end{split}
\end{equation}
where $\delta_{ik}=\ind{z_i=k}$, $\epsilon_{ik'}=\delta_{ik'}-\bbE(\delta_{ik'}|x_i)$, $\epsilon_i=(\epsilon_{i1},\dots,\epsilon_{iK})^\top$, $u_i^*=u(x_i,\theta^*)\in\bbR^K$.
As shown in Section~\ref{sec:ordinal_rsc_proof_lemmas}, for any $1\leq j,k\leq K$,
\begin{equation*}
    (\nabla f^{\ON}(u_i^*))_{jk}=p_j(u_i^*)^{-1}\left[\alpha_j(u_i^*)\ind{k\leq j}-\alpha_{j+1}(u_i^*)\ind{k\leq j+1}\right],
\end{equation*}
where function $p(\cdot)$ and $\alpha(\cdot)$ are defined in \eqref{eq:ordinal_p_def} and \eqref{eq:ordinal_alpha_def}, and $p_j(\cdot),\alpha_j(\cdot)$ refer to the $j$th coordinate of functions $p(\cdot)$ and $\alpha(\cdot)$.
Let $\widetilde{\epsilon}\in \bbR^{K}$, where $\widetilde{\epsilon}_{ik}=(\nabla f^{\ON}(u_i^*)_{:,k})^{\top}\epsilon_i$, $1\leq k\leq K$. Then one can show that
\begin{equation}
\begin{split}
    \left|\widetilde{\epsilon}_{i1}\right|\leq& \left|\sum_{j=1}^Kp_j(u_i^*)^{-1}[
    \alpha_j(u_i^*)-\alpha_{j+1}(u_i^*)]\delta_{ij}\right|\\
    &+\left|\sum_{j=1}^Kp_j(u_i^*)^{-1}[
    \alpha_j(u_i^*)-\alpha_{j+1}(u_i^*)]\bbE(\delta_{ij}|x_i)\right|\\
    \leq &2\max_j\left|p_j(u_i^*)^{-1}[
    \alpha_j(u_i^*)-\alpha_{j+1}(u_i^*)]\right|\\
    \leq&\frac{1}{2}\max_jp_j(u_i^*)^{-1},
\end{split}
\end{equation}
where the last line is due to that $0<\alpha_j(u_i^*)<\frac{1}{4}$ as shown in Lemma~\ref{lem:ordinal_p_alpha_f_bnd}.
While for $2\leq k\leq K$,
\begin{equation}
\begin{split}
    \left|\widetilde{\epsilon}_{ik}\right|\leq& \left|\sum_{j=k}^Kp_{j}(u_i^*)^{-1}(\alpha_j(u_i^*)-\alpha_{j+1}(u_i^*))\delta_{ij}-p_{k-1}(u_i^*)^{-1}\alpha_k(u_i^*)\delta_{i,k-1}\right|\\
    &+\left|\sum_{j=k}^Kp_{j}(u_i^*)^{-1}(\alpha_j(u_i^*)-\alpha_{j+1}(u_i^*))\bbE\delta_{ij}-p_{k-1}(u_i^*)^{-1}\alpha_k(u_i^*)\bbE\delta_{i,k-1}\right|\\
    \leq &\frac{1}{2}\max_jp_j(u_i^*)^{-1}.
\end{split}
\end{equation}
Hence by Lemma~\ref{lem:ordinal_p_alpha_f_bnd}, $\|\widetilde{\epsilon}_i\|_{\infty}\leq \frac{1}{2}h^{\ON}(0,0)$. 
Therefore, 
\begin{equation*}
    \|(\nabla\cL_n^{\ON}(\theta^*))_{(p+1):(p+K)}\|_{\infty}=\max_k\left|\frac{1}{n}\sum_{i=1}^n\widetilde{\epsilon}_{ik}\right|\leq h^{\ON}(0,0)\sqrt{\frac{\log J+m+\log K}{2n}},
\end{equation*} 
with probability at least $1-K\exp\{-(\log J+m+\log K)\}\geq1-\exp\{-(\log J+m)\}$. While for bounding $\|(\nabla\cL_n^{\ON}(\theta^*))_{1:p}\|_{\omega,2,1}$, one can show that for any $1\leq j\leq J$, 
\begin{equation*}
    \left\|\left(\nabla \cL_n^{\ON}(\theta^*)\right)_{\cG_j}\right\|_2=\left\|\frac{1}{n}X_{\cG_j}^\top \widetilde{\epsilon}_{:,1}\right\|_2.
\end{equation*}
Here we use similar arguments from the proof of Lemma \ref{lem:multinomial_dev_bnd} for bounding $\left\|X_{\cG_j}^\top \widetilde{\epsilon}_{:,1}\right\|_2$. First note that 
\begin{equation}
    \begin{split}
        \bbE(\|X^\top_{\cG_j}\widetilde{\epsilon}_{:,1}\|_2|X)\leq&\left(\bbE(\|X^\top_{\cG_j}\widetilde{\epsilon}_{:,1}\|_2^2|X)\right)^{\frac{1}{2}}\\
        =&\left(\sum_{i=1}^n\bbE(\widetilde{\epsilon}^2_{i,1}|X)\left(X_{\cG_j}X^\top_{\cG_j}\right)_{ii}\right)^{\frac{1}{2}}\\
        \leq &\frac{1}{2}h^{\ON}(0,0)\sqrt{\tr(X_{\cG_j}X^\top_{\cG_j})}\\
        \leq&\frac{1}{2}h^{\ON}(0,0)\sqrt{m}\|X_{\cG_j}\|.
    \end{split}
\end{equation}
Let $g(u)=\|X^\top_{\cG_j}u\|_2$ be a function from $\bbR^n$ to $\bbR$, then $g$ is convex and $\|X_{\cG_j}\|$-Lipschitz. Applying Talagrand's contraction inequality (Theorem 5.2.16 in \cite{vershynin2018high}) shows us that
\begin{equation}\label{eq:dev-bnd-term-expt-ordinal}
    \left\|\|X^\top_{\cG_j} \widetilde{\epsilon}_{:,1}\|_2-\bbE(\|X^\top_{\cG_j}\widetilde{\epsilon}_{:,1}\|_2|X)\right\|_{\psi_2}\leq Ch^{\ON}(0,0)\|X_{\cG_j}\|,
\end{equation}
which further implies that
\begin{equation}\label{eq:dev-bnd-term-ordinal}
    \bbP(\|X^\top_{\cG_j}\widetilde{\epsilon}_{:,1}\|_2>2h^{\ON}(0,0)\|X_{\cG_j}\|(\sqrt{m}+C\sqrt{n}t)\leq \exp\{-nt^2\},
\end{equation}
for any $t>0$. Following from similar argument in the proof of Lemma \ref{lem:multinomial_dev_bnd}, we have the following:
\begin{equation}\label{eq:X_norm_bnd-ordinal}
    \|X_{\cG_j}\|\leq C\lambda^{\frac{1}{2}}_{\max}(\Sigma)\sqrt{n},
\end{equation}
holds for $1\leq j\leq J$ with probability at least $1-2J\exp\{-cn\}\geq 1-\exp\{-cn\}$, if $C>0$ is chosen appropriately in \eqref{eq:X_norm_bnd-ordinal}. 
Now let $t=\sqrt{\frac{2\log J+m}{n}}$ in \eqref{eq:dev-bnd-term-ordinal}, then we have
\begin{equation}
\begin{split}
    \|(\nabla\cL_n^{\ON}(\theta))_{1:p}\|_{\omega^{-1},2,\infty}\leq& \max_{j}\omega_j^{-1}\left\|\frac{1}{n}X_{\cG_j}^\top \widetilde{\epsilon}_{:,1}\right\|_2\\
    \leq &Ch^{\ON}(0,0)\lambda_{\max}^{\frac{1}{2}}(\Sigma)\sqrt{\frac{\log J+m}{n}},
\end{split}
\end{equation}
with probability at least $1-\exp\{-(\log J+m)\}-\exp\{-cn\}$.
\end{proof}
\begin{proof}[Proof of Lemma~\ref{lem:ordinal_rsc}]
By the definition of $\cL_n^{\ON}(\theta)$, one can show that
\begin{equation}
\begin{split}
    &\nabla \cL_n^{\ON}(\theta)\\
    =&\frac{1}{n}\sum_{i=1}^n\begin{pmatrix}x_i&0&\cdots&0\\
    -1&0&\cdots&0\\
    0&-1&\cdots&0\\
    0&0&\ddots&0\\
    0&0&\cdots&-1
    \end{pmatrix}\left(\nabla f^{\ON}(u(x_i,\theta))\right)^\top\left(\nabla A(f^{\ON}(u(x_i,\theta)))-\delta_i\right),
\end{split}
\end{equation}
where $\delta_i\in\bbR^K$ satisfies $\delta_{ik}=\ind{z_i=k}$.
Let $\Delta=\theta-\theta^*$, then we have
\begin{equation}
\begin{split}
    &\langle\nabla \cL_n^{\ON}(\theta)-\nabla \cL_n^{\ON}(\theta^*),\theta-\theta^*\rangle\\
    =&\frac{1}{n}\sum_{i=1}^n\left(\nabla A(f^{\ON}(u(x_i,\theta)))-\nabla A(f^{\ON}(u(x_i,\theta^*)))\right)^\top \nabla f^{\ON}(u(x_i,\theta))\begin{pmatrix}x_i^\top \Delta_{1:p}-\Delta_{p+1}\\
    -\Delta_{p+2}\\
    \vdots\\
    -\Delta_{p+K}\end{pmatrix}\\
    &-\frac{1}{n}\sum_{i=1}^n\epsilon_i^\top\left(\nabla f^{\ON}(u(x_i,\theta))-\nabla f^{\ON}(u(x_i,\theta^*))\right)\begin{pmatrix}x_i^\top \Delta_{1:p}-\Delta_{p+1}\\
    -\Delta_{p+2}\\
    \vdots\\
    -\Delta_{p+K}\end{pmatrix}\\
    =:&\mathrm{\RNum{1}}(\Delta)+\mathrm{\RNum{2}}(\Delta),
\end{split}
\end{equation}
where $\epsilon_i=\delta_i-\nabla A(f^{\ON}(u(x_i,\theta^*)))$. In the following we will show a lower bound for $\mathrm{\RNum{1}}(\Delta)$ and concentrate $\mathrm{\RNum{2}}(\Delta)$ uniformly over the feasible set.
\begin{enumerate}
    \item Lower bounding $\mathrm{\RNum{1}}(\Delta)$\\
    \begin{lemma}\label{lem:ordinal_rsc_term1}
For any $\Delta\in S(R_0,r_0)$,
\begin{equation}
\begin{split}
        \mathrm{\RNum{1}}(\Delta)\geq \frac{\gamma}{n}\sum_{i=1}^n(x_i^\top \Delta_{1:p}-\Delta_{p+1})^2+\gamma\|\Delta_{(p+2):(p+K)}\|_2^2,
\end{split}
\end{equation}
where $\gamma=\min_j\frac{n_j}{\pi_jn_u}\frac{e^{2(C_x+1)(R^*+R_0)}}{32h^{\ON}(R_0,r_0)(1+e^{(C_x+1)(R^*+R_0)})^{4}K}$.
\end{lemma}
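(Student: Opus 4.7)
The plan is to follow the general strategy outlined in Section~\ref{sec:ordinal_rsc_proof}. Setting $u_i = u(x_i,\theta)$ and $u_i^* = u(x_i,\theta^*)$, I first observe the key identity
$$
u_i - u_i^* = (x_i^\top \Delta_{1:p} - \Delta_{p+1},\; -\Delta_{p+2},\; \ldots,\; -\Delta_{p+K})^\top,
$$
so the RHS of the target inequality equals exactly $\frac{\gamma}{n}\sum_{i=1}^n \|u_i - u_i^*\|_2^2$. Applying the mean value theorem to $\nabla A \circ f^{\ON}$ along the segment from $u_i^*$ to $u_i$, there exists $u_i^t = t u_i + (1-t)u_i^*$ with $t\in(0,1)$ such that $\nabla A(f^{\ON}(u_i)) - \nabla A(f^{\ON}(u_i^*)) = \nabla^2 A(f^{\ON}(u_i^t))\, \nabla f^{\ON}(u_i^t)(u_i - u_i^*)$. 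Substituting this into $\mathrm{\RNum{1}}(\Delta)$ yields
$$
\mathrm{\RNum{1}}(\Delta) = \frac{1}{n}\sum_{i=1}^n (u_i - u_i^*)^\top G_i (u_i - u_i^*), \qquad G_i := \nabla f^{\ON}(u_i^t)^\top \nabla^2 A(f^{\ON}(u_i^t))\, \nabla f^{\ON}(u_i).
$$

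Next I would split $G_i = G_i^{(1),\ON} + E_i$ where $G_i^{(1),\ON}$ is the PSD matrix in \eqref{eq:ON_G1} and $E_i = \nabla f^{\ON}(u_i^t)^\top \nabla^2 A(f^{\ON}(u_i^t))[\nabla f^{\ON}(u_i) - \nabla f^{\ON}(u_i^t)]$. Writing $G_i^{(1),\ON} = B_i^\top H_i B_i$ with $B_i = \nabla f^{\ON}(u_i^t)$ and $H_i = \nabla^2 A(f^{\ON}(u_i^t))$, the main piece satisfies $v^\top G_i^{(1),\ON} v \geq \lambda_{\min}(H_i)\, \lambda_{\min}(B_i^\top B_i)\, \|v\|_2^2$. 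Lemma~\ref{lem:ordinal_rsc_term1_1} applied at $u_i^t$ (which lies in $\theta^* + S(R_0,r_0)$ by convexity of the set) supplies the lower bound on $\lambda_{\min}(B_i^\top B_i)$. For $\lambda_{\min}(H_i)$ I would exploit the softmax structure $H_i = \mathrm{diag}(q_i) - q_i q_i^\top$ where $q_{ik} = e^{f^{\ON}(u_i^t)_k}/(1+\sum_\ell e^{f^{\ON}(u_i^t)_\ell})$: using Lemma~\ref{lem:ordinal_p_alpha_f_bnd} to bound the range of $f^{\ON}$ under $u_i^t \in \theta^* + S(R_0,r_0)$, combined with the identity $\lambda_{\min}(\mathrm{diag}(q_i) - q_i q_i^\top) \geq q_{i,0}\min_k q_{ik}$ (where $q_{i,0} = 1 - \sum_k q_{ik}$), I can extract a factor of order $\min_j \frac{n_j}{\pi_j n_u}/(1 + \max_j \frac{n_j}{\pi_j n_u})^2$. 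Multiplying the two eigenvalue bounds and tracking constants should produce a lower bound of roughly $2\gamma$ on $\lambda_{\min}(G_i^{(1),\ON})$.

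For the error term I would use the quadratic-form bound $|v^\top E_i v| \leq \|B_i\|_{\mathrm{op}}\, \|H_i\|_{\mathrm{op}}\, \|\nabla f^{\ON}(u_i) - \nabla f^{\ON}(u_i^t)\|_{\mathrm{op}}\, \|v\|_2^2$. Lemma~\ref{lem:ordinal_rsc_term1_2_brief} furnishes a Lipschitz bound on $\nabla f^{\ON}$ that is linear in $R_0$, while routine upper bounds on $\|B_i\|_{\mathrm{op}}$ and $\|H_i\|_{\mathrm{op}}$ are governed by $h^{\ON}(R_0,r_0)$ and the case-control ratios. The bound on $R_0$ in Assumption~\ref{assump:ordinal_region_bound} is calibrated precisely so that $\|E_i\|_{\mathrm{op}} \leq \gamma$, giving $(u_i - u_i^*)^\top G_i (u_i - u_i^*) \geq \gamma \|u_i - u_i^*\|_2^2$. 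Averaging over $i$ and splitting $\|u_i - u_i^*\|_2^2 = (x_i^\top \Delta_{1:p} - \Delta_{p+1})^2 + \|\Delta_{(p+2):(p+K)}\|_2^2$ then completes the proof.

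The main obstacle I expect is establishing the eigenvalue lower bound on $H_i$ with the correct dependence on $\{n_j/(\pi_j n_u)\}_{j=1}^K$. Since $H_i$ is evaluated at the nonlinearly transformed point $f^{\ON}(u_i^t)$ rather than at $u_i^t$ itself, the entries $q_{ik}$ depend on all coordinates of $u_i^t$ through ratios of exponentials of $f^{\ON}$, and Lemma~\ref{lem:ordinal_p_alpha_f_bnd} must be invoked carefully to keep $q_{i,0}$ and $\min_k q_{ik}$ uniformly bounded away from zero on $\theta^* + S(R_0, r_0)$. A secondary difficulty is the bookkeeping needed to verify that the specific constants in Assumption~\ref{assump:ordinal_region_bound} (the $512K^3$, $(r^* - r_0)^{-3}$, and $(1+e^{(C_x+1)(R^*+0.01)})^{-6}$ factors) suffice to absorb $\|E_i\|_{\mathrm{op}}$ into half of $\lambda_{\min}(G_i^{(1),\ON})$; this tracking is what drives the factor $32$ and the $K$ in the denominator of the stated $\gamma$.
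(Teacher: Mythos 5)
Your proposal is correct and follows essentially the same route as the paper's proof: the same mean-value expansion, the same split of $G_i$ into the PSD matrix $\nabla f^{\ON}(u_i^t)^\top \nabla^2 A(f^{\ON}(u_i^t))\nabla f^{\ON}(u_i^t)$ plus a Lipschitz error term, the same softmax eigenvalue bound $\lambda_{\min}(\nabla^2 A(\eta))\geq \min_k e^{\eta_k}/(1+\sum_\ell e^{\eta_\ell})^2$ combined with Lemma~\ref{lem:ordinal_p_alpha_f_bnd} and Lemma~\ref{lem:ordinal_rsc_term1_1}, and the same use of Assumption~\ref{assump:ordinal_region_bound} (via Lemma~\ref{lem:ordinal_region_consequence}) to absorb the error term into half of the curvature. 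The two difficulties you flag are exactly where the paper's bookkeeping lives, and your outline resolves them the same way.
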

By Lemma \ref{lem:ordinal_rsc_term1}, 
\begin{equation*}
\begin{split}
    \mathrm{\RNum{1}}(\Delta)\geq&\frac{\gamma}{n}\sum_{i=1}^n\left((x_i^\top\Delta_{1:p}-\Delta_{p+1})^2+\sum_{j=2}^{K}\Delta_{p+j}^2\right).
\end{split}
\end{equation*}
Let $\widetilde{\Delta}=\Delta_{1:(p+1)}$, $\widetilde{X}=\begin{pmatrix}x_1^\top&-1\\\vdots&\vdots\\x_n^\top&-1\end{pmatrix}$.
In the following we will show that
\begin{equation}\label{eq:ordinal_REC}
    \frac{\gamma}{n}\|\widetilde{X}\widetilde{\Delta}\|_2^2\geq \alpha\|\widetilde{\Delta}\|_2^2-C\alpha\frac{\log J+m}{n}\|\Delta_{1:p}\|_{\omega,2,1}^2,
\end{equation}
where $\alpha=\frac{1}{2}\gamma\min\{\lambda_{\min}(\Sigma),1\}$. 
Since the L.H.S. of \eqref{eq:ordinal_REC} is non-negative, \eqref{eq:ordinal_REC} trivially holds if $\|\widetilde{\Delta}_{1:p}\|_{\omega,2,1}\geq \sqrt{\rho}\|\widetilde{\Delta}\|_2$, where $\rho=\frac{n}{C(\log J+m)}$. Meanwhile, since 
\begin{equation*}
    \frac{1}{n}\bbE\|\widetilde{X}\widetilde{\Delta}\|_2^2=\widetilde{\Delta}^\top \begin{pmatrix}\Sigma &0\\
    0&1\end{pmatrix}\widetilde{\Delta}\geq \min\{\lambda_{\min}(\Sigma),1\}\|\widetilde{\Delta}\|_2^2,
\end{equation*}
we only have to prove that
\begin{equation}\label{eq:ordinal_REC_uniform_dev}
\begin{split}
    \sup_{\|\widetilde{\Delta}_{1:p}\|_{\omega,2,1}\leq \sqrt{\rho},\|\widetilde{\Delta}\|_2\leq 1}\frac{1}{n}\left|\|\widetilde{X}\widetilde{\Delta}\|_2^2-\bbE\|\widetilde{X}\widetilde{\Delta}\|_2^2\right|\leq\frac{\min\{\lambda_{\min}(\Sigma),1\}}{2}.
\end{split}
\end{equation}
Define function 
\begin{equation*}
    g(\widetilde{\Delta},\widetilde{\Delta})=\frac{1}{n}\|\widetilde{X}\widetilde{\Delta}\|_2^2-\frac{1}{n}\bbE\|\widetilde{X}\widetilde{\Delta}\|_2^2,
\end{equation*}
then following similar arguments from the proof of Lemma \ref{lem:RSC}, one can show that
\begin{equation*}
\begin{split}
    &\sup_{\|\widetilde{\Delta}_{1:p}\|_{\omega,2,1}\leq \sqrt{\rho},\|\widetilde{\Delta}\|_2\leq 1}\left|g(\widetilde{\Delta},\widetilde{\Delta})\right|\\
    \leq&9(1+2\max_j\omega_j^{-1})\sup_{S\subset [J],|S|=2\lfloor\rho\rfloor}\sup_{\widetilde{\Delta}\in \mathcal{N}_S}\left|g(\widetilde{\Delta},\widetilde{\Delta})\right|,
\end{split}
\end{equation*}
where $\mathcal{N}_S$ is a $\frac{1}{10}$-net of $\mathcal{K}(\cG_S)=\{\widetilde{\Delta}:\|\widetilde{\delta}\|_2=1,\widetilde{\Delta}_{\mathcal{G}_j}= 0\text{ if } j\notin S\}$, and $|\cN_S|\leq 21^{2\rho m+1}$. The following lemma concentrates $g(\widetilde{\Delta},\widetilde{\Delta})$ for each fixed $\widetilde{\Delta}$ w.h.p.
\begin{lemma}\label{lem:ordinal_rec_dev}
 For any $\widetilde{\Delta}\in \bbR^{p+1}$ such that $\|\widetilde{\Delta}\|_2=1$,
    \begin{equation}
            \bbP\left(\left|g(\widetilde{\Delta},\widetilde{\Delta})\right|\geq t\right)\leq 2\exp\left\{-cn\min\left\{\frac{t^2}{\sigma^4},\frac{t^2}{\sigma^2},\frac{t}{\sigma^2}\right\}\right\}.
        \end{equation}
    for any $t>0$.
\end{lemma}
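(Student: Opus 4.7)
The plan is to reduce the deviation of $g(\widetilde{\Delta},\widetilde{\Delta})$ to two classical tail bounds: a Bernstein inequality for a centered sub-exponential sum, and a Hoeffding inequality for a centered sub-Gaussian sum. Set $a=\widetilde{\Delta}_{1:p}\in\bbR^p$, $b=\widetilde{\Delta}_{p+1}\in\bbR$, and $Y_i=x_i^\top a$. Under $\|\widetilde{\Delta}\|_2=1$ we have $\|a\|_2\leq 1$ and $|b|\leq 1$, and Assumption \ref{assump:subgauss_row} makes each $Y_i$ a mean-zero sub-Gaussian random variable with $\|Y_i\|_{\psi_2}\leq C\sigma\|a\|_2\leq C\sigma$. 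Since $(x_i^\top a-b)^2=Y_i^2-2bY_i+b^2$, expanding and subtracting expectations (the deterministic $b^2$ cancels) produces the clean decomposition
\begin{equation*}
    g(\widetilde{\Delta},\widetilde{\Delta})=\underbrace{\frac{1}{n}\sum_{i=1}^n(Y_i^2-\bbE Y_i^2)}_{T_1}\;-\;\underbrace{\frac{2b}{n}\sum_{i=1}^n Y_i}_{T_2},
\end{equation*}
which isolates a quadratic (sub-exponential) part from a linear (sub-Gaussian) part.

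To control $T_1$, I would use that each $Y_i^2-\bbE Y_i^2$ is a centered sub-exponential with norm bounded by $2\|Y_i^2\|_{\psi_1}\leq C\|Y_i\|_{\psi_2}^2\leq C\sigma^2$. Bernstein's inequality for independent sub-exponentials (e.g.\ Theorem~2.8.1 in \cite{vershynin2018high}) then yields
\begin{equation*}
    \bbP\!\left(|T_1|\geq \tfrac{t}{2}\right)\leq 2\exp\left\{-cn\min\!\left(\tfrac{t^2}{\sigma^4},\,\tfrac{t}{\sigma^2}\right)\right\}.
\end{equation*}
For $T_2$, each summand $2bY_i$ is mean-zero sub-Gaussian with parameter at most $2|b|\sigma\leq 2\sigma$, so Hoeffding's inequality gives
\begin{equation*}
    \bbP\!\left(|T_2|\geq \tfrac{t}{2}\right)\leq 2\exp\!\left\{-\tfrac{cnt^2}{\sigma^2}\right\}.
\end{equation*}
A union bound combines the three exponents into $\min\{t^2/\sigma^4,\,t^2/\sigma^2,\,t/\sigma^2\}$, and the multiplicative factor of $4$ gets absorbed into the universal constant in front of the exponential (or reabsorbed to $2$ by adjusting $c$), delivering the lemma.

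I do not expect any genuine obstacle; the argument is a mechanical Hanson--Wright-style computation on the augmented design $\widetilde{X}$. The only point worth being careful about is that both the sub-Gaussian parameter of $Y_i$ and the scalar $|b|$ are uniformly controlled by the normalization $\|\widetilde{\Delta}\|_2=1$, so all three tail exponents depend on $\widetilde{\Delta}$ only through $\sigma$. Note that the boundedness part of Assumption \ref{assump:subgauss_row} (i.e.\ $|x_{ij}|\leq C_x$) and any group-sparsity structure on $\widetilde{\Delta}_{1:p}$ play no role here; those features enter only at the subsequent peeling/covering-net step in the proof of Lemma \ref{lem:ordinal_rsc}, where this per-point concentration is lifted to the uniform bound \eqref{eq:ordinal_REC_uniform_dev}.
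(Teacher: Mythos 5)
Your proposal is correct and follows essentially the same route as the paper: both decompose $g(\widetilde{\Delta},\widetilde{\Delta})$ into the centered quadratic term $\frac{1}{n}\sum_i\bigl((x_i^\top\widetilde{\Delta}_{1:p})^2-\bbE(x_i^\top\widetilde{\Delta}_{1:p})^2\bigr)$ handled by Bernstein's inequality for sub-exponential sums and the linear cross term $\frac{2\widetilde{\Delta}_{p+1}}{n}\sum_i x_i^\top\widetilde{\Delta}_{1:p}$ handled by a sub-Gaussian Hoeffding bound, then take a union bound. The only cosmetic difference is that the paper sharpens $|\widetilde{\Delta}_{p+1}|\,\|\widetilde{\Delta}_{1:p}\|_2\leq\frac{1}{2}$ via the normalization, while you use the cruder bound $\leq 1$; both are absorbed into the constant $c$.
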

We apply Lemma \ref{lem:ordinal_rec_dev} with $t=\frac{\min\{\lambda_{\min}(\Sigma),1\}}{18(1+2\max_j\omega_j^{-1})}$ and take a union bound over $\cN_S$ and $S\subset[J]$, then one can show that \eqref{eq:ordinal_REC_uniform_dev} holds with probability at least
\begin{equation*}
    \begin{split}
        &1-2(21)^{2\rho m+1}\binom{J}{2\lfloor \rho\rfloor}\exp\{-cn\}\\
        \geq &1-2\exp\{(2\rho m+1)\log(21)+3\rho\log(eJ/2\lfloor \rho\rfloor)]-cn\}\\
        \geq&1-2\exp\{(2\log (21)+3)\rho(m+\log J)+\log(21)-3\rho\log(2\lfloor \rho\rfloor)-cn\}\\
        \geq&1-2\exp\{-\frac{cn}{2}\}
    \end{split}
\end{equation*}
where the last line holds since $\rho=\frac{n}{C(\log J+m)}$ and thus as long as this constant $C$ is chosen to be larger than $\frac{4\log(21)+6}{c}$ and $\rho\geq 2$, it is guaranteed that $(2\log(21)+3)\rho(m+\log J)\leq \frac{c}{2}n$ and $\log(21)-3\rho\log(2\lfloor\rho\rfloor)$. Therefore, with probability at least $1-2\exp\{-cn\}$,
\begin{equation*}
\begin{split}
    \mathrm{\RNum{1}}(\Delta)\geq &\alpha\|\Delta\|_2^2-C\alpha\frac{\log J+m_1}{n}\|\Delta_{1:p}\|^2_{\omega,2,1}.
\end{split}
\end{equation*}
\item Concentrating $\mathrm{\RNum{2}}(\Delta)$\\
We follow similar arguments from the second part of the proof of Lemma~\ref{lem:RSC}. First we define \begin{equation*}
    \phi(\Delta,\{x_i\}_{i=1}^{n},\{\epsilon_i\}_{i=1}^n)=\frac{\mathrm{\RNum{2}}(\Delta)}{\tau(\Delta)},
\end{equation*} and 
\begin{equation*}
\begin{split}
    g_t^{(1)}((x_1,\epsilon_1),\dots,(x_n,\epsilon_n))=&\sup_{\substack{\frac{t}{2}\leq\tau(\Delta)\leq t\\ \Delta\in\cS(R_0,r_0)}}\phi(\Delta,\{x_i\}_{i=1}^n,\{\epsilon_i\}_{i=1}^n),\\
    g_t^{(2)}((x_1,\epsilon_1),\dots,(x_n,\epsilon_n))=&\sup_{\substack{\frac{t}{2}\leq\tau(\Delta)\leq t\\ \Delta\in\cS(R_0,r_0)}}-\phi(\Delta,\{x_i\}_{i=1}^n,\{\epsilon_i\}_{i=1}^n),
\end{split}
\end{equation*}
for any $t>0$. We will bound $g_t^{(j)}$ and then apply a peeling argument.
By Lemma~\ref{lem:symmetrization}, 
\begin{equation*}
    \bbE g_t^{(j)}((x_1,\epsilon_1),\dots,(x_n,\epsilon_n))\leq\frac{4}{nt}\bbE\sup_{\substack{\frac{t}{2}\leq \tau(\Delta)\leq t\\ \Delta\in\cS(R_0,r_0)}}\sum_{i=1}^nq_i(u_i(\Delta))\varepsilon_i,
\end{equation*}
where $q_i(v)=\epsilon_i^\top \left[\nabla f^{\ON}(u_i(\theta^*)+v)-\nabla f^{\ON}(u_i(\theta^*))\right]v$, and we used $u_i(\theta)$ to denote $u(x_i,\theta)$ for simplicity. The following lemma suggests $q_i(\cdot)$ to be $L$-Lipschitz within the region of our interest. 
\begin{lemma}\label{lem:ordinal_rsc_term2_lip}
 For any vector $\Delta\in\bbR^{p+K}$ such that $\Delta\in \cS(R_0,r_0)$, 
 $1\leq i\leq n$, 
 \begin{equation*}
 \begin{split}
     \left\|\nabla q_i(u_i(\Delta))\right\|_2\leq L,
 \end{split}
 \end{equation*}
 where $L=(\frac{4}{3}(C_x+1)R_0+1)\sqrt{K}h^{\ON}(R_0,r_0)+\frac{(C_x+1)R_0}{4}\sqrt{K}(h^{\ON}(R_0,r_0))^2$.
\end{lemma}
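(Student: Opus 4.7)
}
My plan is to differentiate $q_i(v)=\epsilon_i^\top[\nabla f^{\ON}(u_i(\theta^*)+v)-\nabla f^{\ON}(u_i(\theta^*))]v$ directly and split $\nabla q_i(v)$ into two naturally occurring pieces, then bound each by re-using tools already developed for the ordinal model. Writing out the coordinates of $q_i$ and applying the product rule yields
\begin{equation*}
    \nabla q_i(v) = \underbrace{\left[\nabla f^{\ON}(u_i(\theta^*)+v)-\nabla f^{\ON}(u_i(\theta^*))\right]^\top\!\epsilon_i}_{T_1(v)}\;+\;\underbrace{H_i(v)\,v}_{T_2(v)},
\end{equation*}
where $H_i(v)\in\bbR^{K\times K}$ has entries $(H_i(v))_{lk}=\sum_{j=1}^K\epsilon_{ij}\,\partial^2 f^{\ON}_j(u_i(\theta^*)+v)/\partial u_l\partial u_k$, i.e.\ the Hessian of $f^{\ON}$ contracted with $\epsilon_i$ in the output index.

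For the first term, I would observe that for $v=u_i(\Delta)$ with $\Delta\in S(R_0,r_0)$, $u_i(\theta^*)+v=u_i(\theta^*+\Delta)$, and directly invoke Lemma \ref{lem:ordinal_rsc_term1_2_brief} to get
$\|\nabla f^{\ON}(u_i(\theta^*+\Delta))-\nabla f^{\ON}(u_i(\theta^*))\|\leq \sqrt{5}\bigl[(h^{\ON}(R_0,r_0))^2+h^{\ON}(R_0,r_0)\bigr](C_x+1)R_0 K$. Since $\delta_i$ is a standard basis indicator and $\nabla A(f^{\ON}(u_i^*))$ is a probability vector, we have $\|\epsilon_i\|_2\le\sqrt{K}$, after which a careful combination of constants (tightening the crude $\sqrt{5}K$ via a coordinate-wise argument analogous to the proof of Lemma \ref{lem:ordinal_rsc_term1_2_brief}) should yield a bound of the form $\tfrac{(C_x+1)R_0}{4}\sqrt{K}(h^{\ON}(R_0,r_0))^2+\tfrac{4}{3}(C_x+1)R_0\sqrt{K}\,h^{\ON}(R_0,r_0)$, explaining two of the three contributions in $L$.

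For the second term, I would bound $\|H_i(v)v\|_2\leq \|v\|_2\cdot \max_{\|w\|_2=1}|w^\top H_i(v)w|$. The key observation is that the Hessian $\nabla^2 f^{\ON}(u)$ can be controlled by the same derivative estimates used to prove Lemmas \ref{lem:ordinal_rsc_term1_1} and \ref{lem:ordinal_rsc_term1_2_brief}: each entry of $\nabla^2 f^{\ON}(u)$ is a rational function of sigmoid-type expressions in $u$ whose ranges are captured by $h^{\ON}(R_0,r_0)$ (cf.\ Lemma \ref{lem:ordinal_p_alpha_f_bnd} for the building-block bounds on $p_j,\alpha_j$). Combined with $\|\epsilon_i\|_2\le\sqrt K$ and the bound $\|v\|_2=\|u_i(\Delta)\|_2\le (C_x+1)R_0$ (which follows from $|x_i^\top\Delta_{1:p}|\le C_x\|\Delta_{1:p}\|_1\le C_xR_0$ and $\|\Delta_{(p+1):(p+K)}\|_1\le R_0$), the term $T_2$ will contribute the remaining $\sqrt{K}h^{\ON}(R_0,r_0)$ piece, completing the claim $\|\nabla q_i(v)\|_2\le L$.

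The main obstacle I anticipate is the bookkeeping needed to extract the precise constants $\tfrac{4}{3}$ and $\tfrac{1}{4}$: the crude estimates from Lemma \ref{lem:ordinal_rsc_term1_2_brief} give the right functional form but not sharp enough coefficients. Achieving this will require revisiting the telescoping-sum / tridiagonal-factorization strategy of Section \ref{sec:ordinal_rsc_proof} and applying it directly to $\nabla^2 f^{\ON}$ rather than only to $\nabla f^{\ON}$, so that both the $T_1$ and $T_2$ pieces are controlled by the same $h^{\ON}(R_0,r_0)$ quantity with matching (not merely order-optimal) coefficients. Everything else is routine matrix/operator-norm manipulation.
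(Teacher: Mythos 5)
Your product-rule decomposition of $\nabla q_i$ into $T_1(v)=[\nabla f^{\ON}(u_i(\theta^*)+v)-\nabla f^{\ON}(u_i(\theta^*))]^\top\epsilon_i$ and $T_2(v)=H_i(v)v$ is exactly the one the paper uses, but your assignment of the pieces of $L$ to $T_1$ and $T_2$ is backwards, and the claim about $T_2$ cannot be repaired. Since $T_2$ is linear in $v$ and $\|u_i(\Delta)\|_1\leq (C_x+1)R_0$, any bound on $T_2$ is necessarily proportional to $(C_x+1)R_0$; it cannot produce the $R_0$-free piece $\sqrt{K}\,h^{\ON}(R_0,r_0)$ that you attribute to it. In the paper that piece comes from $T_1$, bounded in the crudest possible way, coordinate-wise:
$|\epsilon_i^\top(\cdot)_{:,l}|\leq \|\epsilon_i\|_1\bigl(\|\nabla f^{\ON}(u_i(\theta^*+\Delta))\|_{\infty}+\|\nabla f^{\ON}(u_i(\theta^*))\|_{\infty}\bigr)\leq 2\cdot 2\cdot\tfrac14 h^{\ON}(R_0,r_0)=h^{\ON}(R_0,r_0)$,
using the sup-norm bound $\|\nabla f^{\ON}(u)\|_{\infty}\leq\tfrac14 h^{\ON}(R_0,r_0)$ from Lemma \ref{lem:ordinal_p_alpha_f_bnd} on each gradient separately --- no Lipschitz estimate for $\nabla f^{\ON}$ is needed at all. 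The two terms proportional to $(C_x+1)R_0$ then come entirely from $T_2$, via $|(T_2)_l|\leq\|\epsilon_i\|_1\,\|u_i(\Delta)\|_1\,\|\nabla^2 f^{\ON}\|_{\infty}\leq 2(C_x+1)R_0\bigl[\tfrac23 h^{\ON}(R_0,r_0)+\tfrac18(h^{\ON}(R_0,r_0))^2\bigr]$, again from Lemma \ref{lem:ordinal_p_alpha_f_bnd}; a single factor $\sqrt{K}$ converts the uniform coordinate bound into the $\ell_2$ bound and yields precisely $L$.

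Your route for $T_1$ also loses a factor of $K$ that no amount of constant bookkeeping recovers: Lemma \ref{lem:ordinal_rsc_term1_2_brief} already carries a factor $K$ in operator norm, and multiplying by $\|\epsilon_i\|_2$ (which is at most $2$, not $\sqrt{K}$, since $\|\epsilon_i\|_1\leq 2$) still leaves a factor $K$ rather than $\sqrt{K}$ in that term, on top of the constant $2\sqrt{5}$ in place of $\tfrac43$ and $\tfrac14$. Even if you replaced the operator-norm Lipschitz bound by a coordinate-wise mean-value argument through $\|\nabla^2 f^{\ON}\|_{\infty}$ for $T_1$, you would then be double-counting: $T_1$ and $T_2$ would each contribute $2(C_x+1)R_0\bigl[\tfrac23 h^{\ON}+\tfrac18(h^{\ON})^2\bigr]\sqrt{K}$, giving doubled coefficients $\tfrac83$ and $\tfrac12$ and no $R_0$-free term, which does not establish the stated $L$ without additionally invoking the smallness of $R_0$. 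The fix is simply to bound $T_1$ by the size of $\nabla f^{\ON}$ itself rather than by its increment.
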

Then by lemma~\ref{lem:contraction_vector_domain} and similar arguments from the proof of Lemma~\ref{lem:RSC}, one can show that
\begin{equation}\label{eq:expt_g}
    \bbE g_t^{(j)}((x_1,\epsilon_1),\dots,(x_n,\epsilon_n))\leq \frac{4\sqrt{2}L}{nt}\bbE\sup_{\substack{\frac{t}{2}\leq \tau(\Delta)\leq t\\ \Delta\in\cS(R_0,r_0)}}\sum_{i=1}^n u_i(\Delta)^\top \widetilde{\varepsilon}_i,
\end{equation}
where $\widetilde{\varepsilon}_{i}\in\bbR^K$ and $\widetilde{\varepsilon}_{ij}, 1\leq i\leq n, 1\leq j\leq K$ are independent Rademacher random variables. One can show that
\begin{equation}\label{eq:sup_expt}
    \begin{split}
        &\frac{4\sqrt{2}L}{nt}\bbE\sup_{\substack{\frac{t}{2}\leq \tau(\Delta)\leq t\\ \Delta\in\cS(R_0,r_0)}}\sum_{i=1}^n u_i(\Delta)^\top \widetilde{\varepsilon}_i\\
        =&\frac{4\sqrt{2}L}{nt}\bbE\sup_{\substack{\frac{t}{2}\leq \tau(\Delta)\leq t\\ \Delta\in\cS(R_0,r_0)}}\Delta^\top\sum_{i=1}^n \begin{pmatrix}x_i\widetilde{\varepsilon}_{i1}\\
        -\widetilde{\varepsilon}_i\end{pmatrix}\\
        \leq&\frac{4\sqrt{2}L}{n}\bbE\Bigg[\sqrt{\frac{n}{\lambda_{\max}(\Sigma)(\log J+m)}}\left\|\sum_{i=1}^nx_i^\top\widetilde{\varepsilon}_{i1}\right\|_{\omega^{-1},2,\infty}\\
        &+\sqrt{\frac{n}{K(\log J+\log(2K))}}\left\|\sum_{i=1}^n\widetilde{\varepsilon}_i\right\|_2\Bigg]
    \end{split}
\end{equation}
Since we have the same sub-Gaussian assumptions on the covariates $x_i$ as the multinomial case, following the same arguments as the proof of Lemma~\ref{lem:mult_dev_expt} would lead us to 
\begin{equation}\label{eq:sup_expt_1}
    \bbE \max_{1\leq j\leq J}\omega_j^{-1}\|X_{\cG_j}^\top \widetilde{\epsilon}_{:,1}\|_2\leq C\lambda^{\frac{1}{2}}_{\max}(\Sigma)\sqrt{(m+\log J)n}.
\end{equation}
Now we provide an upper bound for $\bbE \max_{k}|\sum_{i=1}^n\widetilde{\epsilon}_{ik}|$. Since $\{\widetilde{\epsilon}_{ik}\}_{i=1}^n$ are independent sub-Gaussian random variables with constant parameter, applying Hoeffding type inequality and taking a union bound over $1\leq k\leq K$ would show us that
\begin{equation*}
    \bbP\left(\max_{1\leq k\leq K}\left|\sum_{i=1}^n \widetilde{\epsilon}_{ik}\right|>t\right)\leq 2Ke^{-\frac{Ct^2}{n}},
\end{equation*}
which further implies 
\begin{equation}\label{eq:sup_expt_2}
\begin{split}
    \bbE \max_{1\leq k\leq K}\left|\sum_{i=1}^n \widetilde{\epsilon}_{ik}\right|=&\int_{0}^{\infty} \bbP\left(\max_{1\leq k\leq K}\left|\sum_{i=1}^n \widetilde{\epsilon}_{ik}\right|>t\right)\mathrm{d}t\\
    \leq&\sqrt{\frac{\log 2K}{C}n}+\int_{\sqrt{\frac{\log 2K}{C}n}}^{\infty}e^{-\frac{C}{n}(t-\sqrt{\frac{\log 2K}{C}n})^2}\mathrm{d}t\\
    \leq&C\sqrt{\log(2K)n}.
    \end{split}
\end{equation}
Combining \eqref{eq:expt_g}, \eqref{eq:sup_expt}, \eqref{eq:sup_expt_1} and \eqref{eq:sup_expt_2}, we obtain that for $j=1,2$, 
\begin{equation*}
\begin{split}
    \bbE g_t^{(j)}((x_1,\epsilon_1),\dots,(x_n,\epsilon_n))\leq CL.
\end{split}
\end{equation*}
Furthermore, one can show that for $1\leq i\leq n$ and any $\Delta\in S(R_0,r_0)$,
\begin{equation*}
    \begin{split}
        &\left|\epsilon_i^\top (\nabla f^{\ON}(u_i(\theta^*+\Delta))-\nabla f^{\ON}(u_i(\theta^*)))\begin{pmatrix}x_i,0\\-I_K\end{pmatrix}^\top\Delta\right|\\
        \leq&\|x_i\epsilon_i^\top (\nabla f^{\ON}(u_i(\theta^*+\Delta))-\nabla f^{\ON}(u_i(\theta^*)))_{:,1}\|_{\omega^{-1},2,\infty}\|\Delta_{1:p}\|_{\omega,2,1}\\
        &+\|(\nabla f^{\ON}(u_i(\theta^*+\Delta))-\nabla f^{\ON}(u_i(\theta^*)))^\top\epsilon_i\|_2\|\Delta\|_2\\
        \leq&\|(\nabla f^{\ON}(u_i(\theta^*+\Delta))-\nabla f^{\ON}(u_i(\theta^*)))^\top\epsilon_i\|_{\infty}\\
        &\cdot(\max_j\omega_j^{-1}\|(x_i)_{\cG_j}\|_2\|\Delta_{1:p}\|_{\omega,2,1}+\sqrt{K}\|\Delta\|_2)\\
        \leq &Ch^{\ON}(R_0,r_0)(C_x\sqrt{m}\|\Delta_{1:p}\|_{\omega,2,1}+\sqrt{K}\|\Delta\|_2),
    \end{split}
\end{equation*}
where the last line is due to Assumption~\ref{assump:subgauss_row}, ~\ref{assump:scaling}, and the fact that 
\begin{equation}\label{eq:f_epsilon_bnd}
\begin{split}
    \|\nabla(f^{\ON}(u_i(\theta^*))\epsilon_i\|_{\infty}\leq \frac{1}{2}h^{\ON}(0,0)\leq \frac{1}{2}h^{\ON}(R_0,r_0),\\
    \sup_{\Delta\in \cS(R_0,r_0)}\|\nabla(f^{\ON}(u_i(\theta^*+\Delta)))\epsilon_i\|_{\infty}\leq \frac{1}{2}h^{\ON}(R_0,r_0),
\end{split}
\end{equation} 
which is implied by Lemma~\ref{lem:ordinal_p_alpha_f_bnd} and $\|\epsilon_i\|_1\leq 2$.
Therefore,
\begin{equation}\label{eq:expt_g_bnd}
    \begin{split}
        &\sup_{\Delta\in \cS(R_0,r_0)}\frac{\left|\epsilon_i^\top (\nabla f^{\ON}(u_i(\theta^*+\Delta))-\nabla f^{\ON}(u_i(\theta^*)))\begin{pmatrix}x_i,0\\-I_K\end{pmatrix}\Delta\right|}{n\tau(\Delta)}\\
        \leq&Ch^{\ON}(R_0,r_0)\bigg\{C_x\sqrt{\frac{m}{\lambda_{\max}(\Sigma)(\log J+m)n}}+\sqrt{\frac{1}{(\log J+\log(2K))n}}\bigg\}\\
        \leq&Ch^{\ON}(R_0,r_0)\bigg(C_x\sqrt{\frac{m}{\lambda_{\max}(\Sigma)}}+1\bigg)\frac{1}{\sqrt{n\log J}}.
    \end{split}
\end{equation}
The following bounded difference results are then directly implied by \eqref{eq:expt_g_bnd}:
\begin{equation*}
\begin{split}
    &\bigg|g_t^{(j)}((x_1,\epsilon_1),\dots,(x_i,\epsilon_i),\dots,(x_n,\epsilon_n))\\
    &-g_t^{(j)}((x_1,\epsilon_1),\dots,(x_i',\epsilon_i'),\dots,(x_n,\epsilon_n))\bigg|\\
    \leq&Ch^{\ON}(R_0,r_0)\bigg(C_x\sqrt{\frac{m}{\lambda_{\max}(\Sigma)}}+1\bigg)\frac{1}{\sqrt{n\log J}},
    \end{split}
\end{equation*}
and thus applying the bounded difference inequality~\citep{mcdiarmid1989method} upon\\ $g_t^{(j)}((x_1,\epsilon_1),\dots,(x_n,\epsilon_n))$ would lead us to
\begin{equation}\label{eq:ordinal_rec_term2_slice_bnd}
\begin{split}
    \sup_{\substack{\frac{t}{2}\leq\|\Delta\|_{\omega,2,1}\leq t\\ \Delta\in\cS(R_0,r_0)}}|\phi(\Delta,\{x_i\}_{i=1}^n, \{\epsilon_i\}_{i=1}^n)|\leq CL,
\end{split}
\end{equation}
with probability at least 
\begin{equation*}
  \begin{split}
      &1-2\exp\left\{-\frac{cL^2\log J}{(h^{\ON}(R_0,r_0))^2(C_x^2\lambda_{\max}^{-1}(\Sigma)m+1)}\right\}\\
      \geq&1-2\exp\{-\frac{cK\log J}{C_x^2m+1}\},
  \end{split}  
\end{equation*}
where we have applied the fact that $L^2\geq K(h^{\ON}(R_0,r_0))^2$ and $\lambda_{\max}(\Sigma)\geq c$.
Now we apply a peeling argument to extend the above bound to $\Delta\in \cS(R_0,r_0)$. First note that for any $\Delta\in \cS(R_0,r_0)$, 
\begin{equation*}
\begin{split}
    \tau(\Delta)\leq& \left(\max_j\omega_j\sqrt{\frac{\lambda_{\max}(\Sigma)(\log J+m)}{n}}+2\sqrt{\frac{K(\log J+\log(2K))}{n}}\right)R_0\\
    =:&C^{(1)}_n
\end{split}
\end{equation*} 
Define
$$C_n^{(2)}=\frac{CL\log J\min\{C_x^{-2},1\}}{(\frac{4}{3}h^{\ON}(R_0,r_0)+\frac{1}{4}(h^{\ON}(R_0,r_0))^2)mn}$$
$N=\log_2(\frac{C_n^{(1)}}{C_n^{(2)}})$, then one can show that
\begin{equation*}
\begin{split}
    \cS(R_0,r_0)\subset &\cup_{k=1}^N\{\Delta: 2^{k-1}C_n^{(2)}\leq \|\Delta\|_{\omega,2,1}\leq 2^kC_n^{(2)}\}\\
    &\cup\{\Delta: 0\leq \|\Delta\|_{\omega,2,1}\leq C_n^{(2)}\}.
\end{split}
\end{equation*}
Similarly from the proof of Lemma~\ref{lem:RSC}, we consider function $$\widetilde{\phi}(r,d,\{x_i\}_{i=1}^n,\{\epsilon_i\}_{i=1}^n)=\phi(rd,\{x_i\}_{i=1}^n, \{\epsilon_i\}_{i=1}^n)$$ 
for any $r\geq 0$ and $d\in \bbR^{p+K}$ such that $\tau(d)=1$. Some calculation shows that
\begin{equation*}
\begin{split}
    &\left|\frac{\partial \widetilde{\phi}(r,d,\{x_i\}_{i=1}^n,\{\epsilon_i\}_{i=1}^n)}{\partial r}\right|\\
    =&\left|\frac{1}{n}\sum_{i=1}^n \left\langle \nabla^2f^{\ON}(u_i(\theta^*+rd)),\epsilon_i\otimes\left(\begin{pmatrix}x_i,0\\
    -I_K\end{pmatrix}^\top d\right)\otimes\left(\begin{pmatrix}x_i,0\\
    -I_K\end{pmatrix}^\top d\right)\right\rangle\right|\\
    \leq&\max_i\|\nabla^2f^{\ON}(u_i(\theta^*+rd))\|_{\infty}\|\epsilon_i\|_1\left\|\begin{pmatrix}x_i,0\\
    -I_K\end{pmatrix}^\top d\right\|_1^2\\
    \leq&2\max\{C_x^2,1\}\|d\|_1^2\max_i\|\nabla^2f^{\ON}(u_i(\theta^*+rd))\|_{\infty},
\end{split}
\end{equation*}
where the last line is due to that $\|\epsilon_i\|_1\leq \|\delta_i\|_1+\|\bbE(\delta_i|x_i)\|_1\leq 2$, and 
\begin{equation*}
    \left\|\begin{pmatrix}x_i,0\\
    -I_K\end{pmatrix}^\top d\right\|_1\leq |x_i^\top d_{1:p}|+\|d_{(p+1):(p+K)}\|_1\leq \max\{C_x,1\}\|d\|_1.
\end{equation*}
Note that 
\begin{equation*}
    \begin{split}
        \|d\|_1\leq& \max_j\omega_j^{-1}\sqrt{m}\|d_{1:p}\|_{\omega,2,1}+\sqrt{K}\|d_{(p+1):(p+K)}\|_2\\
        \leq&\max_j\omega_j^{-1}\sqrt{\frac{mn}{\lambda_{\max}(\Sigma)(\log J+m)}}+\sqrt{\frac{n}{\log J+\log(2K)}}\\
        \leq&C\sqrt{\frac{mn}{\log J}},
    \end{split}
\end{equation*}
and by Lemma~\ref{lem:ordinal_p_alpha_f_bnd}, we have that
\begin{equation*}
    \begin{split}
    &\left|\frac{\partial \widetilde{\phi}(r,d,\{x_i\}_{i=1}^n,\{\epsilon_i\}_{i=1}^n)}{\partial r}\right|\\
    \leq&C\max\{C_x^2,1\}(\frac{4}{3}h^{\ON}(R_0,r_0)+\frac{1}{4}(h^{\ON}(R_0,r_0))^2\frac{mn}{\log J},
    \end{split}
\end{equation*}
for any $rd\in S(R_0,r_0)$.
Therefore, 
\begin{equation*}
    \begin{split}
        &\sup_{\substack{0\leq \|\Delta\|_{\omega,2,1}\leq C_n^{(2)}\\\Delta\in \cS(R_0,r_0)}}\left|\phi(\Delta,\{x_i\}_{i=1}^n, \{\epsilon_i\}_{i=1}^n)\right|\\
        \leq&\sup_{\substack{\|\Delta\|_{\omega,2,1}=C_n^{(2)}\\ \Delta\in \cS(R_0,r_0)}}\left|\phi(\Delta,\{x_i\}_{i=1}^n, \{\epsilon_i\}_{i=1}^n)\right|\\
        &+C\max\{C_x^2,1\}(\frac{4}{3}h^{\ON}(R_0,r_0)+\frac{1}{4}(h^{\ON}(R_0,r_0))^2\frac{mn}{\log J}C_n^{(2)}\\
        \leq&\sup_{\substack{\|\Delta\|_{\omega,2,1}=C_n^{(2)}\\\Delta\in \cS(R_0,r_0)}}\left|\phi(\Delta,\{x_i\}_{i=1}^n, \{\epsilon_i\}_{i=1}^n)\right|+CL.
    \end{split}
\end{equation*}
Now we apply the probabilistic bound \eqref{eq:ordinal_rec_term2_slice_bnd} for $t=C_n^{(2)},2C_n^{(2)},\dots,2^{N-1}C_n^{(2)}$ and take a union bound, which leads to the following:
\begin{equation*}
\begin{split}
    \sup_{\Delta\in \cS(R_0,r_0)}\left|\phi(\Delta,\{x_i\}_{i=1}^n,\{\epsilon_i\}_{i=1}^n\right|\leq CL,
    \end{split}
\end{equation*}
with probability at least $1-2N\exp\{-\frac{cK\log J}{C_x^2m+1}\}$.
Now note that
\begin{equation*}
\begin{split}
     N=&\log_2\Bigg\{\left[\max_j\omega_j m\sqrt{\frac{\lambda_{\max}(\Sigma)(\log J+m)n}{(\log J)^2}}+m\sqrt{\frac{K(\log J+\log(2K))n}{(\log J)^2}}\right]\\
    &\cdot \frac{R_0(\frac{4}{3}h^{\ON}(R_0,r_0)+\frac{1}{4}(h^{\ON}(R_0,r_0))^2)}{C\min\{C_x^{-1},1\}L}\Bigg\}\\
    \leq&\log_2\left\{Cm\sqrt{\frac{n}{\log J}}(\max_j\omega_j\lambda^{\frac{1}{2}}_{\max}(\Sigma)+\sqrt{K})\frac{R_0(\frac{4}{3}h^{\ON}(R_0,r_0)+\frac{1}{4}(h^{\ON}(R_0,r_0))^2)}{C\min\{C_x^{-1},1\}L}\right\}.
\end{split}
\end{equation*}
The following lemma shows another property of $R_0$ that can be implied by Assumption \ref{assump:ordinal_region_bound}.
\begin{lemma}\label{lem:ordinal_region_consequence}
    Suppose $R_0>0$ satisfies Assumption \ref{assump:ordinal_region_bound}, then 
    \begin{equation*}
        \begin{split}
            R_0&\leq \min_j\frac{n_j}{\pi_jn_u}\frac{\left(1+e^{(C_x+1)(R^*+R_0)}\right)^{-6}}{512(C_x+1)K^3(1+\frac{2}{r^* - r_0})^3(2+\frac{2}{r^*-r_0})}\\
            &\leq \min_j\frac{n_j}{\pi_jn_u}\frac{\left(1+e^{(C_x+1)(R^*+R_0)}\right)^{-2}}{512(C_x+1)K^3(h^{\ON}(R_0,r_0))^3(1+h^{\ON}(R_0,r_0))}.
        \end{split}
    \end{equation*}
\end{lemma}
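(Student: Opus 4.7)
\textbf{Proof Plan for Lemma~\ref{lem:ordinal_region_consequence}.} The plan is to establish the two inequalities separately. The first follows by showing that the RHS of Assumption~\ref{assump:ordinal_region_bound} is so small that $R_0<0.01$, which then lets us replace $R^*+0.01$ by $R^*+R_0$ in the exponent at the cost of only strengthening the bound. The second is an elementary but case-by-case algebraic manipulation of the definition of $h^{\ON}$.

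\textbf{Step 1 (First inequality).} I would begin by using the crude lower bound $1+e^{(C_x+1)(R^*+0.01)}\geq 2$, combined with $(C_x+1)K^3\geq 1$ and $(1+\tfrac{2}{r^*-r_0})^3(2+\tfrac{2}{r^*-r_0})\geq 2$. Plugged into Assumption~\ref{assump:ordinal_region_bound}, this gives
\begin{equation*}
R_0 \;\leq\; 80\cdot\frac{2^{-6}}{512\cdot 1\cdot 2} \;=\; \frac{80}{2^{16}} \;<\; 0.01.
\end{equation*}
Since $R_0<0.01$, we have $e^{(C_x+1)(R^*+R_0)}<e^{(C_x+1)(R^*+0.01)}$, hence
\begin{equation*}
\bigl(1+e^{(C_x+1)(R^*+0.01)}\bigr)^{-6} \;<\; \bigl(1+e^{(C_x+1)(R^*+R_0)}\bigr)^{-6}.
\end{equation*}
Moreover $\min\{80,\min_j n_j/(\pi_jn_u)\}\leq \min_j n_j/(\pi_jn_u)$, so the bound from Assumption~\ref{assump:ordinal_region_bound} is at most the desired RHS in the first inequality. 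This gives inequality~1.

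\textbf{Step 2 (Second inequality).} After cancelling the common factor $\min_j\frac{n_j}{\pi_jn_u}\cdot\frac{1}{512(C_x+1)K^3}$, it suffices to prove
\begin{equation*}
h^{\ON}(R_0,r_0)^{3}\bigl(1+h^{\ON}(R_0,r_0)\bigr) \;\leq\; \bigl(1+e^{(C_x+1)(R^*+R_0)}\bigr)^{4}\bigl(1+\tfrac{2}{r^*-r_0}\bigr)^{3}\bigl(2+\tfrac{2}{r^*-r_0}\bigr).
\end{equation*}
Writing $A=1+e^{(C_x+1)(R^*+R_0)}$ and $e=e^{(C_x+1)(R^*+R_0)}\geq 1$, the definition \eqref{eq:ordinal_h_def} reads $h^{\ON}(R_0,r_0)=\max\{A^{2}/((r^*-r_0)e),\,A\}$. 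I would prove the key subclaim
\begin{equation*}
h^{\ON}(R_0,r_0)\;\leq\; A\bigl(1+\tfrac{2}{r^*-r_0}\bigr)
\end{equation*}
by a two-case argument: if $h^{\ON}=A$, the bound is trivial; otherwise one needs $A^{2}/((r^*-r_0)e)\leq A(1+\tfrac{2}{r^*-r_0})$, equivalently $1+e\leq e(r^*-r_0+2)$, i.e.\ $1\leq e(r^*-r_0+1)$, which holds since $e\geq 1$ and $r^*>r_0$. Using $A\geq 2$, this subclaim further implies $1+h^{\ON}(R_0,r_0)\leq 1+A+\tfrac{2A}{r^*-r_0}\leq 2A+\tfrac{2A}{r^*-r_0}=A(2+\tfrac{2}{r^*-r_0})$. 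Multiplying the cube of the first bound by the second yields the desired inequality~2.

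\textbf{Main obstacle.} The only nontrivial part is Step~2, where one must carefully handle both cases of the $\max$ in $h^{\ON}(R_0,r_0)$; but this reduces to the elementary inequality $1\leq e(r^*-r_0+1)$, which always holds under Assumption~\ref{assump:ordinal_region_bound}. No concentration or high-dimensional argument is required, so the proof is purely algebraic.
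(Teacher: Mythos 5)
Your proposal is correct and follows essentially the same route as the paper: first bound the right-hand side of Assumption~\ref{assump:ordinal_region_bound} crudely to get $R_0<0.01$ and thereby replace $R^*+0.01$ by $R^*+R_0$, then use the bound $h^{\ON}(R_0,r_0)\leq \bigl(1+e^{(C_x+1)(R^*+R_0)}\bigr)\bigl(1+\tfrac{2}{r^*-r_0}\bigr)$ to obtain the second inequality. Your Step~2 actually spells out the two-case handling of the $\max$ and the bound on $1+h^{\ON}$ that the paper leaves implicit, but the argument is the same.
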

Lemma \ref{lem:ordinal_region_consequence} implies that
\begin{equation*}
    \begin{split}
      &R_0(\frac{4}{3}h^{\ON}(R_0,r_0)+\frac{1}{4}(h^{\ON}(R_0,r_0))^2)\\  \leq&\min_j\frac{n_j}{\pi_jn_u}\frac{\left(1+e^{(C_x+1)(R^*+R_0)}\right)^{-2}}{372(C_x+1)K^3(h^{\ON}(R_0,r_0))^2}\\
        \leq&\min_j\frac{n_j}{\pi_jn_u}\frac{3e^{2(C_x+1)(R^*+C)}(1+e^{(C_x+1)(R^*+C)})^{-4}}{512(C_x+1)(h^{\ON}(C,r_0))^2}\\
        \leq&\frac{C}{C_x+1}.
    \end{split}
\end{equation*}
The inequality above combined with the fact that $L\geq 1$ leads to $\frac{R_0(\frac{4}{3}h^{\ON}(R_0,r_0)+\frac{1}{4}(h^{\ON}(R_0,r_0))^2)}{C\min\{C_x^{-1},1\}L}\leq C$. Meanwhile, since $m\leq Cn$, $\max_j\omega_j\leq C\sqrt{\frac{n}{\log J}}$, $K\leq Cn$, and $\lambda_{\max}(\Sigma)\leq C\sigma^2\leq C$, clearly we have $N\leq C\log_2n$.
Therefore,
\begin{equation*}
    \left|\mathrm{\RNum{2}}(\Delta)\right|\leq CL\tau(\Delta)
\end{equation*}
holds for all $\Delta\in S(R_0,r_0)$ with probability at least 
\begin{equation*}
\begin{split}
    1-C\log_2n\exp\{-\frac{cK\log J}{C_x^2m+1}\}\geq &1-\exp\{\log(C\log n)-\frac{cK\log J}{C_x^2m+1}\}\\
    \geq& 1-\exp\{-\frac{cK\log J}{C_x^2m+1}\}.
\end{split}
\end{equation*}
\end{enumerate}
\end{proof}
\section{Proof of Supporting Lemmas for Proving Theorem~\ref{thm:mult_upp} and Theorem~\ref{thm:ordinal_upp}}
\subsection{Supporting Lemmas for the Multinomial-PU Model}\label{sec:supp_lemma_MN}
\begin{proof}[Proof of Lemma \ref{lem:sketch_MN_I_lrbnd}]
    Recall the definition of $A(\cdot)$, one can show that for any $\eta\in \bbR^K$,
    \begin{equation}
        (\nabla A(\eta))_i=\frac{e^{\eta_i}}{1+\sum_{k=1}^K e^{\eta_k}}, 1\leq i\leq K,
    \end{equation}
    and
    \begin{equation}
        \begin{split}
            (\nabla^2 A(\eta))_{jk}=&\frac{e^{\eta_j}}{1+\sum_{l=1}^K e^{\eta_l}}\left(\ind{j=k}-\frac{e^{\eta_k}}{1+\sum_{l=1}^K e^{\eta_l}}\right)\\
            =&(\nabla A(\eta))_j\left(\ind{j=k}-(\nabla A(\eta))_k\right).
        \end{split}
    \end{equation}
    Thus the smallest eigenvalue of $\nabla^2 A(\eta)$ is lower bounded as follows:
    \begin{equation}
    \begin{split}
        \lambda_{\min}(\nabla^2 A(\eta))=&\inf_{\|v\|_2=1}v^\top \nabla^2 A(\eta)v\\
        =&\inf_{\|v\|_2=1}\sum_{j=1}^K(\nabla A(\eta))_jv_j^2-\left(\sum_{j=1}^K (\nabla A(\eta))_jv_j\right)^2\\
        \geq&\inf_{\|v\|_2=1}\sum_{j=1}^K(\nabla A(\eta))_jv_j^2-\sum_{j=1}^K(\nabla A(\eta))_jv_j^2 \sum_{j=1}^K (\nabla A(\eta))_j\\
        \geq &\min_j (\nabla A(\eta))_j\left(1-\sum_{j=1}^K (\nabla A(\eta))_j\right)\\
        =&\frac{\min_j e^{\eta_j}}{\left(1+\sum_{l=1}^K e^{\eta_l}\right)^2}.
    \end{split}
    \end{equation}
    Recall that $R^*=\max_{1\leq i\leq K}\|(\Theta^*)_{:,i}\|_1$ and $\Theta\in \Theta^* + \mathbb{B}_{1,\infty}(R)$. Then we have $$\exp\{[f(\Theta^\top x_i)]_k\} = \frac{\frac{n_k}{\pi_k n_u}\exp\{\Theta_{:,k}^\top x_i\}}{1+\sum_{j=1}^K\exp\{\Theta_{:,j}^\top x_i\}}\geq \frac{\min_j\frac{n_j}{\pi_j n_u}e^{-C_x(R+R^*)}}{1+Ke^{C_x(R+R^*)}}, $$
    $$
    \sum_{j=1}^K\exp\{[f(\Theta^\top x_i)]_j\}\leq \max_j\frac{n_j}{\pi_j n_u}.
    $$
    Thus we have
    \begin{equation}\label{eq:MN_mineigen_A}
    \begin{split}
        \lambda_{\min}(\nabla^2 A(f(\Theta^\top x_i))\geq&\frac{\exp\{\min_j (f(\Theta^\top x_i))_j\}}{\left[1+\sum_{j=1}^K \exp\{(f(\Theta^\top x_i))_j\}\right]^2}\\
        \geq&\frac{e^{-C_x(R+R^*)}\min_j\frac{n_j}{\pi_j n_u}}{(1+\max_j\frac{n_j}{\pi_j n_u})^2(1+Ke^{C_x(R+R^*)})}
    \end{split}
    \end{equation}
Now we lower bound the smallest eigenvalue of $$\nabla f(v)^\top\nabla f(v)=(I_K-1_K \nabla A(v)^\top)^\top(I_K-1_K \nabla A(v)^\top).$$ For any vector $z\in \bbR^K$, let $y=\nabla f(v) z=z-(\nabla A(v)^\top z)1_K$. Then one can show that $$\nabla A(v)^\top y=\nabla A(v)^\top z(1-\nabla A(v)^\top 1_K)=(1+\sum_{j=1}^Ke^{v_j})^{-1}\nabla A(v)^\top z,$$ which implies $y=z-(1+\sum_{j=1}^Ke^{v_j})\nabla A(v)^\top y 1_K$, and thus
\begin{equation}
    \|z\|_2\leq \|y\|_2+\sqrt{K}\left(1+\sum_{j=1}^Ke^{v_j}\right)\|\nabla A(v)\|_2\|y\|_2.
\end{equation}
Therefore, 
\begin{equation}
\begin{split}
    \lambda_{\min}(\nabla f(v)^\top \nabla f(v))=&\inf_{z\in \bbR^K}\frac{\|\nabla f(v) z\|_2^2}{\|z\|_2^2}\\
    \geq &\left[1+\sqrt{K}\left(1+\sum_{j=1}^Ke^{v_j}\right)\|\nabla A(v)\|_2\right]^{-2}\\
    =&\left[1+\sqrt{K}\left(\sum_{j=1}^Ke^{2v_j}\right)^{\frac{1}{2}}\right]^{-2},
\end{split}
\end{equation}
and 
\begin{equation}\label{eq:MN_mineigen_f}
    \lambda_{\min}(\nabla f(\Theta^\top x_i)^\top \nabla f(\Theta^\top x_i))\geq  \left[1+Ke^{C_x(R+R^*)}\right]^{-2}.
\end{equation}
Combining \eqref{eq:MN_mineigen_A} and \eqref{eq:MN_mineigen_f} completes the proof.
\end{proof}
\begin{proof}[Proof of Lemma \ref{lem:mult_rsc_term1}]
By Taylor's theorem, there exists $t\in [0,1]$ such that
    \begin{equation}
        \begin{split}
            \mathrm{\RNum{1}}(\Delta)=&\frac{1}{n}\sum_{i=1}^n (x_i^\top \Delta)\nabla f(\Theta^\top x_i)^\top \left[\nabla A(f(\Theta^\top x_i))-\nabla A(f(\Theta^{*\top}x_i))\right]\\
            =&\frac{1}{n}\sum_{i=1}^n (\Delta^\top x_i)^\top\nabla f(\Theta^\top x_i)^\top \nabla^2 A(f(\Theta_t^\top x_i))\nabla f(\Theta_t^\top x_i)(\Delta^\top x_i)\\
            \geq&\frac{1}{n}\sum_{i=1}^n (\Delta^\top x_i)^\top\nabla f(\Theta_t^\top x_i)^\top \nabla^2 A(f(\Theta_t^\top x_i))\nabla f(\Theta_t^\top x_i)(\Delta^\top x_i)\\
            &-\frac{1}{n}\sum_{i=1}^n \|\Delta^\top x_i\|_2^2\left\|\nabla f(\Theta_t^\top x_i)^\top\nabla^2 A(f(\Theta_t^\top x_i))\left[\nabla f(\Theta_t^\top x_i)-\nabla f(\Theta^\top x_i)\right]\right\|,
        \end{split}
    \end{equation}
    where $\Theta_t=t\Theta+(1-t)\Theta^*$ lies between $\Theta$ and $\Theta^*$. 
By Lemma \ref{lem:sketch_MN_I_lrbnd}, we have 
\begin{equation}
\begin{split}
    &(\Delta^\top x_i)^\top\nabla f(\Theta_t^\top x_i)^\top \nabla^2 A(f(\Theta_t^\top x_i))\nabla f(\Theta_t^\top x_i)(\Delta^\top x_i)\\
    \geq &(h^{\MN}(R)+4C_xR)\|\Delta^\top x_i\|_2^2.
\end{split}
\end{equation}
On the other hand, since $\nabla f(v)=I_K-1_K\nabla A(v)^\top$, one can show that
\begin{equation}
    \begin{split}
        &\left\|\nabla f(\Theta_t^\top x_i)^\top\nabla^2 A(f(\Theta_t^\top x_i))\left[\nabla f(\Theta_t^\top x_i)-\nabla f(\Theta^\top x_i)\right]\right\|\\
        \leq &\left\|\nabla f(\Theta_t^\top x_i)^\top\nabla^2 A(f(\Theta_t^\top x_i))1_K\right\|_2\left\|\nabla A(\Theta_t^\top x_i)-\nabla A(\Theta^\top x_i)\right\|\\
        \leq&2\left\|\nabla A(\Theta_t^\top x_i)-\nabla A(\Theta^\top x_i)\right\|\\
        \leq&4C_xR,
    \end{split}
\end{equation}
where the third line is due to that
\begin{equation}
    \begin{split}
        &\left\|\nabla f(\Theta_t^\top x_i)^\top\nabla^2 A(f(\Theta_t^\top x_i))1_K\right\|_2\\
        \leq&\|\nabla^2 A(f(\Theta_t^\top x_i))1_K\|_2+1_K^\top \nabla^2 A(f(\Theta_t^\top x_i))1_K\|\nabla A(\Theta_t^\top x_i)\|_2\\
        \leq&1+\|\nabla A(\Theta_t^\top x_i)\|_2\\
        \leq&2,
    \end{split}
\end{equation}
and the last line is due to that
\begin{equation}
    \begin{split}
        &\left\|\nabla A(\Theta_t^\top x_i)-\nabla A(\Theta^\top x_i)\right\|_2\\
        \leq&\sup_{\eta\in \bbR^K}\left[\sum_{i=1}^K\|(\nabla^2 A(\eta))_{i,:}\|^2_1\right]^{\frac{1}{2}}\|(\Theta_t-\Theta)^\top x_i\|_{\infty}\\
        \leq &2\|(\Theta_t-\Theta)^\top x_i\|_{\infty}\\
        \leq &2C_xR.
    \end{split}
\end{equation}
Therefore,
\begin{equation}
   \mathrm{\RNum{1}}(\Delta) \geq \frac{h^{\MN}(R_0)}{n}\sum_{i=1}^n \|\Delta^\top x_i\|_2^2,
\end{equation}
where we have applied the fact that $R\leq R_0$ and $h^{\MN}(\cdot)$ is a decreasing function.
\end{proof}
\begin{proof}[Proof of Lemma \ref{lem:l1_l0_balls}]
    Let \begin{equation}
        A_1=\bbB_{\omega,2,1}(\sqrt{\rho})\cap \bbB_{2}(1)
    \end{equation} 
    and 
    \begin{equation}
        A_2=\left(1+\frac{2}{\min_j \omega_j}\right)\mathrm{cl}\{\mathrm{conv}\{\bbB_{\cG,0}(\lfloor \rho\rfloor)\cap \bbB_2(1)\}\}.
    \end{equation} 
    Since both $A_1$ and $A_2$ are convex sets, it suffices to show that support function $\phi_{A_1}(U)\leq \phi_{A_2}(U)$ for any $U\in \bbR_{p\times K}$, where $\phi_{A_1}(U)=\sup_{V\in A_1}\langle V,U\rangle$ and $\phi_{A_2}(U)=\sup_{V\in A_2}\langle V,U\rangle$.
    Let $\cG_{\lfloor \rho\rfloor}$ be the union of $\lfloor \rho\rfloor$ $\cG_j$'s such that $\|U_{\cR_j,\cC_j}\|_2$ are the largest. Then 
    \begin{equation}
    \begin{split}
        \phi_{A_1}(U)\leq &\|U_{\cG_{\lfloor \rho\rfloor}}\|_2+\sqrt{\rho}\|U_{\cG^c_{\lfloor \rho\rfloor}}\|_{\omega^{-1},2,\infty}\\
        \leq&\|U_{\cG_{\lfloor \rho\rfloor}}\|_2+\frac{\sqrt{\rho}}{\sqrt{\lfloor \rho\rfloor}\min_j\omega_j}\|U_{\cG_{\lfloor \rho\rfloor}}\|_2\\
        \leq&\left(1+\frac{2}{\min_j\omega_j}\right)\|U_{\cG_{\lfloor \rho\rfloor}}\|_2.
    \end{split}
    \end{equation}
    On the other hand, one can also show that 
    \begin{equation}
        \begin{split}
            \phi_{A_2}(U)=&\left(1+\frac{2}{\min_j\omega_j}\right)\sup_{V\in \bbB_{\cG,0}(\lfloor \rho\rfloor)\cap\bbB_2(1)}\langle V,U\rangle\\
            =&\left(1+\frac{2}{\min_j\omega_j}\right)\|U_{\cG_{\lfloor \rho\rfloor}}\|_2.
        \end{split}
    \end{equation}
    Therefore, we have $A_1\subset A_2$.
    \end{proof}
    \begin{proof}[Proof of Lemma \ref{lem:rec_dev}]
        By assumption \ref{assump:subgauss_row}, $\{(x_i\Delta_{:,j}\}_{i=1}^n$ are independent sub-Gaussian random variables with parameter $\sigma\|\Delta_{:,j}\|_2$, which implies that $\|(x_i^\top \Delta_{:,j})^2\|_{\psi_1}\leq C\sigma^2\|\Delta_{:,j}\|_2^2$. Thus 
        \begin{equation}\label{eq:rec_dev_col}
            \bbP\left(\left|\Delta_{:,j}^\top \left(\frac{1}{n}X^\top X-\Sigma\right)\Delta_{:,j}\right|\geq t\|\Delta_{:,j}\|_2^2\right)\leq 2\exp\left\{-cn\min\left\{\frac{t^2}{\sigma^4},\frac{t}{\sigma^2}\right\}\right\}.
        \end{equation}
        Since $g(\Delta,\Delta)=\sum_{j=1}^K\Delta_{:,j}^\top \left(\frac{1}{n}X^\top X-\Sigma\right)\Delta_{:,j}$, \eqref{eq:rec_dev_col} implies that
        \begin{equation}
            \bbP\left(\left|g(\Delta,\Delta)\right|\geq t\right)\leq 2K\exp\left\{-cn\min\left\{\frac{t^2}{\sigma^4},\frac{t}{\sigma^2}\right\}\right\}.
        \end{equation}
    \end{proof}
    \begin{proof}[Proof of Lemma~\ref{lem:symmetrization}]
    Let $\{\widetilde{U}_i\}_{i=1}^n$ be independent copies of $\{U_i\}_{i=1}^n$, and $\{\varepsilon_i\}_{i=1}^n$ be independent Rademacher random variables. Then one can show that
    \begin{equation*}
        \begin{split}
            &\bbE_{U}\left(\sup_{\gamma\in\Gamma}\sum_{i=1}^n(\gamma(U_i)-\bbE(\gamma(U_i)))\right)\\
            =&\bbE_U\left(\sup_{\gamma\in\Gamma}\sum_{i=1}^n(\gamma(U_i)-\bbE_{\widetilde{U}}(\gamma(\widetilde{U}_i)))\right)\\
            =&\bbE_U\left(\sup_{\gamma\in\Gamma}\bbE_{\widetilde{U}}\left(\sum_{i=1}^n(\gamma(U_i)-\gamma(\widetilde{U}_i))|U\right)\right)\\
            \leq&\bbE_{U,\widetilde{U}}\left(\sup_{\gamma\in\Gamma}\sum_{i=1}^n(\gamma(U_i)-\gamma(\widetilde{U}_i))\right)\\
            =&\bbE_{\varepsilon}\bbE_{U,\widetilde{U}}\left(\sup_{\gamma\in\Gamma}\sum_{i=1}^n\varepsilon_i(\gamma(U_i)-\gamma(\widetilde{U}_i))\right),
        \end{split}
    \end{equation*}
    where the 4th line is due to that 
    \begin{equation*}
        \sum_{i=1}^n(\gamma(U_i)-\gamma(\widetilde{U}_i))\leq \sup_{\gamma\in\Gamma}\sum_{i=1}^n(\gamma(U_i)-\gamma(\widetilde{U}_i))
    \end{equation*} for any $\gamma\in \Gamma$, and the last line is due to that $U_i,\widetilde{U}_i, i=1,\dots,n$ are all independent, $U_i$ and $\widetilde{U}_i$ are identically distributed. Furthermore, we have
    \begin{equation*}
        \begin{split}
            &\bbE_{\varepsilon}\bbE_{U,\widetilde{U}}\left(\sup_{\gamma\in\Gamma}\sum_{i=1}^n\varepsilon_i(\gamma(U_i)-\gamma(\widetilde{U}_i))\right)\\
            \leq&\bbE\left(\sup_{\gamma\in\Gamma}\sum_{i=1}^n\varepsilon_i\gamma(U_i)\right)+\bbE\left(\sup_{\gamma\in\Gamma}\sum_{i=1}^n(-\varepsilon_i)\gamma(\widetilde{U}_i)\right)\\
            =&2\bbE\left(\sup_{\gamma\in\Gamma}\sum_{i=1}^n\varepsilon_i\gamma(U_i)\right),
        \end{split}
    \end{equation*}
    where we have utilized the symmetricity of $\varepsilon_i$.
    \end{proof}
     \begin{proof}[Proof of Lemma \ref{lem:liptchitz-h}]
    First note that
    \begin{equation}
    \begin{split}
        \nabla h_i(u)=&\left[\nabla A(\Theta^{*\top}x_i)-\nabla A(\Theta^{*\top}x_i+u)\right]1_K^\top \epsilon_i\\
        &-\nabla^2 A(\Theta^{*\top}x_i+u)^\top u1_K^\top \epsilon_i,
    \end{split}
    \end{equation}
    and $|1_K^\top \epsilon_i|=|1_K^\top \delta_i-1_K^\top\bbE\delta_i|\leq 1$. Recall the definition of $A(u)=\log(1+\sum_{k=1}^Ke^{u_i})$, one can show that $\|\nabla A(u)\|_2\leq 1$. As shown in the proof of Lemma \ref{lem:mult_rsc_term1}, $\nabla^2 A(u)$ is positive definite with each entry 
    \begin{equation*}
        (\nabla^2 A(u))_{jk}=\frac{e^{u_j}}{1+\sum_{l=1}^Ke^{u_l}}\left(\ind{j=k}-\frac{e^{u_k}}{1+\sum_{l=1}^Ke^{u_l}}\right).
    \end{equation*}
    Thus 
    \begin{equation*}
        \begin{split}
            \lambda_{\max}(\nabla^2 A(u))=&\sup_{\|v\|_2=1}v^\top \nabla^2A(u)v\\
            \leq&\sup_{\|v\|_2=1}\sum_{j=1}^K\frac{v_j^2e^{u_j}}{1+\sum_{l=1}^Ke^{u_l}}\\
            \leq&1,
        \end{split}
    \end{equation*}
    which implies $\|\nabla^2A(u)v\|_2\leq \|v\|_2$ for any $u,v\in \bbR^K$. Meanwhile, for any $u\in \bbR^K$,
    \begin{equation*}
        \begin{split}
            \|\nabla^2A(u)u\|_2^2=&\sum_{i=1}^K\left(\frac{u_ie^{u_i}}{1+\sum_{k=1}^Ke^{u_k}}-\frac{\sum_{j=1}^Ku_j e^{u_i+u_j}}{(1+\sum_{k=1}^Ke^{u_k})^2}\right)^2\\
            =&\frac{\sum_{i=1}^K \left(u_ie^{u_i}+\sum_{j=1}^K(u_i-u_j)e^{u_i+u_j}\right)^2}{(1+\sum_{k=1}^Ke^{u_k})^4}\\
            \leq&\frac{K^2\sum_{i=1}^K e^{4u_i}}{(1+\sum_{k=1}^Ke^{u_k})^4}\\
            \leq &K^2,
        \end{split}
    \end{equation*}
    where we have applied the fact that $x\leq e^x$ for any $x\in \bbR$ on the third line. Therefore,
    \begin{equation}
        \begin{split}
            \|\nabla h_i(u)\|_2\leq& 2+\|\nabla^2A(\Theta^{*\top} x_i+u)^\top u\|_2\\
            \leq& 2+\|\nabla^2A(\Theta^{*\top} x_i+u)^\top \Theta^*x_i\|_2\\
            &+\|\nabla^2A(\Theta^{*\top} x_i+u)^\top (\Theta^*x_i+u)\|_2\\
            \leq& 2+\|\Theta^*x_i\|_2+K\\
            \leq &K+\sqrt{K}R^*C_x+2,
        \end{split}
    \end{equation}
    where we used the fact that $\|\Theta^*\|_{1,\infty}\leq R^*$ and $\|x_i\|_{\infty}\leq C_x$ on the last line.
    \end{proof}
    \begin{proof}[Proof of Lemma~\ref{lem:mult_dev_expt}]
     Let $E\in \bbR^{n\times K}$ such that $E_{ij}=\widetilde{\varepsilon}_{ij}$, then 
    \begin{equation*}
        \bbE\left\|\sum_{i=1}^nx_i\widetilde{\varepsilon}_i^\top\right\|_{\omega^{-1},2,\infty}=\bbE\max_{1\leq j\leq J}\frac{1}{\omega_j}\|X_{\cR_j}^\top E_{\cC_j}\|_2.
    \end{equation*}
    First note that by similar arguments from the proof of Lemma \ref{lem:multinomial_dev_bnd}, one can show that
    \begin{equation}
            \bbE\left[\|X_{\cR_j}^\top E_{\cC_j}\|_2|X\right]\leq \left(c_j\tr(X_{\cR_j}X_{\cR_j}^\top)\right)^{\frac{1}{2}}\leq \sqrt{r_jc_j}\|X_{\cR_j}\|.
    \end{equation}
    Meanwhile, applying Talagrand's contraction inequality (see e.g., Theorem 5.2.16 in \cite{vershynin2018high}) shows us that
    \begin{equation}\label{eq:RSC_term2_tail1}
        \bbP\left(\|X_{\cR_j}^\top E_{\cC_j}\|_2>C\|X_{\cR_j}\|(\sqrt{r_jc_j}+t_j)|X\right)\leq \exp\{-ct_j^2\}.
    \end{equation}
    To obtain a tail probability bound without conditioning on $X$, we apply bounds on spectral norm of matrices with independent isotropic sub-Gaussian rows to $X_{\cR_j}\Sigma_{\cR_j,\cR_j}^{-\frac{1}{2}}$, then for any $t_j>0$,
    \begin{equation}\label{eq:RSC_term2_tail2}
        \bbP\left(\|X_{\cR_j}\|> \lambda^{\frac{1}{2}}_{\max}(\Sigma)(C\sqrt{n}+t_j)\right)\leq 2\exp\{-ct_j^2\}.
    \end{equation}
    Combining \eqref{eq:RSC_term2_tail1} and \eqref{eq:RSC_term2_tail2}, one can show that
    \begin{equation}
    \begin{split}
        &\bbP\left(\|X_{\cR_j}^\top E_{\cC_j}\|_2>C\lambda^{\frac{1}{2}}_{\max}(\Sigma)(\sqrt{r_jc_jn}+(\sqrt{n}+\sqrt{r_jc_j})t_j+t_j^2)\right)\\
        \leq &3\exp\{-nt_j^2\},
    \end{split}
    \end{equation}
    which is equivalent to
    \begin{equation}
    \begin{split}
        &\bbP\left(\|X_{\cR_j}^\top E_{\cC_j}\|_2>C\lambda^{\frac{1}{2}}_{\max}(\Sigma)\sqrt{r_jc_jn}+t_j\right)\\
        \leq &3\exp\left\{-\min\left\{\frac{t_j^2}{\lambda_{\max}(\Sigma)(n+r_jc_j)},\frac{t_j}{\lambda^{\frac{1}{2}}_{\max}(\Sigma)}\right\}\right\}.
    \end{split}
    \end{equation}
    Therefore, we can bound $\bbE\max_{1\leq j\leq J}\frac{1}{\omega_j}\|X_{\cR_j}^\top E_{\cC_j}\|_2$ as follows:
    \begin{equation}
        \begin{split}
            &\bbE\max_{1\leq j\leq J}\frac{1}{\omega_j}\|X_{\cR_j}^\top E_{\cC_j}\|_2\\
            \leq&\int_{0}^{\infty}\bbP\left(\max_{1\leq j\leq J}\frac{1}{\omega_j}\|X_{\cR_j}^\top E_{\cC_j}\|_2>t\right)\mathrm{d}t\\
            \leq&C\lambda^{\frac{1}{2}}_{\max}(\Sigma)\sqrt{mn}+C\lambda^{\frac{1}{2}}_{\max}(\Sigma)\sqrt{\log J(m+n)}\\
            &+\int_{C\lambda^{\frac{1}{2}}_{\max}(\Sigma)\sqrt{\log J(m+n)}}^{\infty}3J\exp\left\{-c\min\left\{\frac{t}{\lambda^{\frac{1}{2}}_{\max}(\Sigma)},\frac{t^2}{\lambda_{\max}(\Sigma)(m+n)}\right\}\right\}\mathrm{d}t\\
            \leq&C\lambda^{\frac{1}{2}}_{\max}(\Sigma)\sqrt{mn}+C\lambda^{\frac{1}{2}}_{\max}(\Sigma)\sqrt{\log J(m+n)}\\
            &+\int_{0}^{\infty}3\exp\left\{-c\min\left\{\frac{t}{\lambda^{\frac{1}{2}}_{\max}(\Sigma)},\frac{t^2}{\lambda_{\max}(\Sigma)(m+n)}\right\}\right\}\mathrm{d}t\\
            \leq&C\lambda^{\frac{1}{2}}_{\max}(\Sigma)(\sqrt{mn}+\sqrt{(\log J(m+n)}+\sqrt{m+n})\\
            \leq&C\lambda^{\frac{1}{2}}_{\max}(\Sigma)\sqrt{(m+\log J)n},
        \end{split}
    \end{equation}
    where we have applied Assumption \ref{assump:scaling} on the last line. 
    \end{proof}
\subsection{Supporting Lemmas for the Ordinal-PU Model}\label{sec:support_lemma_ON}
\begin{proof}[Proof of Lemma \ref{lem:ordinal_rsc_term1}]
Let $u_i=u(x_i,\theta)$, $u_i^*=u(x_i,\theta^*)$. First note that $\exists t\in [0,1]$ such that 
\begin{equation*}
\begin{split}
    &\mathrm{\RNum{1}}(\Delta)\\
    =&\frac{1}{n}\sum_{i=1}^n \begin{pmatrix}x_i^\top \Delta_{1:p}-\Delta_{p+1}\\
    -\Delta_{p+2}\\
    \vdots\\
    -\Delta_{p+K}\end{pmatrix}^\top\nabla f^{\ON}(u_i^t)^\top\nabla^2 A(f^{\ON}(u_i^t))\nabla f^{\ON}(u_i) \begin{pmatrix}x_i^\top \Delta_{1:p}-\Delta_{p+1}\\
    -\Delta_{p+2}\\
    \vdots\\
    -\Delta_{p+K}\end{pmatrix},
\end{split}
\end{equation*}
where $u_i^t=u_i^*+t(u_i-u_i^*)$. Then some calculations show that
\begin{equation*}
    \begin{split}
        \mathrm{\RNum{1}}(\Delta)\geq&\frac{1}{n}\sum_{i=1}^n\Bigg[\lambda_{\min}\left(\nabla f^{\ON}(u_i^t)^\top \nabla^2 A(f^{\ON}(u_i^t))\nabla f^{\ON}(u_i^t)\right)\\
        &-\left\|\nabla f^{\ON}(u_i^t)^\top\nabla^2 A(f^{\ON}(u_i^t)) (\nabla f^{\ON}(u_i^t)-\nabla f^{\ON}(u_i))\right\|\Bigg]\\
        &\left((x_i^\top\Delta_{1:p}-\Delta_{p+1})^2+\sum_{j=2}^{K}\Delta_{p+j}^2\right).
    \end{split}
\end{equation*}
To further lower bound the term above, two key steps here are to (i) lower bound the minimum eigenvalue of $\nabla f^{\ON}(u_i^t)^\top \nabla f^{\ON}(u_i^t)$ and to (ii) upper bound $\|\nabla f^{\ON}(u_i^t)-\nabla f^{\ON}(u_i)\|$. 

The second key step is why we need to constrain the distance between $\theta$ and $\theta^*$ in Lemma~\ref{lem:ordinal_rsc}: by focusing on $\theta\in\theta^*+S(R_0,r_0)$ for some $R_0, r_0>0$, we can upper bound $\|\nabla f^{\ON}(u_i^t)-\nabla f^{\ON}(u_i)\|$ appropriately so that we can show that $\mathrm{\RNum{1}}(\Delta)\geq\frac{\gamma}{n}\sum_{i=1}^n\left((x_i^\top\Delta_{1:p}-\Delta_{p+1})^2+\sum_{j=2}^{K}\Delta_{p+j}^2\right)$ for some $\gamma>0$. 
The proof of Lemma \ref{lem:ordinal_rsc_term1_1} and Lemma \ref{lem:ordinal_rsc_term1_2} are included in Section~\ref{sec:ordinal_rsc_proof_lemmas}.
As shown in the proof of Lemma \ref{lem:mult_rsc_term1}, for any vector $\eta\in \bbR^K$,
\begin{equation*}
    \lambda_{\min}(\nabla^2 A(\eta))\geq \frac{\min_j e^{\eta_j}}{\left(1+\sum_{l=1}^Ke^{\eta_l}\right)^2}.
\end{equation*}
Recall that $\Delta\in S(R_0,r_0)$, which implies that $\max\{\|\Delta_{1:p}\|_1,\|\Delta_{(p+1):(p+K)}\|_1\}\leq R_0$, and $-\min_{2\leq j\leq K}\Delta_{p+j}\leq r_0<r^*$. Since $\exp\{f_j(u_i^t)\}=\frac{n_j}{\pi_jn_u}p_j(u_i^t)$, where $p_j$ is as defined in \eqref{eq:ordinal_p_def}, then by Lemma \ref{lem:ordinal_p_alpha_f_bnd}, one can show that
\begin{equation*}
    \min_j\exp\{f_j(u_i^t)\}=\min_j\frac{n_j}{\pi_jn_u}p_j(u_i^t)\geq \min_j\frac{n_j}{\pi_jn_u}(h^{\ON}(R_0,r_0))^{-1}.
\end{equation*}
Thus we have
\begin{equation*}
    \lambda_{\min}(\nabla^2 A(f^{\ON}(u_i^t)))\geq \min_j\frac{n_j}{4\pi_jn_uh^{\ON}(R_0,r_0)}.
\end{equation*}

Now we recall Lemma \ref{lem:ordinal_rsc_term1_1} and Lemma \ref{lem:ordinal_rsc_term1_2_brief} in the main paper.
Here we give a more slight modification of Lemma \ref{lem:ordinal_rsc_term1_2_brief} that better suits our purposes:
\begin{lemma}\label{lem:ordinal_rsc_term1_2}
 For any $\Delta\in S(R_0,r_0)$,
 $$
 \|\nabla f^{\ON}(u_i^t)-\nabla f^{\ON}(u_i)\|\leq \sqrt{5}[(h^{\ON}(R_0,r_0))^2+h^{\ON}(R_0,r_0)](C_x+1)R_0K,
 $$
 where $h^{\ON}(\cdot)$ is as defined in \eqref{eq:ordinal_h_def}.
\end{lemma}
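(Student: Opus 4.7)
The plan is to reduce this bound to the same Lipschitz argument that proves Lemma~\ref{lem:ordinal_rsc_term1_2_brief}, but with the base point shifted from $\theta^*$ to $\theta^t := \theta^* + t\Delta$. First I would observe that $u_i^t = u(x_i,\theta^t)$, and that both $t\Delta$ and $(1-t)\Delta$ lie in $S(R_0,r_0)$ for every $t\in[0,1]$: the $\ell_1$ bounds $\|\cdot\|_1 \le R_0$ are preserved under scaling by $t\in[0,1]$, and $\min_{2\le j\le K} t\Delta_{p+j} \ge -tr_0 \ge -r_0$. Hence the entire segment from $\theta^t$ to $\theta$ stays inside $\theta^* + S(R_0,r_0)$, which means Lemma~\ref{lem:ordinal_p_alpha_f_bnd} supplies uniform bounds on $p_j, p_j^{-1}, \alpha_j, \alpha_{j+1}$ along that segment. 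This is exactly the regularity input needed to run the same argument as in the base-point-$\theta^*$ version.

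Next I would carry out an entrywise mean value calculation using the closed-form
$(\nabla f^{\ON}(u))_{jk} = p_j(u)^{-1}\bigl[\alpha_j(u)\ind{k\le j} - \alpha_{j+1}(u)\ind{k\le j+1}\bigr]$
derived in Section~\ref{sec:ordinal_rsc_proof_lemmas}. Each entry of $\nabla f^{\ON}(u_i^t) - \nabla f^{\ON}(u_i)$ then becomes a finite difference of expressions of the form $p_j^{-1}\alpha_{j'}$. The mean value theorem rewrites each such difference as $(u_i - u_i^t)^\top \nabla(p_j^{-1}\alpha_{j'})$ evaluated at an intermediate $u$ on the segment. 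Applying the product rule yields two kinds of terms: a $p_j^{-2}\alpha_{j'}\partial p_j$ term, which contributes the quadratic $(h^{\ON}(R_0,r_0))^2$ factor, and a $p_j^{-1}\partial \alpha_{j'}$ term, which contributes the linear $h^{\ON}(R_0,r_0)$ factor. Because $p_j$ and $\alpha_j$ depend on $u$ only through cumulative sums $\sum_{l\le j}u_l$, their partial derivatives reduce to combinations of sigmoid-type expressions that are uniformly bounded by absolute constants once we know $p_j$ is bounded away from $0$.

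Finally, I would aggregate these entrywise bounds. Using $\|u_i - u_i^t\|_\infty = (1-t)\|u_i - u_i^*\|_\infty \le (C_x+1)R_0$, which follows from $|x_i^\top\Delta_{1:p}|\le C_x\|\Delta_{1:p}\|_1 \le C_x R_0$ and $\max_\ell|\Delta_{p+\ell}|\le R_0$, each nonzero entry of the difference matrix is bounded by an absolute constant times $[(h^{\ON}(R_0,r_0))^2 + h^{\ON}(R_0,r_0)](C_x+1)R_0$. Each row has at most two nonzero positions coming from the two indicator terms; combining these produces the $\sqrt{5}$ constant after squaring and collecting, and summing over the $K$ rows yields the $K$ factor to complete the claimed operator norm bound. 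The main obstacle I anticipate is the product-rule bookkeeping in the previous step: $\nabla\alpha_j$ and $\nabla p_j$ are each sums of several sigmoid expressions whose magnitudes must be tracked simultaneously, and the quadratic $(h^{\ON})^2$ factor is unavoidable because $p_j^{-1}$ enters twice in the derivative of $p_j^{-1}\alpha_{j'}$. The condition $r_0 < r^*$, which is built into the definition of $h^{\ON}(R_0,r_0)$, is essential here to keep $p_j$ bounded away from $0$ uniformly along the segment of integration.
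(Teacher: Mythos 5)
Your overall strategy is the same as the paper's: reduce to an entrywise bound on the difference of terms of the form $p_j^{-1}\alpha_{j'}$, split each such difference into a piece driven by the variation of $p_j^{-1}$ (contributing $(h^{\ON}(R_0,r_0))^2$) and a piece driven by the variation of $\alpha_{j'}$ with $p_j^{-1}$ held as a prefactor (contributing $h^{\ON}(R_0,r_0)$), control both via $\|u_i-u_i^t\|\le (C_x+1)R_0$, and aggregate into a Frobenius-norm bound. The paper does the split by adding and subtracting rather than by the product rule inside a single mean-value step, but these are equivalent. Your preliminary observation that the segment between $\theta^t$ and $\theta$ stays in $\theta^*+S(R_0,r_0)$ is correct and is exactly what licenses applying the uniform bounds on $p_j^{-1}$ and the derivative bounds along the segment.

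Two bookkeeping steps in your final aggregation are wrong as stated. First, the claim that each row of $\nabla f^{\ON}(u)$ ``has at most two nonzero positions'' is false: row $j$ equals $(\alpha_j-\alpha_{j+1})/p_j$ in \emph{all} columns $k\le j$ and $-\alpha_{j+1}/p_j$ in column $k=j+1$, so it has $j+1$ nonzero entries taking at most two distinct values. The total entry count is therefore of order $K^2$, and that is precisely what produces the factor $K$ (not $\sqrt{K}$) in the Frobenius bound: squaring and summing gives $\sum_j\bigl[j\,(2M)^2+M^2\bigr]\le 5K^2M^2$ with $M=\max_{j,k}\lvert p_j^{-1}\alpha_k(u_i^t)-p_j^{-1}\alpha_k(u_i)\rvert$, which is where $\sqrt{5}K$ comes from; your count of $2K$ entries would yield $\sqrt{2K}\,M$ and does not generate the stated constant by the route you describe. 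Second, in the mean-value step you pair $\|u_i-u_i^t\|_\infty$ with the gradient of $p_j^{-1}\alpha_{j'}$; since the per-coordinate bounds you have are $\ell_\infty$ bounds on $\nabla p_j^{-1}$ and $\nabla\alpha_{j'}$ (whose nonzero entries number up to $K$), H\"older forces you to use $\|u_i-u_i^t\|_1\le (C_x+1)R_0$ instead, or you pick up a spurious extra factor of $K$. The fix is free because $\|u_i-u_i^t\|_1\le C_x\|\Delta_{1:p}\|_1+\|\Delta_{(p+1):(p+K)}\|_1\le (C_x+1)R_0$ holds for the $\ell_1$ norm as well, but the pairing as written does not close.
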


By Lemma~\ref{lem:ordinal_rsc_term1_1},
\begin{equation}
\begin{split}
    &\lambda_{\min}(\nabla f^{\ON}(u_i^t)^\top \nabla^2 A(f^{\ON}(u_i^t))\nabla f^{\ON}(u_i^t))\\
    \geq&\lambda_{\min}(\nabla f^{\ON}(u_i^t)^\top\nabla f^{\ON}(u_i^t))\lambda_{\min}(\nabla^2 A(f^{\ON}(u_i^t)))\\
    \geq &\min_j\frac{n_j}{\pi_jn_u}\frac{e^{2(C_x+1)(R^*+R_0)}}{16Kh^{\ON}(R_0,r_0)(1+e^{(C_x+1)(R^*+R_0)})^4}.
\end{split}
\end{equation}

Now we upper bound $\left\|\nabla f^{\ON}(u_i^t)\nabla^2 A(f^{\ON}(u_i^t)) (\nabla f^{\ON}(u_i^t)-\nabla f^{\ON}(u_i))\right\|$. By \eqref{eq:ordinal_RSC_f_grad_lwrbnd},
\begin{equation*}
    \begin{split}
        \|\nabla f^{\ON}(u_i^t)v\|_2^2=&\sum_{j=1}^{K-1}\frac{(\alpha_j(u_i^t)\widetilde{v}_j-\alpha_{j+1}(u_i^t)\widetilde{v}_{j+1})^2}{p_j^2(u_i^t)}+\frac{\alpha_K^2(u_i^t)\widetilde{v}_K^2}{p_K^2(u_i^t)}\\
        \leq&\sum_{j=1}^{K-1}\frac{2\alpha_j^2(u_i^t)\widetilde{v}^2_j+2\alpha_{j+1}^2(u_i^t)\widetilde{v}^2_{j+1}}{p_j^2(u_i^t)}+\frac{\alpha_K^2(u_i^t)\widetilde{v}_K^2}{p_K^2(u_i^t)}\\
        \leq&\frac{2(1+e^{(C_x+1)(R^*+R_0)})^4}{(r^*-r_0)^2e^{2(C_x+1)(R^*+R_0)}}\sum_{j=1}^{K-1}(\widetilde{v}^2_j+\widetilde{v}^2_{j+1})\\
        &+(1+e^{(C_x+1)(R^*+R_0)})^2\widetilde{v}_K^2\\
        \leq&(h^{\ON}(R_0,r_0))^2\left(2\sum_{j=1}^{K-1}(\widetilde{v}_j^2+\widetilde{v}_{j+1}^2)+\widetilde{v}_K^2\right)\\
        \leq&(h^{\ON}(R_0,r_0))^2\left(2\sum_{j=1}^{K-1}(j\sum_{l=1}^jv_l^2+(j+1)\sum_{l=1}^{j+1}v_{l}^2)+K\sum_{l=1}^Kv_l^2\right)\\
        \leq&(h^{\ON}(R_0,r_0))^2\sum_{l=1}^K\left(2l+2\sum_{j=l}^{K-1}(2j+1)+K\right)v_l^2\\
        \leq&3(h^{\ON}(R_0,r_0))^2K^2\|v\|_2^2,
    \end{split}
\end{equation*}
which implies that
\begin{equation*}
    \|\nabla f^{\ON}(u_i^t)\|\leq \sqrt{3}h^{\ON}(R_0,r_0)K.
\end{equation*}
Furthermore, by Lemma~\ref{lem:ordinal_rsc_term1_2},
\begin{equation*}
\begin{split}
    \|\nabla f^{\ON}(u_i^t)-\nabla f^{\ON}(u_i)\|\leq \sqrt{5}((h^{\ON}(R_0,r_0))^2+h^{\ON}(R_0,r_0))(C_x+1)R_0K.
    \end{split}
\end{equation*}
While for $\|\nabla^2 A(f^{\ON}(u_i^t))\|$, note that for any $v\in \bbR^K$,
\begin{equation*}
    \begin{split}
        v^\top \nabla^2 A(f^{\ON}(u_i^t))v=&\sum_{j=1}^K(\nabla A(f^{\ON}(u_i^t)))_jv_j^2-\left(\sum_{j=1}^K(\nabla A(f^{\ON}(u_i^t)))_jv_j\right)^2\\
        \leq&\sum_{j=1}^K(\nabla A(f^{\ON}(u_i^t)))_jv_j^2\\
        \leq&\max_j\frac{e^{f^{\ON}(u_{ij}^t)}}{1+\sum_{k=1}^Ke^{f^{\ON}(u_{ik}^t)}}\|v\|_2^2\\
        \leq&\|v\|_2^2,
    \end{split}
\end{equation*}
and hence $\|\nabla^2 A(f^{\ON}(u_i^t))\|=\lambda_{\max}(\nabla^2 A(f^{\ON}(u_i^t)))\leq 1$.
Therefore, 
\begin{equation*}
\begin{split}
    &\|\nabla f^{\ON}(u_i^t)^\top\nabla^2A(f^{\ON}(u_i^t))(\nabla f^{\ON}(u_i^t)-\nabla f^{\ON}(u_i))\|\\
    \leq&4((h^{\ON}(R_0,r_0))^3+(h^{\ON}(R_0,r_0))^2)(C_x+1)R_0K^2.
\end{split}
\end{equation*}
By Lemma \ref{lem:ordinal_region_consequence}, we know that
\begin{equation*}
    R_0\leq \min_j\frac{n_j}{\pi_jn_u}\frac{\left(1+e^{(C_x+1)(R^*+R_0)}\right)^{-2}}{512(C_x+1)K^3(h^{\ON}(R_0,r_0))^3(1+h^{\ON}(R_0,r_0))},
\end{equation*}
which then implies
\begin{equation*}
    \begin{split}
        &\|\nabla f^{\ON}(u_i^t)^\top\nabla^2A(f^{\ON}(u_i^t))(\nabla f^{\ON}(u_i^t)-\nabla f^{\ON}(u_i))\|\\
    \leq&\min_j\frac{n_j}{\pi_jn_u}\frac{\left(1+e^{(C_x+1)(R^*+R_0)}\right)^{-2}}{128Kh^{\ON}(R_0,r_0)}\\
    \leq & \min_j\frac{n_j}{\pi_jn_u}\frac{e^{2(C_x+1)(R^*+R_0)}}{32Kh^{\ON}(R_0,r_0)\left(1+e^{(C_x+1)(R^*+R_0)}\right)^{4}},
    \end{split}
\end{equation*}
where the last line is due to the fact that $e^{(C_x+1)(R^*+R_0)}\geq 1$.
Therefore,
\begin{equation*}
\begin{split}
    &\lambda_{\min}(\nabla f^{\ON}(u_i^t)^\top \nabla^2A(f^{\ON}(u_i^t))\nabla f^{\ON}(u_i^t))\\
    &-\|\nabla f^{\ON}(u_i^t)^\top \nabla^2A(f^{\ON}(u_i^t))(\nabla f^{\ON}(u_i^t)-\nabla f^{\ON}(u_i))\|\\
    \geq&\min_j\frac{n_j}{\pi_jn_u}\frac{e^{2(C_x+1)(R^*+R_0)}}{32Kh^{\ON}(R_0,r_0)(1+e^{(C_x+1)(R^*+R_0)})^{4}},
\end{split}
\end{equation*}
which finishes the proof.
\end{proof}
\begin{proof}[Proof of Lemma \ref{lem:ordinal_rec_dev}]
 By assumption \ref{assump:subgauss_row}, $\{(x_i^\top \widetilde{\Delta}_{1:p}\}_{i=1}^n$ are independent sub-Gaussian random variables with parameter $\sigma\|\widetilde{\Delta}_{1:p}\|_2$, which implies that 
 \begin{equation*}
     \|(x_i^\top \widetilde{\Delta}_{1:p})^2\|_{\psi_1}\leq C\sigma^2\|\widetilde{\Delta}_{1:p}\|_2^2\leq C\sigma^2,
 \end{equation*} 
 and 
 \begin{equation*}
     \|x_i^\top \widetilde{\Delta}_{1:p}|\widetilde{\Delta}_{p+1}|\|_{\psi_2}\leq \sigma|\widetilde{\Delta}_{p+1}|\|\widetilde{\Delta}_{1:p}\|_2\leq \frac{\sigma}{2}.
 \end{equation*}
 Meanwhile, note that
 \begin{equation*}
     g(\widetilde{\Delta},\widetilde{\Delta})=\frac{1}{n}\sum_{i=1}^n(x_i^\top\widetilde{\Delta}_{1:p})^2-\widetilde{\Delta}_{1:p}^\top \Sigma\widetilde{\Delta}_{1:p}-2\widetilde{\Delta}_{p+1}\widetilde{\Delta}_{1:p}^\top x_i
 \end{equation*}
 Thus 
 \begin{equation}
     \begin{split}
         \bbP\left(|g(\widetilde{\Delta},\widetilde{\Delta})|\geq t\right)\leq&\bbP\left(\left|\frac{1}{n}\sum_{i=1}^n(x_i^\top\widetilde{\Delta}_{1:p})^2-\widetilde{\Delta}_{1:p}^\top \Sigma\widetilde{\Delta}_{1:p}\right|\geq \frac{t}{2}\right)\\
         &+\bbP\left(\left|\widetilde{\Delta}_{p+1}\widetilde{\Delta}_{1:p}^\top x_i\right|\geq \frac{t}{4}\right)\\
         \leq&\exp\left\{-cn\min\left\{\frac{t^2}{\sigma^4},\frac{t}{\sigma^2}\right\}\right\}+\exp\left\{-cn\frac{t^2}{\sigma^2}\right\}\\
         \leq&2\exp\left\{-cn\min\left\{\frac{t^2}{\sigma^4},\frac{t^2}{\sigma^2},\frac{t}{\sigma^2}\right\}\right\}.
    \end{split}
 \end{equation}  
\end{proof}
\begin{proof}[Proof of Lemma~\ref{lem:ordinal_rsc_term2_lip}]
 For any $1\leq l\leq K$, some calculation shows that
 \begin{equation*}
 \begin{split}
     \left|(\nabla q_i(u_i(\Delta)))_l\right|=&\bigg|\epsilon_i^\top\left(\nabla f^{\ON}(u_i(\theta^*)+u_i(\Delta))-\nabla f^{\ON}(u_i(\theta^*))\right)_{:,l}\\
     &+\epsilon_i^\top (\nabla^2f^{\ON}(u_i(\theta^*)+u_i(\Delta)))_{:,:,l}u_i(\Delta)\bigg|\\
     \leq&2\left\|\nabla f^{\ON}(u_i(\theta^*)+u_i(\Delta))-\nabla f^{\ON}(u_i(\theta^*))\right\|_{\infty}\\
     &+2\|u_i(\Delta)\|_1\left\|\nabla^2f^{\ON}(u_i(\theta^*)+u_i(\Delta))\right\|_{\infty}.
 \end{split}
 \end{equation*}
 By Lemma~\ref{lem:ordinal_p_alpha_f_bnd}, we have
 \begin{equation*}
     \left|(\nabla q_i(u_i(\Delta)))_l\right|\leq h^{\ON}(R_0,r_0)+2\|u_i(\Delta)\|_1\left[\frac{2}{3}h^{\ON}(R_0,r_0)+\frac{1}{8}(h^{\ON}(R_0,r_0))^2\right].
 \end{equation*}
 Meanwhile, since
 $\|u_i(\Delta)\|_1\leq C_x\|\Delta_{1:p}\|_1+\|\Delta_{(p+1):(p+K)}\|_1\leq (C_x+1)R_0$,
 we obtain that
 \begin{equation*}
 \begin{split}
      \|\nabla q_i(u_i(\Delta))\|_2\leq &\sqrt{K}h^{\ON}(R_0,r_0)+2\sqrt{K}(C_x+1)R_0\left[\frac{2}{3}h^{\ON}(R_0,r_0)+\frac{1}{8}(h^{\ON}(R_0,r_0))^{2}\right]\\
     =&(\frac{4}{3}(C_x+1)R_0+1)\sqrt{K}h^{\ON}(R_0,r_0)+\frac{(C_x+1)R_0}{4}\sqrt{K}(h^{\ON}(R_0,r_0))^{2}.
 \end{split}
 \end{equation*}
\end{proof}
\begin{proof}[Proof of Lemma \ref{lem:ordinal_region_consequence}]
    Firstly, we note that Assumption \ref{assump:ordinal_region_bound} suggests that
    \begin{equation*}
    \begin{split}
        R_0&\leq 80\frac{\left(1+e^{(C_x+1)(R^*+0.01)}\right)^{-6}}{512(C_x+1)K^3(1+\frac{2}{r^* - r_0})^3(2+\frac{2}{r^*-r_0})}\\
        &\leq \frac{80}{1024K^3}\\
        &< 0.01,
    \end{split}
    \end{equation*}
    where we have applied $K\geq 2$ in the last line. Hence Assumption \ref{assump:ordinal_region_bound} also implies
    \begin{equation*}
        \begin{split}
            R_0\leq \min_j\frac{n_j}{\pi_jn_u}\frac{\left(1+e^{(C_x+1)(R^*+R_0)}\right)^{-6}}{512(C_x+1)K^3(1+\frac{2}{r^* - r_0})^3(2+\frac{2}{r^*-r_0})}.
        \end{split}
    \end{equation*}
    On the other hand, we know that $h^{\ON}(R_0,r_0) \leq (1+e^{(C_x+1)(R^*+R_0)})(1+\frac{2}{r^*-r_0})$. This bound then implies the second inequality in Lemma \ref{lem:ordinal_region_consequence}.
\end{proof}
\subsection{Proof of Properties of $f^{\ON}$ for the Ordinal-PU model}\label{sec:ordinal_rsc_proof_lemmas}
In this section, we prove Lemma~\ref{lem:ordinal_rsc_term1_1} and Lemma~\ref{lem:ordinal_rsc_term1_2}, which tackles some of the major challenges brought by the complicated form of ordinal log-likelihood losses.
Some calculations show that, for any vector $u\in\bbR^K$, $\nabla f^{\ON}(u)\in \bbR^{K\times K}$ satisfies
 \begin{equation*}
     \nabla f^{\ON}(u)_{jk}=\frac{\alpha_j(u)\ind{k\leq j}-\alpha_{j+1}(u)\ind{k\leq j+1}}{p_j(u)}.
 \end{equation*}
where functions $p:\bbR^K\rightarrow \bbR^K$ and $\alpha:\bbR^K\rightarrow \bbR^{K+1}$ are defined as follows:
\begin{equation}\label{eq:ordinal_alpha_def}
    \alpha_j(u)=\begin{cases}e^{\sum_{l=1}^ju_{l}}\left(1+e^{\sum_{l=1}^{j}u_{l}}\right)^{-2},&j\leq K,\\
    0,&j=K+1,
    \end{cases}
\end{equation}
and
\begin{equation}\label{eq:ordinal_p_def}
    p_j(u)=\begin{cases}
    (1+e^{\sum_{l=1}^{j+1}u_{l}})^{-1}-(1+e^{\sum_{l=1}^ju_{l}})^{-1},&1\leq j<K,\\
    1-(1+e^{\sum_{l=1}^ju_{l}})^{-1},&j=K,
    \end{cases}
\end{equation}
where $\alpha_j(u),p_j(u)$ refers to the $j$th coordinate of $\alpha(u)$ and $p(u)$. 
\begin{proof}[Proof of Lemma~\ref{lem:ordinal_rsc_term1_1}]
By the definition of function $f^{\ON}$ in \eqref{eq:ordinal_f_def}, one can show that for any $1\leq j,k\leq K$,
\begin{equation}\label{eq:ordinal_f_grad}
    (\nabla f^{\ON}(u_i^*))_{jk}=p_j(u_i^*)^{-1}\left[\alpha_j(u_i^*)\ind{k\leq j}-\alpha_{j+1}(u_i^*)\ind{k\leq j+1}\right].
\end{equation}
For any vector $v\in \bbR^K$, we have
\begin{equation*}
    \begin{split}
        \nabla f^{\ON}(u_i^t)v=&\begin{pmatrix}
        \frac{\alpha_1(u_i^t)}{p_1(u_i^t)}&-\frac{\alpha_2(u_i^t)}{p_1(u_i^t)}&0&\hdots&0\\
        0&\frac{\alpha_2(u_i^t)}{p_2(u_i^t)}&-\frac{\alpha_3(u_i^t)}{p_2(u_i^t)}&\hdots&0\\
        \vdots&\vdots&\ddots&\ddots&0\\
        0&\hdots&\hdots&\frac{\alpha_{K-1}(u_i^t)}{p_{K-1}(u_i^t)}&-\frac{\alpha_K(u_i^t)}{p_{K-1}(u_i^t)}\\
        0&\hdots&\hdots&\hdots&\frac{\alpha_K(u_i^t)}{p_{K}(u_i^t)}
        \end{pmatrix}\begin{pmatrix}v_1\\v_1+v_2\\\sum_{j=1}^3v_j\\\vdots\\\sum_{j=1}^Kv_j\end{pmatrix}\\
        =&\left(\frac{\alpha_1(u_i^t)\widetilde{v}_1-\alpha_2(u_i^t)\widetilde{v}_2}{p_1(u_i^t)},\dots,\frac{\alpha_{K-1}(u_i^t)\widetilde{v}_{K-1}-\alpha_K(u_i^t)\widetilde{v}_K}{p_{K-1}(u_i^t)},\frac{\alpha_K(u_i^t)\widetilde{v}_K}{p_{K}(u_i^t)}\right)^\top,
    \end{split}
\end{equation*}
where $\widetilde{v}_k=\sum_{j=1}^kv_j$. Then one can show that 
\begin{equation}\label{eq:ordinal_RSC_f_grad_lwrbnd}
    \begin{split}
        \left\|\nabla f^{\ON}(u_i^t)v\right\|_2^2=&\sum_{j=1}^{K-1}\frac{(\alpha_j(u_i^t)\widetilde{v}_j-\alpha_{j+1}(u_i^t)\widetilde{v}_{j+1})^2}{p_j^2(u_i^t)}+\frac{\alpha_K^2(u_i^t)\widetilde{v}_K^2}{p_K^2(u_i^t)}\\
        \geq&\frac{1}{K}\sum_{k=1}^{K-1}\sum_{j=k}^{K-1}\frac{(\alpha_j(u_i^t)\widetilde{v}_j-\alpha_{j+1}(u_i^t)\widetilde{v}_{j+1})^2}{p_j^2(u_i^t)}+\frac{\alpha_K^2(u_i^t)\widetilde{v}_K^2}{p_K^2(u_i^t)},
    \end{split}
\end{equation}
since $$\sum_{j=1}^{K-1}\frac{(\alpha_j(u_i^t)\widetilde{v}_j-\alpha_{j+1}(u_i^t)\widetilde{v}_{j+1})^2}{p_j^2(u_i^t)}=\max_{k}\sum_{j=k}^{K-1}\frac{(\alpha_j(u_i^t)\widetilde{v}_j-\alpha_{j+1}(u_i^t)\widetilde{v}_{j+1})^2}{p_j^2(u_i^t)}.$$ 
By the Cauchey-Schwarz inequality, 
\begin{equation*}
    \begin{split}
        &\left(\sum_{j=k}^{K-1}\frac{(\alpha_j(u_i^t)\widetilde{v}_j-\alpha_{j+1}(u_i^t)\widetilde{v}_{j+1})^2}{p_j^2(u_i^t)}+\frac{\alpha_K^2(u^t_i)\widetilde{v}_K^2}{p_K^2(u_i^t)}\right)\left(\sum_{j=k}^{K}p_j^2(u_i^t)\right)\\
        \geq &\left(\sum_{j=k}^{K-1}(\alpha_j(u_i^t)\widetilde{v}_j-\alpha_{j+1}(u_i^t)\widetilde{v}_{j+1})+\alpha_K(u_i^t)\widetilde{v}_K\right)^2 =  \alpha^2_k(u^t_i)\widetilde{v}^2_k,
    \end{split}
\end{equation*}
which implies 
\begin{equation*}
\begin{split}
        \left\|\nabla f^{\ON}(u_i^t)v\right\|_2^2\geq&\frac{1}{K}\Bigg[\frac{\alpha_K^2(u_i^t)\widetilde{v}_K^2}{p_K^2(u_i^t)}+\sum_{k=1}^{K-1}\left(\sum_{j=k}^{K}p_j^2(u_i^t)\right)^{-1}\alpha^2_k(u^t_i)\widetilde{v}^2_k\Bigg]\\
        \geq&\frac{1}{K}\sum_{k=1}^K\alpha^2_k(u^t_i)\widetilde{v}^2_k.
    \end{split}
\end{equation*}
In addition, since
\begin{equation*}
    \begin{split}
        \|v\|_2^2=\sum_{j=1}^{K-1}(\widetilde{v}_{j+1}-\widetilde{v}_j)^2+\widetilde{v}_1^2\leq 4\sum_{j=1}^K\widetilde{v}_j^2\leq \frac{4}{\min_{1\leq k\leq K}\alpha_k^2(u_i^t)}\sum_{j=1}^K\alpha_j^2(u_i^t)\widetilde{v}_j^2,
    \end{split}
\end{equation*}
we have 
$$
\left\|\nabla f^{\ON}(u_i^t)v\right\|_2^2\geq \frac{\min_k\alpha_k^2(u_i^t)}{4K}\|v\|_2^2.
$$
Recall the definition of $\alpha_k(u)$ in \eqref{eq:ordinal_alpha_def} and the fact that
\begin{equation*}
    \left|\sum_{j=1}^ku_{ij}^t\right|\leq (C_x+1)(R^*+R_0),
\end{equation*}
one can show that for $1\leq k\leq K$,
\begin{equation*}
    \alpha_k(u_i^t)\geq \frac{e^{(C_x+1)(R^*+R_0)}}{(1+e^{(C_x+1)(R^*+R_0)})^2}.
\end{equation*}
Hence
\begin{equation*}
    \lambda_{\min}(\nabla f^{\ON}(u_i^t)^\top \nabla f^{\ON}(u_i^t))\geq \frac{e^{2(C_x+1)(R^*+R_0)}}{4K(1+e^{(C_x+1)(R^*+R_0)})^4}.
\end{equation*}
\end{proof}
\begin{proof}[Proof of Lemma~\ref{lem:ordinal_rsc_term1_2}]
While for bounding $\|\nabla f^{\ON}(u_i^t)-\nabla f^{\ON}(u_i)\|$, we first recall each entry of $\nabla f^{\ON}$ in \eqref{eq:ordinal_f_grad}, and note that
\begin{equation}\label{eq:ordinal_rsc_f_grad_diff}
    \begin{split}
        &\|\nabla f^{\ON}(u_i^t)-\nabla f^{\ON}(u_i)\|\\
        \leq &\|\nabla f^{\ON}(u_i^t)-\nabla f^{\ON}(u_i)\|_F\\
        =&\Bigg[\sum_{j=1}^Kj\left(\frac{\alpha_j(u_i^t)}{p_j(u_i^t)}-\frac{\alpha_j(u_i)}{p_j(u_i)}-\frac{\alpha_{j+1}(u_i^t)}{p_j(u_i^t)}+\frac{\alpha_{j+1}(u_i)}{p_j(u_i)}\right)^2\\
        &+\sum_{j=1}^{K-1}\left(\frac{\alpha_{j+1}(u_i^t)}{p_j(u_i^t)}-\frac{\alpha_{j+1}(u_i)}{p_j(u_i)}\right)^2\Bigg]^{\frac{1}{2}}\\
        \leq&\sqrt{2K^2+3K-1}\max_{j,k}\left|\frac{\alpha_k(u_i^t)}{p_j(u_i^t)}-\frac{\alpha_k(u_i)}{p_j(u_i)}\right|.
    \end{split}
\end{equation}
One can show that
\begin{equation}\label{eq:ordinal_rsc_f_grad_diff0}
\begin{split}
   &\max_{j,k}\left|p_j^{-1}(u_i^t)\alpha_k(u_i^t)-p_j^{-1}(u_i)\alpha_k(u_i)\right|\\
   \leq&\max_j\left|p_j^{-1}(u_i^t)-p_j^{-1}(u_i)\right|+\max_{j,k}p_j^{-1}(u_i)\left|\alpha_k(u_i^t)-\alpha_k(u_i)\right|,
\end{split}
\end{equation}
where we have applied the fact that $0<\alpha_k(w)< 1$ for any $w\in \bbR^K$ on the second line. Meanwhile, for any $w\in \bbR^K$, $1\leq j<K$,
\begin{equation*}
    \nabla_{w} (p_j^{-1}(w))=\frac{e^{\sum_{l=1}^{j+1}w_l}}{p_j^{2}(w)(1+e^{\sum_{l=1}^{j+1}w_l})^{2}}(1_{j+1}^\top,0_{K-j-1}^\top)^\top-\frac{e^{\sum_{l=1}^{j}w_l}}{p_j^{2}(w)(1+e^{\sum_{l=1}^{j}w_l})^{2}}(1_{j}^\top,0_{K-j}^\top)^\top,
\end{equation*}

and 
\begin{equation*}
    \nabla_w (p_K^{-1}(w))=-p_K^{-2}(w)\frac{e^{\sum_{l=1}^Kw_l}}{(1+e^{\sum_{l=1}^K w_l})^{2}}1_K
\end{equation*}
Since $0<\frac{e^x}{(1+e^x)^2}<1$ for $x\in \bbR$, we have
\begin{equation*}
    \left\|\nabla_w (p_j^{-1}(w))\right\|_{\infty}\leq p_j^{-2}(w)\leq (h^{\ON}(R,r))^2,
\end{equation*}
for any $w$ lying between $u_i^t$ and $u_i$, which implies that
\begin{equation}\label{eq:ordinal_rsc_f_grad_diff1}
    \max_j|p_j^{-1}(u_i^t)-p_j^{-1}(u_i)|\leq (h^{\ON}(R,r))^2\|u_i^t-u_i\|_1.
\end{equation}
While for bounding $p_j^{-1}(u_i)|\alpha_k(u_i^t)-\alpha_k(u_i)|$, we apply the following lemma which gives several bounds for functions $p(\cdot)$, $\alpha(\cdot)$, $f^{\ON}(\cdot)$:
\begin{lemma}\label{lem:ordinal_p_alpha_f_bnd}
 The functions $p(\cdot),\alpha(\cdot)$ and $f^{\ON}(\cdot)$ defined in \eqref{eq:ordinal_p_def}, \eqref{eq:ordinal_alpha_def} and \eqref{eq:ordinal_f_def} satisfy the following bounds:
 \begin{equation}
     \begin{split}
        0<\alpha_j(u)\leq \frac{1}{4},\quad p_j(u)\geq\min\{\frac{-\max_{k>1}u_ke^{\|u\|_1}}{(1+e^{\|u\|_1})^2},\frac{1}{1+e^{\|u\|_1}}\},
     \end{split}
 \end{equation}
 if $u_j<0$ for $j>1$.
 In particular, if $u=u(x,\theta)$ for any $\|x\|_{\infty}\leq C_x$, and $\theta\in \theta^*+S(R,r)$ for $$S(R,r)=\{\Delta:\|\Delta_{1:p}\|_1,\|\Delta_{(p+1):(p+K)}\|_1\leq R,\min_{2\leq j\leq K}\Delta_{p+j}\geq -r\},$$ then
 \begin{equation}
     \begin{split}
     p_j(u)\geq (h^{\ON}(R,r))^{-1},\quad\|\nabla f^{\ON}(u)\|_{\infty}\leq \frac{1}{4}h^{\ON}(R,r)\\
       \|\nabla^2f^{\ON}(u)\|_{\infty}\leq\frac{2}{3}h^{\ON}(R,r)+\frac{1}{8}(h^{\ON}(R,r))^2.
     \end{split}
 \end{equation}
\end{lemma}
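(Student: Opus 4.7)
The plan is to view every quantity as a scalar composition through the sigmoid $g(t) = (1+e^t)^{-1}$ and its negative derivative $\alpha(t) = e^t(1+e^t)^{-2} = -g'(t)$. Writing $t_j = \sum_{l=1}^j u_l$, we have $\alpha_j(u) = \alpha(t_j)$, $p_j(u) = g(t_j) - g(t_{j+1})$ for $1 \leq j < K$, and $p_K(u) = 1 - g(t_K)$. The bound $0 < \alpha_j(u) \leq 1/4$ is then immediate: $\alpha(t) = (e^{t/2}+e^{-t/2})^{-2}$ is positive, even, and unimodal with maximum $1/4$ at $t=0$.

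For the lower bound on $p_j(u)$ I would apply the mean value theorem to $g$: for $1 \leq j < K$, $p_j(u) = -g'(t^*)(t_{j+1}-t_j) = \alpha(t^*)(-u_{j+1})$ with $t^*$ between $t_j$ and $t_{j+1}$, so $|t^*| \leq \|u\|_1$ and the even-unimodal behaviour of $\alpha$ gives $\alpha(t^*) \geq e^{\|u\|_1}(1+e^{\|u\|_1})^{-2}$. Combined with $-u_{j+1} \geq -\max_{k>1}u_k > 0$ (from the hypothesis $u_k < 0$ for $k > 1$) this gives the first term inside the $\min$; for $j = K$, the identity $p_K(u) = e^{t_K}/(1+e^{t_K}) \geq 1/(1+e^{\|u\|_1})$, which uses only $t_K \geq -\|u\|_1$, supplies the second term. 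To specialize to $u = u(x,\theta)$ with $\theta \in \theta^* + S(R,r)$, I would bound $\|u\|_1 \leq C_x\|\theta_{1:p}\|_1 + \|\theta_{(p+1):(p+K)}\|_1 \leq (C_x+1)(R^*+R)$ by the triangle inequality, and observe that for $k \geq 2$, $-u_k = \theta_{p+k} = \theta^*_{p+k} + \Delta_{p+k} \geq r^* - r > 0$, so the hypothesis of the general bound is satisfied. Matching the two resulting terms against the definition of $h^{\ON}(R,r)$ in~\eqref{eq:ordinal_h_def} yields $p_j(u) \geq (h^{\ON}(R,r))^{-1}$.

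The first-derivative bound $\|\nabla f^{\ON}(u)\|_\infty \leq h^{\ON}(R,r)/4$ follows directly from the closed form $\nabla f^{\ON}(u)_{jk} = p_j(u)^{-1}[\alpha_j(u)\ind{k\leq j} - \alpha_{j+1}(u)\ind{k\leq j+1}]$: a three-way case analysis on $k$ relative to $j$ and $j+1$ shows the bracket is bounded by $\max(\alpha_j,\alpha_{j+1},|\alpha_j-\alpha_{j+1}|) \leq 1/4$, and $p_j(u)^{-1} \leq h^{\ON}(R,r)$ supplies the rest. For the Hessian I would differentiate $f^{\ON}_j = \log p_j + \mathrm{const}$ once more to obtain $\partial^2 f^{\ON}_j/\partial u_l \partial u_m = p_j^{-1}\partial^2 p_j/\partial u_l \partial u_m - p_j^{-2}(\partial p_j/\partial u_l)(\partial p_j/\partial u_m)$. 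In the second piece the product $(\partial p_j/\partial u_l)(\partial p_j/\partial u_m)$ falls into one of the patterns $(\alpha_j-\alpha_{j+1})^2$, $-\alpha_{j+1}(\alpha_j-\alpha_{j+1})$, or $\alpha_{j+1}^2$ depending on how $l,m$ compare to $j$ and $j+1$, each easily bounded to give the $(h^{\ON}(R,r))^2$-contribution. The first piece reduces to $|g''(t_j)| + |g''(t_{j+1})|$, and one verifies $|g''(t)| = \alpha(t)|\tanh(t/2)| \leq \alpha(t) \leq 1/4$, yielding the $h^{\ON}(R,r)$-contribution. I expect the main work to be the careful case-by-case consolidation in the Hessian step required to extract the precise constants $2/3$ and $1/8$ stated in the lemma, though conceptually everything reduces to the two scalar identities $\alpha(t) \leq 1/4$ and $|g''(t)| \leq \alpha(t)$ together with the already-proved lower bound on $p_j$.
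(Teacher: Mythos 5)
Your proposal is correct and follows essentially the same route as the paper's proof: the mean value theorem applied to the sigmoid difference (the paper phrases this via $g(x)=x/(1+x)^2$ at $\xi=e^{t^*}$ with the symmetry $g(1/x)=g(x)$, which is the same fact as the evenness and unimodality of your $\alpha(t)=(e^{t/2}+e^{-t/2})^{-2}$), the bound $\|u\|_1\leq (C_x+1)(R^*+R)$ and $-u_k\geq r^*-r$ for the specialization, and direct entrywise bounds on $\nabla f^{\ON}$ and $\nabla^2 f^{\ON}$ using $p_j^{-1}\leq h^{\ON}(R,r)$, $\alpha_j\leq 1/4$, and a bound on $|\alpha'|$ (your $|g''(t)|\leq\alpha(t)\leq 1/4$ is in fact slightly tighter than the paper's $1/3$, so it dominates the stated constants). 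The only blemish is a sign transcription in identifying $p_j(u)$ with $g(t_j)-g(t_{j+1})$ rather than $g(t_{j+1})-g(t_j)$, but your mean-value expression $\alpha(t^*)(-u_{j+1})$ is the correct positive quantity, so the argument is unaffected.
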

Lemma~\ref{lem:ordinal_p_alpha_f_bnd} is proved subsequently.
Hence by Lemma~\ref{lem:ordinal_p_alpha_f_bnd}, $p_j^{-1}(u_i)\leq h^{\ON}(R_0,r_0)$. In addition, let $g(x)=\frac{e^x}{(1+e^x)^2}$, then $\alpha_k(u_i^t)-\alpha_k(u_i)=g(\sum_{j=1}^ku^t_{ij})-g(\sum_{j=1}^ku_{ij})$. Since
\begin{equation*}
   |g'(x)|=\frac{|e^x-e^{2x}|}{(1+e^x)^3}\leq 1,
\end{equation*}
\begin{equation}\label{eq:ordinal_rsc_f_grad_diff2}
\begin{split}
     &p_j^{-1}(u_i)|\alpha_k(u_i^t)-\alpha_k(u_i)|\\
     \leq &h^{\ON}(R,r)\|u_i^t-u_i\|_1.
\end{split}
\end{equation}
Since $\|u_i^t-u_i\|_1\leq \|u_i^*-u_i\|_1\leq (C_x+1)R$, combining \eqref{eq:ordinal_rsc_f_grad_diff}, \eqref{eq:ordinal_rsc_f_grad_diff0}, \eqref{eq:ordinal_rsc_f_grad_diff1} and \eqref{eq:ordinal_rsc_f_grad_diff2} leads us to 
\begin{equation*}
\begin{split}
    \|\nabla f^{\ON}(u_i^t)-\nabla f^{\ON}(u_i)\|\leq \sqrt{5}((h^{\ON}(R,r))^2+h^{\ON}(R,r))(C_x+1)RK,
    \end{split}
\end{equation*}
\end{proof}
\begin{proof}[Proof of Lemma~\ref{lem:ordinal_p_alpha_f_bnd}]
Let $g(x)=\frac{x}{(1+x)^2}$ then one can show that $g(x)\leq g(1)=\frac{1}{4}$ for any $x\geq 0$, which directly implies that $0<\alpha_j(u)\leq\frac{1}{4}$. 
While for $p_j(u)$, by \eqref{eq:ordinal_p_def} one can show that
\begin{equation*}
    p_K(u)=1-(1+e^{\sum_{l=1}^Ku_l})^{-1}=(1+e^{-\sum_{l=1}^Ku_l})^{-1}\geq\frac{1}{1+e^{\|u\|_1}}.
\end{equation*}
For $1\leq j<K$, $p_j(u)=-u_{j+1}g(\xi)$ for some $\xi\in [e^{\sum_{l=1}^{j+1}u_l},e^{\sum_{l=1}^ju_l}]$. Noting the fact that $g(x)$ is increasing when $x<1$ and decreasing when $x>1$, and $g(\frac{1}{x})=g(x)$, we know that $$g(\xi)\geq \min\{g(e^{\sum_{l=1}^{j+1}u_l}), g(e^{\sum_{l=1}^ju_l})\}\geq \frac{e^{\|u\|_1}}{(1+e^{\|u\|_1})^2},$$
which implies $p_j(u)\geq \frac{-\max_{k>1}u_ke^{\|u\|_1}}{(1+e^{\|u\|_1})^2}$. In addition, recall that we have defined $u$ as $u=u(x,\theta)$, and we assumed $\|x\|_{\infty}\leq C_x$, $\theta\in S(R,r)$. Thus one has
\begin{equation*}
    \|u(x,\theta)\|_1\leq C_x\|\theta_{1:p}\|_1+\|\theta_{(p+1):(p+K)}\|_1\leq (C_x+1)(R^*+R),
\end{equation*}
and for $k>1$,
\begin{equation*}
    u_k(x,\theta)\leq -\min_{k>1}\theta_{p+k}\leq r-r^*.
\end{equation*}
Thus by the definition of $h^{\ON}(\cdot,\cdot)$ in \eqref{eq:ordinal_h_def},
\begin{equation*}
    p_j(u)\geq (h^{\ON}(R,r))^{-1} 
\end{equation*}
for $1\leq j\leq K$. 

Now we start bounding $\|\nabla f^{\ON}(u)\|_{\infty}$ and $\|\nabla^2f^{\ON}(u)\|_{\infty}$. Some calculation shows that, for any vector $u\in\bbR^K$, $\nabla f^{\ON}(u)\in \bbR^{K\times K}$ satisfies
 \begin{equation*}
     \nabla f^{\ON}(u)_{jk}=\frac{\alpha_j(u)\ind{k\leq j}-\alpha_{j+1}(u)\ind{k\leq j+1}}{p_j(u)}.
 \end{equation*}
 Since $0<\alpha_j(u)\leq \frac{1}{4}$, $p_j(u)\geq (h^{\ON}(R,r))^{-1}$, we have $\|\nabla f^{\ON}(u)\|_{\infty}\leq \frac{1}{4}h^{\ON}(R,r)$. Meanwhile, one can show that
 \begin{equation*}
 \begin{split}
     (\nabla^2 f^{\ON}(u))_{jkl}=&\left(\frac{(\nabla \alpha_j(u))_l}{p_j(u)}-\frac{\alpha_j(u)(\nabla p_j(u))_l}{p_j^2(u)}\right)\ind{k\leq j}\\
     &-\left(\frac{(\nabla \alpha_{j+1}(u))_l}{p_j(u)}-\frac{\nabla p_j(u))_l\alpha_{j+1}(u)}{p_j^2(u)}\right)\ind{k\leq j+1},
 \end{split}
 \end{equation*}
 and $|(\nabla p_j(u))_l|=|\alpha_j(u)\ind{l\leq j}-\alpha_{j+1}(u)\ind{l\leq j+1}|\leq \frac{1}{4}$, 
 \begin{equation*}
 \begin{split}
     |(\nabla \alpha_j(u))_l|=\frac{\left|e^{\sum_{m=1}^ju_j}-e^{2\sum_{m=1}^ju_j}\right|}{(1+e^{\sum_{m=1}^ju_j})^3}\ind{l\leq j}\leq\frac{1}{3}.
 \end{split}
 \end{equation*}
 Thus we have
 \begin{equation*}
     \|\nabla^2f^{\ON}(u)\|_{\infty}\leq \max_j\left(\frac{2}{3}h^{\ON}(R,r)+\frac{1}{8}(h^{\ON}(R,r))^2\right).
 \end{equation*}
\end{proof}

\section{Regularized EM algorithms, Additional Convergence Guarantees, and Proofs of Convergence}\label{append:converg}
\paragraph{EM algorithm for Multinomial-PU model:} An alternative estimation method is the EM algorithm, which has been widely adopted in the literature on missing/hidden variable models. Here we present a regularized EM algorithm for our model. 
Based on the full log-likelihood function $$\log L_f^{\MN}(\Theta,b;\{(x_i,y_i,z_i,s_i=1)\}_{i=1}^n)$$ presented in Lemma~\ref{alg:MN-EM}, we summarize our regularized EM algorithm for the multinomial-PU model in Algorithm~\ref{alg:MN-EM}. In the E-step of Algorithm~\ref{alg:MN-EM}, we estimate the unobserved $\ind{y_i=j}$ by $$\bbP_{\Theta^m,b^m}(y_i=j|x_i,z_i,s_i=1)$$ for $j=0,\dots,K$ at the $m$th iteration. Since $z_i=k>0$ implies $y_i=k$, one can show that $$\bbP_{\Theta^m,b^m}(y_i=j|x_i,z_i=k,s_i=1)=\ind{j=k}$$ if $k>0$. On the other hand, when $z_i=0$,
\begin{equation*}
\begin{split}
    &\bbP_{\Theta^m,b^m}(y_i=j|x_i,z_i=0,s_i=1)\\
    =&\frac{\bbP_{\Theta^m,b^m}(y_i=j,z_i=0|x_i,s_i=1)}{\sum_{k=1}^K\bbP_{\Theta^m,b^m}(y_i=k,z_i=0|x_i,s_i=1)+\bbP_{\Theta^m,b^m}(y_i=0,z_i=0|x_i,s_i=1)}\\
    =&\frac{e^{x_i^{\top}\Theta_j^m+b_j^m}}{1+\sum\limits_{k=1}^Ke^{x_i^{\top}\Theta_k^m+b_k^m}},
\end{split}
\end{equation*}
where the last line can be derived from the log-likelihood function for the full data presented $\log L_f^{\MN}$ in Lemma~\ref{lem:MN-log-likelihoods}. In the M-step, we minimize the regularized full log-likelihood loss $$-\frac{1}{n}\log L_f^{\MN}(\Theta,b;\{(x_i,\widehat{y}_i(\Theta^m,b^m),z_i,s_i=1)\}_{i=1}^n)+P^{\MN}_{\lambda}(\Theta),$$
where we use $\widehat{y}_{ik}(\Theta^m,b^m)$ as a surrogate for $\ind{y_i=k}, k=1,\dots, K$.

To solve the M-step, one can apply the proximal gradient descent algorithm~\citep{wright2009sparse}. 
For the initialization, we still consider any $\Theta^0$, $b^0$ such that the regularized observed log-likelihood loss $\mathcal{F}_n^{\MN}$ is no larger than any intercept-only model. Without any prior knowledge, one can simply choose the $(0_{p\times K},\arg\min_b\mathcal{F}_n^{\MN}(0_{p\times K},b))$ as the initializer.

\begin{algorithm}[H]\label{alg:MN-EM}
\SetAlgoLined
Input: $\Theta^0$, $b^0$ such that $\mathcal{F}^{\MN}_n(\Theta^0,b^0)\leq\min_b\mathcal{F}_n^{\MN}(0_{p\times K},b)$\\
 \For{$m=0,1,\dots$}{
   E-step: calculate $\widehat{y}(\Theta^m,b^m)\in \bbR^{n\times K}$ as follows, where each entry $\widehat{y}_{ij}$ is an estimate for $\ind{y_i=j}$:
   \[\hat{y}_{ij}(\Theta^m,b^m)= \begin{cases}
   \frac{e^{x_i^{\top}\Theta_j^m+b_j^m}}{1+\sum\limits_{k=1}^Ke^{x_i^{\top}\Theta_k^m+b_k^m}} & \mathrm{if}\ z_i=0\\
   1 & \mathrm{if}\ z_i=j\\
   0 & \mathrm{else}
   \end{cases}\]
   M-step: obtain $\Theta^{m+1}, b^{m+1}$ by
   \begin{equation*}
   \begin{split}
       (\Theta^{m+1},b^{m+1})\in\argmin\limits_{\Theta,b}&\frac{1}{n}\sum_{i=1}^n\Bigg[\log\left(1+\sum\limits_{k=1}^{K}(1+\frac{n_k}{\pi_kn_u})e^{x_i^\top\Theta_k+b_k}\right)\\
   &-\sum_{j=1}^K\hat{y}_{ij}(\Theta^m,b^m)(x_i^\top \Theta_j+b_j)\Bigg]+P_{\lambda}^{\MN}(\Theta)
   \end{split}
   \end{equation*}
 }
 \caption{EM for the Multinomial-PU Model}
\end{algorithm}
%-------------------------------------------------------
\paragraph{EM algorithm for Ordinal-PU model: }We also propose the following regularized EM algorithm for the ordinal-PU model.

\begin{algorithm}[H]\label{alg:ON-EM}
\SetAlgoLined
Input: $\theta^0$ such that $\mathcal{F}^{\ON}_n(\theta^0)\leq\min_{\theta_{1:p}=0_{p\times 1}}\mathcal{F}^{\ON}_n(\theta)$\\
 \For{$m=0,1,\dots$}{
   E-step: calculate $\widehat{y}(\theta^m)\in\bbR^{n\times K}$ as follows, where each entry $[\widehat{y}(\theta^m)]_{ij}$ is an estimate for $\ind{y_i=j}$:\\
   \[\hat{y}_{ij}(\theta^m)= \begin{cases}
   \frac{1}{1+e^{x^\top \theta^m_{1:p}-\sum_{l=1}^{j+1}\theta^m_{p+l}}}-\frac{1}{1+e^{x^\top \theta^m_{1:p}-\sum_{l=1}^{j}\theta^m_{p+l}}},  &\mathrm{if}\ z_i=0,\\
   1,&\mathrm{if}\ z_i=j,\\
   0,& \mathrm{else.}
   \end{cases}\]\\
   M step: obtain $\theta^{m+1}$ by\\
   \begin{equation*}
       \begin{split}
           \theta_j^{m+1}\in\argmin\limits_{\theta}&\frac{1}{n}\sum_{i=1}^n\Bigg[-\sum_{j=1}^K\widehat{y}_{ij}(\theta^m)\log r_j(x_i, \theta)\\
           &+\log\left(1+\sum\limits_{k=1}^{K}\frac{n_k+\pi_kn_u}{\pi_kn_u}r_k(x_i,\theta)\right)\Bigg]+P^{\ON}_{\lambda}(\theta_{1:p}).
       \end{split}
   \end{equation*}
 }
 \caption{EM for the Ordinal-PU Model}
\end{algorithm}
Similarly to Algorithm~\ref{alg:MN-EM}, in the E-step, we still estimate $\ind{y_i=j}$ by $\bbP_{\theta^m}(y_i=j|x_i,z_i,s_i=1)$ for $j=0,\dots,K$ at the $m$th iteration, which satisfies $\bbP_{\theta^m}(y_i=j|x_i,z_i,s_i=1)=\ind{j=z_i}$ if $z_i>0$. While if $z_i=0$, by the full log-likelihood function $\log L_f^{\MN}$ presented in Lemma~\ref{lem:MN-log-likelihoods},
\begin{equation*}
\begin{split}
    &\bbP_{\theta^m}(y_i=j|x_i,z_i=0,s_i=1)\\
    =&\frac{r_j(x_i,\theta^m)}{\sum_{k=1}^Kr_k(x_i,\theta^m)+1}\\
    =&(1+e^{x^\top \theta^m_{1:p}-\sum_{l=1}^{j+1}\theta^m_{p+l}})^{-1}-(1+e^{x^\top \theta^m_{1:p}-\sum_{l=1}^{j}\theta^m_{p+l}})^{-1}.
\end{split}
\end{equation*}
The M-step minimizes $$-\frac{1}{n}\log L_f^{\ON}(\theta;\{(x_i,\widehat{y}_i(\theta^m),z_i,s_i=1)\}_{i=1}^n)+P^{\ON}_{\lambda}(\theta),$$ and can also be solved by the proximal gradient descent algorithm~\citep{wright2009sparse}
\paragraph{Convergence Properties:} In fact, we can show the same convergence properties for both the PGD algorithms and the regularized EM algorithms. They all converge to stationary points of the corresponding penalized log-likelihood losses. 
\paragraph{PGD for solving penalized MLE v.s. EM algorithm: }From the theoretical perspective, the PGD algorithm applied on the penalized log-likelihood losses and the EM algorithms are similar since they have the same convergence guarantees. However, as suggested by our numerical experiments, the PGD algorithms are much faster while enjoying similar statistical errors to the EM algorithms under both models. Hence we recommend the practitioners to use the PGD algorithms.
\begin{proposition}[Convergence of algorithms for the ordinal-PU model]\label{lem:converg-ON}
 If the parameter iterates $\{\theta^m\}_m)$ are generated by the proximal gradient descent algorithm~\eqref{eq:pgd-ON} with proper choices of step sizes or Algorithm~\ref{alg:ON-EM}, they would satisfy the following:
 \begin{itemize}
    \item[(i)] The sequence $\{\theta^m\}_m$ has at least one limit point.
     \item[(ii)]There exists $R^{\ON}>0$ such that all limit points of $\{\theta^m\}_m$ belong to $\Gamma^{\ON}$, the set of first order stationary points of the optimization problem $\min_{\|\theta\|_2\leq R^{\ON}}\mathcal{F}_n^{\ON}(\theta)$.
     \item[(iii)]The sequence of function values $\{\mathcal{F}_n^{\ON}(\theta^m)\}_m$ is non-increasing, and $\mathcal{F}_n^{\ON}(\theta^{m+1})<\mathcal{F}_n^{\ON}(\theta^{m})$ holds if $\theta^m\notin \Gamma^{\ON}$. There exists $\widetilde{\theta}\in\Gamma^{\ON}$ such that $\{\mathcal{F}_n^{\ON}(\theta^m)\}_m$ converges monotonically to $\mathcal{F}_n^{\ON}(\widetilde{\theta})$.
 \end{itemize}
\end{proposition}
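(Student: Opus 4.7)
} The plan is to mirror the strategy used for Proposition \ref{lem:converg-MN}, adapting each step to the technical peculiarities of the ordinal-PU loss $\mathcal{L}_n^{\ON}$. For the PGD iteration \eqref{eq:pgd-ON}, the overall architecture is: (a) identify a compact sublevel set within which every iterate stays; (b) establish local Lipschitz continuity of $\nabla \mathcal{L}_n^{\ON}$ on that set so that a sufficient-decrease inequality holds for properly chosen step sizes; (c) pass to limit points and use the proximal first-order condition to identify them as stationary.

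First I would fix $R^{\ON}>0$ large enough that the initial sublevel set $\{\theta:\mathcal{F}_n^{\ON}(\theta)\leq \mathcal{F}_n^{\ON}(\theta^0)\}$ is contained in $\{\|\theta\|_2\leq R^{\ON}\}$. The existence of such $R^{\ON}$ follows from a coercivity argument: the penalty $P_\lambda^{\ON}(\theta_{1:p})$ forces $\|\theta_{1:p}\|$ to be controlled, while inspection of $f^{\ON}(\log r(x_i,\theta))$ (as discussed around Lemma \ref{lem:ordinal_p_alpha_f_bnd}) shows that $\mathcal{L}_n^{\ON}(\theta)$ blows up whenever the offset block $\theta_{(p+1):(p+K)}$ either grows unboundedly in magnitude or degenerates to the boundary $\theta_{p+j}\searrow 0$ for some $j\geq 2$. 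The initialization condition $\mathcal{F}_n^{\ON}(\theta^0)\leq \min_{\theta_{1:p}=0}\mathcal{F}_n^{\ON}(\theta)$ ensures the starting value is bounded away from these bad regimes. On the compact set $\|\theta\|_2\leq R^{\ON}$ (intersected with the open region where $\log r$ is smooth), a direct computation of the Hessian of $\mathcal{L}_n^{\ON}$ produces a finite Lipschitz constant $L^{\ON}$ for $\nabla \mathcal{L}_n^{\ON}$; the calculation reuses the bounds on $\nabla f^{\ON}$ and $\nabla^2 f^{\ON}$ collected in Lemma \ref{lem:ordinal_p_alpha_f_bnd}, combined with $\|\nabla A\|,\|\nabla^2 A\|\leq 1$.

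Given this, I would choose step sizes $\eta_m\in(0,1/L^{\ON})$ (or use a backtracking line search) so that the standard descent lemma gives $\mathcal{F}_n^{\ON}(\theta^{m+1})\leq \mathcal{F}_n^{\ON}(\theta^m)-\tfrac{1}{2\eta_m}\|\theta^{m+1}-\theta^m\|_2^2$, with strict inequality unless the proximal step fixes $\theta^m$, in which case $\theta^m$ already satisfies the stationarity condition of $\min_{\|\theta\|_2\leq R^{\ON}}\mathcal{F}_n^{\ON}(\theta)$. This gives claim (iii), ensures all $\theta^m$ remain in the compact sublevel set, and therefore yields (i) via Bolzano--Weierstrass. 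Summing the descent inequality yields $\sum_m\|\theta^{m+1}-\theta^m\|_2^2<\infty$, so $\theta^{m+1}-\theta^m\to 0$. Taking a convergent subsequence $\theta^{m_k}\to\widetilde{\theta}$, the proximal optimality condition $0\in\eta_{m_k}\partial P_\lambda^{\ON}(\theta^{m_k+1}_{1:p})+(\theta^{m_k+1}-\theta^{m_k})+\eta_{m_k}\nabla \mathcal{L}_n^{\ON}(\theta^{m_k})$ passes to the limit by continuity of $\nabla \mathcal{L}_n^{\ON}$ and outer semicontinuity of $\partial P_\lambda^{\ON}$, giving $\widetilde{\theta}\in\Gamma^{\ON}$ and hence (ii). Monotone convergence of $\mathcal{F}_n^{\ON}(\theta^m)$ to a common limit across all limit points completes (iii). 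For the EM variant in Algorithm \ref{alg:ON-EM}, I would invoke the standard MM descent property ($Q$-function majorization gives $\mathcal{F}_n^{\ON}(\theta^{m+1})\leq \mathcal{F}_n^{\ON}(\theta^m)$ by Jensen's inequality on the latent distribution) and repeat the same limit-point argument; the M-step's first-order conditions deliver the stationarity.

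The main obstacle I expect is step (a), verifying the compact sublevel set. Unlike the multinomial case where $\ell_{1,2}$ regularization together with the soft-max log-partition function directly controls the parameters, here the ordinal offsets $\theta_{(p+1):(p+K)}$ are unpenalized and the domain of $\mathcal{L}_n^{\ON}$ has the nontrivial boundary $\theta_{p+j}=0$. I would need a quantitative lower bound showing that $\mathcal{L}_n^{\ON}(\theta)\to+\infty$ along any sequence approaching this boundary or escaping to infinity in the offset coordinates --- essentially a coercivity statement tailored to the ordinal link. A convenient route is to note that, evaluated at any sample $i$ with $z_i=k>0$, the term $-[f^{\ON}(\log r(x_i,\theta))]_k=-\log\frac{n_k}{\pi_k n_u}-\log p_k(u(x_i,\theta))$ diverges whenever $p_k(u(x_i,\theta))\to 0$, which happens precisely at the pathological boundaries; combined with the lower bound $\mathcal{L}_n^{\ON}(\theta^0)$ supplied by the intercept-only initialization, this pins the iterates into a compact region and enables the rest of the argument to proceed.
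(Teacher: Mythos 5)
Your proposal follows the same overall architecture as the paper's proof: show the objective values are non-increasing (for EM via the standard majorization/Jensen argument on the latent distribution, with strict decrease unless the iterate is already a fixed point of the M-step; for PGD via a sufficient-decrease inequality), invoke coercivity of $\mathcal{F}_n^{\ON}$ together with the intercept-only initialization to confine all iterates to a compact ball $\{\|\theta\|_2\leq R^{\ON}\}$, and then identify limit points as stationary. The difference is one of packaging: where you carry out the PGD analysis by hand (local Lipschitz gradient on the compact set, summing the descent inequality to get $\theta^{m+1}-\theta^m\to 0$, passing to the limit in the proximal optimality condition), the paper cites Lemma 10.4 of \citet{beck2017first} for the monotone-decrease property and Zangwill's global convergence theorem \citep{zangwill1969nonlinear} for parts (ii) and (iii). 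Your route is more self-contained and equally valid for those steps.

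The one genuine issue is the coercivity claim, which you correctly single out as the main obstacle but whose proposed resolution does not close the gap. Your argument --- that $-[f^{\ON}(\log r(x_i,\theta))]_k$ diverges whenever $p_k(u(x_i,\theta))\to 0$ for a sample labeled $k$ --- handles the boundary $\theta_{p+j}\searrow 0$ and any direction in which some positive-class probability collapses. It does not, however, cover every direction with $\|\theta\|_2\to\infty$: for example, sending $\theta_{p+1}\to-\infty$ and $\theta_{p+2}\to+\infty$ with $\nu_2=\theta_{p+1}+\theta_{p+2}$ held fixed drives only $\bbP(y=0\,|\,x)\to 0$, while every positive-class probability, and hence every observed-data probability $\bbP(z=k\,|\,x,s=1)$ for $k\geq 0$, stays bounded away from zero (note $e^{[f^{\ON}]_k}=\frac{n_k}{\pi_kn_u}\bbP(y=k|x)$ remains bounded, so the log-partition term is bounded too). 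The unpenalized offset block thus escapes to infinity along a ray on which $\mathcal{F}_n^{\ON}$ remains bounded, so the raw sublevel sets need not be bounded. The paper asserts $\lim_{\|\theta\|_2\to\infty}\mathcal{F}_n^{\ON}(\theta)=+\infty$ without proof and therefore shares this gap; a fully rigorous version of either argument needs an extra device (e.g., an explicit bound or constraint on the offset coordinates in the definition of $\Gamma^{\ON}$) rather than coercivity of the unpenalized objective alone.
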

\begin{proof}[Proof of Lemma~\ref{lem:converg-MN} and Lemma~\ref{lem:converg-ON}]
Here we first prove that the sequence of function values is non-increasing for both models and algorithms. Specifically, consider the sequence $\{\cF_n^{\MN}(\Theta^m,b^m)\}_m$ where the parameter iterates $\{\Theta^m,b^m\}_m$ are generated by Algorithm~\ref{alg:MN-EM}. For any $m\geq 0$, one can show that
\begin{align*}
    &\cL^{\MN}(\Theta,b;\{(x_i,z_i)\}_{i=1}^n)\\
    =&-\frac{1}{n}\sum_{i=1}^n\log \bbP(z_i|x_i,\Theta,b)\\
    =&-\frac{1}{n}\sum_{i=1}^n\bbE_{y_i|x_i,z_i,\Theta^m,b^m}\left[\log\frac{\bbP(y_i,z_i|x_i,\Theta,b)}{\bbP(z_i|x_i,\Theta,b)}\right]\\
    =&-\frac{1}{n}\sum_{i=1}^n\bbE_{y_i|x_i,z_i,\Theta^m,b^m}\left[\log\frac{\bbP(y_i,z_i|x_i,\Theta,b)}{\bbP(y_i|x_i,z_i,\Theta^m,b^m)}-\log \frac{\bbP(y_i|x_i,z_i,\Theta,b)}{\bbP(y_i|x_i,z_i,\Theta^m,b^m)}\right]\\
    \leq&-\frac{1}{n}\sum_{i=1}^n\bbE_{y_i|x_i,z_i,\Theta^m,b^m}\left[\log\frac{\bbP(y_i,z_i|x_i,\Theta,b)}{\bbP(y_i|x_i,z_i,\Theta^m,b^m)}\right]\\
    =&-\frac{1}{n}\log L_f^{\MN}(\Theta,b;\{(x_i,\widehat{y}_i(\Theta^m,b^m),z_i)\}_{i=1}^n)\\
    &+\frac{1}{n}\sum_{i=1}^n\bbE_{y_i|x_i,z_i,\Theta^m,b^m}\log\bbP(y_i|x_i,z_i,\Theta^m,b^m),
\end{align*}
where the fifth line is due to that
\begin{align*}
    &\bbE_{y_i|x_i,z_i,\Theta^m,b^m}\left[\log \frac{\bbP(y_i|x_i,z_i,\Theta,b)}{\bbP(y_i|x_i,z_i,\Theta^m,b^m)}\right]\\
    \leq&\log \bbE_{y_i|x_i,z_i,\Theta^m,b^m}\left[\frac{\bbP(y_i|x_i,z_i,\Theta,b)}{\bbP(y_i|x_i,z_i,\Theta^m,b^m)}\right]\\
    =&0,
\end{align*}
and the equality holds if and only if $\Theta=\Theta^m$, $b=b^m$.
Here we have omitted data point $(x_i,z_i,s_i=1)$ to $(x_i,z_i)$ for simplicity. Therefore, we have
\begin{align*}
    &\cF_n^{\MN}(\Theta^{m+1},b^{m+1})\\
    \leq &-\frac{1}{n}\log L_f^{\MN}(\Theta^{m+1},b^{m+1};\{(x_i,\widehat{y}_i(\Theta^m,b^m),z_i)\}_{i=1}^n)+P_{\lambda}^{\MN}(\Theta^{m+1})\\
    &+\frac{1}{n}\sum_{i=1}^n\bbE_{y_i|x_i,z_i,\Theta^m,b^m}\left[\log P(y_i|x_i,z_i,\Theta^m,b^m)\right]\\
    \leq &-\frac{1}{n}\log L_f^{\MN}(\Theta^m,b^m;\{(x_i,\widehat{y}_i(\Theta^m,b^m),z_i)\}_{i=1}^n)+P_{\lambda}^{\MN}(\Theta^m)\\
    &+\frac{1}{n}\sum_{i=1}^n\bbE_{y_i|x_i,z_i,\Theta^m,b^m}\left[\log P(y_i|x_i,z_i,\Theta^m,b^m)\right]\\
    =&\cF_n^{\MN}(\Theta^m,b^m),
\end{align*}
where the second inequality is due to the M-step in Algorithm~\ref{alg:MN-EM}. If $\cF_n^{\MN}(\Theta^{m+1},b^{m+1})=\cF_n^{\MN}(\Theta^m,b^m)$, then $(\Theta^{m},b^{m})$ must be a minimizer of $$-\frac{1}{n}\log L_f^{\MN}(\Theta,b;\{(x_i,\widehat{y}_i(\Theta^m,b^m),z_i)\}_{i=1}^n)+P_{\lambda}^{\MN}(\Theta)$$ over $\bbR^{p\times K}\times \bbR^K$. Since 
\begin{align*}
    &\left(\nabla \cL^{\MN}(\Theta,b;\{(x_i,z_i)\}_{i=1}^n)\right)|_{\Theta^m,b^m}\\
    =&\left(-\frac{1}{n}\nabla \log L_f^{\MN}(\Theta,b;\{(x_i,\widehat{y}_i(\Theta^m,b^m),z_i)\}_{i=1}^n)\right)|_{\Theta^m,b^m},
\end{align*} 
$\cF_n^{\MN}(\Theta^{m+1},b^{m+1})=\cF_n^{\MN}(\Theta^m,b^m)$ implies that $(\Theta^m,b^m)\in \Gamma^{\MN}$. Using the same arguments we can show that $\{\cF_n^{\ON}(\theta^m)\}_{m}$ is also non-increasing if $\{\theta^m\}_m$ are generated by Algorithm~\ref{alg:ON-EM}, and $\cF_n^{\ON}(\theta^{m+1})<\cF_n^{\ON}(\theta^{m})$ if $\theta^m\in \Gamma^{\ON}$. On the other hand, if $\{\Theta^m,b^m\}_m$ (resp. $\{\theta^m\}_m$) are generated by the proximal gradient descent algorithm with each update~\eqref{eq:pgd-MN} (resp.~\eqref{eq:pgd-ON}), then existed results~\citep[see Lemma 10.4]{beck2017first} suggest that with appropriately chosen step sizes $\eta_m$, $\{\cF_n^{\MN}(\Theta^m,b^m)\}$ (resp. $\{\cF_n^{\ON}(\theta^m)\}$) is non-increasing. If $\cF_n^{\MN}(\Theta^{m+1},b^{m+1})=\cF_n^{\MN}(\Theta^m,b^m)$, $(\Theta^m,b^m)$ must be a minimizer of 
$$
\eta_mP^{\MN}_{\lambda}(\Theta)+\frac{1}{2}\|[\Theta;b]-([\Theta^m;b^m]-\eta_m\nabla \cL_n^{\MN}(\Theta^m,b^m))\|_2^2.
$$
Since any subgradient of the loss above evaluated at $(\Theta^m,b^m)$ satisfies
\begin{align*}
    &\left(\nabla \eta_mP^{\MN}_{\lambda}(\Theta)+\frac{1}{2}\|[\Theta;b]-([\Theta^m;b^m]-\eta_m\nabla \cL_n^{\MN}(\Theta^m,b^m))\|_2^2\right)|_{\Theta^m,b^m}\\
    =&\eta_m(\nabla P^{\MN}_{\lambda}(\Theta^m)+\nabla \cL_n^{\MN}(\Theta^m,b^m)))\\
    =&\eta_m(\nabla \cF_n^{\MN}(\Theta,b))|_{\Theta^m,b^m},
\end{align*}
we also have $\cF_n^{\MN}(\Theta^{m+1},b^{m+1})<\cF_n^{\MN}(\Theta^m,b^m)$ if $(\Theta^m,b^m)\notin \Gamma^{\MN}$. Similarly we have $\cF_n^{\ON}(\theta^{m+1})<\cF_n^{\ON}(\theta^m)$ if $\theta^m\notin \Gamma^{\ON}$ when $\{\theta^m\}_m$ are generated by the proximal gradient descent algorithm~\eqref{eq:pgd-ON}.

Now we show that $\lim_{\|\Theta\|_2+\|b\|_2\rightarrow +\infty}\cF_n^{\MN}(\Theta,b)=+\infty$ and $\lim_{\|\theta\|_2\rightarrow +\infty}\cF_n^{\ON}(\theta)=+\infty$, and hence $$\cF_n^{\MN}(\Theta,b)\leq \argmin_{b}\cF_n^{\MN}(0_{p\times K},b),\quad \cF_n^{\ON}(\theta)\leq \argmin_{\theta_{1:p}=0_{p\times 1}}\cF_n^{\ON}(\theta)$$ implies $\|\Theta\|_2+\|b\|_2\leq R^{\MN}$ and $\|\theta\|_2\leq R^{\ON}$ for some $R^{\MN}>0$, $R^{\ON}>0$. By the initialization conditions for our proximal gradient descent algorithms and EM algorithms (Algorithm~\ref{alg:MN-EM},~\ref{alg:ON-EM}) and the non-increasing property of $\{\cF_n^{\MN}(\Theta^m,b^m)\}$ and $\{\cF_n^{\ON}(\theta^m)\}$, it is guaranteed that all parameter iterates $\{(\Theta^m,b^m)\}_m$ belong to set $\cC^{\MN}=\{[\Theta;b]:\|\Theta\|_2+\|b\|_2\leq R^{\MN}\}$, and $\{\theta^m\}_m$ belong to $\cC^{\ON}=\{\theta:\|\theta\|_2\leq R^{\ON}\}$. Both $\cC^{\MN}$ and $\cC^{\ON}$ are compact sets, which implies (i) in Lemma~\ref{lem:converg-MN} and Lemma~\ref{lem:converg-ON}. 

Due to the continuity of our loss functions $\cF_n^{\MN}$, $\cF^{\ON}$ and the fact that all parameter iterates belong to compact sets $\cC^{\MN}$, $\cC^{\ON}$, we can apply the global convergence theorem in~\cite{zangwill1969nonlinear} and obtain (ii) and (iii) in Proposition~\ref{lem:converg-MN} and Proposition~\ref{lem:converg-ON}.

\end{proof}
\section{Derivation of log-likelihood functions in the case-control setting}\label{sec:proof-likelihoods}
First we derive the conditional distribution of $z$ given $x, y$ and $s=1$ in the case-control setting, which was presented earlier in Section \ref{sec:problem-form}. Then we will present the proofs of
Lemma~\ref{lem:MN-log-likelihoods} and Lemma~\ref{lem:ON-log-likelihoods}. 

First note that in the case-control setting, $n_u$ samples are randomly drawn from the whole population and $n_k$ samples are drawn from the population with label $k$ for $1\leq k\leq K$. Hence we have
\begin{equation*}
    \frac{\bbP(z=k|s=1)}{\bbP(z=0|s=1)}=\frac{n_k}{n_u},
\end{equation*}
which implies that $$\frac{\bbP(z=k,y=k|s=1)}{\bbP(z=0,y=k|s=1)}=\frac{\bbP(z=k|s=1)}{\bbP(z=0|s=1)\bbP(y=k)}=\frac{n_k}{\pi_kn_u}.$$ Therefore,
\begin{equation*}
    \bbP(z=k|y=k,s=1)=\frac{\bbP(z=k,s=1|y=k)}{\bbP(s=1|y=k)}=\frac{n_k}{n_k+\pi_kn_u}, \quad \bbP(z=0|y=k,s=1) = \frac{\pi_k n_u}{n_k + \pi_k n_u}.
\end{equation*}
\begin{proof}[Proof of Lemma~\ref{lem:MN-log-likelihoods}]
First we derive the conditional probability mass function of any data point $(x,y,z,s=1)$ given $x$ and $s=1$. Note that there are $2K+1$ possible combinations of $(y,z)$: $y=z=0$, $y=z=j$, and $y=j, z=0$ for $1\leq j\leq K$. Then one can show that for any $1\leq j\leq K$,
\begin{equation}
 \begin{split}
        \bbP(y=j,z=j|x,s=1)=&\frac{\bbP(y=j|x)\bbP(z=j,s=1|y=j)}{\sum_{k=1}^K\bbP(y=k|x)\bbP(z=k,s=1|y=k)+\bbP(z=0,s=1|x)}\\
        =&\frac{n_j/\pi_je^{x^\top\Theta_j+b_j}}{n_u+\sum\limits_{k=1}^{K}(n_k/\pi_k+n_u)e^{x^\top\Theta_k+b_k}},
    \end{split}
\end{equation}
where we have applied the fact that $\bbP(y=j|x)=\frac{e^{x^\top\Theta_j+b_j}}{1+\sum\limits_{k=1}^{K}e^{x^\top\Theta_k+b_k}}$ and $$\frac{\bbP(z=j,s=1|y=j)}{\bbP(z=k,s=1|y=k)}=\frac{n_j/\pi_j}{n_k/\pi_k},\quad \frac{\bbP(z=j,s=1|y=j)}{\bbP(z=0,s=1|x)}=\frac{n_j/\pi_j}{n_u}.$$ Applying the above facts also leads us to
\begin{equation*}
    \begin{split}
        \bbP(y=j,z=0|x,s=1)=&\frac{\bbP(y=j|x)\bbP(z=0,s=1|y=j)}{\sum_{k=1}^K\bbP(y=k|x)\bbP(z=k,s=1|y=k)+\bbP(z=0,s=1|x)}\\
        =&\frac{n_ue^{x^\top\Theta_j+b_j}}{n_u+\sum\limits_{k=1}^{K}(n_k/\pi_k+n_u)e^{x^\top\Theta_k+b_k}},\\
        \bbP(y=0,z=0|x,s=1)=&\frac{\bbP(y=0|x)\bbP(z=0,s=1|y=0))}{\sum_{k=1}^K\bbP(y=k|x)\bbP(z=k,s=1|y=k)+\bbP(z=0,s=1|x)}\\
        =&\frac{n_u}{n_u+\sum\limits_{k=1}^{K}(n_k/\pi_k+n_u)e^{x^\top\Theta_k+b_k}}.
    \end{split}
\end{equation*}
Thus the full log-likelihood function of $\{x_i,y_i,z_i,s_i=1\}_{i=1}^n$ is as follows:
\begin{equation}
\begin{split}
    &\log L^{\MN}_f(\Theta,b;\{x_i,y_i,z_i,s_i=1\}_{i=1}^n)\\
    =&\sum_{i=1}^n\left[\sum_{k=1}^K\ind{y_i=k}(x_i^\top \Theta_k+b_k)-\log\left(1+\sum\limits_{k=1}^{K}(1+\frac{n_k}{\pi_kn_u})e^{x_i^\top\Theta_k+b_k}\right)\right]\\
    &+\sum_{i=1}^n\sum_{k=1}^K\ind{y_i=z_i=k}\log \frac{n_k}{\pi_kn_u}.
\end{split}
\end{equation}

To show the log-likelihood function for the observed data $\{x_i,z_i,s_i=1\}_{i=1}^n$, note that the conditional probability mass function for $z$ given $x,s=1$ is as follows:
\begin{equation}
    \begin{split}
        \bbP(z=j|x,s=1)=&\bbP(y=j,z=j|x,s=1)\\
        =&\frac{\frac{n_j}{\pi_jn_u}e^{x^\top\Theta_j+b_j}}{1+\sum\limits_{k=1}^{K}(1+\frac{n_k}{\pi_kn_u})e^{x^\top\Theta_k+b_k}},\quad j>0,\\
        \bbP(z=0|x,s=1)=&\sum_{j=1}^K\bbP(y=j,z=0|x,s=1)+\bbP(y=0,z=0|x,s=1)\\
        =&\frac{1+\sum_{j=1}^Ke^{x^\top\Theta_j+b_j}}{1+\sum\limits_{k=1}^{K}(1+\frac{n_k}{\pi_kn_u})e^{x^\top\Theta_k+b_k}}.
    \end{split}
\end{equation}
Therefore, the log-likelihood function of the observed data $\{x_i,z_i,s_i=1\}_{i=1}^n$ is
\begin{equation}
    \begin{split}
        &\log L^{\MN}(\Theta,b;\{x_i,z_i,s_i=1\}_{i=1}^n)\\
    =&\sum_{i=1}^n\Bigg[\sum_{j=1}^K\ind{z_i=j}(x_i^\top \Theta_j+b_j+\log (\frac{n_j}{\pi_jn_u}))+\ind{z_i=0}\log(1+\sum_{j=1}^Ke^{x_i^\top\theta_j+b_j})\\
    &-\log\left(1+\sum\limits_{k=1}^{K}(1+\frac{n_k}{\pi_kn_u})e^{x_i^\top\Theta_k+b_k}\right)\Bigg]\\
    =&\sum_{i=1}^n\Bigg[\sum_{j=1}^K\ind{z_i=j}(x_i^\top \Theta_j+b_j+\log \frac{n_j}{\pi_j n_u}-\log(1+\sum_{j=1}^Ke^{x_i^\top\Theta_j+b_j}))\\
    &-\log\left(1+\frac{\sum\limits_{k=1}^{K}\frac{n_k}{\pi_kn_u}e^{x_i^\top\Theta_k+b_k}}{1+\sum_{k=1}^K e^{x_i^\top \Theta_k+b_k}}\right)\Bigg].
    \end{split}
\end{equation}

\end{proof}
\begin{proof}[Proof of Lemma~\ref{lem:ON-log-likelihoods}]
Similarly from the previous set-up, we can derive the distribution of $y,z$ given $x$ as
\begin{equation}
    \begin{split}
        \bbP(y=j,z=j|x,s=1)=&\frac{\bbP(y=j,z=j,s=1|x)}{\bbP(s=1|x)}\\
        =&\frac{\frac{n_j}{\pi_jn_u}r_j(x,\theta)}{1+\sum\limits_{k=1}^{K}(\frac{n_k+\pi_kn_u}{\pi_kn_u})r_k(x,\Theta)},\quad j>0,\\
        \bbP(y=j,z=0|x,s=1)=&\frac{r_j(x,\theta)}{1+\sum\limits_{k=1}^{K}(\frac{n_k+\pi_kn_u}{\pi_kn_u})r_k(x,\theta)},\\
        \bbP(y=0,z=0|x,s=1)=&\frac{1}{1+\sum\limits_{k=1}^{K}(\frac{n_k+\pi_kn_u}{\pi_kn_u})r_k(x,\theta))},
    \end{split}
\end{equation}
where
\begin{equation}
\begin{split}
    r_j(x,\theta)=&\frac{\bbP_{\theta}(y=j|x)}{\bbP_{\theta}(y=0|x)}\\
    =&(1+e^{x^\top \theta_{1:p}-\theta_{p+1}})\left[(1+e^{x^\top \theta_{1:p}-\sum_{l=1}^{j+1}\theta_l})^{-1}-(1+e^{x^\top \theta_{1:p}-\sum_{l=1}^j\theta_l})^{-1}\right].
\end{split}
\end{equation}
On the other hand, the distribution of $z$ given $x$ is
\begin{equation}
    \begin{split}
        \bbP(z=j|x,s=1)=&\frac{\frac{n_j}{\pi_jn_u}r_j(x,\theta)}{1+\sum\limits_{k=1}^{K}(\frac{n_k+\pi_kn_u}{\pi_kn_u})r_k(x,\theta)},\quad j>0,\\
        \bbP(z=0|x,s=1)=&\frac{1+\sum_{k=1}^Kr_k(x,\theta)}{1+\sum\limits_{k=1}^{K}(\frac{n_k+\pi_kn_u}{\pi_kn_u})r_k(x,\theta)}.
    \end{split}
\end{equation}
Thus the full log-likelihood function of $\{(x_i,y_i,z_i,s_i=1)\}_{i=1}^n$ is as follows:
\begin{equation}
\begin{split}
    &\log L_f^{\ON}(\theta;\{(x_i,y_i,z_i,s_i=1)\}_{i=1}^n)\\
    =&\sum_{i=1}^n\left[\sum_{j=1}^K\ind{y_i=j}\log r_j(x_i, \theta)-\log\left(1+\sum\limits_{k=1}^{K}\frac{n_k+\pi_kn_u}{\pi_kn_u}r_k(x_i,\theta)\right)\right]\\
    &+\sum_{i=1}^n\sum_{j=1}^K\ind{y_i=z_i=j}\log \frac{n_j}{\pi_jn_u},
\end{split}
\end{equation}
and the log-likelihood of the observed data $\{(x_i,z_i,s_i=1)\}_{i=1}^n$ is
\begin{equation*}
    \begin{split}
        &\log L^{\ON}(\theta;\{(x_i,z_i,s_i=1)\}_{i=1}^n)\\
    =&\sum_{i=1}^n\Bigg[\sum_{j=1}^K\ind{z_i=j}(\log r_j(x_i,\theta)-\log (\frac{n_j}{\pi_jn_u}))+\ind{z_i=0}\log(1+\sum_{j=1}^Kr_j(x_i,\theta))\\
    &-\log\left(1+\sum\limits_{k=1}^{K}\frac{n_k+\pi_kn_u}{\pi_kn_u}r_k(x_i,\theta)\right)\Bigg]\\
    =&\sum_{i=1}^n \left[\sum_{k=1}^K \ind{z_i=k}[f(\log r(x_i,\theta))]_k-\log\left(1+\sum_{j=1}^Ke^{[f(\log r(x_i,\theta))]_k}\right)\right],
 \end{split}
 \end{equation*}
 where $f:\bbR^K\rightarrow \bbR^K$ satisfies $(f^{\ON}(u))_k=u_k+\log \frac{n_k}{\pi_kn_u}-\log(1+\sum_{j=1}^Ke^{u_j})$.%
\end{proof}
\section{Algorithms for the PU models under the single-training-set scenario}\label{sec:alg-st}
Here we consider the single-training-set scenario~\citep{elkan2008learning}, a different setting from the case-control scenario for PU models. In particular, we still assume that the multinomial and ordinal distributions for the true labeled data $(x_i,y_i)$, while letting the observation $z_i$ be unlabeled with some constant probability. Formally, for any $j,k\geq 0$,
\begin{align}\label{eq:random-missing}
    \bbP(z_i=k|x_i,y_i=j)=\begin{cases}\pi^{\st}_j,&\text{if $j=k>0$,}\\
    1-\pi^{\st}_j,& \text{if $j>0$, $k=0$,}\\
    1,&\text{if $j=k=0$,}\\
    0,&\text{otherwise},
    \end{cases}
\end{align}
where $\pi^{\st}\in(0,1)^K$ indicates the probabilities of positive samples being labeled, and the superscript $\st$ refers to the single-training-set scenario.
\subsection{Multinomial-PU Model}
The following lemma presents the log-likelihood functions for the observed data $\{(x_i,z_i)\}$ and for the full data $\{(x_i,y_i,z_i)\}$ under the multinomial-PU model and the single-training-set scenario.
\begin{lemma}\label{lem:MN-log-likelihoods-st}
 When the observed responses $z_i$ is generated according to~\eqref{eq:random-missing}, the log-likelihood function for the observed presence-only data $\{x_i,z_i\}_{i=1}^n$ under the multinomial model is
 \begin{equation}\label{eq:MN-log-likelihood-st}
 \begin{split}
     &\log L^{\MN,\st}(\Theta,b;\{(x_i,z_i)\}_{i=1}^n)\\
     =&\sum_{i=1}^n \left[\sum_{k=1}^K \ind{z_i=k}[\widetilde{f}(\Theta^\top x_i+b+\log (1-\pi^{st}))]_k-\log\left(1+\sum_{k=1}^Ke^{[\widetilde{f}(\Theta^\top x_i+b+\log (1-\pi^{st}))]_k}\right)\right],
 \end{split}
 \end{equation}
 where $\widetilde{f}:\bbR^K\rightarrow \bbR^K$ satisfies $(\widetilde{f}(u))_k=u_k+\log \frac{\pi^{\st}_k}{1-\pi^{\st}_k}-\log(1+\sum_{j=1}^Ke^{u_j})$. Here $\log (1-\pi^{\st})=(\log(1-\pi^{\st}_1),\dots,\log(1-\pi^{\st}_K))^\top\in\bbR^K$.
 The log-likelihood function for the full data $\{x_i,y_i,z_i\}_{i=1}^n$ is
 \begin{equation}\label{eq:MN-full-log-likelihood-st}
 \begin{split}
     &\log L_f^{\MN,\st}(\Theta,b;\{(x_i,y_i,z_i)\}_{i=1}^n)\\
     =&\sum_{i=1}^n
    \bigg[\sum_{k=1}^K \ind{y_i=k}(x_i^\top \Theta_k+b_k+\log(1-\pi^{\st}_k))-\log\left(1+\sum_{k=1}^Ke^{x_i^\top\Theta_k+b_k}\right)\\
    &+\sum_{k=1}^K\ind{y_i=z_i=k}\log\frac{\pi^{\st}_k}{1-\pi^{\st}_k}\bigg],
 \end{split}
 \end{equation}
 where $\Theta_k$ is the $k$th column of $\Theta$.
\end{lemma}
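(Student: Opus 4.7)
The plan is to derive both log-likelihood formulas directly from the joint distribution $\bbP(x,y,z)$, which under the single-training-set scenario factors as $\bbP(x)\bbP(y\mid x)\bbP(z\mid y)$, with $\bbP(y\mid x)$ given by the multinomial model \eqref{eq:MN_model} and $\bbP(z\mid y)$ given by \eqref{eq:random-missing}. Note that compared with the case-control derivation in Lemma~\ref{lem:MN-log-likelihoods}, the selection indicator $s$ does not enter here, so the derivation is conceptually simpler — the main work is an algebraic rewriting that exhibits the observed log-likelihood in the same structural form as its case-control counterpart.

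For the full log-likelihood \eqref{eq:MN-full-log-likelihood-st}, I would start by writing $\log L_f^{\MN,\st}=\sum_i [\log\bbP(y_i\mid x_i)+\log\bbP(z_i\mid y_i)]$. The first piece contributes $\sum_{k=1}^K \ind{y_i=k}(x_i^\top\Theta_k+b_k)-\log\bigl(1+\sum_{k=1}^K e^{x_i^\top\Theta_k+b_k}\bigr)$ by direct substitution. For the second piece, using that when $y_i=k>0$ one has $z_i\in\{0,k\}$ with $\bbP(z_i=k\mid y_i=k)=\pi^{\st}_k$ and $\bbP(z_i=0\mid y_i=k)=1-\pi^{\st}_k$, I would rewrite $\ind{y_i=k,z_i=0}\log(1-\pi^{\st}_k)+\ind{y_i=z_i=k}\log\pi^{\st}_k$ as $\ind{y_i=k}\log(1-\pi^{\st}_k)+\ind{y_i=z_i=k}\log\frac{\pi^{\st}_k}{1-\pi^{\st}_k}$, which matches the claimed formula after absorbing $\log(1-\pi^{\st}_k)$ into the coefficient on $\ind{y_i=k}$.

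For the observed log-likelihood \eqref{eq:MN-log-likelihood-st}, I would first marginalize $y$ to obtain the compact expressions
\begin{equation*}
\bbP(z=k\mid x)=\frac{\pi^{\st}_k e^{x^\top\Theta_k+b_k}}{1+\sum_{j=1}^K e^{x^\top\Theta_j+b_j}},\qquad \bbP(z=0\mid x)=\frac{1+\sum_{j=1}^K(1-\pi^{\st}_j)e^{x^\top\Theta_j+b_j}}{1+\sum_{j=1}^K e^{x^\top\Theta_j+b_j}}.
\end{equation*}
The key step, and the main (though light) obstacle, is to verify that if one sets $u=\Theta^\top x+b+\log(1-\pi^{\st})$, then the softmax induced by $\widetilde f(u)$ reproduces precisely these conditional probabilities. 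Concretely, I would compute $e^{\widetilde f(u)_k}=\frac{\pi^{\st}_k e^{x^\top\Theta_k+b_k}}{1+\sum_j(1-\pi^{\st}_j)e^{x^\top\Theta_j+b_j}}$, sum over $k$ to get $1+\sum_k e^{\widetilde f(u)_k}=\frac{1+\sum_j e^{x^\top\Theta_j+b_j}}{1+\sum_j(1-\pi^{\st}_j)e^{x^\top\Theta_j+b_j}}$, and divide to recover $\bbP(z=k\mid x)$ and $\bbP(z=0\mid x)$ exactly. The cancellation of the awkward denominator $1+\sum_j(1-\pi^{\st}_j)e^{x^\top\Theta_j+b_j}$ is precisely what motivates the shift by $\log(1-\pi^{\st})$ inside $\widetilde f$.

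With those two identities in hand, the log-likelihood $\log L^{\MN,\st}=\sum_i[\sum_k \ind{z_i=k}\log\bbP(z_i=k\mid x_i)+\ind{z_i=0}\log\bbP(z_i=0\mid x_i)]$ immediately takes the advertised softmax form. The whole argument is essentially bookkeeping; no probabilistic or analytic tools beyond the definitions are required, and the only place where a reader might need a moment to check is the algebraic reduction in the observed case, so I would present that computation in full while keeping the full-likelihood derivation brief.
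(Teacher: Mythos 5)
Your proposal is correct and follows essentially the same route as the paper's proof: both derive the joint masses $\bbP(y,z\mid x)$ from the factorization $\bbP(y\mid x)\bbP(z\mid y)$, marginalize over $y$ to get $\bbP(z\mid x)$, and then verify algebraically that the shift by $\log(1-\pi^{\st})$ inside $\widetilde f$ reproduces these conditionals in softmax form. The only difference is presentational (you verify the $\widetilde f$ identity forward rather than rewriting the log-likelihood backward), which changes nothing of substance.
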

\begin{proof}
Under the multinomial-PU model with single-training-set scenario, one can show that for any data point $(x,y,z)$ and $1\leq j\leq K$,
\begin{align*}
    \bbP(y=j,z=j|x)=&\bbP(y=j|x)\bbP(z=j|y=j)=\frac{\pi^{\st}_je^{x^\top\Theta_j+b_j}}{1+\sum_{k=1}^Ke^{x^\top\Theta_k+b_k}},\\
    \bbP(y=j,z=0|x)=&\bbP(y=j|x)\bbP(z=0|y=j)=\frac{(1-\pi^{\st}_j)e^{x^\top\Theta_j+b_j}}{1+\sum_{k=1}^Ke^{x^\top\Theta_k+b_k}},\\
    \bbP(y=0,z=0|x)=&\bbP(y=0|x)=\frac{1}{1+\sum_{k=1}^Ke^{x^\top\Theta_k+b_k}}.
\end{align*}
While for the observed PU data $(x,z)$, we have
\begin{align*}
    \bbP(z=j|x)=&\bbP(y=j,z=j|y=j)=\frac{\pi^{\st}_je^{x^\top\Theta_j+b_j}}{1+\sum_{k=1}^Ke^{x^\top\Theta_k+b_k}},\\
    \bbP(z=0|x)=&\sum_{j=1}^K\bbP(y=j,z=0|y=j)+\bbP(y=0,z=0|x)=\frac{1+\sum_{j=1}^K(1-\pi^{\st}_j)e^{x^\top\Theta_j+b_j}}{1+\sum_{k=1}^Ke^{x^\top\Theta_k+b_k}}.
\end{align*}
Hence the full log-likelihood for samples $\{(x_i,y_i,z_i)\}_{i=1}^n$ can be written as follows:
\begin{align*}
    &\log L_f^{\MN,\st}(\Theta,b;\{(x_i,y_i,z_i)\}_{i=1}^n)\\
    =&\sum_{i=1}^n\left[\sum_{j=1}^K\ind{y_i=j}(x_i^\top \Theta_j+b_j+\log(1-\pi^{\st}_j))-\log(1+\sum_{k=1}^Ke^{x_i^\top\Theta_k+b_k})\right]\\
    &+\ind{z_i=j}\log\frac{\pi^{\st}_j}{1-\pi^{\st}_j}.
\end{align*}
While for the log-likelihood for the observed PU data $\{(x_i,z_i)\}_{i=1}^n$ is
\begin{align*}
    &\log L^{\MN,\st}(\Theta,b;\{(x_i,z_i)\}_{i=1}^n)\\
    =&\sum_{i=1}^n\Bigg[\sum_{j=1}^K\ind{z_i=j}(x_i^\top\Theta_j+b_j+\log \pi^{\st}_j-\log(1+\sum_{k=1}^K(1-\pi^{\st}_k)e^{x_i^\top\Theta_k+b_k})))\\
    &-\log(1+\frac{\sum_{k=1}^K\pi^{\st}_ke^{x_i^\top\Theta_k+b_k}}{1+\sum_{k=1}^K(1-\pi^{\st}_k)e^{x_i^\top\Theta_k+b_k}}) \Bigg]\\
    =&\sum_{i=1}^n \Bigg[\sum_{k=1}^K \ind{z_i=k}[\widetilde{f}(\Theta^\top x_i+b+\log (1-\pi^{st}))]_k-\log\left(1+\sum_{j=1}^Ke^{[\widetilde{f}(\Theta^\top x_i+b+\log (1-\pi^{st}))]_k}\right)\Bigg],
\end{align*}
\end{proof} 

Similarly to the estimation under the case-control case (discussed in Section~\ref{sec:algorithm}), one can apply the proximal gradient descent algorithm on $$\cF_n^{\MN,\st}(\Theta,b)=-\frac{1}{n}\log L^{\MN,\st}(\Theta,b;\{(x_i,z_i)\}_{i=1}^n)+P_{\lambda}^{\MN}(\Theta)$$ for the estimation of $\Theta$ and $b$. We can also derive the EM algorithm under this setting (see Algorithm~\ref{alg:MN-EM-st}).

\begin{algorithm}[H]\label{alg:MN-EM-st}
\SetAlgoLined
Input: $\Theta^0$, $b^0$ such that $\mathcal{F}^{\MN,\st}_n(\Theta^0,b^0)\leq\min_b\mathcal{F}_n^{\MN,\st}(0_{p\times K},b)$\\
 \For{$m=0,1,\dots$}{
   E-step: calculate $\widehat{y}(\Theta^m,b^m)\in \bbR^{n\times K}$ as follows, where each entry $\widehat{y}_{ij}$ is an estimate for $\ind{y_i=j}$:
   \[\hat{y}_{ij}(\Theta^m,b^m)= \begin{cases}
   \frac{e^{x_i^{\top}\Theta_j^m+b_j^m+\log(1-\pi^{\st}_j)}}{1+\sum\limits_{k=1}^Ke^{x_i^{\top}\Theta_k^m+b_k^m+\log(1-\pi^{\st}_k)}} & \mathrm{if}\ z_i=0\\
   1 & \mathrm{if}\ z_i=j\\
   0 & \mathrm{else}
   \end{cases}\]
   M-step: obtain $\Theta^{m+1}, b^{m+1}$ by
   \begin{equation*}
   \begin{split}
       (\Theta^{m+1},b^{m+1})\in\argmin\limits_{\Theta,b}&\frac{1}{n}\sum_{i=1}^n\Bigg[\log\left(1+\sum\limits_{k=1}^{K}e^{x_i^\top\Theta_k+b_k}\right)\\
   &-\sum_{j=1}^K\hat{y}_{ij}(\Theta^m,b^m)(x_i^\top \Theta_j+b_j)\Bigg]+P_{\lambda}^{\MN}(\Theta)
   \end{split}
   \end{equation*}
 }
 \caption{EM for the Multinomial-PU Model under the single-training-set scenario}
\end{algorithm}

\paragraph{Theoretical properties: }One key observation is that the multinomial-PU model under the single-training-set scenario can be viewed as a simple reparameterization of the case-control setting. More specifically, to obtain the log-likelihood functions under the single-training-set scenario, one can simply substitute $\frac{n_k}{\pi_kn_u}$ in $\log L_f^{\MN}$ and $\log L^{\MN}$ by $\frac{\pi^{\st}_k}{1-\pi^{\st}_k}$, and change $b_k$ to $b_k+\log(1-\pi^{\st}_k)$ for $1\leq k\leq K$. Hence all the optimization and statistical guarantees presented in Proposition~\ref{lem:converg-MN} and Theorem~\ref{thm:mult_upp} still hold for this setting, as long as we change the condition on $\frac{n_k}{\pi^{\st}_kn_u}$ to the condition on $\frac{\pi^{\st}_k}{1-\pi^{\st}_k}$.
%-------------------------------------------------------
\subsection{Ordinal-PU Model}%\label{sec:ordinal}
The following lemma presents the log-likelihood functions for the observed data $\{(x_i,z_i)\}$ and for the full data $\{(x_i,y_i,z_i)\}$ under the ordinal-PU model and the single-training-set scenario.
\begin{lemma}\label{lem:ON-log-likelihoods-st}
 When the observed responses $z_i$ is generated according to~\eqref{eq:random-missing}, the log-likelihood function for the observed presence-only data $\{x_i,z_i\}_{i=1}^n$ under the ordinal-PU model is
 \begin{equation}\label{eq:ON-log-likelihood-st}
 \begin{split}
     &\log L^{\ON,\st}(\theta;\{(x_i,z_i)\}_{i=1}^n)\\
     =&\sum_{i=1}^n \left[\sum_{k=1}^K \ind{z_i=k}[\widetilde{f}(\log r(x_i,\theta)+\log(1-\pi^{\st}))]_k-\log\left(1+\sum_{k=1}^Ke^{[\widetilde{f}(\log r(x_i,\theta)+\log(1-\pi^{\st}))]_k}\right)\right],
 \end{split}
 \end{equation}
 where $\widetilde{f}$ is defined in Lemma~\ref{lem:MN-log-likelihoods-st} and $r:\bbR^p\times \bbR^{p+K}\rightarrow \bbR^{K}$ is defined in Lemma~\ref{lem:ON-log-likelihoods}.
 The log-likelihood function for the full data $\{x_i,y_i,z_i\}_{i=1}^n$ is
 \begin{equation}\label{eq:ON-full-log-likelihood-st}
 \begin{split}
     &\log L_f^{\ON,\st}(\theta;\{(x_i,y_i,z_i)\}_{i=1}^n)\\
     =&\sum_{i=1}^n
    \bigg[\sum_{k=1}^K \ind{y_i=k}(\log r_k(x_i,\theta)+\log(1-\pi^{\st}_k)-\log\left(1+\sum_{k=1}^Kr_k(x_i,\theta)\right)\\
    &+\sum_{k=1}^K\ind{y_i=z_i=k}\log\frac{\pi^{\st}_k}{1-\pi^{\st}_k}\bigg].
 \end{split}
 \end{equation}
\end{lemma}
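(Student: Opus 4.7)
The plan is to mirror the structure of the proofs of Lemma~\ref{lem:ON-log-likelihoods} (ordinal-PU, case-control) and Lemma~\ref{lem:MN-log-likelihoods-st} (multinomial-PU, single-training-set). The conceptual content is already in hand: under the single-training-set generative mechanism, $z$ depends on $y$ only through \eqref{eq:random-missing}, and the ordinal structure enters only through $\bbP_\theta(y=j\mid x)$, which by Lemma~\ref{lem:ON-log-likelihoods} satisfies $\bbP_\theta(y=j\mid x) = r_j(x,\theta)\bbP_\theta(y=0\mid x)$ for $j>0$ and $\bbP_\theta(y=0\mid x) = (1+\sum_k r_k(x,\theta))^{-1}$.

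First I would compute the joint mass function of $(y,z)$ given $x$ on the $2K+1$ admissible configurations. Using \eqref{eq:random-missing} together with the above expressions,
\begin{equation*}
\bbP(y=j,z=j\mid x) = \frac{\pi^{\st}_j r_j(x,\theta)}{1+\sum_k r_k(x,\theta)},\quad
\bbP(y=j,z=0\mid x) = \frac{(1-\pi^{\st}_j) r_j(x,\theta)}{1+\sum_k r_k(x,\theta)},
\end{equation*}
for $j>0$, and $\bbP(y=0,z=0\mid x) = (1+\sum_k r_k(x,\theta))^{-1}$. Summing $\log$ of these quantities over $i=1,\dots,n$, grouping the $\ind{y_i=k}$ and $\ind{y_i=z_i=k}$ indicators and factoring out the shared $-\log(1+\sum_k r_k(x_i,\theta))$ immediately yields the full log-likelihood \eqref{eq:ON-full-log-likelihood-st}.

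Next, for the observed log-likelihood I would marginalize $y$ out: $\bbP(z=j\mid x) = \pi^{\st}_j r_j(x,\theta)/(1+\sum_k r_k(x,\theta))$ for $j>0$, while
\begin{equation*}
\bbP(z=0\mid x) = \frac{1 + \sum_{k=1}^K (1-\pi^{\st}_k) r_k(x,\theta)}{1+\sum_k r_k(x,\theta)}.
\end{equation*}
Then the raw form of $\log L^{\ON,\st}$ reads $\sum_i[\sum_{j>0}\ind{z_i=j}(\log\pi^{\st}_j + \log r_j - \log(1+\sum_k r_k)) + \ind{z_i=0}(\log(1+\sum_k(1-\pi^{\st}_k)r_k) - \log(1+\sum_k r_k))]$.

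The only real step left is an algebraic rewriting to match the compact form \eqref{eq:ON-log-likelihood-st} in terms of $\widetilde{f}$. Setting $u = \log r(x_i,\theta) + \log(1-\pi^{\st})$, so that $e^{u_k} = (1-\pi^{\st}_k)r_k(x_i,\theta)$, the definition of $\widetilde{f}$ in Lemma~\ref{lem:MN-log-likelihoods-st} gives
\begin{equation*}
e^{\widetilde{f}(u)_k} = \frac{\pi^{\st}_k r_k(x_i,\theta)}{1+\sum_j(1-\pi^{\st}_j)r_j(x_i,\theta)},\qquad
1+\sum_k e^{\widetilde{f}(u)_k} = \frac{1+\sum_k r_k(x_i,\theta)}{1+\sum_j(1-\pi^{\st}_j)r_j(x_i,\theta)}.
\end{equation*}
Substituting these two identities into $\sum_k\ind{z_i=k}\widetilde{f}(u)_k - \log(1+\sum_k e^{\widetilde{f}(u)_k})$, the $\log(1+\sum_j(1-\pi^{\st}_j)r_j)$ factors cancel in exactly the way needed to reproduce the raw expression above, which completes the match. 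I do not anticipate any genuine obstacle here: the proof is a direct calculation that parallels the one for Lemma~\ref{lem:MN-log-likelihoods-st}, with the single substitution of $\Theta^\top x_i + b$ by $\log r(x_i,\theta)$ that was already used in the case-control proof of Lemma~\ref{lem:ON-log-likelihoods}. The only place to be careful is bookkeeping the $\log(1-\pi^{\st}_k)$ offset inside $u$, which is what produces the cancellation making $\widetilde{f}$ the right encoding function.
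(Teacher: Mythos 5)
Your proposal is correct and follows essentially the same route as the paper: compute the $2K+1$ joint probabilities of $(y,z)$ given $x$ using $\bbP_\theta(y=j\mid x)=r_j(x,\theta)/(1+\sum_k r_k(x,\theta))$ and \eqref{eq:random-missing}, marginalize over $y$ to get $\bbP(z=j\mid x)$ and $\bbP(z=0\mid x)$, and read off the two log-likelihoods. The only difference is that you carry out the final algebraic identification with $\widetilde{f}$ explicitly (correctly), whereas the paper leaves that last rewriting implicit.
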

\begin{proof}
Similarly to the proof of Lemma~\ref{lem:MN-log-likelihoods-st}, we first write out the joint probability mass for $(x,y,z)$ as follows:
\begin{align*}
    \bbP(y=j,z=j|x)=&\frac{\pi^{\st}_j r_j(x,\theta)}{1+\sum_{k=1}^K r_k(x,\theta)},\\
    \bbP(y=j,z=0|x)=&\frac{(1-\pi^{\st}_j) r_j(x,\theta)}{1+\sum_{k=1}^K r_k(x,\theta)},\\
    \bbP(y=0,z=0|x)=&\frac{1}{1+\sum_{k=1}^K r_k(x,\theta)}.
\end{align*}
The joint probability mass for $(x,z)$ would then become:
\begin{align*}
    \bbP(z=j|x)=&\frac{\pi^{\st}_j r_j(x,\theta)}{1+\sum_{k=1}^K r_k(x,\theta)},\\
    \bbP(z=0|x)=&\frac{1+\sum_{k=1}^K(1-\pi^{\st}_k) r_k(x,\theta)}{1+\sum_{k=1}^K r_k(x,\theta)}.
\end{align*}
Therefore, one can directly write out the log-likelihood functions as shown in Lemma~\ref{lem:ON-log-likelihoods-st}.
\end{proof}
Let $$\cF^{\ON,\st}_n(\theta)=-\frac{1}{n}\log L^{\ON,\st}(\theta;\{(x_i,z_i)\}_{i=1}^n)+P^{\ON}_{\lambda}(\theta_{1:p}),$$ then we can estimate $\theta^*$ by applying the proximal gradient descent algorithm on $\cF^{\ON,\st}_n(\theta)$. The regularized EM algorithm under this setting is summarized in Algorithm~\ref{alg:ON-EM-st}.

\begin{algorithm}[H]\label{alg:ON-EM-st}
\SetAlgoLined
Input: $\theta^0$ such that $\mathcal{F}^{\ON,\st}_n(\theta^0)\leq\min_{\theta_{1:p}=0_{p\times 1}}\mathcal{F}^{\ON,\st}_n(\theta)$\\
 \For{$m=0,1,\dots$}{
   E-step: calculate $\widehat{y}(\theta^m)\in\bbR^{n\times K}$ as follows, where each entry $[\widehat{y}(\theta^m)]_{ij}$ is an estimate for $\ind{y_i=j}$:\\
   \[\hat{y}_{ij}(\theta^m)= \begin{cases}
   \frac{(1-\pi^{\st}_j)r_j(x_i,\theta^m)}{1+\sum_{k=1}^K(1-\pi^{\st}_k)r_k(x_i,\theta^m)},  &\mathrm{if}\ z_i=0,\\
   1,&\mathrm{if}\ z_i=j,\\
   0,& \mathrm{else.}
   \end{cases}\]\\
   M step: obtain $\theta^{m+1}$ by\\
   \begin{equation*}
       \begin{split}
           \theta_j^{m+1}\in\argmin\limits_{\theta}&\frac{1}{n}\sum_{i=1}^n\Bigg[-\sum_{j=1}^K\widehat{y}_{ij}(\theta^m)\log r_j(x_i, \theta)\\
           &+\log\left(1+\sum\limits_{k=1}^{K}r_k(x_i,\theta)\right)\Bigg]+P^{\ON}_{\lambda}(\theta_{1:p}).
       \end{split}
   \end{equation*}
 }
 \caption{EM for the Ordinal-PU Model under the single-training-set scenario}
\end{algorithm}
\paragraph{Theoretical properties: }Using similar arguments as the proof of Proposition~\ref{lem:converg-ON}, we can still show the convergence of these two algorithms to stationary points of the optimization problem $\min_{\|\theta\|_2\leq C} \cF_n^{\ON,\st}(\theta)$ for some constant $C>0$. Although we do not have a rigorous statistical error bound for the stationary points under this setting, we conjecture that they would satisfy similar statistical properties from Theorem~\ref{thm:ordinal_upp}. A rigorous proof for this conjecture is left as future work.
\section{Numeric Details}
\paragraph{Controlling the true prevalence of different categories:} In our comparative simulation studies, we controlled the prevalence of different categories in an approximate way by setting the intercepts appropriately. In particular, for the experiments investigating the effect of the prevalence, we choose the intercepts to ensure that the intercept-only model have prevalence $\pi_0 \in \{0.1,\,0.2,\dots,0.7\}$ and $\pi_1 = \pi_2 = \frac{1-\pi_0}{2}$. Since all features $X_j$ are generated from a mean zero Gaussian distribution, we would expect that the prevalence of the full models are not too different from the prevalence of the intercept-only models. The prevalence of the full models are then estimated from the generated data and presented in the figures. While for the experiments investigating the effect of sampling ratio $\frac{n_u}{n}$, we ensure the intercept-only models have $30\%$ unlabeled data, $35\%$ samples of label $1$ and $2$. The resulting prevalence of the full multinomial model is $\pi_0=0.2584,\,\pi_1 = 0.3677,\,\pi_2 = 0.3739$, and that of the full ordinal model is $\pi_0=0.3753,\,\pi_1 = 0.2153,\,\pi_2 = 0.4095$. Hence the whole population is reasonably balanced.

 \section*{Acknowledgments}
 We would like to thank David Neiman for some preliminary empirical studies and real data exploration. GR and LZ were partially supported by NSF-DMS 1811767, NIH R01 GM131381-01. GR was also partially supported by NSF-DMS 1839338.

\bibliographystyle{plainnat} 
\bibliography{reference} 

\begin{thebibliography}{40}
\providecommand{\natexlab}[1]{#1}
\providecommand{\url}[1]{\texttt{#1}}
\expandafter\ifx\csname urlstyle\endcsname\relax
  \providecommand{\doi}[1]{doi: #1}\else
  \providecommand{\doi}{doi: \begingroup \urlstyle{rm}\Url}\fi

\bibitem[Agresti(2010)]{agresti2010analysis}
Alan Agresti.
\newblock \emph{Analysis of ordinal categorical data}, volume 656.
\newblock John Wiley \& Sons, 2010.

\bibitem[Archer et~al.(2014)Archer, Hou, Zhou, Ferber, Layne, and
  Gentry]{archer2014ordinalgmifs}
Kellie~J Archer, Jiayi Hou, Qing Zhou, Kyle Ferber, John~G Layne, and Amanda~E
  Gentry.
\newblock ordinalgmifs: An r package for ordinal regression in high-dimensional
  data settings.
\newblock \emph{Cancer informatics}, 13:\penalty0 CIN--S20806, 2014.

\bibitem[Beck(2017)]{beck2017first}
Amir Beck.
\newblock \emph{First-order methods in optimization}.
\newblock SIAM, 2017.

\bibitem[Bertens et~al.(2016)Bertens, Moons, Rutten, van Mourik, Hoes, and
  Reitsma]{bertens2016nomogram}
Loes~CM Bertens, Karel~GM Moons, Frans~H Rutten, Yvonne van Mourik, Arno~W
  Hoes, and Johannes~B Reitsma.
\newblock A nomogram was developed to enhance the use of multinomial logistic
  regression modeling in diagnostic research.
\newblock \emph{Journal of clinical epidemiology}, 71:\penalty0 51--57, 2016.

\bibitem[Du~Plessis et~al.(2015)Du~Plessis, Niu, and Sugiyama]{du2015convex}
Marthinus Du~Plessis, Gang Niu, and Masashi Sugiyama.
\newblock Convex formulation for learning from positive and unlabeled data.
\newblock In \emph{International conference on machine learning}, pages
  1386--1394. PMLR, 2015.

\bibitem[Dua and Graff(2017)]{Dua:2019}
Dheeru Dua and Casey Graff.
\newblock {UCI} machine learning repository, 2017.
\newblock URL \url{http://archive.ics.uci.edu/ml}.

\bibitem[Elkan and Noto(2008)]{elkan2008learning}
Charles Elkan and Keith Noto.
\newblock Learning classifiers from only positive and unlabeled data.
\newblock In \emph{Proceedings of the 14th ACM SIGKDD international conference
  on Knowledge discovery and data mining}, pages 213--220, 2008.

\bibitem[Fan et~al.(2010)Fan, Song, et~al.]{fan2010sure}
Jianqing Fan, Rui Song, et~al.
\newblock Sure independence screening in generalized linear models with
  np-dimensionality.
\newblock \emph{The Annals of Statistics}, 38\penalty0 (6):\penalty0
  3567--3604, 2010.

\bibitem[Fernandes et~al.(2017)Fernandes, Cardoso, and
  Fernandes]{fernandes2017transfer}
Kelwin Fernandes, Jaime~S Cardoso, and Jessica Fernandes.
\newblock Transfer learning with partial observability applied to cervical
  cancer screening.
\newblock In \emph{Iberian conference on pattern recognition and image
  analysis}, pages 243--250. Springer, 2017.

\bibitem[Fields(2001)]{fields2001protein}
Peter~A Fields.
\newblock Protein function at thermal extremes: balancing stability and
  flexibility.
\newblock \emph{Comparative Biochemistry and Physiology Part A: Molecular \&
  Integrative Physiology}, 129\penalty0 (2-3):\penalty0 417--431, 2001.

\bibitem[Friedman et~al.(2010)Friedman, Hastie, and
  Tibshirani]{friedman2010note}
Jerome Friedman, Trevor Hastie, and Robert Tibshirani.
\newblock A note on the group lasso and a sparse group lasso.
\newblock \emph{arXiv preprint arXiv:1001.0736}, 2010.

\bibitem[Kakade et~al.(2010)Kakade, Shamir, Sindharan, and
  Tewari]{kakade2010learning}
Sham Kakade, Ohad Shamir, Karthik Sindharan, and Ambuj Tewari.
\newblock Learning exponential families in high-dimensions: Strong convexity
  and sparsity.
\newblock In \emph{Proceedings of the thirteenth international conference on
  artificial intelligence and statistics}, pages 381--388. JMLR Workshop and
  Conference Proceedings, 2010.

\bibitem[Krishnapuram et~al.(2005)Krishnapuram, Carin, Figueiredo, and
  Hartemink]{krishnapuram2005sparse}
Balaji Krishnapuram, Lawrence Carin, M{\'a}rio~AT Figueiredo, and Alexander~J
  Hartemink.
\newblock Sparse multinomial logistic regression: Fast algorithms and
  generalization bounds.
\newblock \emph{IEEE transactions on pattern analysis and machine
  intelligence}, 27\penalty0 (6):\penalty0 957--968, 2005.

\bibitem[Lancaster and Imbens(1996)]{lancaster1996case}
Tony Lancaster and Guido Imbens.
\newblock Case-control studies with contaminated controls.
\newblock \emph{Journal of Econometrics}, 71\penalty0 (1-2):\penalty0 145--160,
  1996.

\bibitem[Lee and Wang(2020)]{lee2020tensor}
Chanwoo Lee and Miaoyan Wang.
\newblock Tensor denoising and completion based on ordinal observations.
\newblock In \emph{International Conference on Machine Learning}, pages
  5778--5788. PMLR, 2020.

\bibitem[Li et~al.(2020)Li, Mark, Raskutti, Willett, Song, and
  Neiman]{li2020graph}
Yuan Li, Benjamin Mark, Garvesh Raskutti, Rebecca Willett, Hyebin Song, and
  David Neiman.
\newblock Graph-based regularization for regression problems with alignment and
  highly correlated designs.
\newblock \emph{SIAM journal on mathematics of data science}, 2\penalty0
  (2):\penalty0 480--504, 2020.

\bibitem[Little and Rubin(2019)]{little2019statistical}
Roderick~JA Little and Donald~B Rubin.
\newblock \emph{Statistical analysis with missing data}, volume 793.
\newblock John Wiley \& Sons, 2019.

\bibitem[Liu et~al.(2003)Liu, Dai, Li, Lee, and Yu]{liu2003building}
Bing Liu, Yang Dai, Xiaoli Li, Wee~Sun Lee, and Philip~S Yu.
\newblock Building text classifiers using positive and unlabeled examples.
\newblock In \emph{Third IEEE International Conference on Data Mining}, pages
  179--186. IEEE, 2003.

\bibitem[Loh and Wainwright(2013)]{loh2013regularized}
Po-Ling Loh and Martin~J Wainwright.
\newblock Regularized m-estimators with nonconvexity: Statistical and
  algorithmic theory for local optima.
\newblock \emph{arXiv preprint arXiv:1305.2436}, 2013.

\bibitem[Loh et~al.(2017)Loh, Wainwright, et~al.]{loh2017support}
Po-Ling Loh, Martin~J Wainwright, et~al.
\newblock Support recovery without incoherence: A case for nonconvex
  regularization.
\newblock \emph{Annals of Statistics}, 45\penalty0 (6):\penalty0 2455--2482,
  2017.

\bibitem[Maurer(2016)]{maurer2016vector}
Andreas Maurer.
\newblock A vector-contraction inequality for rademacher complexities.
\newblock In \emph{International Conference on Algorithmic Learning Theory},
  pages 3--17. Springer, 2016.

\bibitem[McCullagh(1980)]{mccullagh1980regression}
Peter McCullagh.
\newblock Regression models for ordinal data.
\newblock \emph{Journal of the Royal Statistical Society: Series B
  (Methodological)}, 42\penalty0 (2):\penalty0 109--127, 1980.

\bibitem[McDiarmid(1989)]{mcdiarmid1989method}
Colin McDiarmid.
\newblock On the method of bounded differences.
\newblock \emph{Surveys in combinatorics}, 141\penalty0 (1):\penalty0 148--188,
  1989.

\bibitem[Negahban et~al.(2009)Negahban, Yu, Wainwright, and
  Ravikumar]{negahban2009unified}
Sahand Negahban, Bin Yu, Martin~J Wainwright, and Pradeep~K Ravikumar.
\newblock A unified framework for high-dimensional analysis of $ m $-estimators
  with decomposable regularizers.
\newblock In \emph{Advances in neural information processing systems}, pages
  1348--1356, 2009.

\bibitem[Romero et~al.(2015)Romero, Tran, and Abate]{romero2015dissecting}
Philip~A Romero, Tuan~M Tran, and Adam~R Abate.
\newblock Dissecting enzyme function with microfluidic-based deep mutational
  scanning.
\newblock \emph{Proceedings of the National Academy of Sciences}, 112\penalty0
  (23):\penalty0 7159--7164, 2015.

\bibitem[Sivapriya et~al.(2015)Sivapriya, Kamal, and
  Thangaiah]{sivapriya2015ensemble}
TR~Sivapriya, AR~Kamal, and P~Thangaiah.
\newblock Ensemble merit merge feature selection for enhanced multinomial
  classification in alzheimer’s dementia.
\newblock \emph{Computational and mathematical methods in medicine}, 2015,
  2015.

\bibitem[Song and Raskutti(2020)]{song2020pulasso}
Hyebin Song and Garvesh Raskutti.
\newblock Pulasso: High-dimensional variable selection with presence-only data.
\newblock \emph{Journal of the American Statistical Association}, 115\penalty0
  (529):\penalty0 334--347, 2020.

\bibitem[Song et~al.(2020)Song, Dai, Raskutti, and Barber]{song2020convex}
Hyebin Song, Ran Dai, Garvesh Raskutti, and Rina~Foygel Barber.
\newblock Convex and non-convex approaches for statistical inference with
  class-conditional noisy labels.
\newblock \emph{Journal of Machine Learning Research}, 21\penalty0
  (168):\penalty0 1--58, 2020.

\bibitem[Song et~al.(2021)Song, Bremer, Hinds, Raskutti, and
  Romero]{song2021inferring}
Hyebin Song, Bennett~J Bremer, Emily~C Hinds, Garvesh Raskutti, and Philip~A
  Romero.
\newblock Inferring protein sequence-function relationships with large-scale
  positive-unlabeled learning.
\newblock \emph{Cell systems}, 12\penalty0 (1):\penalty0 92--101, 2021.

\bibitem[Tibshirani(1996)]{tibshirani1996regression}
Robert Tibshirani.
\newblock Regression shrinkage and selection via the lasso.
\newblock \emph{Journal of the Royal Statistical Society: Series B
  (Methodological)}, 58\penalty0 (1):\penalty0 267--288, 1996.

\bibitem[Vaart and Wellner(1997)]{vaart1997weak}
AW~van~der Vaart and Jon~A Wellner.
\newblock Weak convergence and empirical processes with applications to
  statistics.
\newblock \emph{Journal of the Royal Statistical Society-Series A Statistics in
  Society}, 160\penalty0 (3):\penalty0 596--608, 1997.

\bibitem[Van~de Geer et~al.(2008)]{van2008high}
Sara~A Van~de Geer et~al.
\newblock High-dimensional generalized linear models and the lasso.
\newblock \emph{Annals of Statistics}, 36\penalty0 (2):\penalty0 614--645,
  2008.

\bibitem[Vershynin(2010)]{vershynin2010introduction}
Roman Vershynin.
\newblock Introduction to the non-asymptotic analysis of random matrices.
\newblock \emph{arXiv preprint arXiv:1011.3027}, 2010.

\bibitem[Vershynin(2018)]{vershynin2018high}
Roman Vershynin.
\newblock \emph{High-dimensional probability: An introduction with applications
  in data science}, volume~47.
\newblock Cambridge university press, 2018.

\bibitem[Wainwright(2019)]{wainwright2019high}
Martin~J Wainwright.
\newblock \emph{High-dimensional statistics: A non-asymptotic viewpoint},
  volume~48.
\newblock Cambridge University Press, 2019.

\bibitem[Ward et~al.(2009)Ward, Hastie, Barry, Elith, and
  Leathwick]{ward2009presence}
Gill Ward, Trevor Hastie, Simon Barry, Jane Elith, and John~R Leathwick.
\newblock Presence-only data and the em algorithm.
\newblock \emph{Biometrics}, 65\penalty0 (2):\penalty0 554--563, 2009.

\bibitem[Wright et~al.(2009)Wright, Nowak, and Figueiredo]{wright2009sparse}
Stephen~J Wright, Robert~D Nowak, and M{\'a}rio~AT Figueiredo.
\newblock Sparse reconstruction by separable approximation.
\newblock \emph{IEEE Transactions on signal processing}, 57\penalty0
  (7):\penalty0 2479--2493, 2009.

\bibitem[Wurm et~al.(2017)Wurm, Rathouz, and Hanlon]{wurm2017regularized}
Michael~J Wurm, Paul~J Rathouz, and Bret~M Hanlon.
\newblock Regularized ordinal regression and the ordinalnet r package.
\newblock \emph{arXiv preprint arXiv:1706.05003}, 2017.

\bibitem[Xu et~al.(2017)Xu, Chi, and Lange]{xu2017generalized}
Jason Xu, Eric~C Chi, and Kenneth Lange.
\newblock Generalized linear model regression under distance-to-set penalties.
\newblock \emph{arXiv preprint arXiv:1711.01341}, 2017.

\bibitem[Zangwill(1969)]{zangwill1969nonlinear}
Willard~I Zangwill.
\newblock \emph{Nonlinear programming: a unified approach}, volume~52.
\newblock Prentice-Hall Englewood Cliffs, NJ, 1969.

\end{thebibliography}
\end{document}